\documentclass[12pt, a4paper, oneside, openright]{book}

\usepackage{vuwthesis} % sets up some local things, mostly the front page

\usepackage{palatino} % sets palatino as the default font

\usepackage{url} % for typesetting urls
\usepackage{bm}
%------------------------------------------------
\usepackage{amssymb, amsfonts, amsbsy}
\usepackage{geometry}
\usepackage{graphicx}
\usepackage{amssymb,amsmath,amsthm}
\usepackage{mathrsfs} %script fonts
\usepackage{epsfig,multicol}

\begin{document}

\def\d{{\mathrm{d}}}
\def\implies{\Rightarrow}
\newcommand{\scri}{\mathscr{I}}
\newcommand{\sun}{\ensuremath{\odot}}
%------------------------------------------------
\def\d{{\mathrm{d}}}
\def\J{{\mathscr{J}}}
\def\sech{{\mathrm{sech}}}
\def\T{{\mathcal{T}}}
\def\ep{\epsilon}
\def\k{\mathbf{k}}
\def\x{\mathbf{x}}
\def\v{\mathbf{v}}
\def\s{\mathbf{s}}
\def\e{\mathbf{e}}
\def\t{\mathbf{t}}
\def\n{\mathbf{n}}
\def\u{\mathbf{u}}
\def\w{\mathbf{w}}
\def\eg{{\it e.g.}}
\def\ie{{\it i.e.}}
\def\etc{{\it etc.}}
\def\sign{{\hbox{sign}}}
%-------------------------------------------------------------------------
\def\eof{\Box}
\def\supp{\hbox{supp}}

%-------------------------------------------------------------------------
\newenvironment{warning}{{\noindent\bf Warning: }}{\hfill $\eof$\break}

\frontmatter
% Book style knows about front matter
% Report style doesn't so you need to set roman numbering etc yourself :-(

%%%%%%%%%%%%%%%%%%%%%%%%%%%%%%%%%%%%%%%%%%%%%%%%%%%%%%%

\title{Aspects of General Relativity:\\ \Large\emph{ Pseudo-Finsler extensions, Quasi-normal frequencies and Multiplication of tensorial distributions}}
\author{Jozef Sk\'akala}

\subject{Mathematics}

\abstract{This thesis is based on three different projects, all of them are directly linked to the \emph{classical} general theory of relativity, but they might have consequences for quantum gravity as well.

The first chapter deals with pseudo-Finsler geometric extensions of the classical theory, these being ways of naturally representing high-energy Lorentz symmetry violations. In this chapter we prove a certain type of ``no-go'' result for significant number of theories. This seems to have important consequences for the question of whether some weaker formulation of Einstein's equivalence principle is sustainable, if (at least) certain types of Lorentz violations occur.

The second chapter deals with the problem of highly damped quasi-normal modes related to different types of black hole spacetimes. First, we apply to this problem the technique of approximation by analytically solvable potentials. We use the Schwarzschild black hole as a consistency check for our method and derive many new and interesting results for the Schwarzschild-de Sitter (S-dS) black hole. One of the most important results is the equivalence between having a rational ratio of horizon surface gravities and periodicity of quasi-normal modes. By analysing the complementary set of analytic results derived by the use of monodromy techniques we prove that all our theorems almost completely generalize to all the known analytic results. This relates to all the types of black holes for which quasi-normal mode results are currently known.

The third chapter is related to the topic of multiplication of tensorial distributions. We focus on an alternative approach to the ones presently known. The new approach is fully based on the Colombeau equivalence relation, but technically avoids the Colombeau algebra construction. The advantage of this approach is that it naturally generalizes the covariant derivative operator into the generalized tensor algebra. It also operates with much more general concept of piecewise smooth manifold, which is in our opinion natural to the language of distributions.
\enlargethispage{10pt}   }
% Books don't normally have abstracts, and this is a bit of a hack

% Uncomment the appropriate degree
\phd
%\mscthesisonly
%\mscwithhonours
%\mscbothparts
% \otherdegree{DEGREE OR DIPLOMA NAME}

%%%%%%%%%%%%%%%%%%%%%%%%%%%%%%%%%%%%%%%%%%%%%%%%%%%%%%%

\maketitle

\chapter*{Acknowledgments}\label{C:ack}

~~~~I'm deeply thankful to my supervisor Prof. Matt Visser for many
priceless discussions and a lot of support throughout my PhD years.
It was my great pleasure to spend more than three years in a
friendly environment created by him.

I'm also deeply thankful to the Victoria University of
Wellington for supporting me with a Vice-Chancellor's Strategic
Research Scholarship and also with travel grants. Part of my
expenses were covered by the Marsden grant, the credit for this goes
again to my supervisor.

Furthermore I would like to thank to the School of Mathematics, Statistics
and Operations Research for providing me with all the facilities and
service, to the department staff for being friendly and helpful
whenever needed.

I would like to thank to the students from the applied mathematics
group, particularly to C\'eline Catto\"en, Gabriel Abreu, Petarpa
Boonserm, Bethan Cropp, Nicole Walters, Jonathan Crook, Valentina
Baccetti and Kyle Tate, for strongly contributing to a friendly and
inspiring environment. It was my pleasure to exchange ideas with
them and also to get to know them personally throughout the years.

I would also like to thank to many other physicists and
mathematicians with which I had opportunity to exchange ideas,
especially to some of the people from my old Comenius University in
Bratislava, Slovakia.

Furthermore, I would like to thank to the thesis examiners for valuable suggestions that certainly improved the content of the thesis.

Last, but not least I would like to thank to all the people that are
close to me, in particular to my family, to my girlfriend and to all
my friends. Life is extremely easy with their love and support.

\tableofcontents

%%%%%%%%%%%%%%%%%%%%%%%%%%%%%%%%%%%%%%%%%%%%%%%%%%%%%%%

% book style knows about mainmatter
% if you are using report style you will have to rest page numbering etc.
\mainmatter

%%%%%%%%%%%%%%%%%%%%%%%%%%%%%%%%%%%%%%%%%%%%%%%%%%%%%%%

% individual chapters included here
\pagestyle{plain}
\chapter*{Introduction}\label{intro}
\addcontentsline{toc}{chapter}{Introduction}

This thesis is based on three different projects. All of these three projects lie somewhere on the boundary between the \emph{classical} general theory of relativity and the so far unknown quantum theory of gravity. All of them deal with the phenomena that can be fully understood by classical (non-quantum) language, but at the same time describe, or at least strongly indicate, high energy modifications of the classical theory, mostly due to quantum gravity.

Ever since Einstein's general theory of relativity was established as a highly successful theory of gravity, physicists have also realized its unavoidable limits. The first set of problems is due to the theory itself. As is proven by rigorous mathematical theorems, physically reasonable situations lead in general relativity to solutions containing singularities. The other set of problems is due to the fact that the theory is \emph{classical}. To be able to consistently describe the \emph{full} interaction between quantum fields and gravity, one naturally needs to go beyond the \emph{classical} language and somehow quantize gravity. These two sets of problems are, however, closely related. Singularities appear in the situations where we do \emph{not} expect the classical description to be relevant.

There are presently many ideas, suggestions and conjectures about the quantum theory of gravity. They are related to the many different approaches to the problem that have appeared over the last decades, some of them having quite different backgrounds. (These are approaches such as string theory, loop quantum gravity, non-commutative geometries, etc.) Despite the fact that the basic approaches are very different, there is a wide agreement about the results obtained within the semi-classical regime. (For example, the effect of Hawking radiation.) Surprisingly, the results of semi-classical approach already seem to give us considerable information about some of the objects of \emph{full} quantum gravity, such as black holes. These results (such as the relation between horizon area and black hole entropy) must be recovered by every quantum theory of gravity, that has ambitions to be correct.
   
First two chapters in my thesis present results obtained in close collaboration with my supervisor Prof. Matt Visser. They were published in various journals (see the list of publications at the end of the thesis). The last chapter is related to my own work and its significantly shortened version will be prepared for journal submission in the immediate future.

The first chapter of the thesis deals with some possible \emph{classical} geometric extensions of the general theory of relativity. Such geometric extensions are related to possible high-energy Lorentz symmetry violations, by many physicists being assumed to be one of the possible effects of the future quantum gravity theory.
To search for a link between some generalized geometry and possible high-energy Lorentz violations seems to be, with respect to the spirit of Einstein's equivalence principle, a very natural approach.

The second chapter deals with a semi-analytic approach to the highly damp\-ed quasi-normal modes of various classical black hole space-times. It is in fact a topic from classical general relativity, but it is also of particular interest of the quantum gravity community. This is because the asymptotic quasi-normal modes behaviour is suspected \cite{Hod, Maggiore} to be connected to the area spectrum of the quantum black holes.

The third chapter represents a topic from the field of mathematical physics closely related to the general theory of relativity. It offers new ideas about how to \emph{fully} generalize the language of differential geometry into the distributional framework. It is again a conceptual extension which has direct relevance for the classical theory, but it might not be unreasonable to assume that it can have important consequences for quantizing gravity as well. 

At the end of the thesis we summarize our results. This is followed by appendices containing some of the most common physics and mathematics results relevant for the calculations in this thesis.   
\pagestyle{headings}
\chapter{Pseudo-Finsler extensions to gravity}\label{C:Finsler}

\section{Introduction}

%add references

\paragraph{Possible low-energy manifold-like limits of quantum gravity}
One of the most significant problems in the recent history of
theoretical physics is the problem of quantization of gravity. Most
of the candidates for a quantum gravity theory suggest that the
description of spacetime is at the fundamental level far from the
traditional concept of manifold.

Despite this fact, one might be still interested in whether these
theories have a deeper manifold-like low-energy limit; one more subtle than the
ordinary pseudo-Riemannian geometry. If they have such limit, it
must be definitely an extension of pseudo-Riemannian geometry, but
in the same time it must be ``close to'' the highly successful
concept of pseudo-Riemannian geometry.

\paragraph{Ultra-high energy violations of Lorentz symmetry}
The theoretical physics community has recently exhibited increasing
interest in the possibility of ultra-high-energy violations of
Lorentz invariance \cite{Jacobson1, Jacobson2, Jacobson3, Jacobson4,
Jacobson5, SVW1, SVW2, SVW0}. Specifically, recent speculations
regarding Lorentz symmetry breaking and/or fundamental anisotropies
and/or multi-refringence arise separately in the many and various
approaches to quantum gravity.

Such phenomena arise in loop quantum gravity \cite{LQG2, LQG1},
string models \cite{string3, string1, string2}, and causal dynamical
triangulations \cite{causal1, causal2}, and are also part and parcel
of the ``analogue spacetime" programme \cite{analogue}, and of many
attempts at developing ``emergent gravity" \cite{emergence, Kostelecky11,
Kostelecky10}. Recently, the ultra-high energy breaking
of Lorentz invariance has been central to the Horava-Lifshitz models
\cite{Horava0, Horava1, Horava2, SVW1, SVW2, SVW0}. Of course not
all models of quantum gravity lead to high-energy Lorentz symmetry
breaking, and the comments below should be viewed as exploring one
particular class of interesting models.

\paragraph{The connection between Lorentz violations and geometry}
The extensions of pseudo-Riemannian geometry can typically modify dispersion
relations, so one can be interested in seeing if such modified
dispersions relations can be naturally embedded in some extension of
pseudo-Riemannian geometry. These extensions could be then naturally
viewed as a low energy manifold-like non-pseudo-Riemannian limit of
a given quantum gravity approach.

In the other direction, if we wish to follow the spirit of Einstein's
equivalence principle and develop a geometric spacetime framework
for representing Lorentz symmetry breaking, either due to spacetime
anisotropies or multi-refringence, then it certainly cannot be
standard pseudo-Riemannian geometry.

This strongly suggests that
carefully thought out extensions and modifications of
pseudo-Riemannian geometry might be of real interest to both the
general relativity and high-energy communities.

\paragraph{Why focus on the light cone structure?}
In particular, when attempting to generalize pseudo-Riemannian
geometry, the interplay between the ``signal cones" of a
multi-ref\-ring\-ent theory and the generalized spacetime geometry is an
issue of considerable interest:

\begin{itemize}

\item
In multi-refringent situations it is quite easy to unify all the signal cones in one single Fresnel
equation that simultaneously describes all polarization modes on an equal footing.

\item
In a standard manifold setting, where we retain the usual commutative coordinates, we shall see that it
is natural to demand that each polarization mode can be assigned a
specific geometric object. This object is in fact a Lorentzian
analogue of what mathematicians know as Finsler norm (see the next
section for details).

\item
In standard general relativity the (single, unique) signal cone
almost completely specifies the spacetime geometry --- one needs
only supplement the signal cone structure with one extra degree of
freedom at each point in spacetime, an overall conformal factor, in
order to completely specify the spacetime metric, and thereby
completely specify the geometry. This is ultimately due to the fact
that in standard pseudo-Riemannian geometry the scalar product is a
simple bi-linear operation. Unfortunately in the more general pseudo-Finsler geometry\footnote{For the details of what we mean by ``pseudo-Finsler geometry'' see the next section.}
life is more difficult, but one might still have a hope that
following the guideline of simplicity the light-cone behavior could
hold a crucial piece of information about the overall geometry.

\end{itemize}

\paragraph{Bi-refringent crystal and beyond}
Considerable insight into such Finsler-like models can be provided
by considering the ``analogue spacetime" programme, where analogue
models of curved spacetime emerge at some level from well understood
physical systems \cite{analogue}. In particular, the physics of
bi-axial bi-refringent crystals \cite{BW} provides a particularly
simple physical analogue model for the mathematical object
introduced some 155 years ago by Bernhard Riemann \cite{Riemann},
and now known as Finsler distance\footnote{It must be emphasized that, despite many misapprehensions to the
contrary, uni-axial birefringent crystals are relatively
uninteresting in this regard; they do not lead to Finsler 3-spaces,
but ``merely" yield bi-metric Riemannian 3-geometries.} (again see the next section for
more details). We shall soon see that this mathematical object can
reasonably easily be extended to a Lorentzian signature
pseudo-Finsler spacetime, with an appropriate pseudo-Finsler norm.

Note that we are not particularly interested in the properties of
bi-refringent crystals per se, we use them only as an exemplar of
Finsler 3-space and Finsler space-time, as a guidepost to more
complicated things that may happen in Finslerian extensions to
general relativity. As a result of this fact in the next step we
show that all our observations from the particular bi-refringent
crystal case hold in general bi-metric situations\footnote{For the difference between bi-refringence and bi-metricity see for example \cite{birefringence}.}. This is because our
ultimate goal is to be able to say something about the (presumed)
low-energy manifold-like limit of whatever quantum theory (or class
of quantum theories) is leading to a bi-metric / bi-refringent theory approximately
reproducing Einstein gravity.

\paragraph{The no-go result}

While the use of Finsler 3-spaces to describe crystal optics is reasonably
common knowledge within the community of mathematicians and physicists studying
Finsler spaces, it is very difficult to get a clear and concise explanation of
exactly what is going on when one generalizes to Lorentzian signature space-time.
In particular the fact that any relativistic formulation of Finsler space needs
to work in Lorentzian signature (- +++), instead of the Euclidean signature (+ +++)
more typically used by the mathematical community, leads to many technical subtleties
(and can sometimes completely invalidate naive conclusions). Moreover unlike the \emph{spacetime} Finsler norm, defining a \emph{spacetime} Finsler metric is fraught with technical problems. These problems seem to be fundamental and hold in arbitrary bi-metric situations.

So the basic things we assert are:

\begin{itemize}

\item
What is exceedingly difficult, and we shall argue is in fact outright impossible
within this framework, is to construct a unified and still simple
formalism that moves ``off-shell" (off the signal cones).

\item
This is a negative result, a ``no-go theorem", which we hope will focus attention on what can
and cannot be accomplished in any natural way when dealing with multi-refringent anisotropic Finsler-like
extensions to general relativistic spacetime.

\end{itemize}

To this end, our ``no-go" result indicates that the popular
assumption that anisotropies and multi-refingence are likely to
occur in ``quantum gravity" leads to significant difficulties for
the Einstein equivalence principle --- since even the loosest
interpretation of the Einstein equivalence principle would imply the
necessity of a coherent formalism for dealing with all signal comes,
and the spacetime geometry, in some unified manner. We conclude
that, despite the fact that spacetime anisotropies and
multi-refringence are very popularly assumed to be natural features
of ``quantum gravity", and while these features have a
straightforward ``on-shell" implementation in terms of a suitably
defined Fresnel equation, there is no natural way of extending them
``off-shell" and embedding them into a single over-arching spacetime
geometry.

But to remind the reader: if one steps outside of the usual manifold
picture, either by adopting non-commutative coordinates, or even
more abstract choices such as spin foams, causal dynamical
triangulations, or string-inspired models, then the issues addressed
in this chapter are moot --- our considerations are relevant only
insofar as one is interested in the first nontrivial deviations from
exact low-energy Lorentz invariance, and only relevant insofar as
these first nontrivial effects can be placed in a Finsler-like
setting.

\paragraph{The structure of this chapter}
This chapter begins with introducing the concept of Finsler
geometry. This is followed by proving our ``no-go'' result for the
particular example of the bi-refringent crystal analogue model.
After this we show that the result holds in arbitrary bi-metric situation. At
the end of this chapter we add a section where we explore the
general conditions, which any pseudo-Finslerian geometry must fulfil in order
to give bi-refringence. But as a consequence of our ``no-go''
result, such constructs represent only a ``complicated'' and
non-intuitive route for how to recover bi-refringence by
pseudo-Finslerian geometry.

\section{Basics of (pseudo-)Finsler geometry}

Mathematically, we define a Finsler function (Finsler norm, Finsler distance
function)~\cite{Bejancu, Finsler} to be a $\mathbb{C}$-valued function $F(x,v)$ on the tangent bundle to
a manifold, such that it is homogeneous of degree 1:
\begin{equation}
F(x, \kappa \, v) = \kappa \; F(x, v),~~~~~\kappa>0, ~~~~x\in M, ~~v\in T_{x}M.
\end{equation}
This then allows one to define a notion of distance on the manifold,
as the minimal value of the functional
\begin{equation}
S\left(x(t_i),x(t_f)\right) = \int_{t_i}^{t_f}  \left|F\left( x(t),  {\d
x(t) \over\d t} \right)\right| \; \d t,
\end{equation}
which is now guaranteed to be independent of the specific parameterization
$t$. 

By ``pseudo-Finsler geometry'' we mean Finsler geometry with Lorentzian signature. Now by Lorenzian signature of the general Finsler metric (see the equation \ref{Fmet}) we mean, that for any arbitrary vector taken as an argument of the metric we obtain matrix with $(-+++)$ signature. A pseudo-Riemannian norm is only a special case of a pseudo-Finsler
norm. For a pseudo-\-Riemannian manifold with metric $g_{ab}(x)$ one
would take
\begin{equation}\label{Riemann}
F(x, v) = \sqrt{ g_{ab}(x)\; v^a\,v^b},
\end{equation}
but for a general pseudo--Finslerian manifold the function $F(x, v)$
is arbitrary except for the 1-homogeneity constraint in $v$ and the metric signature constraint. Note that
in Euclidean signature (where $g_{ab}(x,v)$ is
taken to be for any arbitrary $v$ a positive definite matrix), the general Finsler function $F(x,v)$ is
typically smooth except at $v=0$. In Lorentzian signature however,
$F(x,v)$ is typically non-smooth for all null vectors --- so that
non-smoothness issues have grown to affect (and infect) the entire
null cone (signal cone). As we shall subsequently see below,
sometimes a suitable higher algebraic \emph{power}, $F^{2n}(x,v)$,
of the pseudo-Finsler norm is smooth.

To ensure smoothness of the (pseudo-)Finsler \emph{metric}, defined
below, it is enou\-gh to weaken the condition that $F(x,v)$ shall be
smooth and to demand only that the square $F^2(x,v)$ be smooth,
except possibly at $v=0$. It is standard to define the
(pseudo-)Finsler \emph{metric} as
\begin{equation}\label{Fmet}
g_{ab}(x,v) \equiv {1\over2} \; { \partial^2 [F^2(x,v)] \over \partial
v^a \; \partial v^b}
\end{equation}
which then satisfies the constraint that it is homogeneous of order zero
\begin{equation}
g_{ab}(x, \kappa\, v ) = g_{ab}(x, v), ~~~~~~\kappa>0.
\end{equation}
This can be viewed as a ``direction-dependent metric'', and is
clearly a significant generalization of the usual (pseudo-)Riemannian
case. One can immediately see that pseudo-Riemannian metric fulfills
this definition with respect to the pseudo-Riemannian norm
\eqref{Riemann}.

Almost all of the relevant mathematical literature has been
developed for the Euclidean signature case. Because of this assumption,
any mathematical result  that depends critically on the assumed
positive definite nature of the matrix of metric coefficients
\emph{cannot} be carried over into the physically interesting
pseudo-Finsler regime, at least not without an independent proof
that avoids the positive definite assumption. (Unfortunately it is
not uncommon to find significant mathematical errors in the
pseudo-Finsler physics literature due to neglect of this elementary
point.)   Basic references within the mathematical literature
include \cite{Bejancu, Finsler}.

The Legendre transformation between a vector tangent space at the
point $x$ and its dual: $V_{x}\to V_{x}^{*}$, is defined as

\begin{equation}
l_{b}(v)\equiv g_{ab}(x,v)v^{a}
\end{equation}

Then the dual (pseudo-)Finsler norm $F^{*}$ can be defined by the
condition:

\begin{equation}
F^{*}(l (v))\equiv F(v).
\end{equation}

The dual metric is again naturally obtained as:

\begin{equation}
g^{ab}(x,v)\equiv\frac{1}{2}\frac{\partial^{2}[F^{*2}(v,x)]}{\partial v_{a}
\partial v_{b}}.
\end{equation}

All this is a natural generalization from the (pseudo-)Riemannian
case. The construction of a full (pseudo-)Finsler geometry is in general
significantly more complicated than in the (pseudo-)Riemannian subcase.
But since the definition of objects like non-linear connection,
Finsler connection, (etc.), is not needed for the purpose of this
chapter, it will be omitted here and left for the specialized
literature (see for example \cite{Rund}).

\section{Analogue model: Birefringent crystal}

\subsection{Outline}

Purely for the purposes of developing a useful analogy, which we
shall use as a guide to the mathematics we wish to develop, we will
focus on the optical physics of bi-axial bi-refringent crystals.
After the basic definitions are presented, we will show how various
purely spatial 3-space Finsler structures arise. (Many purely
technical details, when not directly involved in the logic flow,
will be relegated to the appendix \ref{A:BW}.) We
again emphasize that uni-axial bi-refringent crystals, which are
what much of the technical literature and textbook presentations
typically focus on, are for our purposes rather uninteresting ---
uni-axial bi-refringent crystals ``merely'' lead to bi-metric
Riemannian space-times and are from a Finslerian perspective
``trivial''. Such crystals are only one
particular example demonstrating general difficulties with Finsler
representation of bi-metric theories. The generalization from this
example to any bi-metric case is presented in the following section.

We shall soon see that even in three-dimensional space there are at
least four logically distinct Finsler structures of interest: On the
tangent space each of the two photon polarizations leads, via study
of the group velocity, to two quite distinct Finsler spacetimes. On
the co-tangent space each of the two photon polarizations leads, via
study of the phase velocity, to two quite distinct co-Finsler
spacetimes. The inter-relations between these four structures is
considerably more subtle than one might naively expect.

Additionally, (apart from some purely technical difficulties along
the optical axes in bi-axial crystals), each of these four
3-dimensional spatial Finsler structures has a natural 4-dimensional
extension to a spacetime pseudo-Finsler structure. Beyond that,
there are reasonably natural ways of merging the two photon
polarizations into ``unified'' Finsler and co-Finsler norms, closely
related to the appropriate Fresnel equation, though the associated
Finsler metrics are considerably more problematic --- all these
mathematical constructions do come with a price --- and we shall be
careful to point out exactly where the technical difficulties
lie. Finally, using this well-understood physical system as a
template, we shall (in the spirit of analogue spacetime programme)
then ask what this might tell us about possible Finslerian
extensions to general relativity, and in particular to the subtle
relationship between bi-refringence and bi-metricity, (or more
generally, multi-refringence and multi-metricity).

Specifically, we have investigated the possibility of whether one can usefully and cleanly deal with both Finsler structure
(anisotropy) and multi-refringence simultaneously. That is, given
two (or more) ``signal cones'': Is it possible to naturally and
intuitively construct a ``unified" pseudo-Finsler spacetime such
that the pseudo-Finsler metric specifies null vectors on these
``signal cones'', but has no other zeros or singularities? Our
results are much less encouraging than we had originally hoped, and
lead to a ``no-go'' result.

\subsection{Space versus space-time: Interpretations of the Finsler and co-Finsler structures}
%----------------------------------------------------------------------------------------------------------

The key physics point in bi-axial bi-refringent crystal optics is
that the group velocities, and the phase velocities, are both
anisotropic and depend on direction in a rather complicated way
\cite{BW}. Technical details that would detract from the flow of the
text are relegated to appendix \ref{A:BW}.

%----------------------------------------------------------------------------------------------------------
\subsubsection{From group velocity to pseudo-Finsler norms}
%----------------------------------------------------------------------------------------------------------

We can summarize the situation by pointing out that the group velocity is given by
\begin{equation}
v_g^2(\n) = {\bar q_2(\n,\n) \pm \sqrt{ \bar q_2(\n,\n)^2 - \bar q_0(\n,\n)\;(\n \cdot \n) }\over \bar q_0(\n,\n)},
\end{equation}
where $\bar q_2(\n,\n)$ and $\bar q_0(\n,\n)$ are known quadratic functions of the direction $\n$ and are given as

\begin{equation}
\tilde q_{0}(\mathbf{n},\mathbf{n})=n^{2}_{x}v^{-2}_{y}v^{-2}_{z}+n^{2}_{y}v^{-2}_{x}v^{-2}_{z}+n^{2}_{z}v^{-2}_{x}v^{-2}_{y};
\end{equation}

\begin{equation}
\tilde q_{2}(\mathbf{n},\mathbf{n})=\frac{1}{2}(n^{2}_{x}(v^{-2}_{y}+v^{-2}_{z})+n^{2}_{y}(v^{-2}_{x}+v^{-2}_{z})+n^{2}_{z}(v^{-2}_{x}+v^{-2}_{y})).
\end{equation}

 The coefficients in these quadratic forms are explicit functions of the components of the ~$3~\times~3$~ permittivity tensor. (See appendix \ref{A:BW}.)
The function $v_g(\n)$ so defined is homogeneous of degree zero in the components of $\n$:
\begin{equation}
v_g( \kappa\, \n ) = v_g(\n) = v_g(\hat\n).
\end{equation}
The homogeneous degree zero property should remind one of the relevant feature exhibited by the Finsler metric. There is a natural connection between the concept of group velocity and the geometric objects on a tangent space (rather than a co-tangent space). This is given by the fact that group velocity describes how energy propagates. %add more - explanation

Let us now first define the quantities
\begin{eqnarray}
F_{3\pm}(\n) &=&  {||\n||\over v_g(\n)}
= \sqrt{\bar q_2(\n,\n) \mp \sqrt{ \bar q_2(\n,\n)^2 - \bar q_0(\n,\n)\;(\n \cdot \n) }},
\end{eqnarray}
or adopt the perhaps more transparent notation
\begin{eqnarray}
F_{3\pm}(\d\x) =  {||\d\x||\over v_g(\d\x)}
=
\sqrt{\bar q_2(\d\x,\d\x) \mp \sqrt{ \bar q_2(\d\x,\d\x)^2 - \bar q_0(\d\x,\d\x)\;(\d\x \cdot \d\x) }}~.~~~\label{3norm}
\end{eqnarray}

The quantity $1/v_{g}({\bf n})$ appearing in (\ref{3norm}) is in the literature often called ``slowness''. (\ref{3norm}) is by inspection a 3-dimensional (Riemannian) Finsler distance defined on \emph{space}, having all the correct homogeneity properties, $F_{3\pm}(\kappa\,\d\x) = |\kappa|\, F_{3\pm}(\d\x)$. Physically, the Finsler distance is in this situation the time taken for the wavepacket to travel a distance $\d\x$.

\bigskip

To now extend the construction given above to full (3+1) dimensional \emph{spacetime}, we first define a generic 4-vector
\begin{equation}
\d X = (\d t; \d\x),
\end{equation}
and then formally construct
\begin{equation}
F_{4\pm}(\d X) = \sqrt{ - (\d t)^2 + F_{3\pm}(\d\x)^2 }.
\end{equation}
That is
\begin{equation}
F_{4\pm}(\d X)
=
\sqrt{ - (\d t)^2 + {\d\x\cdot\d\x\over v_g(\d\x)^2} }.
\end{equation}
Even more explicitly, one may write
\begin{eqnarray}
F_{4\pm}(\d X)  ~~~~~~~~~~~~~~~~~~~~~~~~~~~~~~~~~~~~~~~~~~~~~~~~~~~~~~~~~~~~~~~~~~~~~~~~~~~~~~~~~~~~~~~~~~~~~~~~~~~~~~~~~~~~~~~~~~~~~~\nonumber\\
=\Big[ - (\d t)^2 +\bar q_2(\d\x,\d\x)
\mp
\sqrt{ \bar q_2(\d\x,\d\x)^2 - \bar q_0(\d\x,\d\x)\;(\d\x\cdot \d\x) }\Big]^{1/2}.
\qquad
\end{eqnarray}
The null cones (signal cones) of $F_{4\pm}(\d X)$ are defined by
\begin{equation}
F_{4\pm}(\d X) = 0 \qquad \Leftrightarrow \qquad ||\d\x|| = v_g(\d\x) \; \d t.
\end{equation}
So far this has given us a very natural \emph{pair} of (3+1)-dimensional pseudo--Finsler structures in terms of the ray velocities corresponding to the two photon polarizations.

\bigskip

\noindent
For future use, let us now formally define the quantity
\begin{eqnarray}
\d s^4 &=& \left\{ F_4(\d X) \right\}^4
\\
&=& \left\{ F_{4+}(\d X) \; F_{4-}(\d X) \right\}^2
\nonumber
\label{E:4d-dc-F}
\\
&=&
(\d t)^4 - 2 (\d t)^2 \; \bar q_2(\d\x,\d\x) + \bar q_0(\d\x,\d\x)\;(\d\x\cdot \d\x).
\nonumber
\end{eqnarray}
 This certainly provides an example of a specific and simple  $4^{th}$-root Finsler norm that can naturally and symmetrically be constructed from the two polarization mo\-des, and its properties (and defects) are certainly worth investigating. Physically the condition $\d s=0$ defines a double-sheeted conoid (a double-sheeted topological cone)   that is the union of the propagation cone of the individual photon polarizations. This Finsler norm defines the Finsler geometry naturally unifying the two original geometries.  It is also very close to quartic extension of the notion of distance that Bernhard Riemann speculated about in his inaugural lecture (see \cite{Riemann}).

However we shall soon see that when it comes to defining a Finsler spacetime \emph{metric} this construction nevertheless leads to a number of severe technical difficulties; difficulties that can be tracked back to the fact that we are working in non-Euclidean signature.

%----------------------------------------------------------------------------------------------------------
\subsubsection{From phase velocity to pseudo-co-Finsler norms}
%----------------------------------------------------------------------------------------------------------
In counterpoint, as a function of wave-vector the phase velocity is
\begin{equation}
v_p^2(\k) = {q_2(\k,\k) \pm\sqrt{ q_2(\k,\k)^2 - q_0(\k,\k) \; (\k\cdot \k) }\over (\k\cdot\k)},
\end{equation}
where the quadratics $q_2(\k,\k)$ and $q_0(\k,\k)$ are now given by equations~

\begin{equation}
 q_{0}(\mathbf{k},\mathbf{k})=k^{2}_{x}v^{2}_{y}v^{2}_{z}+k^{2}_{y}v^{2}_{x}v^{2}_{z}+k^{2}_{z}v^{2}_{x}v^{2}_{y};
\end{equation}

and

\begin{equation}
 q_{2}(\mathbf{k},\mathbf{k})=\frac{1}{2}\left(k^{2}_{x}(v^{2}_{y}+v^{2}_{z})+k^{2}_{y}(v^{2}_{x}+v^{2}_{z})+k^{2}_{z}(v^{2}_{x}+v^{2}_{y})\right).
\end{equation}

This expression is homogeneous of order zero in $\k$, so that
 \begin{equation}
v_p(\kappa  \, \k) = v_p(\k) = v_p(\hat \k).
\end{equation}
Again, we begin to see a hint of Finsler structure emerging.
Because $\k$ is a wave-vector it transforms in the same way as the gradient of the phase; thus $\k$ is most naturally thought of as living in the 3-dimensional space of co-tangents to physical 3-space. Let us now define a \emph{co-Finsler} structure on that co-tangent space by
\begin{eqnarray}
G_{3\pm}(\k) &=&  v_p(\k)\;  ||\k||
=
\sqrt{ q_2(\k,\k) \pm\sqrt{ q_2(\k,\k)^2 - q_0(\k,\k) \; (\k\cdot \k) } }.
\end{eqnarray}
We use the symbol $G$ rather than $F$ to emphasize that this is a co-Finsler structure, and note that this object satisfies the required homogeneity property
\begin{equation}
G_{3\pm}(\kappa \k) =  |\kappa|\,G_{3\pm}(\k).
\end{equation}
Now let us go for a (3+1) dimensional spacetime interpretation: Consider the 4-co-vector
\begin{equation}
k = \left(\omega; \k \right),
\end{equation}
and define
\begin{equation}
G_{4\pm}(k) = \sqrt{ - \omega^2 + G_{3\pm}(\k)^2 }.
\end{equation}
That is
\begin{equation}
G_{4\pm}(k) =  \sqrt{ - \omega^2 + v_p(\k)^2\; (\k \cdot \k) }.
\end{equation}
More explicitly
\begin{eqnarray}
G_{4\pm}(k)  &=&
 \Big[ - \omega^2 +    q_2(\k,\k)
 \pm \sqrt{q_2(\k,\k)^2 - q_0(\k,\k) \; (\k\cdot \k)}\Big]^{1/2} .
\end{eqnarray}
We again see that this object satisfies the required homogeneity property
\begin{equation}
G_{4\pm}(\kappa k) =  |\kappa|\,G_{4\pm}(k),
\end{equation}
so that this object is indeed suitable for interpretation as a co-Finsler structure. Furthermore the \emph{null co-vectors} of $G_4$ are defined by
\begin{equation}
G_{4\pm}(k) = 0 \qquad \Leftrightarrow \qquad \omega = v_p(\k) \; ||\k||,
\end{equation}
which is exactly the notion of \emph{dispersion relation} for allowed ``on mass shell'' wave-4-vectors that we are trying to capture. Thus $G_{4\pm}$ lives naturally on the co-tangent space to physical spacetime, and we can interpret it as a pseudo-co-Finsler structure.

We can again define a ``unified'' quantity
\begin{eqnarray}
G_{4}(k)^4 &=& \left\{ G_{4+}(k) \; G_{4-}(k) \right\}^2
\nonumber\\
&=&  \omega^4 - 2 \omega^2  \; q_2(\k,\k) + q_0(\k,\k) \; (\k\cdot \k).\quad
\label{E:4d-dc-G}
\end{eqnarray}
Physically, the condition $G_4(k)=0$ simultaneously encodes both dispersion relations for the two photon polarizations. It defines a double-sheeted conoid (a double-sheeted topological cone) that is the union of the dispersion relations of the individual photon polarizations. The vanishing of $G_{4}(k)$ can be viewed as a Fresnel equation, and can indeed be directly related to Fresnel's condition for the propagation of a mode of 4-wavenumber $k = (\omega; \k)$.
As is the case for $F_4(\d X)$,  we shall soon see that this construction (once one tries to extract a spacetime co-Finsler \emph{metric}) nevertheless leads to a number of severe technical difficulties; difficulties that can again be tracked back to the fact that we are now working in non-Euclidean signature.

%----------------------------------------------------------------------------------------------------------
\subsection{Technical issues and problems}
%----------------------------------------------------------------------------------------------------------

The situation as presented so far looks very pleasant and completely
under control --- and if what we had seen so far were all there was
to the matter, then the study of pseudo-Finsler space-times would be
very straightforward indeed --- but now let us indicate where
potential problems are hiding.
\begin{itemize}
\item
Note that up to this stage we have not established any direct
connection between the Finsler functions $F_{3\pm}(\n)$ and the
co-Finsler functions $G_{3\pm}(\k)$. Physically it is clear that they
must be very closely related, but (as we shall soon see)
establishing the precise connection is tricky.

\item
Furthermore, the transition from Finsler \emph{distance} to Finsler
\emph{metric} requires at least two derivatives.  Even in Euclidean
signature this places some smoothness constraints on the Finsler
distance, smoothness constraints that are nontrivial and not always
satisfied. 

\item Especially, there are problematic technical issues involving the 4-di\-men\-sion\-al \emph{spacetime} Finsler and co-Finsler metrics --- certain components of the metric are infinite, and this time the potential pathology is widespread. (In Lorentzian-like signature situations potential problems tend to infect the entire null cone.)

\end{itemize}

%----------------------------------------------------------------------------------------------------------
\subsubsection{The Finsler and co-Finsler 3-metrics}
%----------------------------------------------------------------------------------------------------------

The standard definition used to generate a Finsler \emph{metric}
from a Finsler \emph{distance} is to set:
\begin{equation}
g_{ij}(\n) =  {1\over2} {\partial^2 [F_{3\pm}(\n)^2]\over \partial
n^i \, \partial n^j},
\end{equation}
which in this particular case implies
\begin{equation}
g_{ij}(\n)  = {1\over2} {\partial^2 [\bar q_2(\n,\n) \mp \sqrt{ \bar
q_2(\n,\n)^2 - \bar q_0(\n,\n)\;(\n \cdot \n) }]\over \partial n^i
\, \partial n^j}.
\end{equation}
It is convenient to rewrite the quadratics as
\begin{equation}
\bar q_2(\n,\n) = [\bar q_2]_{ij}\; n^i\,n^j;
\end{equation}
\begin{equation}
\bar q_0(\n,\n) = [\bar q_0]_{ij}\; n^i\,n^j;
\end{equation}
since then we see
\begin{eqnarray}
g_{ij}(\n) &=&  [\bar q_2]_{ij} \mp \hbox{(discriminant
contributions)}.
\end{eqnarray}
Unfortunately we shall soon see that the contributions coming from
the discriminant are both messy, and in certain directions,
ill-defined. This is obvious from the fact that squares of both
Finsler functions are not even everywhere differentiable.

Similarly we can construct a Finsler co-metric:
\begin{equation}
h^{ij}(\k) =  {1\over2} {\partial^2 [G_{3\pm}(\k)^2]\over \partial
k_i \, \partial k_j},
\end{equation}
which specializes to
\begin{equation}
h^{ij}(\k) = {1\over2} {\partial^2 [q_2(\k,\k) \mp
\sqrt{q_2(\k,\k)^2 - \bar q_0(\k,\k)\;(\k \cdot \k) }]\over \partial
k_i \, \partial k_j}.
\end{equation}
It is again convenient to rewrite the quadratics as
\begin{equation}
q_2(\k,\k) = [q_2]^{ij}\; k_i\, k_j;
\end{equation}
\begin{equation}
q_0(\k,\k) = [q_0]^{ij}\; k_i\,k_j;
\end{equation}
since then we see
\begin{eqnarray}
h^{ij}(\k) &=&  [q_2]^{ij} \mp \hbox{(discriminant contributions)}.
\end{eqnarray}
Again we shall soon see that the contributions coming from the
discriminant are, in certain directions, problematic.

%----------------------------------------------------------------------------------------------------------
\subsubsection{Technical problems with the Finsler 3-metric}
%---------------------------------------------------------------------------------------------------------

Consider the (ray) discriminant
\begin{equation}
\bar D = \bar q_2(\n,\n)^2 - \bar q_0(\n,\n)\;(\n \cdot \n).
\end{equation}
There are three cases of immediate (mathematical) interest:

{\bf Isotropic:} If $v_x=v_y=v_z$ then $\bar D=0$; in this case the
two Finsler functions $F_\pm$ are equal to ech other. $F_3(\d\x)$
then describes an ordinary Riemannian geometry, and $F_4(\d X)$ an
ordinary pseudo--Riemannian geometry. This is the standard
situation, and is \emph{for our current purposes} physically uninteresting.

{\bf Uni-axial:} If one of the principal velocities is distinct from
the other two, then  we can without loss of generality set $v_x=v_y
= v_o$ and $v_z=v_e$.  The discriminant then factorizes into a
perfect square
\begin{equation}
\bar D = \left\{ {(v_o^2-v_e^2)(n_x^2+n_y^2)\over 2 v_o^2 v_e^2 }
\right\}^2.
\end{equation}
In this case it is immediately clear that both $F_{3\pm}(\d\x)$
reduce to simple quadratics, and so describe two ordinary Riemannian
geometries. Indeed
\begin{equation}
F_{3+}(\n) = {\n\cdot \n\over v_o^2};
\end{equation}
and
\begin{equation}
F_{3_-}(\n) = {n_x^2+n_y^2\over v_e^2} + {n_z^2\over v_o^2}.
\end{equation}
In the language of crystal optics $v_o$ and $v_e$ are the
``ordinary'' and ``extraordinary'' ray velocities of a uni-axial
birefringent crystal.  In geometrical language  the two photon
polarizations ``see'' distinct Riemannian 3-geometries
$F_{3\pm}(\d\x)$ and distinct pseudo-Riemannian 4-geometries
$F_{4\pm}(\d X)$ --- this situation is referred to as ``bi-metric''.
This situation is \emph{for our current purposes} physically uninteresting.

{\bf Bi-axial:} The full power of the Finsler approach is
\emph{only} needed for the bi-axial situation where the three
principal velocities are distinct. This is the \emph{only} situation
of real physical interest for us, as it is the only situation that
leads to a non-trivial Finsler metric. In this case we can without
loss of generality orient the axes so that $v_x>v_y>v_z$.  There are
now two distinct directions in the $x$--$z$ plane where the
discriminant vanishes --- these are the called the (ray) optical
axes. After some manipulations that we relegate to Appendix \ref{A:axes}, the
discriminant can be factorized as
\begin{equation}
\label{E:discriminant} \bar D = {(v_x^2-v_z^2)^2\over 4 v_x^4 v_z^4}
\times \left[  (\n\cdot \n) - (\bar\e_1\cdot\n)^2 \right] \left[
(\n\cdot \n) - (\bar\e_2\cdot\n)^2 \right],
\end{equation}
where the two distinct  (ray) optical axes are
\begin{equation}
\bar \e_{1,2} = \left( \pm { {v_y\over v_x} \sqrt{v_x^2-v_y^2\over
v_x^2-v_z^2}}; \;\; 0 \;\; ;
 { {v_y\over v_z} \sqrt{v_y^2-v_z^2\over v_x^2-v_y^2}}  \right).
\end{equation}
Note that $\bar \e_{1,2}$ are unit vectors (in the ordinary
Euclidean norm) so that the discriminant $\bar D$ vanishes for any
$\n\propto \bar\e_{1,2}$, and does not vanish anywhere else.  We can
thus  introduce projection operators $\bar P_1$ and $\bar P_2$ and
write
\begin{equation}
\bar P_1(\n,\n) =  (\n\cdot \n) - (\bar\e_1\cdot\n)^2;
\end{equation}
\begin{equation}
\bar P_2(\n,\n) =  (\n\cdot \n) - (\bar\e_2\cdot\n)^2.
\end{equation}
Combining this with our previous results:
\begin{eqnarray}
\left\{ F_{3\pm}(\n)\right\}^2 &=& \bar q_2(\n,\n) \mp
{(v_x^2-v_z^2)\over 2 v_x^2 v_z^2} \sqrt{ \bar P_1(\n,\n) \; \bar
P_2(\n,\n) }.\quad
\end{eqnarray}
If we now calculate the Finsler metric $[g_{3\pm}(\n)]_{ij}$ we
shall rapidly encounter technical difficulties due to the
discriminant term. To make this a little clearer, let us define
\begin{equation}
[\bar P_3(\n)]_{ij} = {\partial^2 \sqrt{ \vbox to 10.5pt{\null} \bar
P_1(\n,\n) \; \bar P_2(\n,\n) } \over \partial n^i \; \partial n^j},
\end{equation}
since then
\begin{equation}
[g_3(\n)]_{ij}  = [\bar q_2(\n)]_{ij} \mp   {(v_x^2-v_z^2)\over 2
v_x^2 v_z^2}  [\bar P_3(\n)]_{ij}.
\end{equation}
Temporarily suppressing the argument $\n$, we have
\begin{equation}
[\bar P_3]_{ij} =  {1\over2} \partial_i \left[ \; \partial_j  \bar
P_1 \;\sqrt{ \bar P_2\over  \bar P_1} +  \partial_j  \bar P_2
\;\sqrt{ \bar P_1\over  \bar P_2} \; \right].
\end{equation}
A brief computation now yields the rather formidable result
\begin{eqnarray}
[\bar P_3]_{ij} &=& {1\over2}  \left[ \; \partial_i \partial_j \bar
P_1 \; \sqrt{\bar P_2\over \bar P_1} +  \partial_i \partial_j \bar
P_2\; \sqrt{\bar P_1\over \bar P_2} \; \right] \nonumber
\\
&& + {1\over4} \left[  {\partial_i \bar P_1 \; \partial_j \bar P_2 +
\partial_i \bar P_2 \; \partial_j \bar P_1 \over \sqrt{\bar P_1 \bar
P_2}} -
\partial_i \bar P_1 \; \partial_j \bar P_1 \; {\bar P_2^{1/2}\over \bar  P_1^{3/2}}
-
\partial_i \bar P_2\;  \partial_j \bar P_2 \; {\bar P_1^{1/2}\over \bar P_2^{3/2}}
\right]\nonumber
\\
&=& {1\over2\sqrt{\bar P_1\bar P_2}}  \left[ \; \partial_i
\partial_j \bar P_1 \; \bar P_2 +  \partial_i \partial_j \bar P_2\;
\bar P_1 \; \right] \nonumber
\\
&& + {1\over4\sqrt{\bar P_1\bar P_2}} \left[  \partial_i \bar P_1 \;
\partial_j \bar P_2 + \partial_i \bar P_2 \; \partial_j \bar P_1 -
\partial_i \bar P_1 \; \partial_j \bar P_1 \; {\bar P_2\over \bar P_1}
-
\partial_i \bar P_2\;  \partial_j \bar P_2 \; {\bar P_1\over \bar P_2}
\right].~~~~\qquad
\end{eqnarray}
From this expression it is clear that along either optical axis, (as
long as the optical axes are distinct, which is automatic in any
bi-axial situation), \emph{some} of the components of $[\bar
P_3]_{ij}$, and therefore \emph{some} of the components of the
Finsler metric $[g_{3\pm}]_{ij} = [\bar q_2]_{ij} \pm
\hbox{(constant)} \times [\bar P_3]_{ij}$, will be \emph{infinite}.

To see this in an invariant way, let $\u$ and $\w$ be two 3-vectors
and consider
\begin{equation}
[\bar P_3(\n)](\u,\w)  =  [\bar P_3]_{ij} \; u^i\; w^j.
\end{equation}
After a brief computation:
\begin{eqnarray}
[\bar P_3(\n)](\u,\w) &=& {1\over2\sqrt{\bar P_1(\n,\n)\,\bar
P_2(\n,\n)}} \left[ \; \bar P_1(\u,\w) \; \bar P_2(\n,\n) +  \bar
P_2(\u,\w)\; \bar P_1(\n,\n) \; \right] \nonumber
\\
&&+ {1\over4\sqrt{\bar P_1(\n,\n)\,\bar P_2(\n,\n)}} \Bigg[  \bar
P_1(\n,\u) \; \bar P_2(\n,\w) + \bar P_2(\n,\u) \; \bar P_1(\n,\w)
\nonumber
\\
&& - \bar P_1(\n,\u) \; \bar P_1(\n,\w)  \; {\bar P_2(\n,\n)\over
\bar P_1(\n,\n)} -
 \bar P_2(\n,\u)\;  \bar P_2(\n,\w)  \; {\bar P_1(\n,\n)\over \bar P_2(\n,\n)}
\Bigg].~~~~~\qquad
\end{eqnarray}

\noindent This quantity will tend to infinity as $\n$ tends to
either optical axis provided:
\begin{itemize}

\item The optical axes are distinct.

(If the optical axes are coincident then $\bar P_1=\bar P_2$ and so
$\bar P_3$ degenerates to
\begin{equation}
[\bar P_3(\n)](\u,\w) \to \bar P_1(\u,\w) = \bar P_2(\u,\w).
\end{equation}
One recovers the [\emph{for our purposes} physically uninteresting] result
for a uni-axial crystal.)

\item One is not considering the special case $\u=\w=\n$.

(In this particular special case $\bar P_3$ degenerates to
\begin{equation}
[\bar P_3(\n)](\n,\n) \to \sqrt{ \bar P_1(\n,\n)\; \bar P_2(\n,\n)},
\end{equation}
which is well-behaved on either optical axis.)
\end{itemize}

\noindent In summary:
\begin{itemize}
\item
The spatial Finsler 3-metric is $[g_{3\pm}]_{ij}$ \emph{generically
ill-behaved on either optical axis}.
\item
This feature will also afflict the \emph{spacetime} pseudo-Finsler
4-metric $[g_{4\pm}]_{ab}$ defined by suitable derivatives of the
Finsler 4-norm $F_{4\pm}$.
\item
This particular feature is annoying, but seems only to be a
technical problem to do with the specifics of crystal optics, it
does not seem to us to be a critical obstruction the developing a
space-time version of Finsler geometry. Ultimately it arises from
the fact that the ``null conoid'' is given by a quartic; this leads
to two topological cones that for topological reasons always
intersect, this intersection defining the optical axes.
\item It is the technical problems associated with the (3+1) \emph{spacetime} Finsler metric, to be discussed below, which much more deeply concern us.
\end{itemize}

\subsubsection{Technical problems with the co-Finsler 3-metric}
%----------------------------------------------------------------------------------------------------------

The phase discriminant
\begin{equation}
D = q_2(\k,\k)^2 - q_0(\k,\k)\;(\k \cdot \k),
\end{equation}
arising from the Fresnel equation (and considerations of the phase velocity) exhibits features similar to those arising for the ray discriminant. There are three cases:

{\bf Isotropic:} If the crystal is isotropic, then $D=0$. (This again is \emph{for our purposes} physically uninteresting.)

{\bf Uni-axial:} If the crystal  is uni-axial, then $D$ is a perfect square
\begin{equation}
D = \left\{ {(v_o^2-v_e^2)(k_x^2+k_y^2)\over 2}   \right\}^2,
\end{equation}
and so the co-Finsler structures $G_{3\pm}$ are both Riemannian:
\begin{equation}
G_{3+}(\k) = {v_o^2 \; \k\cdot \n};
\end{equation}
\begin{equation}
G_{3_-}(\k) = { v_e^2\,(k_x^2+k_y^2)} + {v_o^2\, k_z^2}.
\end{equation}
 (This situation again is \emph{for our purposes} physically uninteresting.)

{\bf Bi-axial:} Only in the bi-axial case are  the co-Finsler structures $G_{3\pm}$ ``truly'' Finslerian.
There are now two distinct (phase) optical axes (wave-normal optical axes) along which the discriminant is zero, these optical axes being given by
\begin{equation}
\hat\e_{1,2} =
\left( \pm \sqrt{v_x^2-v_y^2\over v_x^2-v_z^2}; \;\;0\;\; ; \sqrt {v_y^2-v_z^2\over v_x^2-v_z^2} \right),
\end{equation}
in terms of which the phase discriminant also factorizes
\begin{equation}
 D =  {(v_x^2-v_z^2)^2\over 4}
\left[ (\k\cdot\k) - \left(\k \cdot \e_1 \right)^2 \right]  \left[ (\k\cdot \k) - \left(\k \cdot \e_2 \right)^2 \right].
\end{equation}
The co-Finsler norm is then (now using projection operators $P_1$ and $P_2$ based on the phase optical axes $\e_{1,2}$)
\begin{eqnarray}
\left\{ G_{3\pm}(\k)\right\}^2 &=& q_2(\k,\k) \mp
{{(v_x^2-v_z^2)\over2}}
\sqrt{
P_1(\k,\k) \; P_2(\k,\k)
}.
\end{eqnarray}
The co-Finsler \emph{metric} is defined in the usual way
\begin{equation}
[h_{3\pm}]^{ij}(\k) = {1\over2} {\partial^2 [G_{3\pm}(\k)^2]\over\partial k^i \; \partial k^j}.
\end{equation}
This now has the interesting ``feature'' that some of its components are infinite when evaluated on the (phase) optical axes. That is: The co-Finsler 3-metric is $[h_{3\pm}]^{ij}$ generically ill-behaved on either optical axis. This feature will also afflict the pseudo-co-Finsler 4-metric $[h_{4\pm}]_{ab}$ defined by suitable derivatives of the Finsler 4-norm $G_{4\pm}$.

%----------------------------------------------------------------------------------------------------------
\subsubsection{Technical problems with the (3+1) spacetime
interpretation}
%----------------------------------------------------------------------------------------------------------

The (3+1)-dimensional spacetime objects that give the best way how to merge two (3+1) pseudo-Finsler norms and pseudo-co-Finsler norms into one geometry are the quantities
\begin{equation}
F_4(\d X)^4 = \left\{ F_{4+} (\d X) \; F_{4-} (\d X) \right\}^2;
\end{equation}
and
\begin{equation}
G_{4}(k)^4 =\left\{ G_{4+}(k) \; G_{4-}(k) \right\}^2.
\end{equation}
as defined in equation (\ref{E:4d-dc-F}) and (\ref{E:4d-dc-G}).
This is tantamount to taking
\begin{equation}
F_4(\d X) = \sqrt{  F_{4+} (\d X) \; F_{4-} (\d X) };
\end{equation}
and
\begin{equation}
G_{4}(k) =\sqrt{ G_{4+}(k) \; G_{4-}(k) }.
\end{equation}
Now $F_4(\d X)$ and $G_4(k)$ are by construction perfectly well behaved Finsler and co-Finsler \emph{norms}, with the correct homogeneity properties --- and with the nice and concise physical interpretation that the vanishing of $F_4(\d X)$ defines a double-sheeted
``signal cone'' that includes both polarizations, while the vanishing of $G_4(k)$ defines a double-sheeted ``dispersion relation''   (``mass shell'') that includes both polarizations. (Thus $F_4(\d X)$ and $G_4(k)$ successfully unify the ``on-shell'' behaviour of the signal cones in a Fresnel-like manner.)

While this is not directly a ``problem" as such, the norms $F_4(\d X)$ and $G_4(k)$ do have the interesting ``feature" that they pick up non-trivial complex phases: Since $F_{4\pm}(\d X)^2$ is always real,  (positive inside the propagation cone, negative outside), it follows that  $F_{4\pm}(\d X)$ is either pure real or pure imaginary. But then, thanks to the additional square root in defining  $F_4(\d X)$, one has:
\begin{itemize}
\item  $F_4(\d X)$  is pure real inside both propagation cones.
\item  $F_4(\d X)$  is proportional to $\sqrt{i} = {(1+i)\over\sqrt2}$
between the two propagation cones.
\item  $F_4(\d X)$  is pure imaginary outside both propagation cones.
\end{itemize}
Similar comments apply to the co-Finsler norm $G_4(k)$.

A considerably more problematic point is this: In the usual Euclidean signature situation the Finsler norm is taken to be smooth everywhere except for the zero vector --- this is usually phrased mathematically as ``smooth on the slit tangent bundle''.  What we see here is that in a Lorentzian-like signature situation the Finsler norm cannot be smooth as one crosses the propagation cones --- what was in Euclidean signature a feature that only arose at the zero vector of each tangent space has in Lorentzian-like signature situation grown to affect (and infect) all null vectors. The Finsler norm is here at best ``smooth on the tangent bundle excluding the null cones''. (In a mono-refringent case the squared norm, $F_4(\d X)^2$, is smooth across the propagation cones, but in the bi-refringent case one has to go to the fourth power of the norm, $F_4(\d X)^4$, to get a smooth function.)

{\bf A ``no go'' result:}
Unfortunately, when attempting to bootstrap these two reasonably well-behaved  \emph{norms} to Finsler and co-Finsler \emph{metrics} one encounters additional and more significant complications. We have already seen that there are problems with the spatial 3-metrics $[g_{3\pm}]_{ij}(\n)$ and $[h_{3\pm}]^{ij}(\k)$
on the optical axes, problems which are inherited by the single-polarization spacetime (3+1)-metrics $[g_{4\pm}]_{ab}(\n)$ and $[h_{4\pm}]^{ab}(\k)$, again on the optical axes.

But now the spacetime (3+1)-metrics
\begin{equation}
[g_4]_{ab}(n) =  {1\over2} {\partial^2 [F_4(n)^2]\over\partial n^ a \; \partial n^b},
\end{equation}
and
\begin{equation}
[h_4]^{ab}(k) = {1\over2} {\partial^2 [ G_4(k)^2]\over\partial k_a \; \partial k_b},
\end{equation}
both have (at least some) infinite components --- $[g_4]_{ab}(n)$ has infinities on the entire signal cone, and $[h_4]^{ab}(k)$ has infinities on the entire mass shell. Since the argument is essentially the same for both cases, let us perform a single calculation:
\begin{eqnarray}
g_{ab} &=& {1\over2} \; \partial_a \partial_b \sqrt{ [F_+^2 \; F_-^2 ]}
\\
&=& {1\over4} \partial_a \left[ \partial_b [F_+^2] \; {F_-\over F_+}
+  \partial_b [F_-^2]\;  {F_+\over F_-} \right],
\end{eqnarray}
so that
\begin{eqnarray}
g_{ab}
&=& {1\over4}  \left[ \partial_a \partial_b [F_+^2] \; {F_-\over F_+} +
 \partial_b \partial_b [F_-^2]  \; {F_+\over  F_-} \right]
\nonumber
\\
&&+ {1\over2} \bigg[
{\partial_a F_+ \partial_b F_- + \partial_a F_- \partial_b F_+ }
-
\partial_a F_+ \partial_b  F_+ {F_-\over F_+}
-
\partial_a F_- \partial_b F_- {F_+\over F_-}
\bigg].~~~
\end{eqnarray}
That is, tidying up:
\begin{eqnarray}
g_{ab} &=& {1\over2} \left[ (g_+)_{ab} {F_2\over F_1}
+  (g_-)_{ab} {F_1\over F_2} \right]
\nonumber\\
&&
+ {1\over2} \bigg[  {\partial_a F_+ \partial_b F_-+ \partial_a F_- \partial_b F_+}
-
\partial_a F_+ \partial_b F_+ {F_-\over F_+}
-
\partial_a F_- \partial_b F_- {F_+\over F_-}
\bigg].~~~
\end{eqnarray}
The problem is that this ``unified''  metric $g_{ab}(n)$ has singularities on both of the signal cones.
The (relatively) good news is that the quantity $g_{ab}(n) \, n^a n^b = F^2(n)$, and so on either propagation cone $F\to 0$, so $F(n)$ itself has a well defined limit.
But now let $n^a$ be the vector the Finsler metric depends on, and let  $w^a$ be some \emph{other} vector.
Then
\begin{eqnarray}
 g_{ab}(n)  \, n^a w^b &=&  {1\over2} n^a w^b \partial_a \partial_b [F^2]
 \\
 &=& {1\over2} w^b \partial_b [F^2]
 \\
 &=&  {1\over2} w^b \partial_b \sqrt{F_+ F_-}
\\
&=& {1\over4} w^b \left[ \partial_b [F_+^2] \; {F_-\over F_+} +
\partial_b [F_-^2] \; {F_+\over F_-} \right]
\\
&=& {1\over2} \bigg\{ (\,[g_+]_{ab} \, n^a w^b) \; {F_-\over F_+}
+
 (\,[g_-]_{ab} \, n^a w^b]) \; {F_+\over F_-} \bigg\}.
\end{eqnarray}
The problem now is this: $g_+$ and $g_-$ have been carefully constructed to be individually well defined and finite (except at worst on the optical axes). But now as we go to propagation cone ``$+$'' we have
\begin{equation}
g_{ab}(n) \, n^a w^b \to {1\over2}  (\,[g_+]_{ab} \; n^a w^b) \;  {F_-\over 0}  = \infty,
\end{equation}
and as we go to the other propagation cone ``$-$'' we have
\begin{equation}
g_{ab}(n) \, n^a w^b \to {1\over2}  (\,[g_-]_{ab} \; n^a w^b) \;  {F_+\over 0}  = \infty.
\end{equation}
So at least some components of this ``unified'' Finsler metric $g_{ab}(n)$ are unavoidably singular on the propagation cones. Related (singular) phenomena have previously been encountered in multi-component BECs, where multiple phonon mo\-des can interact to produce Finslerian propagation cones \cite{lnp}.

Things are just as bad if we pick $u$ and $w$ to be \emph{two} vectors distinct from ``the direction we are looking in'', $n$. In that situation
\begin{eqnarray}
g_{ab}(n) \, u^a w^b &=& {1\over2} \Bigg[
g_+(u,w) \,{F_-\over F_+} + g_-(u,w) \,{F_+\over F_-}
\nonumber
\\
&&
+ {g_+(u,n) \, g_-(w,n) + g_+(w,n) \, g_-(u,n)\over F_+ \, F_-}
\nonumber
\\
&&
- g_+(u,n) \, g_+(w,n) \, {F_-\over F_+^3}
- g_-(u,n) \, g_-(w,n) \, {F_+\over F_-^3}
\Bigg].
\end{eqnarray}
Again, despite the fact that both $g_+$ and $g_-$ have been very carefully set
up to be regular on the propagation cones (except for the known, isolated, and tractable
problems on the optical axes), the ``unified'' metric $g_{ab}(n)$ is
unavoidably singular there --- unless, that is, you \emph{only}
choose to look in the $nn$ direction.

If we give up the condition of mathematical simplicity in the construction of the ``unified'' space-time (co-)Finsler norms, then $F_4$ and $G_4$ do not have to factorize into a product of powers of the Finsler functions for individual polarizations. We could then look for more complicated ways of building ``unified'' Finsler and co-Finsler structures (constrained mainly by giving the correct light propagation cones in the birefringent crystal), and we might be able to find an appropriate pseudo-Finsler geometry (which might also fulfill some additional reasonable physical conditions). The necessary conditions for such a geometry are relatively easy to formulate (and this is done in the last section), but because of the lack of any intuitive interpretation, and the corresponding lack of direct physical motivation, the  physical meaning of such an approach is highly doubtful \cite{thessalonika}.

\section{General bi-metric situations}

In the previous part one could observe that the arguments (given the
way they were constructed) might apply for a large set of  different
situations than the bi-refringent crystal. In fact they generally hold within the class of bi-metric theories.

Take arbitrary bi-metric theory. Such theories contain two distinct pseudo-Riemannian
metrics $g^{\pm}_{ab}$, so we can define two distinct ``elementary''
pseudo-\-Rie\-mann\-ian norms
\begin{equation}
F_{\pm}(x,v) = \sqrt{g^{\pm}_{ab}(x)v^{a}v^{b}}.
\end{equation}
Suppose one now wants a combined Finsler norm that simultaneously
encodes both signal cones --- then the natural thing is always to
take

\begin{equation}
F(x,v)=\sqrt{F_{+}(x,v)F_{-}(x,v)};~~~~~~~~~~g_{abcd}=g^{+}_{(ab}g^{-}_{cd)}.
\end{equation}

This construction for $F(x,v)$ is automatically  1-homogeneous in $v$. Generally the vanishing of $F(x,v)$ correctly
encodes the two signal cones. So this definition of $F(x,v)$ provides a perfectly good Finsler norm.

However only from the fact that the individual $F_{\pm}$ are not positive definite automatically follows that:

\begin{itemize}

\item
 $F(x,v)$ is proportional to $\sqrt{i} = \frac{1+i}{\sqrt{2}}$ between the two propagation cones, hence picks up a non-trivial phase.

\item
 As a result this Finsler norm is at best ``smooth on the tangent bundle excluding the null cones''.
\end{itemize}
For the ``unified'' metric, from the fundamental definition we see
\begin{eqnarray}
g_{ab}(x,v)=~~~~~~~~~~~~~~~~~~~~~~~~~~~~~~~~~~~~~~~~~~~~~~~~~~~~~~~~~~~~~~~~~~~~~~~~~~~~~~~~~~~~~~~~~~~~~~~~~~~~~~~\nonumber\\
\frac{1}{2}\left[g^{+}_{ab}\frac{F_{-}}{F_{+}}+g^{-}_{ab}\frac{F_{+}}{F_{-}}\right]+\frac{1}{2}\left[\partial_{a}F_{+}\partial_{b}F_{+}-\partial_{a}F_{+}\partial_{b}F_{-}\frac{F_{-}}{F_{+}}-\partial_{a}F_{-}\partial_{b}F_{-}\frac{F_{+}}{F_{-}}\right].
\end{eqnarray}
This ``unified'' and ``natural'' Finsler metric $g_{ab}(x,v)$ has
necessarily singularities on both of the signal cones.

 The quantity $g_{ab}(x,v)v^{a}v^{b} = F^{2}(x,v)$ is still being well defined, but if $v^{a}$ is the vector the Finsler metric depends on, and $w^{a}$ some other (arbitrary) vector, then
\begin{eqnarray}
g_{ab}(x,v)v^{a}w^{b}~~~~~~~~~~~~~~~~~~~~~~~~~~~~~~~~~~~~~~~~~~~~~~~~~~~~~~~~~~~~~~~~~~~~~~~~~~~~~~~\nonumber\\
=\frac{1}{2}v^{a}w^{b}\partial_{a}\partial_{b}\sqrt{F^{2}_{+}F^{2}_{-}}=\frac{1}{2}w^{b}\partial_{b}\sqrt{F_{+}F_{-}}~~~~~~~~~~~~~~~~~~~~~~~~~\nonumber\\
=\frac{1}{4}w^{b}\left[\partial_{b}[F^{2}_{+}]\frac{F_{-}}{F_{+}}+\partial_{b}[F^{2}_{-}]\frac{F_{+}}{F_{-}}\right]~~~~~~~~~~~~~~~~~~~~~~\nonumber\\
=\frac{1}{2}\left[(g^{+}_{ab}v^{a}w^{b})\frac{F_{-}}{F_{+}}+(g^{-}_{ab}v^{a}w^{b})\frac{F_{+}}{F_{-}}\right].
\end{eqnarray}
As we go to propagation cone ``+'', necessarily
\begin{equation}
g_{ab}(x,v)v^{a}w^{b}\to\frac{1}{2}(g^{+}_{ab}v^{a}w^{b})\frac{F_{-}}{0}=\infty,
\end{equation}
and as we go to the other propagation cone ``-'', necessarily
\begin{equation}
g_{ab}(x,v)v^{a}w^{b}\to\frac{1}{2}(g^{-}_{ab}v^{a}w^{b})\frac{F_{+}}{0}=\infty.
\end{equation}
So at least some components of this ``unified'' Finsler metric
$g_{ab}(x,v)$ are (completely generally) unavoidably singular on the
propagation cones. And again, things are just as bad if we pick $u$
and $w$ to be two vectors distinct from $v$.

In other words we see that all the problems which arose in the
particular case of bi-refringent bi-axial crystal are in fact also features of
arbitrary bi-metric situation.

\section{General construction}

\subsection{General constraints}
%-----------------------------------
In the previous sections we observed that an ``intuitive'' way of encoding bi-metricity into pseudo-Finsler geometry leads to
significant problems which seem inevitable and unavoidable. The other problematic
part which was omitted is whether such intuitive construction leads
to appropriate pseudo-Riemann\-ian limit for, (in some frame), slowly moving objects.
Here we discuss all the general constrains on a pseudo-Finsler
geometry recovering arbitrary bi-refrin\-gence.

What are the specific physical constraints given by bi-refringent
theories on our geometry? We consider it meaningful to impose the
following constraints:
\begin{itemize}
\item[a)] Locally there must exist a coordinate frame in which holds the following:
At any arbitrary point from the domain of these coordinates take
 within this frame the purely
time-oriented, $(v,\mathbf{0})$ vector. Then on some neighborhood of
this vector the pseudo-Finsler norm approaches the
pseudo-Riemannian norm. Also each
``constant proper time'' hypersurface, specified by $F^{2}(v)=-\tau_0^{2}$,~ is connected, and contains vectors $(v,\mathbf{0})$.
\item[b)] The pseudo-norm must break Lorentz invariance as encoded in the ``Fresnel'' equation
 giving the bi-refringence.

 \end{itemize}
Let us discuss these conditions a little bit: The first is just
requirement that we want to recover, in some frame, low-energy
physics as we know it (in a pseudo-Riemann form). It also means that if we have a massive particle, it should not
be able to have two distinct sets of four-velocities between which
it is not able to undergo a smooth transition. The last constraint it gives is that
any massive particle should be able to move arbitrarily slowly with
respect to the used ``preferred'' frame. The
second condition is trivial: it is the specific contribution of the
given bi-refringent model (unlike the first condition,
which can be considered as generic).

\subsection{How one proceeds for general bi-metric situations}

\subsubsection{Co-metric structure construction}
%-----------------------------------
Take the general bi-metric situation and define
\begin{equation}
G(x,k)^{4}=G_{+}(x,k)^{2} G_{-}(x,k)^{2}
\end{equation}
where
\begin{equation}
G_{\pm}(x,k)^{2}=g_{\pm}^{ab}(x)k_{a}k_{b}.
\end{equation}
This suggests a generic candidate for the pseudo-co-Finsler norm,
which can be written as $\tilde G(x,k)^{2}\propto G(x,k)^{4}$ or more
precisely $\tilde G(x,k)^{2}=M_{1}(x,k)\,G(x,k)^{4}$, where $M_{1}(x,k)$ is
an otherwise arbitrary $\mathbb{R}$-valued function fulfilling following constraints:
\begin{itemize}
\item[a)]
$\displaystyle \frac{\partial[M_{1}(x,k)G(x,k)^{4}]}{\partial k^{i}}$ is
a bijection from $V^*\to V$;
\item[b)]
$M_{1}(x,k)$ is in $k$ a homogeneous map of degree $-2$;
\item[c)]
 $M_{1}(x,k)$ is smooth on $V^*/\{0\}$;
\item[d)]
 $M_{1}(x,k)$ is inside both signal cones negative, outside both positive, between them nonzero;
\item[e)]
$M_{1}(x,k)\,G(x,k)^{4}$ must approximate the pseudo-Riemann norm
(squared) locally in some coordinates for $k$ close to
$(v,\mathbf{0})$ and such vectors must lie on some hypersurface
(mass-shell) given by: $M_{1}(x,k)\, G(x,k)^{4}=-m^{2}$, which must be
always connected.
\end{itemize}
These conditions follow trivially from the conditions we imposed on any physically meaningful Finslerian geometry.

\subsubsection{Metric structure construction}
%-----------------------------------
Take the given bi-metric situation and define
\begin{equation}
F(x,v)^{4}=F_{+}(x,v)^{2}F_{-}(x,v)^{2},
\end{equation}
with
\begin{equation}
F_{\pm}(x,v)^{2}=g_{ab}^{\pm}(x)v^{a}v^{b}.
\end{equation}
The generic Finsler pseudo-norm is in this case expressed exactly in
the same way as was the  co-Finsler pseudo-norm in the previous
case, by the function $\tilde F(x,v)^{2}=M_{2}(x,v)\; F(x,v)^{4}$, where
$M_{2}(x,v)$ is again an arbitrary function fulfilling exactly the
same conditions as the $M_{1}(x,k)$ function, we just have to exchange
$V^*$ for $V$.

\subsubsection{Interconnecting the metric structure with the co-metric structure}
Now the last step in putting constraints on geometric construction
is to relate the Finsler and co-Finsler structures. The items which
are as yet undetermined  are the functions $M_{1}(x,v)$, and
$M_{2}(x,k)$. So they have to fulfill the last condition, which is the
condition of forming the full united geometry. This condition is
given by the Legendre transform relation:
\begin{equation}
\left[\tilde G\left(x,\frac{\partial [\tilde F(x,v)^2]}{\partial
v^{i}}\right)\right]^2=\tilde F(x,v)^{2}.
\end{equation}
In our language, the function $M_{2}(x,v)$ must be connected to
$M_{1}(x,k)$ by:
\begin{eqnarray}
M_{1}\left(x,\frac{\partial [M_{2}(x,v)F(x,v)^{4}]}{\partial v^{i}}\right)
\; G\left(x,\frac{\partial [M_{2}(x,v)F(x,v)^{4}]}{\partial
v^{i}}\right)^{4}~~~~~~~~~~~~~~~~~~~~\nonumber\\
=M_{2}(x,v) \; F(x,v)^{4}. ~~~\label{E:compatibility}
\end{eqnarray}
Thus we see that if we have somehow  found an  appropriate
$M_{2}(x,v)$, then $M_{1}(x,k)$ will be uniquely determined\footnote{It holds both ways, also $M_{2}(x,k)$ is uniquely determined by $M_{1}(x,k)$.}.

To find solutions of such an abstract (although precisely
formulated) mathematical exercise is not an easy task. At this early
stage it is quite doubtful whether general solutions exist.

\section{Conclusions}
In conclusion: Our ``no-go'' theorem suggests that while
pseudo-Finsler spacetimes are certainly useful constructs, in
bi-refringent situations it does not appear possible to naturally and
intuitively construct a  ``unified'' pseudo-Finsler spacetime such
that the pseudo-Finsler metric is null on both ``signal cones'', but
has no other zeros or singularities --- it seems physically more
appropriate to think of physics as taking place in a single
topological manifold that carries two distinct pseudo-Finsler metrics,
one for each polarization mode.

This means that in the case of high-energy Lorentz violations it is
highly doubtful whether one has to follow the idea to try to
formulate the theory geometrically. There is no analogy between this
case and the very successful Minkowski geometric formulation of
special theory relativity. This leads to a considerable concern about how
the equivalence principle might be sustainable if we admit the high-energy Lorentz violations.

Physically this might suggest that high energy Lorentz violations -- if they actually occur in nature -- should be ``universal''. That is, all particles should see the \emph{same} Lorentz violation. In this case one could at least have a single ``signal cone'', avoid all the problems that were demonstrated in this chapter and so have a reasonable chance of satisfying the Einstein equivalence principle.

~

\chapter[Analytic results for highly damped QNMs]{Analytic results for highly damped quasi-normal modes}

\section{Introduction}
%---------------------------------------------------------------
\paragraph{Brief introduction into the topic}
Black hole quasi-normal modes are physically intuitive gravitational
perturbations of various types of black hole spacetimes. Take black hole spacetimes with spherical symmetry, where the metric is of the form:
\begin{equation}
g_{\mu\nu}=-f(r)dt^{2}+f(r)^{-1}dr^{2}+r^{2}d\Omega^{2}~.
\end{equation}
One
decomposes the general perturbation into a tensorial generalization
of spherical harmonics (for a detailed introduction into the topic see \cite{Nollert}):
\begin{equation}
h_{\mu\nu}=\sum_{l=0}^{\infty}\sum_{m=-l}^{l}\sum_{n=1}^{10}\Psi^{n}_{lm}(t,r)\left\{\left(Y^{n}_{lm}\right)_{\mu\nu}(\theta,\phi)\right\}~.
\end{equation}
They split into scalar,
vector and tensor perturbations. Moreover we can split them with
respect to parity into two sets: \emph{axial} and \emph{polar}
perturbations. Both follow the one-dimensional
equation:\footnote{This equation reminds us of the Klein-Gordon equation with a potential, describing a relativistic scalar particle scattering.}
\begin{equation}
\frac{\partial^{2}\Psi_{lm}(t,x)}{\partial t^{2}}-\frac{\partial^{2}\Psi_{lm}(t,x)}{\partial x^{2}}+V(x)\Psi_{lm}(t,x)=0.\label{RWZT}
\end{equation}
Here we transformed the coordinate $r$ to the tortoise coordinate ~$x\equiv\int\frac{dr}{f(r)}$.~ The $V(x)$ function is called the Regge-Wheeler potential (axial perturbations)
or Zerilli potential (polar perturbations). Both Regge-Wheeler and
Zerilli potentials (in general different) depend on the background
geometry, the spin of the perturbation, and the wave-mode number of the
perturbation.

If we are interested only in perturbations with the ``harmonic''
time dependence  ~~$\Psi_{lm}(t,x)=e^{i\omega t}\psi_{lm}(x)$,~ we obtain:

\begin{equation}
\frac{\partial^{2}\psi_{lm}(x)}{\partial x^{2}}-(V(x)-\omega^{2})\psi_{lm}(x)=0.\label{RWZ}
\end{equation}

Despite the fact that this equation \emph{formally} reminds
us of the Schr\"odinger equation for $L^{2}$ Hilbert space self-adjoint
operator eigenfunctions, it should be kept in mind that here the
whole situation is very different. As a result of this fact the
$\omega^{2}$ ``eigenvalues'' are in general non-real numbers.

The quasi-normal modes (QNMs) are solutions of \eqref{RWZ} with the
boundary conditions giving purely outgoing radiation:
\begin{equation}
\psi_{lm}(x)\to C_{\pm}e^{\pm i\omega x}~~~~~~~~x\to\mp\infty.
\end{equation}
These particular perturbations are of a general interest because:

\begin{itemize}
\item
 The boundary conditions are the physically \emph{intuitive} ones.
\item
 It can be proven \cite{Nollert} that after some time scale these perturbations become dominant within arbitrary black hole perturbation.
\item
 In the field of quantum gravity, there exists Hod's conjecture \cite{Hod}, and more recently Maggiore's \cite{Maggiore} conjecture, concerning the connection between the highly damped QNMs and the black hole area spectrum.
\item
The QNMs related to asymptotically anti-de Sitter black holes are interesting for people working with AdS/CFT correspondence.
\end{itemize}

The second point suggests QNMs describe the characteristic ``sound''
of black holes. They are in principle observable in black hole
oscillations and ring-down phenomena. As we see the modes are
characterized by the QNM frequencies $\omega$ (QNFs - quasinormal frequencies). It can be generally
proven that for the QNM frequencies we have:

\begin{itemize}
\item
the quasinormal frequencies $\omega$ form an infinite but countable set,
\item
the $\omega$-s are complex with $Im(\omega)>0$, and hence describe stable
perturbations,
\item
they have real parts symmetrically spaced with respect to the
imaginary axis.
\end{itemize}

The literature lists many techniques that have been used to calculate the QNM frequencies. The basic ones are:

\begin{itemize}
\item
WKB inspired approximations~\cite{Ferrari, Guinn2, Iyer, Konoplya, Guinn1};
\item
phase-amplitude methods~\cite{Andersson1, Andersson2};
\item
continued fraction approximations ~\cite{Leaver1, Leaver2, Leaver3};
\item
monodromy techniques~\cite{Shanka1, Shanka2, Motl1, Motl2};
\item
Born approximations~\cite{Choudhury, Medved1, Medved2, Padmanabhan};
\item
approximation by analytically solvable potentials.
\end{itemize}
Some of these techniques were used only to calculate the fundamental (least
damp\-ed) QNM frequencies (like the approximation by the analytically
solvable P\"oschl-Teller potential \cite{Ferrari}), other techniques were
used to estimate the asymptotic behavior of highly damped
frequencies as well.

\paragraph{The basic focus and results of our work}
The main focus of this work is the analytic results for
\emph{highly} damped QNM frequencies. These can be obtained by two
different methods: approximation by analytically solvable
potentials, and monodromy techniques. Unfortunately the first
method was previously used only to estimate the fundamental frequencies, and one
of the main contributions of this work is using this method to
also explore the highly damped QNMs. Unlike the method of
approximation by analytically solvable potentials, the monodromy
technique was used many times to understand the asymptotic QNM
behavior. There are striking similarities between our results
obtained by the analytically solvable potentials and the known
monodromy results. These allow us to analyze the monodromy results
in a new way bringing much deeper understanding about the behaviour
of the asymptotic QNM frequencies encoded in the monodromy formulae.

A problem of special interest is the following: for the highly damped frequencies the asymptotic behavior
\begin{equation}\label{gap}
\omega_{n}=\omega_{0}+in\cdot\mathrm{gap}+O(n^{-m}),~~~~~~~~~~~~m>0,
\end{equation}
(where ``gap'' denotes some real constant), was often observed \cite{Motl1, Motl2}. On the other hand
situations have been observed where this behavior seems to fail. The
analysis of the equations derived by both the analytically solvable
potentials and monodromy techniques gives us an indication to when
such behavior is to be expected. It also tells us how the gap
spacing is given.
%-----------------------------------------------------------------------------------------------------------------------------------------

\paragraph{The structure of this chapter}
In the first part of this chapter we analyze  the highly damped QNMs
by using the idea of approximation by analytically solvable potentials. We
verify our method by applying it to the Schwarzschild black hole, where
the results are known and widely accepted. After that we explore the
much less known Schwarzschild-de Sitter (S-dS) case. As a result of our
approach we will prove interesting theorems about the highly damped
QNM behavior for the S-dS black hole. We will also discuss our results in the context of black hole thermodynamics. In the second part we explore the
complementary set of analytic results obtained by monodromy
calculations (related to many different types of black holes) and
prove that all of the results follow the patterns discovered by our
approximation. This means we are able to generalize our theorems for
almost every analytic result presently known. This suggests the
behavior discovered is very generic also between different black
hole spacetimes.

\section{Approximation by analytically solvable potentials}

\subsection{Introduction}

As previously mentioned, one of the ways to derive analytic approximate expressions for the highly damped QNMs is to approximate the real Regge-Whee\-ler\-/Zerilli potential by analytically solvable potentials. In the past this method has been used to give a formula for the fundamental QNM frequencies. Ferrari and Mashoon \cite{Ferrari} used the P\"oschl-Teller (Eckart) potential
\begin{equation}
V(x)=V_{0}/\cosh^{2}(\alpha x)
\end{equation}
 to approximate the Regge-Wheeler potential at the peak (by fitting the $V_{0}$ and $\alpha$ parameters by the peak height and peak curvature). The QNM frequencies of the P\"oschl-Teller potential are given by the formula

\begin{equation}
\omega_{n}=\pm\sqrt{V_{0}-\frac{\alpha^{2}}{4}}-i\alpha\left(n+\frac{1}{2}\right).
\end{equation}

The widely accepted result (see for example \cite{Motl2}) for highly damped QNM frequencies of Schwarzschild black hole is
\begin{equation}
\omega_{n}=\pm\frac{\ln{3}}{2\pi}\kappa+i\kappa\left(n+\frac{1}{2}\right)+O\left(n^{-1/2}\right),\label{HighSchw}
\end{equation}
where $\kappa$ is the surface gravity at the black hole horizon. The interesting observation is the following:

\begin{itemize}
\item
the asymptotic formula (\ref{HighSchw}) can be obtained by the appropriate fitting of P\"oschl-Teller potential, by setting $\alpha=\kappa$,
\item
such P\"oschl-Teller potential qualitatively recovers the behavior of one tail\footnote{By the ``tail of the potential'' we mean the potential in one of the asymptotic regions, given as ~$|x|>>0$.~ By the ``peak of the potential'' we mean the potential in the region near its global maximum, typically near ~$x=0$.} of the Regge-Wheeler potential.
\end{itemize}
This fact seems not to be a coincidence as one can also turn the logic around. The wavepacket formed of highly damped QNMs close to the peak will quickly spread out from the peak region to the region of the tails of the potential. That suggests the tails are the most important factor in determining the highly damp\-ed QNMs. There is also a general observation that the wavelength (given by the $Re(\omega)$) is higher for highly damped modes, so this supports the view that these modes must be more sensitive to the asymptotic behavior of the potential. So the general expectation is that \emph{a good approximation to the tails of the Regge-Wheeler/Zerilli potentials should give a good qualitative estimate for the behavior of the highly damped quasinormal modes}.

To give this statement an exact meaning it is appropriate to exactly define what we mean by a good \emph{qualitative} vs. \emph{quantitative} asymptotic estimate. We say that sequence of QNFs $\omega_{1n}$ \emph{quantitatively} matches with the sequence $\omega_{2n}$ (for the asymptotic QNFs), if

\begin{equation}
\lim_{n\to\infty} (\omega_{1n}-\omega_{2n})=0.
\end{equation}
We say that sequence $\omega_{1n}$ matches \emph{qualitatively} with the sequence $\omega_{2n}$ if

\begin{equation}
\lim_{n\to\infty} \frac{|\omega_{1n}-\omega_{2n}|}{|\omega_{1n}|}=\lim_{n\to\infty} \frac{|\omega_{1n}-\omega_{2n}|}{|\omega_{2n}|}=0.
\end{equation}
So in the first case the ``error'' goes to zero, in the second case only the ``relative error'' (error compared to the result) goes to zero. One can observe that if $\omega_{1n}$ and $\omega_{2n}$ are both equispaced, then if they match \emph{quantitatively} they must be equal. On the other hand for \emph{qualitative} matching only the same gap spacing is required (they can have different $\omega_{0}$ modes). The second statement can be proven immediately from the definition
\begin{equation}\label{gap(.)}
\lim_{n\to\infty} \frac{|\omega_{1n}-\omega_{2n}|}{|\omega_{1,2n}|}=\frac{|\mathrm{gap}(\omega_{1n})-\mathrm{gap}(\omega_{2n})|}{|\mathrm{gap}(\omega_{1,2n})|},
\end{equation}
where $\mathrm{gap}(\omega_{1n})$, $\mathrm{gap}(\omega_{2n})$ denote the ``gap'' constants from \eqref{gap} related to sequences $\omega_{1n}$, ~$\omega_{2n}$.
~Then the condition of \emph{qualitative} matching implies
\begin{equation}
\mathrm{gap}(\omega_{1n})-\mathrm{gap}(\omega_{2n})=0,
\end{equation}
and nothing else.

 All this means the question whether the tails of the real potentials can be approximated by piecewise smooth analytically tractable potentials is of high interest. As we will see this is the case of 1 and 2 horizon situations.

\subsubsection{One horizon situations}
This is the situation in the Schwarzschild geometry. Since it is asymptotically flat, only one side of the potential is approaching the (black hole) horizon.

For the specific case of a Schwarzschild black hole the tortoise coordinate is given by
\begin{equation}
{dr\over dx} = 1-{2m\over r}; \qquad \qquad x(r) = r + 2m\ln\left[{r-2m\over2m}\right];
\end{equation}
and the Regge--Wheeler potential is
\begin{equation}
V(x(r)) = \left(1-{2m\over r}\right) \left[ {\ell(\ell+1)\over r^2} + {2m(1-s^2)\over r^3} \right].
\end{equation}
Here $s$ is the spin of the particle and $\ell$ is the angular momentum of the specific wave mode under consideration, with $\ell\geq s$.
As $x\to -\infty$ we have $r\to 2m$ and
\begin{equation}
V(x)\to  \exp\left( {x-2m\over 2m} \right) \; {\ell(\ell+1) +(1-s^2) \over (2m)^2}  = V_0 \;  \exp( 2 \kappa x ),
\end{equation}
where $\kappa$ is the black hole surface gravity. The Zerilli potential is more complicated, but leads to the same asymptotic behavior. This specific behaviour in terms of the surface gravity generalizes beyond the Schwarzschild black hole and for an arbitrary black hole in an asymptotically flat spacetime one has
\begin{equation}
V(x) \to \left\{ \begin{array}{ll}
\vphantom{\Big|}
 V_{0-} \; \exp( - 2\kappa  |x|  ),   & x \to - \infty;
 \\
 \vphantom{\Big|}
  V_{0+}  \; (2m)^2/ x^2,  & x \to + \infty.
\end{array} \right.
\end{equation}
For the highly damped modes this behavior suggests we shall fit the real potentials by a proper combination of P\"oschl-Teller and $1/x^{2}$ (inverted harmonic oscilator) potential.

\subsubsection{Two horizon situations}

If one turns to asymptotically de Sitter black holes (or more generally any two horizon system) the situation is different, but following the same patterns --- the Regge--Wheeler potential and the Zerilli potential have the asymptotic behaviour
\begin{equation}
V(x) \to \left\{ \begin{array}{ll}
\vphantom{\Big|}
 V_{0-} \; \exp( - 2\kappa_-  |x|  ),   &\qquad  x \to - \infty;
 \\
 \vphantom{\Big|}
  V_{0+}  \;  \exp( - 2\kappa_+  |x|  ),  &\qquad  x \to + \infty;
\end{array} \right.
\end{equation}
where the two surface gravities are now (in general) distinct. It is this situation that we will model using a \emph{piecewise}  Eckart potential (P\"oschl--Teller potential).  A considerable amount of analytic information can be extracted from this model, information which in the concluding discussion we shall attempt to relate back to ``realistic'' black hole physics.

\subsubsection{3 (and more) horizon situations}

There is no simple or practicable way of dealing with three-horizon situations using semi-analytic techniques.

%-----------------------------------------------------------------------------------------------------------------------------------------
\subsection{Schwarzschild black hole}

\subsubsection{Potential}

We work with the Regge-Wheeler/Zerilli equation

\begin{equation}
-\psi''(x)+(V(x)-\omega^{2})\psi(x)=0.
\end{equation}
The potential recovering the tails behavior of the given Regge-Wheeler/Zerilli potential is

\begin{equation}
V(x) = \left\{  \begin{array}{lcl}
{V_{0-} \; \sech^2(x/b_-)}  & \hbox{ for }  & x < 0;\\
\\
V_{0+}/(x+a)^{2}  & \hbox{ for }  & x > 0. \\
\end{array}\right.
\end{equation}
Here $b_{-}$ is defined by ~$b_{-}\equiv\kappa^{-1}$.~
The potential can be discontinuous at the origin and the infinite peak can be placed anywhere in the negative part of the real line, so $a$ can be taken to be an arbitrary \emph{positive} real number.

\subsubsection{Wavefunction}

We are interested in the solutions of the Regge-Wheeler/Zerilli equation (\ref{RWZ}) with the boundary constraints

\begin{equation}
\psi_{+}(x\to +\infty)\to C_{+}e^{-i\omega x};~~~~\psi_{-}(x\to -\infty)\to C_{-}e^{+i\omega x},
\end{equation}
defining the quasi-normal modes. The solutions are (for $V_{0+}\geq-1/4$)

\begin{equation}
\psi_{+}(x)=C_{+}\sqrt{\frac{x+a}{\omega}}\left[J_{\alpha_{+}}\left(\omega(x+a)\right)-e^{-i\alpha_+\pi}J_{-\alpha_{+}}\left(\omega(x+a)\right)\right],
\end{equation}
where $J_{\alpha_{+}}(x),~J_{-\alpha_{+}}(x)$ are Bessel functions with

\begin{equation}
\alpha_{+}\equiv\sqrt{1+4V_{0+}}
\end{equation}
and

\begin{equation}
\psi_{-} = C_{-}e^{i\omega x} \; {} _2F_1\left({1\over2}+\alpha_{-},{1\over2}-\alpha_{-},1+ib_- \omega, {1\over 1+ e^{- 2x/b_-}} \right),
\end{equation}
where ${}_2F_1(...)$ is the hypergeometric function with

\begin{equation}
\alpha_- \equiv \left\{  \begin{array}{lcl}
\sqrt{{1\over4} - V_0 b^2_- } & \hbox{ for }  & V_0 b^2_- < 1/4;\\
\\
i \sqrt{V_0 b^2_- - {1\over4}  } & \hbox{ for }  & V_0 b^2_- > 1/4.\\
\end{array}\right.
\end{equation}

\subsubsection{Junction condition}

We need to match the functions $\psi_{\pm}(x)$ and their derivatives at the origin. We can combine the two matching conditions in such a way, that one of them will give the equation for quasinormal modes, while the other will only relate the normalization constants $C_{\pm}$, hence will not be of pressing interest. The first equation we obtain by equating the $\psi'_{\pm}(0)/\psi_{\pm}(0)$ ratios. (In special situations where $\psi_\pm(0)$ might accidentally equal zero one might need to perform a special case analysis. The generic situation is $\psi_\pm(0)\neq 0$, and will prove sufficient for almost everything we need to calculate.)

For the $\psi'_{+}(0)/\psi_{+}(0)$ ratio it holds:

\begin{eqnarray}
\frac{\psi'_{+}(0)}{\psi_{+}(0)}=-\frac{1}{2a}+~~~~~~~~~~~~~~~~~~~~~~~~~~~~~~~~~~~~~~~~~~~~~~~~~~~~~~~~~~~~~~~~~~~~~~~~~~~~~~~~~~~~~~~~~~~~~~~~~~~~~~\nonumber\\
+\frac{\omega}{2}\cdot\frac{J_{(\alpha_{+}-1)}(a\omega)-J_{(\alpha_{+}+1)}(a\omega)-e^{-i\alpha_{+}\pi}\left[J_{(-\alpha_{+}-1)}(a\omega)-J_{(-\alpha_{+}+1)}(a\omega)\right]}{J_{\alpha_{+}}(a\omega)-e^{-i\alpha_{+}\pi}J_{-\alpha_{+}}(a\omega)}.~~~~\label{bessel}
\end{eqnarray}

This equation is obtained after using known differential identities for Bessel functions (see identity \ref{Besselderivative} in appendix \ref{A:Bessel}).

The key step in obtaining the ratio $\psi'_{-}(0)/\psi_{-}(0)$ is to calculate the logarithmic derivative. By choosing the variable $z = 1/(1+ e^{- 2x/b_-})$, note that $x=0$ maps into $z=1/2$. Then using the Leibnitz rule and the chain rule one has:
\begin{equation}
{\psi_-'(0)\over\psi_-(0)} =
 i \omega +    {1\over 2 b_-} \left. {\d~ \ln \left\{ _2F_1\left({1\over2}+\alpha_-,{1\over2}-\alpha_-,1+ib_- \omega,z\right) \right\} \over \d z}\right|_{z=1/2.}
\end{equation}
Invoking the differential identity (\ref{E:differential}) in appendix \ref{A:hyper}, we see
\begin{equation}
{\psi_-'(0)\over\psi_-(0)} =  i \omega \;
\left.
{_2 F_1\left({1\over2}+\alpha_-,{1\over2}-\alpha_-,ib_- \omega,z\right)
\over
_2F_1\left({1\over2}+\alpha_-,{1\over2}-\alpha_-,1+ib_- \omega,z\right)}
\right|_{z=1/2.}
\end{equation}

Now using Bailey's theorem (\ref{E:bailey}) to evaluate the hypergeometric functions at $z\to{1\over2}$ we have the exact result
\begin{equation}
{\psi_-'(0)\over\psi_-(0)}
=  {2\over b_-} \; {\Gamma({\alpha_-+i\omega b_-\over2} +{3\over4}) ~\Gamma({-\alpha_-+i\omega b_-\over2} +{3\over4}) \over
\Gamma({\alpha_-+i\omega b_- \over2} + {1\over4}) ~\Gamma({-\alpha_-+i\omega b_- \over2} + {1\over4})}.\label{gamma}
\end{equation}

\subsubsection{The asymptotic approximations}

The exact junction condition we wish to apply at $x=0$ is
 \begin{equation}
{\psi_+'(0)\over\psi_+(0)}  = {\psi_-'(0)\over\psi_-(0)}.
\end{equation}
We see that equating the right sides of \eqref{gamma} and \eqref{bessel} gives a messy equation and the presence of the Gamma and Bessel functions above makes this exact junction condition intractable. Fortunately, from the beginning we are interested only in the asymptotic quasi-normal modes. This means $Im(\omega)\gg 0$ and $Im(\omega)\gg Re(\omega)$. In such a case we can use asymptotic formulas for both the Bessel and Gamma functions.

For Bessel functions one takes the asymptotic expansion (\ref{Besselasymptotic}) from the appendix \ref{A:Bessel}. In the dominant (zero-th) order, for ~$Re(x)$~ bounded and ~$Im(x)\to\infty$,~ the functions ~$P(\alpha,x)$~ and ~$Q(\alpha,x)$,~ (see ~\ref{P}~ and ~\ref{Q}~ in appendix ~\ref{A:Bessel}),~ behave as  ~$P(\alpha,x)\to 1$,~ while ~$Q(\alpha,x)\to 0$.~ Hence for ~$x\approx Im(x)\gg 0$~ the expansion ~(\ref{Besselasymptotic})~ gives:

\begin{equation}
J_{\alpha}(x)\approx\sqrt{\frac{2}{\pi x}}\cos\left(x-\frac{\pi\alpha}{2}-\frac{\pi}{4}\right).
\end{equation}
By substituting this into \eqref{bessel}, after some additional calculations using trivial trigonometric relations we obtain a very simple asymptotic formula:

\begin{equation}
\frac{\psi'_{+}(0)}{\psi_{+}(0)}\approx -\frac{1}{2a}+Im(\omega)\approx Im(\omega).
\end{equation}

The case \eqref{gamma} is slightly more difficult. In this case consider the following: if $\omega$ has a large positive imaginary part, then the Gamma function arguments above tend towards the negative real axis, a region where the Gamma function has many poles. This is computationally inconvenient, and to obtain a more tractable result it is extremely useful to use the reflection formula (\ref{E:reflection}) of appendix \ref{A:gamma} to derive
\begin{eqnarray}
 {\Gamma({\alpha_-+i\omega b_-\over2} +{3\over4}) \over \Gamma({\alpha_-+i\omega b_- \over2} + {1\over4}) }
 &=&
 { \Gamma(1-{\alpha_-+i\omega b_- \over2} - {1\over4})~ \sin(\pi [{\alpha_-+i\omega b_- \over2} + {1\over4} ])
 \over
 \Gamma(1-{\alpha_-+i\omega b_-\over2} -{3\over4})~ \sin(\pi[{\alpha_-+i\omega b_-\over2} +{3\over4}])}
 \nonumber
 \\
&=&  { \Gamma(-{\alpha_-+i\omega b_- \over2} + {3\over4}) \;~ \sin(\pi [{\alpha_-+i\omega b_- \over2} + {1\over4} ])
 \over
 \Gamma(-{\alpha_-+i\omega b_-\over2} +{1\over4})~ \sin(\pi[{\alpha_-+i\omega b_-\over2} +{1\over4}]+ {\pi\over2})}
\nonumber
 \\
&=&  { \Gamma(-{\alpha_-+i\omega b_- \over2} + {3\over4})
 \over
 \Gamma(-{\alpha_-+i\omega b_-\over2} +{1\over4})}  \times \tan\left(\pi\left[{\alpha_-+i\omega b_-\over2} +{1\over4}\right]\right).
 \end{eqnarray}
This leads to the exact result
\begin{eqnarray}
{\psi_-'(0)\over\psi_-(0)} &=&  {2\over b_-}  \;
 { \Gamma({-\alpha_--i\omega b_- \over2} + {3\over4})~  \Gamma({\alpha_--i\omega b_- \over2} + {3\over4})
 \over
 \Gamma({-\alpha_--i\omega b_-\over2} +{1\over4})~  \Gamma({\alpha_--i\omega b_-\over2} +{1\over4})}
 \nonumber
\\
&&
 \times
   \; \tan\left(\pi\left[{\alpha_-+i\omega b_-\over2} +{1\over4}\right]\right)  \; \tan\left(\pi\left[{-\alpha_-+i\omega b_-\over2} +{1\over4}\right]\right).~~~~~
 \end{eqnarray}
If $\omega$ has a large positive imaginary part, then the Gamma function arguments above now tend towards the positive real axis, a region where the Gamma function is smoothly behaved --- all potential poles in the logarithmic derivative have been isolated in the trigonometric functions.
We can also use one of the trigonometric  identities (\ref{E:trig1}) of appendix \ref{A:trig} to rewrite this as
  \begin{equation*}
{\psi_-'(0)\over\psi_-(0)} =   {2\over b_-}  \;
 { \Gamma({-\alpha_--i\omega b_- \over2} + {3\over4})~  \Gamma({\alpha_--i\omega b_- \over2} + {3\over4})
 \over
 \Gamma({-\alpha_--i\omega b_-\over2} +{1\over4})~  \Gamma({\alpha_--i\omega b_-\over2} +{1\over4})}
 \times  {\cos(\pi\alpha_-) - \cos(\pi[i\omega b_-+1/2]) \over \cos(\pi\alpha_-) + \cos(\pi[i\omega b_-+1/2]) },
 \end{equation*}
 which we can rewrite (still an \emph{exact} result) as
   \begin{equation}
{\psi_-'(0)\over\psi_-(0)} =   {2\over b_-}  \;
 { \Gamma({-\alpha_--i\omega b_- \over2} + {3\over4})~  \Gamma({\alpha_--i\omega b_- \over2} + {3\over4})
 \over
 \Gamma({-\alpha_--i\omega b_-\over2} +{1\over4})~  \Gamma({\alpha_--i\omega b_-\over2} +{1\over4})}
 \times  {\cos(\pi\alpha_-) + \sin(i\pi\omega b_-) \over \cos(\pi\alpha_-) - \sin(i\pi\omega b_-) }.
 \end{equation}

 We have already seen how to eliminate the Bessel functions in the approximation that $Im(\omega)$ is very large and $Re(\omega)$ bounded. Now we have to do the same with Gamma functions. Fortunately this is possible as well. As long as we are primarily focussed on the highly damped QNFs ($Im(\omega)\to\infty$, ~$Re(\omega)$ bounded) we can employ the Stirling approximation in the form (\ref{E:stirling}) indicated in appendix \ref{A:gamma} to deduce
\begin{eqnarray}
 { \Gamma({-\alpha_--i\omega b_- \over2} + {3\over4})
 \over
  \Gamma({-\alpha_--i\omega b_-\over2} +{1\over4})  }
  &=&
  \sqrt{ {-\alpha_--i\omega b_-\over2} +{1\over4} } \times  \left[1+ O\left({1\over Im(\omega b_-)}\right)\right]
  \nonumber
  \\
  &=& \sqrt{ {Im(\omega) b_-\over 2} }  \times \left[1+ O\left({1\over Im(\omega b_-)}\right)\right].
  \end{eqnarray}
This allows us to \emph{approximate} the junction condition by

\begin{equation}
1=\frac{\cos(\pi\alpha_{-})+\sin(i\pi\omega b_{-})}{cos(\pi\alpha_{-})-\sin(i\pi\omega b_{-})}
\end{equation}
leading directly to

\begin{equation}
\sin(i\pi\omega b_{-})=0
\end{equation}
and hence

\begin{equation}
\omega=\frac{in}{b_{-}}=in\kappa.\label{SQNM}
\end{equation}

\subsubsection{Discussion}

This result means that the QNMs are purely imaginary and equispaced with the gap being given by the surface gravity at the black hole horizon. This result can be considered to \emph{qualitatively} (but not fully \emph{quantitatively}) match with the well known result (\ref{HighSchw}).

\subsection{Schwarzschild - de Sitter black hole}

%-----------------------------------------------------------------------------------------------------------------------------------------

%---------------------------------------------------------------------------------------------------------
%-----------------------------------------------------------------------------------------------------------------------------------------
\subsubsection{Potential}

The model  we are interested in investigating is
\begin{equation}
-\psi''(x) + (V(x)-\omega^{2}) \; \psi(x) = 0,
\end{equation}
with
\begin{equation}
V(x) = \left\{  \begin{array}{lcl}
{V_{0-} \; \sech^2(x/b_-)}  & \hbox{ for }  & x < 0;\\
\\
V_{0+} \; \sech^2(x/b_+)  & \hbox{ for }  & x > 0. \\
\end{array}\right.
\end{equation}
Here we again take ~$b_{\pm}\equiv\kappa_{\pm}^{-1}$.~ We will allow a discontinuity in the potential at $x=0$. The standard case that is usually dealt with is for
\begin{equation}
V_{0-} = V_{0+} = V_0;
\qquad\qquad
b_-=b_+ = b;
\qquad\qquad
V(x) = {V_{0} \over \cosh^2(x/b)}.
\end{equation}
A related model where $V_{0-} = V_{0+} = V_0$ but $b_+\neq b_-$ has been explored by Suneeta~\cite{Suneeta}, but our current model is more general, and we will take the analysis much further.

%-----------------------------------------------------------------------------------------------------------------------------------------
\subsubsection{Wavefunction}
%-----------------------------------------------------------------------------------------------------------------------------------------

We start again by imposing quasi-normal boundary conditions (outgoing radiation boundary conditions)
\begin{equation}
\psi_+(x\to+\infty) \to C_+e^{-i\omega x}; \qquad \psi_-(x\to-\infty) \to C_-e^{+i\omega x}.
\end{equation}
On each half line ($x<0$, and $x>0$) the exact wavefunction   (see especially page 405 of the article by Beyer~\cite{Beyer}) is now:
\begin{equation}
\psi_\pm(x) = C_\pm e^{\mp i\omega x} \; {} _2F_1\left({1\over2}+\alpha_\pm,{1\over2}-\alpha_\pm,1+ib_\pm \omega, {1\over 1+ e^{\pm 2x/b_\pm}} \right),
\end{equation}
where
\begin{equation}
\alpha_{\pm} \equiv \left\{  \begin{array}{lcl}
\sqrt{{1\over4} - V_0 b_\pm^2 } & \hbox{ for }  & V_0 b_\pm^2 < 1/4;\\
\\
i \sqrt{V_0 b_\pm^2 - {1\over4}  } & \hbox{ for }  & V_0 b_\pm^2 > 1/4.\\
\end{array}\right.
\end{equation}
The same as in previous section, ``all'' we need to do is to appropriately match these wavefunctions at the origin.

\subsubsection{Junction condition}

The junction condition is again

\begin{equation}
\frac{\psi'_{+}(0)}{\psi_{+}(0)}=\frac{\psi'_{-}(0)}{\psi_{-}(0)}
\end{equation}
and the same remarks related to zero values of $\psi_\pm(0)$ as in the previous section hold. The ratio $\psi'_{\pm}(0)/\psi_{\pm}(0)$ can be rewritten through Gamma functions as:

\begin{equation}
{\psi_\pm'(0)\over\psi_\pm(0)}
= \mp {2\over b_\pm} \; {\Gamma({\alpha_\pm+i\omega b_\pm\over2} +{3\over4}) ~\Gamma({-\alpha_\pm+i\omega b_\pm\over2} +{3\over4}) \over
\Gamma({\alpha_\pm+i\omega b_\pm \over2} + {1\over4}) ~\Gamma({-\alpha_\pm+i\omega b_\pm \over2} + {1\over4})}.
\end{equation}

%-----------------------------------------------------------------------------------------------------------------------------------------
\subsubsection{The asymptotic approximations}
%-----------------------------------------------------------------------------------------------------------------------------------------

Here both sides of the junction condition contain Gamma functions in the same way as the right side of the junction condition from the previous section. Hence all the analysis can be repeated (rewriting some of the Gamma functions through the basic Gamma functions relations and using the Stirling approximation for the asymptotic modes). Following exactly the same arguments one obtains the following \emph{approximation} to the junction condition:

\begin{equation}
 {\cos(\pi\alpha_+) + \sin(i\pi\omega b_+) \over \cos(\pi\alpha_+) - \sin(i\pi\omega b_+) } =
 -  {\cos(\pi\alpha_-) + \sin(i\pi\omega b_-) \over \cos(\pi\alpha_-) - \sin(i\pi\omega b_-) },
\end{equation}
which is accurate up to fractional corrections of order $O\left({1/ Im(\omega b_\pm)}\right)$.
This is now an \emph{approximate} ``quantization condition'' for calculating the QNFs being asymptotically increasingly accurate for the highly-damped modes.

%-----------------------------------------------------------------------------------------------------------------------------------------
\subsubsection{QNF condition}
%-----------------------------------------------------------------------------------------------------------------------------------------

The asymptotic QNF condition above can, by cross multiplication and the use of trigonometric identities, be rewritten in any one of the equivalent forms:
 \begin{equation}
 \label{E:qnf1}
\sin(-i\pi\omega b_+)\sin(-i\pi\omega b_-)  =  \cos(\pi\alpha_+)\cos(\pi\alpha_-);
 \end{equation}
  \begin{equation}
   \label{E:qnf2}
\sinh(\pi\omega b_+)\sinh(\pi\omega b_-)  =  - \cos(\pi\alpha_+)\cos(\pi\alpha_-);
 \end{equation}
\begin{equation}
 \label{E:qnf3}
\cos(-i\pi\omega[b_+ -b_-]) - \cos(-i\pi\omega[b_+ + b_-]) = 2  \;  \cos(\pi\alpha_+)\cos(\pi\alpha_-);
 \end{equation}
 \begin{equation}
 \label{E:qnf4}
\cosh(\pi\omega[b_+ -b_-]) - \cosh(\pi\omega[b_+ + b_-]) = 2  \;  \cos(\pi\alpha_+)\cos(\pi\alpha_-).
 \end{equation}
Which particular form one chooses to use is a matter of taste that depends on exactly what one is trying to establish.
It is sometimes useful to split the asymptotic QNF condition into real and imaginary parts. To do so note
\begin{eqnarray}
\cos(A+iB) &=& \cos(A)\cos(iB) - \sin(A)\sin(iB)
\nonumber\\
& =& \cos(A) \cosh(B) - i \sin(A) \sinh(B),
\end{eqnarray}
so that
\begin{eqnarray}
\cos(-i\pi\omega[b_+ -b_-]) &=& \cos(Im(\omega)\pi[b_+ -b_-]) \cosh(Re(\omega)\pi[b_+ -b_-])
\nonumber
\\
&&
 - i  \sin(Im(\omega)\pi|b_+ -b_-|) \sinh(Re(\omega)\pi|b_+ -b_-|).~~~~~~~
\end{eqnarray}
Therefore the asymptotic QNF condition implies \emph{both}
\begin{eqnarray}
 \label{E:qnf:real}
&&
\cos(Im(\omega)\pi[b_+ -b_-]) \cosh(Re(\omega)\pi[b_+ -b_-])
\nonumber
\\ &&
\quad = \cos(Im(\omega)\pi[b_+ +b_-]) \cosh(Re(\omega)\pi[b_+ +b_-])
+ 2  \;  \cos(\pi\alpha_+)\cos(\pi\alpha_-),~~~
\qquad
\end{eqnarray}
and
\begin{eqnarray}
 \label{E:qnf:imaginary}
&&\sin(Im(\omega)\pi|b_+ -b_-|) \sinh(Re(\omega)\pi|b_+ -b_-|)
\nonumber\\
&& \quad = \sin(Im(\omega)\pi[b_+ +b_-]) \sinh(Re(\omega)\pi[b_+ +b_-]).
\end{eqnarray}
We shall now seek to apply this QNF condition, in its many equivalent forms, to extract as much information as possible regarding the distribution of the QNFs.

%----------------------------------------------------------------------------------------------------------------------------------------
\subsubsection{Some general observations}
%----------------------------------------------------------------------------------------------------------------------------------------

We shall start with some general observations regarding the QNFs.
\begin{enumerate}
\item Note that the $\alpha_\pm$ are either pure real or pure imaginary.

\item Consequently $\cos(\pi\alpha_+)\cos(\pi\alpha_-)$ is always pure real $\in [-1, +\infty)$.

\item If  $\cos(\pi\alpha_+)\cos(\pi\alpha_-) > 0$, then there are no pure real QNFs.

\emph{Proof:} Consider equation (\ref{E:qnf2}) and note that under this condition the LHS is positive while the RHS is negative.

\item If  $\cos(\pi\alpha_+)\cos(\pi\alpha_-) < 0$, then there is a pure real QNF.

\emph{Proof:} Consider equation (\ref{E:qnf2}) and note that under this condition the RHS is positive. The LHS is positive and by continuity there will be a a real root  $\omega \in (0,\infty)$.

\item If  $ \cos(\pi\alpha_+)\cos(\pi\alpha_-) > 1$, then there are no pure imaginary QNFs.

\emph{Proof:} Consider equation (\ref{E:qnf1}) and note that under this condition the LHS $\leq 1$ while the RHS $>1$.

\item There are infinitely many pure imaginary solutions to these asymptotic QNF conditions provided $ \cos(\pi\alpha_+)\cos(\pi\alpha_-)\leq Q(b_+,b_-) \leq 1$; that is, whenever  $ \cos(\pi\alpha_+)\cos(\pi\alpha_-)$  is ``sufficiently far'' below $1$.

\emph{Proof:} Define
\begin{equation}
Q(b_+,b_-) =   \max_\omega\left\{  {\cos( |\omega|\pi[b_+ -b_-])   -  \cos( |\omega|\pi[b_+ +b_-])\over 2}  \right\} \leq 1.
\end{equation}
Then by inspection equation (\ref{E:qnf3}) will have an infinite number of pure imaginary solutions as long as
\begin{equation}
 \cos(\pi\alpha_+)\cos(\pi\alpha_-)\leq Q(b_+,b_-).
\end{equation}

\item For any purely imaginary $\omega$ there will be \emph{some} choice of $b_\pm$, $\alpha_\pm$ that makes this a solution of the asymptotic QNF condition.

\emph{Proof:} Consider the specific case
\begin{equation}
\omega =  {i (\alpha_+ + {1\over2})\over b_+} =  {i (\alpha_- + {1\over2})\over b_-},
\end{equation}
and note this satisfies the QNF condition but enforces only two constraints among the four unknowns $b_\pm$, $V_{0\pm}$.

\end{enumerate}

%----------------------------------------------------------------------------------------------------------------------------------------

\subsubsection{Rational ratios for the falloff}
%-----------------------------------------------------------------------------------

Suppose $b_+/b_-$ is rational, that is
\begin{equation}
{b_+\over b_-} = {p_+\over p_-}\in\mathbb{Q},
\end{equation}
and suppose we now define $b_*$ by
\begin{equation}
b_+= p_+ b_*; \qquad b_- = p_- b_*; \qquad b_* = \mathrm{hcf}(b_+,
b_-),
\end{equation}
where ``hcf'' means ``highest common factor''. ($p_+, p_-$ are
relatively prime.\footnote{Later in this chapter will the symbols
$p_i/p_j$ automatically mean a rational number given by the
relatively prime integers $p_i$, $p_j$.}) Then the asymptotic QNF
condition is given by
 \begin{equation}
 \sin(-i\omega \pi p_+ b_*)\sin(-i\omega \pi p_- b_*)  =  \cos(\pi\alpha_+)\cos(\pi\alpha_-).
 \end{equation}
If $\omega_*$ is \emph{any} specific solution of this equation, then
\begin{equation}
\omega_n = \omega_* + {i2n\over b_*} = \omega_* + i2n \;
\mathrm{lcm}\left({1\over b_+}, {1\over b_-}\right)
\end{equation}
will also be a solution. (Here ``lcm'' stands for ``least common multiplier''.) But are these the only solutions? Most
definitely not. For instance, consider (for rational $b_+/b_-$) the
set of all QNFs for which
\begin{equation}
Im(\omega) < {1\over b_*},
\end{equation}
and label them as
\begin{equation}
\omega_{0,a} \qquad a\in\{1,2,3\dots N\}.
\end{equation}
Then the set of all QNFs decomposes into a set of families
\begin{equation}
\omega_{n,a} = \omega_{0,a} + {in\over b_*}; \qquad a\in\{1,2,3\dots
N\}; \qquad n\in\{0,1,2,3\dots\};
\end{equation}
where $N$ is yet to be determined.
But for rational  $b_+/b_-$  we can rewrite the QNF condition as
\begin{equation}
\cos(-i\omega\pi b_*|p_+-p_-|) - \cos(-i\omega\pi b_*[p_+ +p_-])  =
2\cos\left(\pi\alpha_+\right) \cos\left(\pi\alpha_-\right).
\end{equation}
Now define $z=\exp(\omega\pi b_*)$,  then the QNF condition can be
rewritten as
\begin{equation}
z^{|p_+-p_-|} + z^{-|p_+-p_-|} - z^{[p_+ +p_-]} - z^{-[p_+ +p_-]} =
4\cos\left(\pi\alpha_+\right) \cos\left(\pi\alpha_-\right),
\end{equation}
or equivalently
\begin{eqnarray}
z^{2[p_+ +p_-]} - z^{|p_+-p_-|+[p_+ +p_-]}~~~~~~~~~~~~~~~~~~~~~~~~~~~~~~~~~~~~~~~~~~~~~~~~~~~~~~~~~~~~~~~~~~~~~~~~~~~~~\nonumber\\ -
4\cos\left(\pi\alpha_+\right) \cos\left(\pi\alpha_-\right) z^{[p_+
+p_-]} -  z^{-|p_+-p_-|+[p_+ +p_-]} + 1 = 0,
\end{eqnarray}
that is
\begin{equation}
z^{2[p_+ +p_-]} - z^{2p_+} -  z^{2p_-} -
4\cos\left(\pi\alpha_+\right) \cos\left(\pi\alpha_-\right) z^{[p_+
+p_-]} + 1 = 0.
\end{equation}
This is a polynomial of degree $N=2(p_++p_-)$, so it has exactly $N$
roots $z_a$. In terms of equi-spaced families of QNM the situation is a little bit more tricky. If $p_+\cdot p_-$ is odd then the gap spacing is only ${1\over b_*}$ and the generic number of different families is only $p_+ + p_-$, while if $p_+\cdot p_-$ is even, then the gap is given by ${2\over b_*}$ and the generic number of families is $2(p_++p_-)$. This can be summarized as:
\newtheorem{rational}[section]{Theorem}
\begin{rational}\label{rational}
Take ${b_+/b_-}={p_+/p_-}\in\mathbb{Q}$ and $b_*=\mathrm{hcf}(b_+,b_-)$. The QNFs
are, with the imaginary part of the logarithm lying in $[0,2\pi)$ given by
 \begin{equation}
 \omega_{n,a} =
  { \ln(z_a) \over \pi b_*} + {i2n\over gb_*}
 \qquad  a\in\left\{1,2,3\dots N\leq {2(p_++p_-)\over g}\right\} \qquad n\in\{0,1,2,3,\dots\},
 \end{equation}
with ~$g=2$~ for ~$p_+\cdot p_-$~ odd and ~$g=1$~ for ~$p_+\cdot
p_-$~ even.
\end{rational}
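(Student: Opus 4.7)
The plan is to begin with the asymptotic QNF condition in the form \eqref{E:qnf3} and convert it into a polynomial equation in a single exponential variable, then invert.

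First I would substitute $b_+ = p_+ b_*$ and $b_- = p_- b_*$ into \eqref{E:qnf3} and introduce $z = \exp(\omega\pi b_*)$, so that $\cos(-i\omega\pi b_* k) = (z^k + z^{-k})/2$ for every integer $k$. This turns the QNF condition into a Laurent polynomial equation in $z$. Multiplying through by $2z^{p_++p_-}$ to clear the negative powers yields the polynomial already displayed just above the theorem, namely
\begin{equation}
z^{2[p_++p_-]} - z^{2p_+} - z^{2p_-} - 4\cos(\pi\alpha_+)\cos(\pi\alpha_-)\,z^{p_++p_-} + 1 = 0,
\end{equation}
of degree $N_{0} = 2(p_++p_-)$. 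By the fundamental theorem of algebra this polynomial has exactly $N_{0}$ complex roots $z_a$ counted with multiplicity.

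Next I would invert the exponential map. Since $\omega \mapsto z = \exp(\omega\pi b_*)$ is periodic with period $2i/b_*$ in $\omega$, each root $z_a$ corresponds to an equispaced family $\omega = \ln(z_a)/(\pi b_*) + i2n/b_*$, $n\in\mathbb{Z}$, and requiring the imaginary part of $\ln z_a$ to lie in $[0,2\pi)$ picks out a single representative per root. This already gives the statement of the theorem in the case $g=1$, with at most $N_0 = 2(p_++p_-)$ distinct families each carrying gap $2i/b_*$.

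The refinement to $g=2$ comes from a parity analysis. Because $p_\pm$ are coprime, $p_+p_-$ is odd precisely when both $p_\pm$ are odd, equivalently when $p_++p_-$ is even. In that situation every exponent appearing in the polynomial is even, so it is in fact a polynomial of degree $p_++p_-$ in the variable $w=z^2$; its roots in $z$ therefore pair up as $\{z_a,-z_a\}$. Since $\ln(-z_a) - \ln(z_a) = i\pi$, the two families generated by such a pair interleave to produce a single combined family with gap $i/b_*$, giving at most $p_++p_-$ distinct combined families and matching $g=2$. Conversely, if $p_+p_-$ is even then exactly one of $p_\pm$ is even and $p_++p_-$ is odd; the polynomial then contains the odd-degree term $z^{p_++p_-}$, the map $z\mapsto -z$ no longer preserves the root set, and the $2(p_++p_-)$ roots yield the generic $g=1$ count.

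The inequality $N \leq 2(p_++p_-)/g$ in the statement then accommodates possible repeated roots of the polynomial, or accidental coincidences between family representatives modulo the fundamental period. I expect the only real subtlety to be this parity bookkeeping, combining the $2i/b_*$ periodicity inherited from the exponential with the $z\mapsto -z$ symmetry forced by the vanishing of odd-degree terms; the rest of the argument is routine algebraic manipulation of the identities already displayed in the preceding subsection.
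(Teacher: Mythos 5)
Your proposal is correct and follows essentially the same route as the paper: substitute $b_\pm = p_\pm b_*$ into \eqref{E:qnf3}, set $z=\exp(\omega\pi b_*)$, reduce to the degree-$2(p_++p_-)$ polynomial, and invert the exponential using its $2i/b_*$ periodicity in $\omega$. Your explicit parity argument (all exponents even when $p_+p_-$ is odd, so roots pair as $\{z_a,-z_a\}$ and interleave into gap-$1/b_*$ families) is a welcome spelling-out of the degeneracy factor $g$ that the paper merely asserts here and only treats generally later via $g=\mathrm{hcf}\{\bar m_A\}$ in the monodromy section.
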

 So for rational $b_+/b_-$ all modes form equi-spaced families, all families have the same gap spacing and are characterized by distinct offsets $  { \ln(z_a) /( \pi b_*) } $.
 That is: \emph{Arbitrary rational ratios of ${b_+/ b_-} $ automatically  imply the}  $\omega_n =  \mathrm{offset} + i n \cdot\mathrm{gap}$ \emph{behaviour.}

%-----------------------------------------------------------------------------------
\subsubsection{Irrational ratios for the falloff}
%-----------------------------------------------------------------------------------

Now suppose $b_+/b_-$ is irrational, that is
\begin{equation}
b_*=  \mathrm{hcf}(b_+, b_-) = 0.
\end{equation}
Then all of the ``families'' considered above only have one element
\begin{equation}
\omega_{0,a}  \qquad a\in\{1,2,3\dots \infty\}.
\end{equation}
That is, there will be no ``pattern'' in the QNFs, and they will not
be regularly spaced. (Conversely, if there is a ``pattern'' then
$b_+/b_-$ is rational.) Stated more formally, it is possible to
derive a theorem as below.

\bigskip

\newtheorem{irational}[section]{Theorem}
\begin{irational}\label{irrational}
Suppose we have at least one family of
equi-spaced QNFs such that
\begin{equation}
\label{E:equal} \omega_n = \omega_0 + i n\cdot \mathrm{gap}~,
\end{equation}
then  $b_+/b_-$ is rational.
\end{irational}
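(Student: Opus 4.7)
The plan is to substitute the equi-spaced ansatz $\omega_n = \omega_0 + ing$ (with $g>0$ the real ``gap'') directly into the product form of the asymptotic QNF condition,
\[
\sinh(\pi\omega b_+)\sinh(\pi\omega b_-) = -\cos(\pi\alpha_+)\cos(\pi\alpha_-) \equiv K,
\]
and then exploit the algebraic rigidity of exponential sums. Expanding $\sinh(z)=(e^z-e^{-z})/2$ and setting $u_\pm := e^{\pi\omega_0 b_\pm}$ (nonzero) together with $\zeta_\pm := e^{i\pi g b_\pm}$ (points on the unit circle), the substitution $\omega=\omega_n$ turns the QNF condition into
\[
A_1\,\mu_1^{\,n} + A_2\,\mu_2^{\,n} + A_3\,\mu_2^{-n} + A_4\,\mu_1^{-n} \;=\; 4K \qquad (n=0,1,2,\dots),
\]
where $\mu_1 := \zeta_+\zeta_-$, $\mu_2 := \zeta_+\zeta_-^{-1}$ and the coefficients $A_j$, built explicitly out of $u_\pm$, are all nonzero.

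The first key step is the classical Vandermonde fact that distinct geometric sequences $\{\mu^n\}$ are linearly independent over $\mathbb{C}$. Thinking of the constant $4K$ as a multiple of $1^n$, the equation above forces at least two of the five exponents $\{1,\mu_1,\mu_2,\mu_1^{-1},\mu_2^{-1}\}$ to coincide; otherwise every $A_j$ would be forced to vanish, contradicting $A_j \neq 0$. Enumerating the possibilities, each coincidence reads as one of: $g(b_++b_-)\in\mathbb{Z}$ (from $\mu_1=\pm 1$), $g(b_+-b_-)\in\mathbb{Z}$ (from $\mu_2=\pm 1$), $gb_-\in\mathbb{Z}$ (from $\mu_1=\mu_2$), or $gb_+\in\mathbb{Z}$ (from $\mu_1=\mu_2^{-1}$).

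A single coincidence gives only one $\mathbb{Q}$-linear constraint on the pair $(gb_+,gb_-)$, which by itself does not force $b_+/b_-\in\mathbb{Q}$. But after collapsing the matched exponents, the residual exponential sum must still equal a constant for every $n$, and a second application of the linear-independence principle produces a second, $\mathbb{Q}$-independent coincidence. Two such relations pin both $gb_+$ and $gb_-$ into $\mathbb{Q}$, whence $b_+/b_- = (gb_+)/(gb_-)\in\mathbb{Q}$, as claimed.

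The main obstacle is the degenerate locus $K=0$, i.e.\ $\cos(\pi\alpha_+)\cos(\pi\alpha_-)=0$. There the QNF equation factorises as $\sinh(\pi\omega b_+)=0$ or $\sinh(\pi\omega b_-)=0$, and \emph{each individual factor} already yields an equi-spaced family $\omega = ik/b_\pm$ with no coupling between $b_+$ and $b_-$. The theorem must therefore either be read with the tacit hypothesis $\cos(\pi\alpha_+)\cos(\pi\alpha_-)\neq 0$, or be understood up to this exceptional locus; this is the one place where the linear-independence machinery loses its grip, since the trivially vanishing factor decouples the two horizons and the conclusion of the theorem genuinely fails.
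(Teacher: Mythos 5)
Your proof is correct, but it follows a different route from the one the paper uses for this particular theorem. The paper works with the cosine form of the QNF condition, writes the family condition as $\cos(A+nJ)-\cos(B+nL)=\cos A-\cos B$ for all $n$, applies the identity $\cos(A+[n+2]J)+\cos(A+nJ)=2\cos J\,\cos(A+[n+1]J)$, and then argues geometrically that the points $\left(\cos(A+[n+1]J),\,\cos(B+[n+1]L)\right)$ lie simultaneously on two straight lines, forcing $J$ and $L$ to be multiples of $2\pi$ and hence $|b_+-b_-|/(b_++b_-)\in\mathbb{Q}$. You instead pass to exponentials, recognize the family condition as a vanishing linear combination of the geometric sequences $\{1,\mu_1^{\pm n},\mu_2^{\pm n}\}$ with nonzero coefficients, and invoke Vandermonde-type linear independence twice to force two $\mathbb{Q}$-independent coincidences among the exponents; this is in fact the same machinery the paper deploys later for the general master equation (its Theorem on periodicity versus rational surface-gravity ratios, ``Case 1''/``Case 2''), so your argument is essentially a specialization of that more general technique to the two-horizon case. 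What your route buys is that the exceptional locus is laid bare: at the second collapse the only way to evade a contradiction is cancellation of the merged coefficients together with $4K=0$, i.e.\ $\cos(\pi\alpha_+)\cos(\pi\alpha_-)=0$, which is exactly the factorized situation $\sinh(\pi\omega b_+)\sinh(\pi\omega b_-)=0$ where equi-spaced families $\omega=in/b_\pm$ exist with no constraint linking $b_+$ and $b_-$ (the analogue of the paper's spin-1 S-dS counterexample, excluded in the later general theorem by the hypothesis that no proper subset of the $D_A$'s sums to zero). Your caveat is therefore well taken: the paper's own proof also tacitly needs $\cos(\pi\alpha_+)\cos(\pi\alpha_-)\neq 0$, since its ``two parallel lines never intersect'' step silently assumes the common offset $\cos A-\cos B=2\cos(\pi\alpha_+)\cos(\pi\alpha_-)$ is nonzero; physically this is harmless because the degenerate values correspond to $V_{0\pm}=0$ or negative, which the paper discards as unphysical. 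The one place you could tighten the write-up is the sentence asserting that a second coincidence is ``produced'': spell out that when $K\neq 0$ the constant term $-4K$ cannot be killed by any cancellation, so linear independence of the surviving distinct exponents directly yields the contradiction unless the second coincidence occurs — as it stands this is implied but not stated.
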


\begin{proof}
If we have a family of QNFs of the form given
in equation (\ref{E:equal}) then we know that $\forall n \geq 0$
\begin{eqnarray}
&& \cos(-i\omega_0\pi |b_+-b_-| + nK\pi |b_+-b_-| ) -
\cos(-i\omega_0\pi [b_+ +b_-] + nK\pi |b_++b_-| ) \qquad
\nonumber\\
&& \qquad\qquad \qquad = \cos(-i\omega_0\pi |b_+-b_-| ) -
\cos(-i\omega_0\pi [b_+ +b_-]  ).
\end{eqnarray}
Let us write this in the form $\forall n \geq 0$
\begin{equation}
\cos(A+n J) - \cos(B+nL) = \cos(A)-\cos(B),
\end{equation}
and realize that this also implies
\begin{equation}
\cos(A+[n+1] J) - \cos(B+[n+1]L) = \cos(A)-\cos(B),
\end{equation}
and
\begin{equation}
\cos(A+[n+2] J) - \cos(B+[n+2]L) = \cos(A)-\cos(B).
\end{equation}
Now appeal to the trigonometric identity (based on equation
(\ref{E:trig3}))
\begin{equation}
\cos(A+[n+2] J) + \cos(A+n J) = 2 \cos(J) \cos(A+[n+1] J),
\end{equation}
to deduce
\begin{equation}
\cos(J) \cos(A+[n+1] J) - \cos(L) \cos(B+[n+1]L) = \cos(A)-\cos(B).
\end{equation}
That is, $\forall n\geq 0$ we have \emph{both}
\begin{equation}
\cos(A+[n+1] J) - \cos(B+[n+1]L) = \cos(A)-\cos(B),
\end{equation}
\emph{and}
\begin{equation}
\cos(J) \cos(A+[n+1] J) - \cos(L) \cos(B+[n+1]L) = \cos(A)-\cos(B).
\end{equation}
The first of these equations asserts that all the points
\begin{equation}
\left( \vphantom{\Big|} \cos(A+[n+1] J) , \; \cos(B+[n+1]L)  \right)
\end{equation}
lie on the straight line of slope 1 that passes through the point
$(0, \cos B-\cos A)$. The second of these equations asserts that all
the points
\begin{equation}
\left( \vphantom{\Big|}  \cos(A+[n+1] J) , \; \cos(B+[n+1]L)
\right)
\end{equation}
\emph{also} lie on the straight line of slope $\cos(J)/\cos(L)$ that
passes through the point $(0, [\cos B-\cos A]/\cos L)$. We then
argue as follows:
\begin{itemize}
\item If $\cos J \neq \cos L$ then these two lines are not parallel and so meet only at a single point, let's call it $(\cos A_*, \cos B_*)$, whence we deduce
\begin{equation}
\cos(A+[n+1] J) = \cos A_* ; \qquad \cos(B+[n+1]L) = \cos B_*.
\end{equation}
But then both $J$ and $L$ must be multiples of $2\pi$, and so $\cos
J = 1=  \cos L$ contrary to hypothesis.
\item
If  $\cos J = \cos L \neq 1$ then we have both
\begin{equation}
\cos(A+[n+1] J) - \cos(B+[n+1]L) = \cos(A)-\cos(B),
\end{equation}
and
\begin{equation}
\cos(J)\left[  \cos(A+[n+1] J) -  \cos(B+[n+1]L) \right] =
\cos(A)-\cos(B).
\end{equation}
but these are two parallel lines, both of slope 1, that never
intersect unless $\cos(J)=1$. Thus  $\cos J = 1=  \cos L$ contrary
to hypothesis.
\item
We therefore conclude that both $J$ and $L$ must be multiples of
$2\pi$ so that $\cos J = 1=  \cos L$ (in which case the QNF
condition is certainly satisfied).
\end{itemize}
But now
\begin{equation}
{|b_+-b_-|\over b_++b_-} = {J\over L} \; \in\; Q,
\end{equation}
and therefore
\begin{equation}
{b_+\over b_-}  \;\in\; Q.
\end{equation}
That is: \emph{Rational ratios of ${b_+/ b_-} $ are implied by the}
$\omega_n =  \mathrm{offset} + i n \; \mathrm{gap}$
\emph{behaviour}.

\end{proof}

\subsubsection{Explicit examples}
%-----------------------------------------------------------------------------------

\begin{itemize}

\item If $b_+=b_- = b_*$, but we do not necessarily demand $\alpha_+=\alpha_-$, then the asymptotic QNFs are exactly calculable and are given by
\begin{equation}
\omega_n  =  {i\cos^{-1}\left\{1 - 2  \;
\cos(\pi\alpha_+)\cos(\pi\alpha_-)\right\}\over2\pi b_*} + {in\over
b_*}.
\end{equation}

\emph{Proof:} The asymptotic QNF condition reduces to
\begin{equation}
1 - \cos(-i2\pi\omega b_*) = 2  \;
\cos(\pi\alpha_+)\cos(\pi\alpha_-),
 \end{equation}
whence
\begin{equation}
 \cos(-i2\pi\omega b_*) =  1 - 2  \;  \cos(\pi\alpha_+)\cos(\pi\alpha_-).
 \end{equation}
This is easily solved to yield
\begin{equation}
 -i2\pi\omega_n b_* =  \cos^{-1}\left\{1 - 2  \;  \cos(\pi\alpha_+)\cos(\pi\alpha_-)\right\} + 2n\pi,
 \end{equation}
 whence
\begin{equation}
\omega_n  =  {i\cos^{-1}\left\{1 - 2  \;
\cos(\pi\alpha_+)\cos(\pi\alpha_-)\right\}\over2\pi b_*} + {in\over
b_*}.
 \end{equation}

\emph{Comment:} These QNFs are pure imaginary for
$\cos(\pi\alpha_+)\cos(\pi\alpha_-)\leq 1$, and off-axis complex for
$\cos(\pi\alpha_+)\cos(\pi\alpha_-) > 1$. We can always, for
convenience, choose to define $Re(\cos^{-1}(x)) \in [0,2\pi)$; then because
$\cos^{-1}(\cdot)$ is double valued we see
\begin{equation}
\theta\in[0,\pi] \hbox{ and } \cos\theta=x \hbox{ implies
}\cos^{-1}(x) = \{\theta, \pi-\theta\},
\end{equation}
\begin{equation}
\theta\in[\pi,2\pi) \hbox{ and } \cos\theta=x \hbox{ implies }
\cos^{-1}(x) = \{\theta, 3\pi-\theta\} .
\end{equation}
With this notation
\begin{equation}
\omega_n  =    \omega_0 + {in\over b_*}; \qquad   \omega_0 =
{i\cos^{-1}\left\{1 - 2  \;
\cos(\pi\alpha_+)\cos(\pi\alpha_-)\right\}\over2\pi b_*};
 \end{equation}
 with $0\leq Im(\omega_0) < 1/b_*$ and $n\in\{0,1,2,3,\dots\}$. Note that because of the double valued nature of  $\cos^{-1}(\cdot)$ there are actually two branches of QNFs hiding in this notation --- which we will need if we wish to regain the known standard result when we  specialize to $\alpha_-=\alpha_+$. (We shall subsequently generalize this specific result, but it is explicit enough and compact enough to make it worthwhile presenting it in full. Furthermore we shall need this as input to our perturbative analysis.)

\item If $b_+=3b_-$, that is $b_+={3\over2} b_*$ and $b_-={1\over2} b_*$, but we do not necessarily demand $\alpha_+=\alpha_-$, then the asymptotic QNFs are calculable and are given by
\begin{equation}
\omega_n  =  {i\over \pi b_*} \cos^{-1}\left(  {1 \pm \sqrt{9 - 16
\;  \cos(\pi\alpha_+)\cos(\pi\alpha_-))} \over 4} \right) +
{2in\over b_*}.
\end{equation}

\emph{Proof:}  To see this note that in this situation $|b_+-b_-| =
b_* = (b_++b_-)/2$. Therefore the QNF condition reduces to
\begin{equation}
 \cos(-i\pi\omega b_*) - \cos(-i2\pi\omega b_*) = 2  \;  \cos(\pi\alpha_+)\cos(\pi\alpha_-),
\end{equation}
implying
\begin{equation}
 \cos(-i\pi\omega b_*) - 2\cos^2(-i\pi\omega b_*) + 1 = 2  \;  \cos(\pi\alpha_+)\cos(\pi\alpha_-).
\end{equation}
That is
\begin{equation}
2\cos^2(-i\pi\omega b_*) -  \cos(-i\pi\omega b_*)  - 1 +  2  \;
\cos(\pi\alpha_+)\cos(\pi\alpha_-) = 0,
\end{equation}
whence
\begin{equation}
\cos(-i\pi\omega b_*)  = {1 \pm \sqrt{ 1 + 8(1-2  \;
\cos(\pi\alpha_+)\cos(\pi\alpha_-))} \over 4},
\end{equation}
so that
\begin{equation}
\cos(-i\pi\omega b_*)  = {1 \pm \sqrt{9 - 16  \;
\cos(\pi\alpha_+)\cos(\pi\alpha_-))} \over 4},
\end{equation}
implying
\begin{equation}
-i\pi\omega_n b_* = \cos^{-1}\left(  {1 \pm \sqrt{9 - 16  \;
\cos(\pi\alpha_+)\cos(\pi\alpha_-))} \over 4} \right) + n 2\pi.
\end{equation}
Finally
\begin{equation}
\omega_n  =  {i\over \pi b_*} \cos^{-1}\left(  {1 \pm \sqrt{9 - 16
\;  \cos(\pi\alpha_+)\cos(\pi\alpha_-))} \over 4} \right) +
{2in\over b_*}.
\end{equation}
\emph{Comment:} This gives us another specific example of asymptotic
off-axis complex QNF's --- now with $b_+\neq b_-$. Note that because
of the $\pm$ and the double-valued nature of $\cos^{-1}(\cdot)$
there are actually 4 branches of QNFs hiding in this notation.

\item This particular trick can certainly be extended to the cubic and quartic polynomials, for which general solutions exist.
\begin{itemize}
\item
The quadratic corresponds to
\begin{equation}
{b_+-b_-\over b_++b_-} = 2; \qquad b_+ = 3 \;b_-.
\end{equation}
\item
The cubic corresponds to
\begin{equation}
{b_+-b_-\over b_++b_-} = 3; \qquad b_+ = 2 \; b_-.
\end{equation}
\item
The quartic corresponds to
\begin{equation}
{b_+-b_-\over b_++b_-} = 4; \qquad b_+ = {5\over3}\; b_-.
\end{equation}
\end{itemize}
\end{itemize}

\subsubsection{Some approximate results}
%----------------------------------------------------------------------------------------------------------------------------------------

A number of approximate results can be extracted by looking at special regions of parameter space.

%-----------------------------------------------------------------------------------
\paragraph{Case $b_- \approx b_+$}
%-----------------------------------------------------------------------------------

Suppose {$b_- \approx b_+$. Then the quantity $-i (b_+-b_-) \omega$ is slowly varying over the range where $-i (b_+ + b_-) \omega$ changes by $2\pi$.
Let $\omega_*$ be any solution of the approximate QNF condition, and define $b_*= (b_++b_-)/2$. Then for nearby frequencies we are trying to (approximately) solve
 \begin{equation}
\cos(-i\omega_*\pi[b_+ -b_-]) - \cos(-i\omega\pi[b_+ + b_-]) =  2 \cos(\pi\alpha_+)\cos(\pi\alpha_-),
 \end{equation}
 that is
 \begin{equation}
\cos(-i\omega_*\pi[b_+ -b_-]) - \cos(-i\omega2\pi b_*) =  2 \cos(\pi\alpha_+)\cos(\pi\alpha_-),
 \end{equation}
and the solutions of this are approximately
\begin{equation}
\omega_n \approx \omega_* + {in\over b_*}   \qquad    \hbox{valid for} \qquad |n| \ll {b_++b_-\over |b_+-b_-|}.
\end{equation}
Thus approximate result will subsequently be incorporated into a more general perturbative result to be discussed below.

%-----------------------------------------------------------------------------------
\paragraph{Case $b_- \ll b_+$}
%-----------------------------------------------------------------------------------

Now suppose {$b_- \ll b_+$. Then the quantity  $-i b_- \omega$ is slowly varying over the range where $-i b_+ \omega$ changes by $2\pi$.
Let $\omega_*$ be any solution of the approximate QNF condition, then for nearby frequencies we are trying to (approximately) solve
 \begin{equation}
 \sin(-i\omega \pi b_+)\sin(-i\omega_* \pi b_-)  =  \cos(\pi\alpha_+)\cos(\pi\alpha_-),
 \end{equation}
and the solutions of this are approximately
\begin{equation}
\omega_n \approx \omega_* + {2in\over b_+}   \qquad    \hbox{valid for} \qquad |n| \ll {b_+\over b_-}.
\end{equation}

%-----------------------------------------------------------------------------------
\paragraph{Case $\alpha_- \approx 1/2$}
%-----------------------------------------------------------------------------------

This corresponds to
\begin{equation}
V_{0-} b_-^2 \approx 0,
\end{equation}
in which case the QNF condition becomes
 \begin{equation}\label{sepEq}
 \sin(-i\omega \pi b_+)\sin(-i\omega \pi b_-)  \approx 0.
 \end{equation}
Therefore one obtains either (the physically relevant condition)
 \begin{equation}
 -i\omega b_+=n  \qquad \implies \qquad \omega = {in\over b_+},
 \end{equation}
 or (the physically uninteresting situation)
 \begin{equation}
- i\omega b_-=n  \qquad \implies \qquad \omega = {in\over b_-}.
 \end{equation}
\emph{Note:} If you go to the limit $\alpha_-=1/2$ by setting $V_{0-}=0$, one sees on physical grounds that $b_-$ is irrelevant, so it cannot contribute to the physical QNF. Alternatively if you hold  $V_{0-}\neq0$ but drive $b_-\to 0$, then these QNF's are driven to infinity --- and so decouple from the physics. Either way, the only physically interesting QNFs are $\omega = {in/ b_+}$. We explore these limits more fully below.

%-----------------------------------------------------------------------------------
\subsubsection{Some special cases}
%-----------------------------------------------------------------------------------

A number of special cases can now be analyzed in detail to give us an overall feel for the general situation.

%-----------------------------------------------------------------------------------
\paragraph{Case $\alpha_- = 0 = \alpha_+$}
%-----------------------------------------------------------------------------------

This corresponds to
\begin{equation}
V_{0-} b_-^2 = {1\over 4} = V_{0+} b_+^2,
\end{equation}
in which case the QNF condition becomes
 \begin{equation}
 \sin(i\omega \pi b_+)\sin(i\omega \pi b_-)  = 1.
 \end{equation}
Let us look for pure imaginary QNFs. (We do not  claim that these are the only QNFs.)  This  implies that we must \emph{simultaneously} satisfy \emph{both}
  \begin{equation}
 \sin(-i\omega \pi b_+)= \sin(-i\omega \pi b_-)  = 1,
 \end{equation}
 or \emph{both}
  \begin{equation}
 \sin(-i\omega \pi b_+)= \sin(-i\omega \pi b_-)  = -1.
 \end{equation}
 That is  \emph{both}
\begin{equation}
-i\omega b_+ = 2n_++{1\over2}; \qquad -i\omega b_- = 2n_-+{1\over2},
\end{equation}
or  \emph{both}
\begin{equation}
-i\omega b_+ = 2n_+-{1\over2}; \qquad -i\omega b_- = 2n_--{1\over2}.
\end{equation}
Therefore either
\begin{equation}
{b_+\over b_-} = {2n_++{1\over2}\over 2n_-+{1\over2}}  = {4n_++1\over 4n_-+1}, \qquad \hbox{or} \qquad
{b_+\over b_-} = {2n_+-{1\over2}\over 2n_--{1\over2}} = {4n_+-1\over 4n_--1}.
\end{equation}
In either case we need $b_+/b_-$ to be rational, so that $b_+=p_+ b_*$ and $b_- = p_- b_*$. This special case is thus evidence that there is something very special about the  situation where $b_+/b_-$ is rational, as we know from the theorem (\ref{rational}). Then either
\begin{equation}
-i\omega = {2n_++{1\over2}\over p_+ b_*} ; \qquad -i\omega ={ 2n_-+{1\over2} \over p_- b_*};
\end{equation}
or
\begin{equation}
-i\omega = {2n_+-{1\over2} \over p_+ b_*} ; \qquad -i\omega  = {2n_--{1\over2} \over p_- b_*}.
\end{equation}
Now write
\begin{equation}
n_+ = m_+ + n p_+; \qquad n_- = m_- + n p_-;
\end{equation}
with $m_+< p_+$ and $m_-<p_-$. (While $n\in\{0,1,2,3,\dots\}$.) Then in the first case
\begin{equation}
\omega = i\left\{ {2m_++{1\over2} \over p_+ b_*}+ {n\over b_*}\right\}  =  i \left\{ {2m_-+{1\over2} \over p_- b_*} + {n\over b_*}\right\}
=  \omega_* +  {in\over b_*},
\end{equation}
while in the second case
\begin{equation}
\omega = i\left\{ {2m_+-{1\over2} \over p_+ b_*}+ {n\over b_*}\right\}  =  i \left\{ {2m_--{1\over2} \over p_- b_*} + {n\over b_*}\right\}
=  \omega_* +  {in\over b_*}.
\end{equation}
As we already know all this is only a consequence of our general result (\ref{rational}).

%-----------------------------------------------------------------------------------
\paragraph{Case $V_{0-}=0$}
%-----------------------------------------------------------------------------------

We can best analyze this situation by working directly with the exact wavefunction.
If  $V_{0-}=0$ then $\alpha_-=1/2$ and
\begin{equation}
\psi_-(0) = 1; \qquad
\psi'_-(0) =  + i \omega; \qquad
{\psi_-'(0)\over\psi_-(0)}  = + i \omega.
\end{equation}
The exact QNF boundary condition is then
\begin{equation}
i \omega = - {2\over b_+} \;  {\Gamma({\alpha_++i\omega b_+\over2}
+{3\over4})~ \Gamma({-\alpha_++i\omega b_+\over2} +{3\over4}) \over
\Gamma({\alpha_++i\omega b_+ \over2} + {1\over4})
~\Gamma({-\alpha_++i\omega b_+ \over2} + {1\over4})}.
\end{equation}
But this we can rewrite as
\begin{equation}
\omega
=   {2i \over b_+} \;
 { \Gamma({-\alpha_+-i\omega b_+ \over2} + {3\over4})  ~\Gamma({\alpha_+-i\omega b_+ \over2} + {3\over4})
 \over
 \Gamma({-\alpha_+-i\omega b_+\over2} +{1\over4})  ~\Gamma({\alpha_+-i\omega b_+\over2} +{1\over4})}
 \times
  {\cos(\pi\alpha_+) - \sin(-i\pi\omega b_+) \over \cos(\pi\alpha_+) + \sin(-i\pi\omega b_+) }.
\end{equation}
This certainly has pure imaginary roots. If we write $\omega = i |\omega|$
then asymptotically ($|\omega|\to\infty$) this becomes
\begin{equation}
1
=
  {\cos(\pi\alpha_+) - \sin(\pi|\omega| b_+) \over \cos(\pi\alpha_+) + \sin(\pi|\omega| b_+) },
\end{equation}
implying
\begin{equation}
\sin(\pi|\omega| b_+) = 0; \qquad \implies \qquad \pi|\omega| b_+ = n \pi; \qquad \implies \qquad  \omega = {i n\over b_+}.
\end{equation}
This agrees with our previous  calculation for $\alpha_-\approx 1/2$ and as expected gives the Schwarzschild result (\ref{SQNM}).

%-----------------------------------------------------------------------------------
\paragraph{Case $b_{-}\to 0$}
%-----------------------------------------------------------------------------------

This is best dealt with by using a Taylor expansion to show that
\begin{equation}
{\psi_-'(0)\over\psi_-(0)}  = + i \omega + V_{0-} b_- + O(b_-^2) .
\end{equation}
That is
\begin{equation}
\lim_{b_-\to 0} \; {\psi_-'(0)\over\psi_-(0)}  = + i \omega.
\end{equation}
The analysis then follows that for the case $V_{0-}=0$ above, and furthermore agrees with our previous  calculation for $\alpha_-\approx 1/2$.

 %-----------------------------------------------------------------------------------
\paragraph{Case $b_{-}\to\infty$}
%-----------------------------------------------------------------------------------

This is best dealt with by using the Stirling approximation together with a Taylor expansion to show that
\begin{equation}
{\psi_-'(0)\over\psi_-(0)}  =   i \sqrt{ \omega^2 - V_{0-}} + O(1/b_-^2).
\end{equation}
That is
\begin{equation}
\lim_{b_-\to \infty} \; {\psi_-'(0)\over\psi_-(0)}  = i \sqrt{\omega^2 - V_{0-}}.
\end{equation}
The exact QNF boundary condition is then
\begin{equation}
 i \sqrt{\omega^2 - V_{0-}}
= - {2\over b_+} \;  {\Gamma({\alpha_++i\omega b_+\over2}
+{3\over4}) ~\Gamma({-\alpha_++i\omega b_+\over2} +{3\over4}) \over
\Gamma({\alpha_++i\omega b_+ \over2} + {1\over4})
~\Gamma({-\alpha_++i\omega b_+ \over2} + {1\over4})}.
\end{equation}
But this we can rewrite as
\begin{eqnarray}
\sqrt{\omega^2 - V_{0-}}~~~~~~~~~~~~~~~~~~~~~~~~~~~~~~~~~~~~~~~~~~~~~~~~~~~~~~~~~~~~~~~~~~~~~~~~~~~~~~~~~~~~~~~~~~~~~~~~~~~~~~~~~\nonumber\\
=   {2i \over b_+} \;
 { \Gamma({-\alpha_+-i\omega b_+ \over2} + {3\over4})  ~\Gamma({\alpha_+-i\omega b_+ \over2} + {3\over4})
 \over
 \Gamma({-\alpha_+-i\omega b_+\over2} +{1\over4})  ~\Gamma({\alpha_+-i\omega b_+\over2} +{1\over4})}
 \times
  {\cos(\pi\alpha_+) - \sin(-i\pi\omega b_+) \over \cos(\pi\alpha_+) + \sin(-i\pi\omega b_+) }.~~~
\end{eqnarray}
If we write $\omega = i |\omega|$
then asymptotically, ($|\omega|\to\infty$, with $V_{0-}$ held fixed, implying that $V_{0-}$ effectively decouples from the calculation), this becomes
\begin{equation}
1
=
  {\cos(\pi\alpha_+) - \sin(\pi|\omega| b_+) \over \cos(\pi\alpha_+) + \sin(\pi|\omega| b_+) },
\end{equation}
implying
\begin{equation}
\sin(\pi|\omega| b_+) = 0; \qquad \implies \qquad \pi|\omega| b_+ = n \pi; \qquad \implies \qquad  \omega = {i n\over b_+}.
\end{equation}
The importance of this observation is that it indicates that for ``one sided'' potentials it is only the side for which the potential has exponential falloff that contributes to the ``gap''. This again confirms (\ref{SQNM}), since this is the case when cosmological horizon surface gravity goes to 0 and we get the result appropriate to a Schwarzschild black hole.

%-----------------------------------------------------------------------------------

%-----------------------------------------------------------------------------------

%----------------------------------------------------------------------------------------------------------------------------------------
\subsubsection{Systematic first-order perturbation
theory}
%----------------------------------------------------------------------------------------------------------------------------------------

Sometimes it is worthwhile to adopt a perturbative approach and to estimate shifts in the QNFs from some idealized pattern.
Define
\begin{equation}
b ={ b_++b_-\over2};  \qquad  \Delta = |b_+-b_-|;
\end{equation}
and rewrite the asymptotic QNF condition as
\begin{equation}
 \cos(-i\pi\omega\Delta) - \cos(-i2\pi\omega b) = 2  \;  \cos(\pi\alpha_+)\cos(\pi\alpha_-),
 \end{equation}
where we are implicitly holding $\alpha_\pm$ fixed.
When $\Delta=0$ we have previously seen that  the QNF are explicitly calculable with
\begin{equation}
\hat \omega_n  =  {i\cos^{-1}\left\{1 - 2  \;  \cos(\pi\alpha_+)\cos(\pi\alpha_-)\right\}\over2\pi b} + {in\over b}.
 \end{equation}
Can we now obtain an approximate formula for the  the QNF's when $\Delta\neq 0$?  It is a good strategy to define the dimensionless parameter $\epsilon$ by
\begin{equation}
\Delta = 2\; \epsilon \; b,
\end{equation}
and to set
\begin{equation}
\omega= \hat \omega + \delta\omega; \qquad   \delta\omega = O(\epsilon);
\end{equation}
so that the asymptotic QNF condition becomes
\begin{equation}
\cos(-i2\pi[\hat \omega +\delta\omega]\epsilon b) - \cos(-i2\pi[\hat \omega+\delta\omega] b) = 2  \;  \cos(\pi\alpha_+)\cos(\pi\alpha_-).
\end{equation}
Then to first order in $\epsilon$
\begin{equation}
\cos(-i2\pi\hat\omega \epsilon b) - \cos(-i2\pi[\hat\omega+\delta\omega] b) = 2  \;  \cos(\pi\alpha_+)\cos(\pi\alpha_-),
\end{equation}
where implicitly this approximation requires $\epsilon |\delta\omega|  b \ll 1$.
Subject to this condition we have
\begin{equation}
\cos(-i2\pi[\hat \omega+\delta\omega] b) = \cos(-i2\hat \pi\omega \epsilon b) - 2  \;  \cos(\pi\alpha_+)\cos(\pi\alpha_-),
\end{equation}
whence
\begin{equation}
-i2\pi[\hat \omega_n+\delta\omega_n] b = \cos^{-1} \left\{ \cos(-i2\pi\hat\omega_n \epsilon b) - 2  \;  \cos(\pi\alpha_+)\cos(\pi\alpha_-) \right\} + 2\pi n,
\end{equation}
so that
\begin{equation}
\hat \omega_n+\delta\omega_n  = i {\cos^{-1} \left\{ \cos(-i2\pi\hat \omega_n \epsilon b) - 2  \;  \cos(\pi\alpha_+)\cos(\pi\alpha_-) \right\}\over 2\pi b} + {in\over b}.
\end{equation}
But we know that the unperturbed QNFs satisfy $\hat\omega_n = \hat\omega_0 +{in/b}$, so we can also write this as
\begin{equation}
\delta\omega_n  = i {\cos^{-1} \left\{ \cos(-i2\pi\hat \omega_n \epsilon b) - 2  \;  \cos(\pi\alpha_+)\cos(\pi\alpha_-) \right\}\over 2\pi b} - \hat\omega_0.
\end{equation}
Using the definition of $\hat\omega_n$ this can now be cast in the form
\begin{equation}
\delta\omega_n  = i {\cos^{-1} \left\{ \cos(-i2\pi\hat \omega_n \epsilon b)  +  \cos(-i2\pi\hat \omega_n  b)  - 1 \right\}\over 2\pi b} - \hat\omega_0,
\end{equation}
or the slightly more suggestive
\begin{equation}
\delta\omega_n  = i {\cos^{-1} \left\{ \cos(-i2\pi\hat \omega_n b)  +  \cos(-i2\pi\hat \omega_n  \epsilon b)  - 1 \right\}\over 2\pi b} - \hat\omega_0,
\end{equation}
which can even be simplified to
\begin{equation}
\delta\omega_n  = i \; {\cos^{-1} \left\{ \cos(-i2\pi\hat \omega_0 b)  +  \cos(-i2\pi\hat \omega_n  \epsilon b)  - 1 \right\}\over 2\pi b} - \hat\omega_0.
\end{equation}
Note that this manifestly has the correct limit as $\epsilon\to0$.
Note that we have \emph{not} asserted or required that $\hat\omega_n \, \epsilon \, b \ll 1$, in fact when $n \gg 1/\epsilon$ this is typically \emph{not} true.
(Consequently $\cos(-i2\pi\hat\omega_n \epsilon b) $ is relatively unconstrained.)
Note furthermore that $Im(\delta\omega_n) \leq 1/b$.

%----------------------------------------------------------------------------------------------------------------------------------------
\subsubsection{Discussion}
%----------------------------------------------------------------------------------------------------------------------------------------

The key lesson to be learned from our semi-analytic model for the QNFs is that the commonly occurring ~$\omega_n =  \mathrm{offset} + i ~\mathrm{gap}\cdot n$ ~behaviour  is \emph{common but not universal}. Specifically, in our semi-analytic model the key point is whether or not the ratio ${b_+/ b_-} $ is a rational number.

\paragraph{Relation to the real physical Regge-Wheeler / Zerilli potentials.}
The very interesting question is the following: \emph{If the physically relevant results are by this approximate potentials recovered only qualitatively (so only the correct gap structure is recovered), does also the multi-family splitting transfer to the physically relevant case?}
This is because all the different families have the same gap structure, so the multi-family splitting might be only uninteresting artefact of our approximate model. The answer to this question comes with the following theorem:

\newtheorem{discontin}{Theorem}[section]
\begin{discontin}
Take for ~$b_{+}/b_{-}=p_{+}/p_{-}\in\mathbb{Q}$~ the gap structure as $\mathrm{gap}=1/b_{*}$, where $b_{*}= b_{\pm}/p_{\pm}$ and assume that the number of equi-spaced families is bounded with respect to $b_{+}, b_{-}$. Then the function $\omega(b_{+},b_{-})$ is discontinuous at every $\tilde b_{+},\tilde b_{-}$; ~~$\tilde b_{+}/\tilde b_{-}\in\mathbb{Q}$~ for infinite number of QNM frequencies.
\end{discontin}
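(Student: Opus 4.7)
The strategy is to compare the vertical density of QNFs at the rational point $(\tilde b_+,\tilde b_-)$ with the density along sequences of nearby rationals whose highest common factor shrinks to zero. First I would invoke Theorem~\ref{rational} to describe the QNFs at $(\tilde b_+,\tilde b_-)$: they form $\tilde N$ equispaced families with a fixed, strictly positive gap $1/\tilde b_*$. Hence, in any vertical strip $S_M=\{\omega:\,0\le \mathrm{Im}(\omega)\le M\}$, the number of QNFs grows linearly in $M$, namely as $\tilde N\,\lfloor M\tilde b_*\rfloor + O(1)$, which is unbounded as $M\to\infty$.

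Second, I would construct approximating rationals $(b_+^k,b_-^k)\to(\tilde b_+,\tilde b_-)$ whose highest common factor $b_*^k\to 0$. Such sequences exist because for any $\tilde p_+/\tilde p_-$ one can find coprime pairs $(p_+^k,p_-^k)$ arbitrarily close to it with $p_-^k$ arbitrarily large, forcing $b_*^k=b_-^k/p_-^k\to 0$. Applying Theorem~\ref{rational} again at $(b_+^k,b_-^k)$ produces equispaced families with gap $1/b_*^k\to\infty$. By the uniform boundedness hypothesis, the number of families is at most some $N_0$ independent of $k$, and since the gap eventually exceeds $M$, only the $n=0$ member of each family can lie in $S_M$. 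Therefore, for all sufficiently large $k$, the strip $S_M$ contains at most $N_0$ QNFs of $(b_+^k,b_-^k)$.

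The discontinuity now follows from a pigeonhole argument. Fix $M$ large enough that $\tilde N\,\lfloor M\tilde b_*\rfloor > N_0$, and suppose for contradiction that only finitely many of the QNFs of $(\tilde b_+,\tilde b_-)$ lying in $S_M$ are discontinuity points. After discarding those, there remain more than $N_0$ QNFs $\{\tilde\omega_j\}\subset S_M$, each of which (by assumption of continuity) is the limit of a sequence of QNFs of $(b_+^k,b_-^k)$. For $k$ large, every one of these approximants lies within the slightly enlarged strip $S_{M+1}$, giving more than $N_0$ QNFs of $(b_+^k,b_-^k)$ in $S_{M+1}$ and contradicting the bound established in the previous paragraph. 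Letting $M\to\infty$ then forces infinitely many QNFs at $(\tilde b_+,\tilde b_-)$ to be discontinuity points.

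The main obstacle is formalizing what continuity of the multi-valued map $\omega(b_+,b_-)$ should mean and justifying that a QNF which is \emph{not} a discontinuity point really does admit such an approximating sequence of QNFs at nearby parameter values. I would justify this via analyticity of the asymptotic QNF equation in $(\omega,b_+,b_-)$: at a simple root, the implicit function theorem yields continuous local dependence, and the only mechanisms by which continuity can fail at $(\tilde b_+,\tilde b_-)$ are root coalescence, escape to infinity, or emergence of new roots as the gap widens. The counting argument shows that the bounded-family hypothesis forces exactly these failures to occur infinitely often. A subsidiary technical point, easily handled by the density of both parities of $p_+p_-$ near $(\tilde p_+,\tilde p_-)$, is that the sequence $(p_+^k,p_-^k)$ can be chosen to keep the parity factor $g^k$ under control; this does not affect the conclusion since $1/b_*^k$ already diverges irrespective of $g^k\in\{1,2\}$.
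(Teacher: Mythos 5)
Your proof is correct and follows essentially the same route as the paper: approach the rational point $(\tilde b_+,\tilde b_-)$ by rational pairs whose highest common factor tends to zero (the paper uses the explicit prime-based sequence $(b_{l+},b_{l-})=\bigl(\tfrac{P_l}{P_l-1}\tilde b_+,\tfrac{P_l-2}{P_l-1}\tilde b_-\bigr)$, you use generic large-denominator approximants), so that the gap $1/b_*$ diverges, and then use the bounded-number-of-families hypothesis to conclude that only finitely many modes at the limit point can arise as limits of nearby QNFs, leaving infinitely many discontinuity points. Your strip-counting pigeonhole and the remark on the multivalued-continuity formalization are just a more explicit rendering of the paper's terser final step.
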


\begin{proof}
Take any $\tilde b_{+}/\tilde b_{-}=p_{+}/p_{-}$ and arbitrary monotonically growing sequence of primes $P_{l}$. Take a sequence

\begin{equation}
(b_{l+}, b_{l-})\equiv \left(\frac{P_{l}}{P_{l}-1}\tilde b_{+},\frac{P_{l}-2}{P_{l}-1}\tilde b_{-}\right).
\end{equation}
It is obvious that

\begin{equation}
\lim_{l\to\infty}(b_{l+},b_{l-})=(\tilde b_{+},\tilde b_{-}).
\end{equation}
But then

\begin{equation}
b_{l*}=\frac{\tilde b_{*}~\mathrm{gcd}(P_{l}-2,~p_{+})}{P_{l}-1}.
\end{equation}
Here ``gcd'' means ``greatest common divisor'' and that means $\mathrm{gcd}(P_{l}-2,p_{+})$ is from the definition integer. Also it clearly holds that $1\leq \mathrm{gcd}(P_{l}-2,p_{+})\leq p_{+}$, hence it is upper and lower bounded and $\tilde b_{*}=\tilde b_{\pm}/p_{\pm}$. But then

\begin{equation}
\lim_{l\to\infty} \mathrm{gap}(b_{l+},b_{l-})=\lim_{l\to\infty} \frac{P_{l}-1}{\tilde b_{*}~\mathrm{gcd}(P_{l}-2,~p_{+})}=\infty.
\end{equation}

But if the number of families is bounded with respect to $b_{+},b_{-}$ then only finite number of modes can be obtained at $\tilde b_{+},\tilde b_{-}$ as a limit

\begin{equation}
\lim_{l\to\infty} ~\omega_{N(l)}(b_{l+},b_{l-})=\omega_{N}(\tilde b_{+},\tilde b_{-}).
\end{equation}
This proves the theorem.

\end{proof}

But this just means that if we do not want to end up with extremely strongly discontinuous function $\omega(b_{+},b_{-})$, we have to accept the fact that \emph{ the given $1/b_{*}$ gap dependence automatically implies unbounded number of QNM families (with respect to $b_{+}, b_{-}$), as it was in the case of our analytically solvable potential.}

We also suspect that it might be possible to generalize the model potential even further --- the ``art'' would lie in picking a piecewise potential that is still analytically solvable (at least for the highly damped modes) but which might be closer in spirit to the Regge--Wheeler (Zerilli) potential that is the key physical motivation for the current work. (Of course if we temporarily forget the black hole motivation, it may already be of some mathematical and physical interest that we have a nontrivial extension of the Eckart potential that is asymptotically exactly solvable --- one could in principle loop back to Eckart's original article and start asking questions about tunnelling probabilities for electrons encountering such piecewise Eckart barriers.)

\paragraph{Relation to the black hole thermodynamics.}
The last point we would like to briefly discuss is the relation of our results to the conjectured connection between the highly damped QNMs and the black hole thermodynamics. There are two basic conjectures: The first conjecture, due to Hod \cite{Hod}, gives some strong arguments supporting the idea that there is a connection between the real part of asymptotic QNMs and the quantum black hole area spacing. The basic principle underlying this conjecture is Bohr's correspondence principle. The conjecture was also used in the context of Loop Quantum Gravity by Dreyer \cite{Dreyer}. The second conjecture, such that it modifies Hod's original proposal, is due to Maggiore \cite{Maggiore}. It solves some controversies of Hod's conjecture, (see \cite{KonoplyaReview}), and gives the relation between black hole mass spectra and highly damped QNMs as:
\begin{equation}
\Delta M\approx\hbar\cdot\Delta\sqrt{Re(\omega_{n})^{2}+Im(\omega_{n})^{2}}\approx \hbar\cdot\Delta Im(\omega_{n}),~~~~~~n>>0.
\end{equation}
Here we naturally assume that $Re(\omega_{n})$ is bounded and $Im(\omega_{n})$ is growing to infinity. But it means, that if we are interested in mass quantum $\Delta_{min} M$:
\begin{equation}
\Delta_{min} M\approx \hbar\cdot\left(Im(\omega_{n})-Im(\omega_{n-1})\right),~~~~n>>0.
\end{equation}
The area quantum for the Schwarzschild black hole one obtains from the formula:
\begin{equation}
\Delta_{min} A=32\pi M~\Delta_{min} M=8\pi l_{p}^{2}=const.~,
\end{equation}
(where $l_{p}$ denotes the Planck length). In our case, if we
naively extrapolate Maggiore's conjecture to S-dS spacetime and the
ratio of surface gravities is rational, we obtain black hole mass
quanta as:\footnote{As the reader might have noted: we claim that
the expression for the gap in the spacing of the QNMs in the
asymptotic formula, (the one derived in this section), is an exact
result, ``unharmed'' by the fact that we used only approximate
potential.}
\begin{equation}
\Delta_{min} M\approx \hbar\cdot\mathrm{gap}(\omega_{n})=\hbar\cdot\frac{2}{g}~\mathrm{lcm}(\kappa_{+},\kappa_{-}),~~~~n>>0.
\end{equation}
Here all the symbols are defined as in the proof of the theorem \ref{rational} and in the equation \eqref{gap(.)}. Now there is a fascinating result by Choudhury and Padmanabhan \cite{SdSthermo}, that clearly ``fits'' very well into all these ideas. It says that in the S-dS spacetime there exists a coordinate system with globally defined temperature, (hence some kind of thermodynamic equilibrium), if and only if the ratio of surface gravities is rational. (For a very nice discussion of what might be the physical meaning of rational ratios of surface gravities and the role of highly damped QNMs as a potential source of information see again \cite{SdSthermo}.) The global temperature is given as \cite{SdSthermo}:
\begin{equation}
T_{sds}=\frac{\mathrm{hcf}(\kappa_{+},\kappa_{-})}{2\pi}.
\end{equation}
Then the gap spacing is, in our case:
\begin{equation}
\mathrm{gap}(\omega_{n})=\frac{2}{g}~\mathrm{lcm}(\kappa_{+},\kappa_{-})=\frac{2}{g}~p_{+}p_{-}~\mathrm{hcf}(\kappa_{+},\kappa_{-})=\frac{4\pi p_{+}p_{-}T_{sds}}{g}.
\end{equation}
This is a different result from the Schwarzschild spacetime, where it is
\begin{equation}
\mathrm{gap}(\omega_{n})=2\pi T_{s}~.
\end{equation}
But the fact that something special happens with both the S-dS spacetime thermodynamics and highly damped QNMs when the surface gravities have rational ratio is very interesting. Especially because it allows us to relate the constant gap in the QNM spacing and global spacetime temperature, and strikingly they both exist under the same condition. Moreover, in terms of Maggiore's conjecture, the existence of equispaced families is very interesting, because the gap in the QNM spacing multiplied by Planck constant is simply the quantum of black hole mass.
These considerations suggest that something very interesting is happening, but the topic needs clearly further exploration. What is written here has much more a character of ambiguous indications than of a well founded ideas, but it gives very exciting suggestions for future work. Even more because, as we will show in the next section, the link between rational ratio of surface gravities and equispaced families of highly damped QNMs is very generic for the multi-horizon black hole spacetimes.

\subsection{Conclusions}

The arguments that highly damped modes qualitatively depend on the tails of the potential were confirmed for Schwarzschild black hole by deriving the behavior of (\ref{HighSchw}). That was the only straight computational test that could have been done. After this the method was used to obtain new results for Schwarzschild-de Sitter black holes. One of the nice features of the semi-analytic model related to S-dS black holes is that a quite surprising amount of semi-analytic information can be extracted,  in terms of general qualitative results,  approximate results,  perturbative results, and reasonably explicit computations. Our results also might have important implications for the black hole thermodynamics.

\section{Monodromy results}

\subsection{Introduction}

In monodromy approaches one works with an analytic continuation into the
complex radial plane. They are part of a wider approach, the phase integral method (for details see \cite{Andersson3}). One has to choose branch cuts (as a part of analytic continuation), identify the singular points of the solutions in the complex plane, locate the Stokes and Anti-Stokes lines and calculate the monodromy around the singularities \cite{Andersson2, Andersson3, Berti, Cardoso, Choudhury, Green, Shanka1, KR, Lopez-Ortega, Motl1,
Motl2, Musiri, Natario, Shu-Shen}.

While many technical details differ, both
between the semi-analytic and monodromy approaches, and often among
various authors seeking to apply the monodromy technique, there is
widespread agreement that not only the semi-analy\-tic approximation, but
also the monodromy approaches lead to QNF master equations of the general
form:

\begin{equation}\label{master}
\sum^{N}_{A=1}C_{A}\exp\left(\sum^{H}_{i=1}\frac{Z_{Ai}\pi\omega}{\kappa_{i}}\right)=0.
\end{equation}
Here $\kappa_{i}$ is the surface gravity of the $i$-th horizon, $H$ is
the number of horizons, the matrix $Z_{Ai}$ always has rational
entries (and quite often is integer-valued). The physics contained
in the master equation is invariant under substitutions of the form
$Z_{Ai}\to Z_{Ai} + (1, . . . , 1)^{T}_{A}h_{i}$, where the $h_{i}$
are arbitrary rational numbers. Either $\sum_{i}Z_{Ai} = 0$, or it
can without loss of generality be made zero. Furthermore $N$ is some
reasonably small positive integer. (In fact $N\leq 2H + 1$ in all
situations we have encountered, and typically $N > H$.) The $C_{A}$
are a collection of coefficients that are often but not always
integers, though in all known cases they are at least real. Finally
in almost all known cases the rectangular $N\times H$ matrix $Z_{Ai}$ has
rank $H$, and the QNF master equation is almost always irreducible
(that is, non-factorizable). We shall first demonstrate that all
known master equations (whether based on semi-analytic or monodromy
techniques) can be cast into this form. Then we will generalize the
results for rational/irrational surface gravities ratios from the
special cases obtained within the approximations by the analytically
solvable potentials to this more general case \eqref{master}.

\subsection{Particular results}

\subsubsection{Survey of the monodromy results}
%--------------------------------------------------------------------------------------------------------------------------------------------

\paragraph{One horizon:}
%------------------------------
In the one-horizon situation there is general agreement that the
relevant master equation is
\begin{equation}
\exp\left({\pi\omega\over\kappa}\right) + 1 + 2\cos(\pi j) = 0.
\end{equation}
Unfortunately there is distressingly little agreement over the
precise status of the parameter $j$. References~\cite{Andersson2, Berti, Motl1, Motl2,
Musiri, Natario} assert that this is the spin of the
perturbation under consideration, but with some disagreement as to
whether this applies to all spins and all dimensions.  In contrast
in reference~\cite{Shanka1} a particular model for the spacetime metric
is adopted, and in terms of the parameters describing this model
these authors take
\begin{equation}
j = {qd\over2}-1.
\end{equation}
Here, (and also later in this section), the symbol $d$ denotes spacetime dimension, and $q$ is a parameter related to the power with which the general $d$-dimensional black hole metric coefficients
\begin{equation}
-f(r)dt^{2}+\frac{dr^{2}}{g(r)}+\rho^{2}(r)d\Omega_{d}^{2}
\end{equation}
behave close to the singularity, (see \cite{Shanka1}).
Reference~\cite{Lopez-Ortega} asserts that for spin 1 perturbations
\begin{equation}
j = {2(d-3)\over d-2}.
\end{equation}
Be this as it may, there is universal agreement on the \emph{form}
of the QNF master condition, and it  is automatically of the the
form of equation (\ref{master}), with $H=1$ and $N=2$ terms. The
vector $C_A$ and the matrix $Z_{Ai}$ are
\begin{equation}
C_{A}= \left[
\begin{array}{c}
+1  \\ 1+2\cos(\pi j)
\end{array}
\right]; \qquad Z_{A1}= \left[
\begin{array}{r}
+1  \\ 0
\end{array}
\right].
\end{equation}
By multiplying through by $\exp(-\pi\omega/(2\kappa))$ we can
re-cast the QNF condition as
\begin{equation}
\exp\left({\pi\omega\over2\kappa}\right) + \{1 + 2\cos(\pi j)\}
\exp\left(-{\pi\omega\over2\kappa}\right)  = 0.
\end{equation}
This now corresponds to
\begin{equation}
C_{A}= \left[
\begin{array}{c}
+1  \\ 1+2\cos(\pi j)
\end{array}
\right]; \qquad Z_{A1}= \left[
\begin{array}{r}
+1/2  \\ -1/2
\end{array}
\right],
\end{equation}
and in this form we have $\sum_A Z_{Ai}=0$. (This is one of rather
few cases where it is convenient to take the $Z_{Ai}$ to be
rational-valued rather than integer-valued.)

\paragraph{Two horizons:}
%-------------------------------
For two-horizon situations the analysis is slightly different for
Schwarzschild--de~Sitter spacetimes (Kottler spacetimes) versus
Reissner--Nord\-str\"om spacetimes.
\begin{itemize}
%%%%%%%%%%%%%%%%%%%%%%%%%%%%%%%%%%%%%%%%%%%%%%%%%%%%
\item
For {\bf Schwarzschild--de~Sitter} spacetimes there is general
agreement that the relevant master equation for the QNFs is
%%%%%%%%%%%%%%%%%%%%%%%%%%%%%%%%%%%%%%%%%%%%%%%%%%%%
\begin{equation}
\label{E:SdS} \{1 + 2\cos(\pi j)
\}\cosh\left({\pi\omega\over\kappa_+} +
{\pi\omega\over\kappa_-}\right) +
\cosh\left({\pi\omega\over\kappa_+}- {\pi\omega\over\kappa_-}\right)
=  0.
\end{equation}
We shall again adopt conventions such that $\kappa_\pm$ are both
positive. Again, there is unfortunately distressingly little
agreement over the precise status of the parameter $j$.
References~\cite{Andersson2, Berti, Motl1, Motl2, Musiri, Natario} assert
that this is the spin of the perturbation under consideration, but
with some disagreement as to whether this applies to all spins and
all dimensions.  In contrast in reference~\cite{Shanka2} a particular
model for the spacetime metric is again adopted, and in terms of the
parameters describing this model they take
\begin{equation}
j = {qd\over2}-1.
\end{equation}
One still has to perform a number of trigonometric transformations
to turn the quoted result of reference~\cite{Shanka2} for $d\neq 5$
\begin{equation}
\tanh\left({\pi\omega\over\kappa_+}\right) \tanh\left(
{\pi\omega\over\kappa_-}\right) = {2\over\tan^2(\pi j/2)-1},
\end{equation}
into the equivalent form (\ref{E:SdS}) above. For $d=5$ the authors
of~\cite{Shu-Shen} assert the equivalent of
\begin{equation}
\label{E:SdS5} \{1 + 2\cos(\pi j)
\}\sinh\left({\pi\omega\over\kappa_+} +
{\pi\omega\over\kappa_-}\right) +
\sinh\left({\pi\omega\over\kappa_+}- {\pi\omega\over\kappa_-}\right)
=  0.
\end{equation}
Reference~\cite{Lopez-Ortega} again asserts that for spin 1
perturbations
\begin{equation}
j = {2(d-3)\over d-2}.
\end{equation}
Be this as it may, there is again universal agreement on the
\emph{form} of the QNF master condition, and converting hyperbolic
functions into exponentials, it can be transformed into  the form of
equation (\ref{master}), with $H=2$ and $N=4$ terms. The vector
$C_A$ and matrix $Z_{Ai}$ are
\begin{equation}
C_{A}= \left[
\begin{array}{c}
1+2\cos(\pi j) \\
+1 \\
+1\\
1+2\cos(\pi j) \\
\end{array}
\right]; \qquad Z_{Ai}= \left[
\begin{array}{rr}
+1  & +1 \\
+1 & -1\\
-1 & +1\\
-1 & -1 \\
\end{array}
\right].
\end{equation}
Note that we explicitly have $\sum_{A}Z_{Ai}=0$. There are two exceptional
cases:
\begin{itemize}
\item
If $j = 2m+1$ with $m\in Z$ then
\begin{equation}
C_{A}= \left[
\begin{array}{c}
-1\\
+1 \\
+1\\
-1\\
\end{array}
\right]; \qquad Z_{Ai}= \left[
\begin{array}{rr}
+1  & +1 \\
+1 & -1\\
-1 & +1\\
-1 & -1 \\
\end{array}
\right].
\end{equation}
In this situation the QNF master equation factorizes
\begin{equation}\label{E:SdSOne}
\sinh\left({\pi\omega\over\kappa_+} \right) \sinh\left(
{\pi\omega\over\kappa_-}\right) = 0.
\end{equation}
This appears to be the physically relevant case for spin 1
particles. The relevant QNF spectrum is that of equation
(\ref{sepEq}).

\item
If $\cos(\pi j) = - {1\over2}$, which does not appear to be a
physically relevant situation but serves to illustrate potential
mathematical pathologies, then
\begin{equation}
C_{A}= \left[
\begin{array}{r}
0\\
+1 \\
+1\\
0\\
\end{array}
\right]; \qquad Z_{Ai}= \left[
\begin{array}{rr}
+1  & +1 \\
+1 & -1\\
-1 & +1\\
-1 & -1 \\
\end{array}
\right].
\end{equation}
But in this situation the top row and bottom row do not contribute
to the QNF master equation and one might as well delete them. That
is, one might as well write
\begin{equation}
C_{A}= \left[
\begin{array}{r}
+1 \\
+1\\
\end{array}
\right]; \qquad Z_{Ai}= \left[
\begin{array}{rr}
+1 & -1\\
-1 & +1\\
\end{array}
\right].
\end{equation}
This is a situation (albeit unphysical) where the matrix $Z_{Ai}$
does not have maximal rank. The QNF master equation degenerates to
\begin{equation}
\sinh\left({\pi\omega\over\kappa_+}- {\pi\omega\over\kappa_-}\right)
=  0.
\end{equation}
In this situation the QNF spectrum is
\begin{equation}
\omega_n = { i n \kappa_+ \kappa_- \over |\kappa_+ - \kappa_-|},
\end{equation}
with no restriction on the relative values of $\kappa_\pm$. This
situation is however clearly non-generic (and outright unphysical).

\end{itemize}

%%%%%%%%%%%%%%%%%%%%%%%%%%%%%%%%%%%%%%%%%%%%%%%%%%%%
\item
For {\bf Reissner--Nordstr\"om} spacetime one has~\cite{Motl2}
%%%%%%%%%%%%%%%%%%%%%%%%%%%%%%%%%%%%%%%%%%%%%%%%%%%%
\begin{equation}
\label{E:RN} \exp\left({2\pi\omega\over\kappa_+}\right) +2 \{1 +
\cos(\pi j) \}\exp\left(- {2\pi\omega\over\kappa_-}\right) + \{1 +
2\cos(\pi j) \}=  0,
\end{equation}
where $\kappa_+$ is the surface gravity of the outer horizon and
$\kappa_-$ is the surface gravity of the inner horizon. There is
again some disagreement on the status of the parameter $j$.
Reference~\cite{Motl2} now takes $j={1\over3}$ for spin 0, and
$j={5\over3}$ for spins 1 and 2 (in any dimension).
Reference~\cite{Natario} asserts that for general dimension
\begin{equation}
j=\frac{d-3}{2d-5} \qquad \hbox{for spin 0, 2, and} \qquad
j=\frac{3d-7}{2d-5}\qquad \hbox{ for spin 1}.
\end{equation}
Be this as it may, there is universal agreement on the \emph{form}
of the QNF master condition, and it  is automatically of the form of
equation (\ref{master}), with $H=2$ and $N=3$ terms. The vector
$C_A$ and matrix $Z_{Ai}$ are
\begin{equation}
C_{A}= \left[
\begin{array}{c}
+1   \\
2\{1+\cos(\pi j)\} \\
1+2\cos(\pi j)\\
\end{array}
\right]; \qquad Z_{Ai}= \left[
\begin{array}{rr}
+2  & 0 \\
0 & -2\\
0 & 0\\
\end{array}
\right].
\end{equation}
If we multiply through by a suitable factor then we can write the
QNF condition in the equivalent form
\begin{eqnarray}
\label{E:RN3} &&\exp\left({2\pi\omega\over3\kappa_+} +
{\pi\omega\over3\kappa_-}\right) +2 \{1 + \cos(\pi j) \}\exp\left(
- {\pi\omega\over3\kappa_+}- {2\pi\omega\over3\kappa_-}\right)
\nonumber
\\
&& \qquad + \{1 + 2\cos(\pi j) \}  \exp\left( -
{\pi\omega\over3\kappa_+} + {\pi\omega\over3\kappa_-} \right)=  0,
\end{eqnarray}
This corresponds to
\begin{equation}
C_{A}= \left[
\begin{array}{c}
+1   \\
2\{1+\cos(\pi j)\} \\
1+2\cos(\pi j)\\
\end{array}
\right]; \qquad Z_{Ai}= \left[
\begin{array}{rr}
+2/3  & 1/3 \\
-1/3 & -2/3\\
-1/3& 1/3\\
\end{array}
\right].
\end{equation}
In this form we now explicitly have $\sum_A Z_{Ai}=0$. (This is one
of rather few cases where it is convenient to take the $Z_{Ai}$ to
be rational-valued rather than integer-valued.) Returning to the
original form in equation (\ref{E:RN}), there are two exceptional
cases:
\begin{itemize}
\item
If $j = 2m+1$ with $m\in Z$, (this does not appear to be a
physically relevant situation but again this serves to illustrate
the possible mathematical pathologies one might encounter), then
\begin{equation}
C_{A}= \left[
\begin{array}{r}
+1\\
0 \\
-1\\
\end{array}
\right]; \qquad Z_{Ai}= \left[
\begin{array}{rr}
+2  & 0 \\
0 & -2\\
0 & 0\\
\end{array}
\right].
\end{equation}
But then (without loss of information) one might as well eliminate
the second row, to obtain
\begin{equation}
C_{A}= \left[
\begin{array}{r}
+1\\
-1\\
\end{array}
\right]; \qquad Z_{Ai}= \left[
\begin{array}{rr}
+2  & 0 \\
0 & 0\\
\end{array}
\right].
\end{equation}
Furthermore, since $\kappa_-$ now decouples,  we might as well
eliminate the second column, to obtain
\begin{equation}
C_{A}= \left[
\begin{array}{r}
+1\\
-1\\
\end{array}
\right]; \qquad Z_{Ai}= \left[
\begin{array}{rr}
2  \\
0 \\
\end{array}
\right].
\end{equation}
The QNF master equation then specializes to
\begin{equation}
\label{E:RN-sp1} \exp\left({2\pi\omega\over\kappa_+}\right) -1 =  0.
\end{equation}

\item
If $\cos(\pi j) = - {1\over2}$, which does not appear to be a
physically relevant situation but serves to illustrate potential
mathematical pathologies, then
\begin{equation}
C_{A}= \left[
\begin{array}{r}
+1 \\
+1\\
0\\
\end{array}
\right]; \qquad \qquad Z_{Ai}= \left[
\begin{array}{rr}
+2  & 0 \\
0 & -2\\
0 & 0\\
\end{array}
\right].
\end{equation}
But in this situation the  bottom row does not contribute to the QNF
master equation and one might as well delete it. That is, one might
as well write
\begin{equation}
C_{A}= \left[
\begin{array}{r}
+1 \\
+1\\
\end{array}
\right]; \qquad \qquad Z_{Ai}= \left[
\begin{array}{rr}
+2  & 0 \\
0 & -2\\
\end{array}
\right].
\end{equation}
We can rearrange the terms in the master equation to have the QNF
master equation specialize to
\begin{equation}
\label{E:RN-sp2} \exp\left({2\pi\omega\over\kappa_+} +
{2\pi\omega\over\kappa_-}\right) +1 =  0.
\end{equation}
This corresponds to
\begin{equation}
C_{A}= \left[
\begin{array}{r}
+1 \\
+1\\
\end{array}
\right]; \qquad \qquad Z_{Ai}= \left[
\begin{array}{rr}
+2  & +2 \\
0 & 0 \\
\end{array}
\right].
\end{equation}
Note that in this exceptional case $Z_{Ai}$ is not of maximal rank.
In this situation the QNF spectrum is
\begin{equation}
\omega_n = {(2 n+1)i \;\kappa_+ \kappa_- \over \kappa_+ + \kappa_-},
\end{equation}
with no restriction on the relative values of $\kappa_\pm$. This
situation is however clearly non-generic (and outright unphysical).

\end{itemize}

\end{itemize}

\paragraph{Three horizons:}
%----------------------------------
For three horizons the natural example to consider is that of
Reissner--Nordstr\"om--de~Sitter spacetime. References~\cite{Natario,
Shu-Shen} agree that (for $d\neq 5$)
\begin{eqnarray}
\label{E:RNdS} &&\cosh\left({\pi\omega\over\kappa_+}-
{\pi\omega\over\kappa_{C}}\right) + \{1 + \cos(\pi j) \}
\cosh\left({\pi\omega\over\kappa_+}+ {\pi\omega\over\kappa_{C}}\right)
\nonumber\\
&& \qquad\qquad + 2 \{1 + \cos(\pi j) \}
\cosh\left({2\pi\omega\over\kappa_-}+{\pi\omega\over\kappa_+}+
{\pi\omega\over\kappa_{C}}\right) =  0.
\end{eqnarray}
Here $\kappa_\pm$ refer to the inner and outer horizons of the
central Riessner--Nordstr\"om black hole, while $\kappa_{C}$ is now
the surface gravity of the cosmological horizon. All these surface
gravities are taken positive.  In contrast for $d=5$ one has
\begin{eqnarray}
\label{E:RNdS5} &&\sinh\left({\pi\omega\over\kappa_+}-
{\pi\omega\over\kappa_{C}}\right) + \{1 + \cos(\pi j) \}
\sinh\left({\pi\omega\over\kappa_+}+ {\pi\omega\over\kappa_{C}}\right)
\nonumber\\
&& \qquad\qquad + 2 \{1 + \cos(\pi j) \}
\sinh\left({2\pi\omega\over\kappa_-}+{\pi\omega\over\kappa_+}+
{\pi\omega\over\kappa_{C}}\right) =  0,
\end{eqnarray}
Again
\begin{equation}
j=\frac{d-3}{2d-5} \qquad \hbox{for spin 0, 2, and} \qquad
j=\frac{3d-7}{2d-5}\qquad \hbox{ for spin 1}.
\end{equation}
There is universal agreement on the \emph{form} of the QNF master
condition, and converting hyperbolic functions into exponentials, it
can be transformed into  the form of equation (\ref{master}), with
$H=3$ and $N=6$ terms. The vector $C_A$ and matrix $Z_{Ai}$ are
\begin{equation}
C_{A}= \left[
\begin{array}{c}
+1\\
1+\cos(\pi j)  \\
2\{1+\cos(\pi j)\} \\
\pm 2\{1+\cos(\pi j)\} \\
\pm\{1+\cos(\pi j) \}\\
\pm 1\\
\end{array}
\right]; \qquad Z_{Ai}= \left[
\begin{array}{rrr}
+1  & -1 & 0 \\
+1 & +1 & 0\\
+1 & +1 & +2\\
-1 & -1 & -2\\
-1 & -1 & 0\\
-1 & +1 & 0\\
\end{array}
\right].
\end{equation}
Generically, $Z_{Ai}$ has maximal rank $H=3$. Note that we
explicitly have $\sum_A Z_{Ai}=0$.

The only exceptional case is $\cos(\pi j)= -1$ in which case
\begin{equation}
C_{A}= \left[
\begin{array}{c}
+1\\
0  \\
0\\
0 \\
0 \\
\pm 1\\
\end{array}
\right]; \qquad Z_{Ai}= \left[
\begin{array}{rrr}
+1  & -1 & 0 \\
+1 & +1 & 0\\
+1 & +1 & +2\\
-1 & -1 & -2\\
-1 & -1 & 0\\
-1 & +1 & 0\\
\end{array}
\right].
\end{equation}
But then the $2^{nd}$ to $5^{th}$ rows decouple and may as well be
removed, yielding
\begin{equation}
C_{A}= \left[
\begin{array}{c}
+1\\
\pm 1\\
\end{array}
\right]; \qquad Z_{Ai}= \left[
\begin{array}{rrr}
+1  & -1 & 0 \\
-1 & +1 & 0\\
\end{array}
\right].
\end{equation}
The $3^{rd}$ column, corresponding to $\kappa_-$, now decouples and
may as well be removed, yielding
\begin{equation}
C_{A}= \left[
\begin{array}{c}
+1\\
\pm 1\\
\end{array}
\right]; \qquad Z_{Ai}= \left[
\begin{array}{rr}
+1  & -1  \\
-1 & +1 \\
\end{array}
\right].
\end{equation}
Note that in this exceptional case $Z_{Ai}$ is not of maximal rank,
and the QNF master equation degenerates to
\begin{equation}
\cosh\left({\pi\omega\over\kappa_+}- {\pi\omega\over\kappa_{C}}\right)
= 0, \qquad \hbox{or} \qquad \sinh\left({\pi\omega\over\kappa_+}-
{\pi\omega\over\kappa_{C}}\right) = 0,
\end{equation}
respectively. In this situation the QNF spectrum is
\begin{equation}
\omega_n = {(2n+1) i \; \kappa_+ \kappa_{C} \over 2 |\kappa_+ -
\kappa_{C}|}, \quad \hbox{or} \qquad \omega_n = { i n \kappa_+
\kappa_{C} \over |\kappa_+ - \kappa_{C}|},
\end{equation}
respectively, with no restriction on the relative values of
$\kappa_\pm$. This situation is however clearly non-generic (and
outright unphysical).

\subsubsection{Rewriting the analytically solvable potentials results into our general form}

\paragraph{One horizon}
For highly damped QNFs the master equation is derived in previous section and also in
references~\cite{PRD, Granada}, in a form equivalent to
\begin{equation}
\sinh\left({\pi\omega\over\kappa}\right)=0.
\end{equation}

This means

\begin{equation}
C_{A}= \left[
\begin{array}{c}
1 \\
-1
\end{array} \right],
\end{equation}

and

\begin{equation}
Z_{Ai}= \left[
\begin{array}{c}
1 \\
-1
\end{array} \right].
\end{equation}

\paragraph{Two horizons}
For highly damped QNFs the master equation is derived in previous section and also in
references~\cite{PRD, JHEP1, Granada} in a form equivalent to
\begin{equation}
\cosh\left({\pi\omega\over\kappa_+} +
{\pi\omega\over\kappa_-}\right) -
\cosh\left({\pi\omega\over\kappa_+}- {\pi\omega\over\kappa_-}\right)
+ 2  \;  \cos(\pi\alpha_+)\cos(\pi\alpha_-) = 0.
\end{equation}
This means

\begin{equation}
C_{A}= \left[
\begin{array}{c}
1 \\
1 \\
-1 \\
-1 \\
2\cos(\pi\alpha_+)\cos(\pi\alpha_-)
\end{array} \right],
\end{equation}

and

\begin{equation}
Z_{Ai}= \left[
\begin{array}{rr}
1 & 1 \\
-1 & -1 \\
1 & -1 \\
-1 & 1 \\
0 & 0
\end{array} \right].
\end{equation}
Note that $\alpha_\pm = \sqrt{{1\over4} - {V_{0\pm}\over\kappa_\pm^2}}$ and hence $\alpha_\pm = {1\over2}$ is a physically degenerate case
corresponding either to $V_{0\pm}=0$ (in which case the
corresponding $\kappa_\pm$ is physically and mathematically
meaningless), or $\kappa_\pm = \infty$, (in which case the QNF
master equation is vacuous).  Either of these situations is
unphysical so one must have  $\alpha_\pm \neq {1\over2}$.  This QNF
condition above is irreducible (non-factorizable) unless $\alpha_\pm
= m+{1\over2}$ with $m\in Z$. This occurs when
\begin{equation}
V_{0\pm} = - m(m+1) \; \kappa_\pm^2,
\end{equation}
and in this exceptional situation the QNF master equation factorizes
to
\begin{equation}
\sinh\left({\pi\omega\over\kappa_+} \right) \sinh\left(
{\pi\omega\over\kappa_-}\right) = 0,
\end{equation}
and hence takes the same shape as \eqref{E:SdSOne}.

\subsection{Rational ratios of surface gravities}\label{ratsec}

\def\m{{\tilde m}}

\subsubsection{From master equation to
polynomial}\label{S:polynomial}
%-------------------------------------------------------------------------------------------------------------------------------------------

First let us suppose that the ratios $R_{ij} = \kappa_i/\kappa_j$
are all rational numbers. This is not as significant a constraint as
one might initially think. In particular, since the rationals are
dense in the reals one can always with arbitrarily high accuracy
make an approximation to this effect.  Furthermore since floating
point numbers are essentially a subset of the rationals, all
numerical investigations implicitly make such an assumption, and all
numerical experiments should be interpreted with this point kept
firmly in mind.

Provided that the ratios $R_{ij} = \kappa_i/\kappa_j$ are all
rational numbers, it follows that there is a constant $\kappa_*$ and
a collection of relatively prime integers $m_i$ such that
\begin{equation}
\kappa_i = {\kappa_*\over m_i}.
\end{equation}
The QNF master equation then becomes
\begin{equation}
\sum_{A=1}^N   C_A \; \exp\left(  \sum_{i=1}^H  Z_{Ai}  \; m_i \;
{\pi \omega\over \kappa_*} \right) = 0, \label{E:master2-1}
\end{equation}
Now define $z=\exp(\pi\omega/\kappa_*)$,  and define a new set of
integers $\m_A  =  \sum_{i=1}^H  Z_{Ai}  \; m_i $. (There is no
guarantee or requirement that the  $\m_A $ be relatively prime, and
some of the special cases we had to consider in the previous section and in
reference~\cite{JHEP1} ultimately depend on this observation.)
Then
\begin{equation}
\sum_{A=1}^N   C_A \; z^{\m_A} = 0. \label{E:master2-2}
\end{equation}
This is (at present) a Laurent polynomial, as some exponents may be
(and typically are) negative. Multiplying through by $z^{-\tilde
m_\mathrm{min}}$ converts this to a regular polynomial with a
nonzero constant $z^0$ term and with degree
\begin{equation}
D = \tilde m_\mathrm{max} - \tilde m_\mathrm{min}.
\end{equation}
If we write $\bar m_A = \m_A - \tilde m_\mathrm{min}$ then the
relevant regular polynomial is
\begin{equation}
\sum_{A=1}^N   C_A \; z^{\bar m_A} = 0. \label{E:master2-3}
\end{equation}
Note that the polynomial is typically ``sparse'' --- the number of
terms $N$ is small (typically $N\leq 2H+1$) but the degree $D$ can
easily be arbitrarily large. There are at most $D$ distinct roots
for the polynomial $z_a$, and the \emph{general} solution of the QNF
condition is
\begin{equation}
\omega_{a,n} = {\kappa_*\ln(z_a)\over\pi} + {2in \kappa_*}; \qquad
a\in\{1,...,D\}; \qquad n\in\{0,1,2,3,\dots\}.
\end{equation}
If the $\bar m_A$ are not relatively prime, define a degeneracy
factor $g= \mathrm{hcf}\{\bar m_A\}$. Then the roots will fall into
$D/g$ classes where the $g$ degenerate members of each class differ
only by the various $g$-th roots of unity. In this situation we can
somewhat simplify the above QNF spectrum to yield
\begin{equation}\label{finformula}
\omega_{a,n} = {\kappa_*\ln(z_a)\over\pi} + {2in \kappa_*\over g};
\qquad a\in\{1,...,D/g\}; \qquad n\in\{0,1,2,3,\dots\}.
\end{equation}
We again emphasize that behaviour of this sort certainly does occur
in practice. There is no guarantee or requirement that the  $\bar
m_A $ be relatively prime, and some of the special cases we had to
consider in the previous section and in reference~\cite{JHEP1} ultimately depend on this
observation.

So let us summarize what we just proved in a theorem:
\newtheorem{rational2}[section]{Theorem}
\begin{rational2}\label{rational2}
Take quasi-normal frequencies to be given by the equation \eqref{E:master2-1} and, take the ratio of arbitrary pairs of surface gravities to be rational. Then the quasi-normal frequencies are given by the formula \eqref{finformula}, where $z_{a}$ are, (in general), $D/g$ solutions of the equation \eqref{E:master2-3}.
\end{rational2}

There is a (slightly) weaker condition that also leads to polynomial
master equations and the associated families of QNFs. Suppose that
we know that the ratios
\begin{equation}
R_{AB} = {\sum_{i=1}^H Z_{Ai}/\kappa_i \over  \sum_{i=1}^H
Z_{Bi}/\kappa_i } \;\; \in Q
\end{equation}
are always rational numbers. Then it follows that there is a set of
integers $\hat m_A$ such that
\begin{equation}
\sum_{i=1}^H {Z_{Ai}\over\kappa_i} = {\hat m_A\over \bar \kappa_*},
\end{equation}
where the $\hat m_A$ are all relatively prime. (Note $\bar \kappa_*$
does not have to equal $\kappa_*$). This is actually a (slightly)
weaker condition than $R_{ij} = \kappa_i/\kappa_j$ being rational,
since it is only if $Z_{Ai}$ is of rank $H$ that one can derive
$R_{ij} \in Q$ from $R_{AB}\in Q$. Assuming  $R_{AB}\in Q$ the QNF
master equation becomes
\begin{equation}
\sum_{A=1}^N   C_A \; \exp\left(  \hat m_A \; {\pi \omega\over \bar
\kappa_*} \right) = 0. \label{E:master2-1b}
\end{equation}
This can now be converted into a polynomial in exactly the same
manner as previously, leading to families of QNFs as above. Provided
both $R_{AB}$ and $R_{ij}$ are rational we can identify $\bar
\kappa_* = \kappa_*/g$.

%--------------------------------------------------------------------------------------------------------------------------------------------
\subsubsection{Factorizability}\label{S:factor}
%---------------------------------------------------------------------------------------------------------------------------------------------

Now it is mathematically conceivable that in certain circumstances
the master equation might factorize into a product over two disjoint
sets of horizons
\begin{equation}
\left[\sum_{A=1}^{N_1}   C_{1A} \; \exp\left(  \sum_{i=1}^{H_1}
{Z_{1Ai} \; \pi \omega\over \kappa_{1i}} \right)\right]\;
\left[\sum_{A=1}^{N_2}   C_{2A} \; \exp\left(  \sum_{i=1}^{H_2}
{Z_{2Ai} \; \pi \omega\over \kappa_{2i}} \right)\right] =  0.
\label{E:master3}
\end{equation}
Physically one might in fact expect this if the horizons indexed by
$i\in\{1,\dots,H_1\}$ are very remote (in physical distance) from
the other horizons indexed by $i\in\{1,\dots,H_2\}$.  If such a
factorization were to occur then the QNFs would fall into two
completely disjoint classes,  being independently and disjointly
determined by  these two classes of  horizon.

\subsection{Irrational ratios of surface gravities}\label{iratsec}

%---------------------------------------------------------------------------------------------------------------------------------------------
\def\gap{{\;\mathrm{gap}}}
\def\B{{\mathcal{B}}}

We now wish to work ``backwards'' to see if the existence of a
family of equi-spaced QNFs can lead to constraints on the ratios
$R_{ij} = \kappa_i/\kappa_j$. Such an analysis has already been
performed for the specific class of QNF master equations arising
from semi-analytic techniques, and we now intend to generalize the
argument to the generic class of QNF master equations presented in
equation (\ref{master}). Let us therefore assume the existence of
at least one ``family'' of QNFs of the form:
\begin{equation}
\omega_{n} = \omega_0 + in \gap; \qquad n\in\{0,1,2,3,\dots\}.
\label{E:family2}
\end{equation}
Then we are asserting
\begin{equation}
\sum_{A=1}^N   C_A \; \exp\left(  \sum_{i=1}^H  {Z_{Ai} \; \pi
(\omega_0  + in \gap) \over \kappa_i} \right) = 0;  \qquad
n\in\{0,1,2,3,\dots\}. \label{E:master4}
\end{equation}
That is
\begin{eqnarray}
\sum_{A=1}^N   \left\{ C_A \; \exp\left(  \sum_{i=1}^H  {Z_{Ai} \;
\pi \omega_0 \over \kappa_i} \right) \right\} \exp\left(  i n \pi
\gap \sum_{i=1}^H  {Z_{Ai} \over \kappa_i} \right)= 0;  \qquad\\
\bigskip\nonumber\\ \bigskip
n\in\{0,1,2,3,\dots\}.~~~~~~~~~~~~~~~~~~~~~~~~~~~~~~~~~~~~~~~~~~~~~~~~~~~~~~~~~~~~~~~~~~~~~~~~~\nonumber \label{E:master4-2}
\end{eqnarray}
We can rewrite this as
\begin{equation}
\sum_{A=1}^N  D_A \exp\left(  2\pi i n J_A \right)= 0;  \qquad
n\in\{0,1,2,3,\dots\}. \label{E:xxx}
\end{equation}
A priori, there is no particular reason to expect either the $D_A$
or the $J_A$ to be real.

%--------------------------
\subsubsection{Case 1}
%--------------------------
One specific solution to the above collection of constraints is
\begin{equation}
\sum_{A=1}^N  D_A  = 0;  \qquad  \exp(2\pi i J_A)  =  r.
\label{E:master4-sp1}
\end{equation}
Furthermore, \emph{as long as no proper subset of the $D_A$'s sums
to zero}, we assert that this is the only solution. To see this let
us define
\begin{equation}
\lambda_A =  \exp(2\pi i J_A);   \qquad M_{AB} = (\lambda_A)^{B-1};
\qquad A,B \in \{1,2,3,\dots,N\}.
\end{equation}
Then $M_{AB}$ is a square $N\times N$ Vandermonde matrix, and then
equation (\ref{E:xxx}) implies
\begin{equation}
\sum_{A=1}^N  D_A  M_{AB} = 0,
\end{equation}
whence $\det(M_{AB}) =0$. But from the known form of the Vandermonde
determinant we have
\begin{equation}
\det(M_{AB}) = \prod_{A>B} (\lambda_A - \lambda_B) = 0,
\end{equation}
implying that at least two of the $\lambda_A$ are equal. Without
loss of generality we can shuffle the $\lambda_A$'s so that the two
which are guaranteed to be equal are $\lambda_1$ and $\lambda_2$.
Then equation  (\ref{E:xxx}) implies
\begin{equation}
(D_1+D_2) \lambda_1^{B-1} +    \sum_{A=3}^N  D_A   (\lambda_A)^{B-1}
= 0;  \qquad B\in\{1,2,3,\dots, N-1\}.
\end{equation}
But by hypothesis $D_1+D_2\neq 0$, so this equation can be rewritten
in terms of a non-trivial reduced $(N-1)\times(N-1)$ Vandermonde
matrix, whose determinant must again be zero, so that two more of
the $\lambda_A$'s must be equal.  Proceeding in this way one reduces
the size of the Vandermonde matrix by unity at each step and finally
has
\begin{equation}
\lambda_A = r,
\end{equation}
as asserted. We then see
\begin{equation}
J_A = -i \,{\ln(r)\over2\pi} + m_A; \qquad  m_A \in Z.
\end{equation}
Expressed directly in terms of the surface gravities this yields
\begin{equation}
\sum_{i=1}^H Z_{Ai} {\gap\over \kappa_i} = -i \,{\ln(r)\over\pi} + 2
m_{A}; \qquad  m_{A} \in Z.
\end{equation}
By assumption, we have asserted the existence of at least one
solution to these constraint equations. (Otherwise the family we
used to start this discussion would not exist.) We have seen that we
can choose to present the master equation in such a manner that
$\sum_{A=1}^N Z_{Ai}=0$. But then
\begin{equation}
0  = -i \,{\ln(r)\over\pi} N + 2 \sum_{A=1}^N m_{A}; \qquad  m_{A}
\in Z.
\end{equation}
This implies that
\begin{equation}
 i \,{\ln(r)\over\pi} = 2 \; {\sum_{A=1}^N m_{A}\over N}  = q  \in Q.
\end{equation}
That is, there is a rational number $q$ such that
\begin{equation}
\sum_{i=1}^H Z_{Ai} {\gap\over \kappa_i} = q + 2 m_{A}; \qquad  q\in
Q; \qquad m_{A} \in Z.
\end{equation}
This is already enough to imply that the ratios $R_{AB}$ are
rational. If in addition $Z_{Ai}$ is of rank $H$ then, (either using
standard row-echelon reduction of the augmented matrix, or invoking
the Moore--Penrose pseudo-inverse and noting that the Moore-Penrose
pseudo-inverse of an integer valued matrix has rational elements),
we see that for each horizon the ratio $(\mathrm{gap})/\kappa_i$
must be a rational number, and consequently the ratios $R_{ij} =
\kappa_i/\kappa_j$ must all be rational numbers.

That is:
\newtheorem{irational2}[section]{Theorem}
\begin{irational2}\label{irational2}
Take quasi-normal frequencies to be given by the equation \eqref{E:master2-1}, and suppose that we have a family of QNFs as described by equation \eqref{E:family2}. If no proper subset of the $D_{A}$'s from \eqref{E:master2-1} sums to 0, and the matrix $Z_{Ai}$ from the same equation is of rank $H$, then the ratio of arbitrary pair of surface gravities must be rational.
\end{irational2}

%---------------------------------
\subsubsection{Case 2}
%---------------------------------
More generally, \emph{if some proper subset of the $D_A$'s sums to
zero}, subdivide the $N$ terms $A\in\{1,2,3,\dots, N\}$ into a cover
of disjoint irreducible proper subsets $\B_a$ such that
\begin{equation}
\sum_{A\in\B_a}^N  D_A  = 0. \label{E:master4-sp2}
\end{equation}
Then the solutions of equation (\ref{E:xxx}) are uniquely of the
from
\begin{equation}
\exp(2\pi i J_{A\in \B_a})  = \lambda_{A\in\B_a} = r_a.
\label{E:zzz}
\end{equation}
It is trivial to see that under the stated conditions this is a
solution of equation~(\ref{E:xxx}), the only technically difficult
step is to verify that these are the only solutions.  One again
proceeds by iteratively using the Vandermonde matrix $M_{AB} =
(\lambda_A)^{B-1}$ and considering its determinant. Instead of
showing that \emph{all} of the $\lambda_A$'s equal each other, we
now at various stages of the reduction process use the condition
$\sum_{A\in\B_a}  D_A  = 0$ to completely decouple the corresponding
$\lambda_{A\in\B_a} = r_a$ from the remaining
$\lambda_{A\not\in\B_a}$. Proceeding in this way we finally obtain
equation (\ref{E:zzz}) as claimed.

We then see
\begin{equation}
J_{A\in\B_a} = -i \,{\ln(r_a)\over2\pi} + m_{A\in\B_a}; \qquad
m_{A\in B_a} \in Z.
\end{equation}
Expressed directly in terms of the surface gravities this yields
\begin{equation}
\sum_{i=1}^H Z_{A\in\B_a} {\gap\over \kappa_i} = -i
\,{\ln(r_a)\over\pi} + 2 m_{A\in\B_a}; \qquad  m_{A\in B_a} \in Z.
\end{equation}
With the obvious notation of $a(A)$ denoting the index of  the
particular disjoint set $\B_a$ that $A$ belongs to, we can write
this as
\begin{equation}
\sum_{i=1}^H Z_{Ai} {\gap\over \kappa_i} = -i
\,{\ln\{r_{a(A)}\}\over\pi} + 2 m_{A}; \qquad  m_{A} \in Z.
\end{equation}
This result now is somewhat more subtle to analyze. Let $A$ and $B$
both belong to a particular set $\B_a$. Then
\begin{equation}
\sum_{i=1}^H \{Z_{Ai}- Z_{Bi}\} {\gap\over \kappa_i} =  + 2 \{ m_{A}
- m_B\} ; \qquad  m_{A}, m_B \in Z; \qquad A,B \in\B_a. \label{E:11}
\end{equation}
That is
\begin{equation}
\gap = {2\{m_{A} - m_B\} \over  \displaystyle \sum_{i=1}^H
{\{Z_{Ai}- Z_{Bi}\} \over\kappa_i}};  \qquad \gap \in R; \qquad A,B
\in\B_a;
\end{equation}
so we see that the gap is real. (Furthermore, the gap is seen to be
a sort of ``integer-weighted harmonic average'' of the $\kappa_i$.)
But reality then implies that $r_a = e^{i\phi_a}$ so that
\begin{equation}
\sum_{i=1}^H Z_{Ai} {\gap\over \kappa_i} = {\phi_{a(A)}\over\pi} + 2
m_{A}; \qquad  m_{A} \in Z.
\end{equation}
By using $\sum_A Z_{Ai}=0$ we see that
\begin{equation}
{\bar\phi\over\pi} = \sum_a {\phi_a |\B_a|\over N \pi} \in Q,
\end{equation}
so that
\begin{equation}
\sum_{i=1}^H Z_{Ai} {\gap\over \kappa_i} = {\phi_{a(A)}-\bar\phi
\over\pi} + 2 (m_{A}-\bar m); \qquad  m_{A} \in Z.
\end{equation}
Unfortunately in the general case there is little more than can be
said and one has to resort to special case-by-case analyses. One
last point we can make is that even though in this situation the
$R_{ij}=\kappa_i/\kappa_j$ are sometimes irrational we can make the
weaker statement that
\begin{equation}
{ \displaystyle \sum_{i=1}^H {\{Z_{Ai}- Z_{Bi}\} \over\kappa_i}
\over  \displaystyle \sum_{i=1}^H {\{Z_{Ci}- Z_{Di}\} \over\kappa_i}
} \;\;\;\in Q; \qquad A,B,C,D \in\B_a;
\end{equation}
That is, certain weighted averages of the surface gravities are
guaranteed to be rational. If we wish to analyze whether rational
ratios of  $R_{ij}=\kappa_i/\kappa_j$ are implied in each of the
particular cases of interest, we need to:
\begin{itemize}
\item[a)] Check if there exists some $\omega_{0}$ giving non-trivial subsets $\B_{a}$, leading to $\eqref{E:master4-sp2}$.
\item[b)] Analyze the sets of equations $\eqref{E:11}$ implied by such an $\omega_{0}$.
\end{itemize}
By proceeding in this way we are able to prove that periodicity of
the QNFs implies rational ratios for the surface gravities in the
following physically interesting cases:
\begin{itemize}
\item[a)] For $j=2m$ in equation $\eqref{E:SdS}$.
\item[b)] For equation $\eqref{E:RN}$ when  $j\neq 2m+1$ and $\cos(\pi j)\neq -\frac{1}{2}$.
\item[c)] For equation $\eqref{E:RNdS}$ when $j=2m$.
\end{itemize}

\subsection{Analysis of particular cases}

%------------------------------------------------------------------------

~~~~~Now explore the familiar cases, which can serve also as
particular examples described by this theorem. In the first part of
each particular example we explore whether periodicity implies rational
ratios of the surface gravities (within this
particular case). If the rational ratios are implied, we can use the results of section \ref{ratsec} to determine the families and the gap structure. Despite having the analysis from section \ref{ratsec}, we will
derive (for each case) the gap structure also by analysing the equations (\ref{E:11}). This serves as:

\begin{itemize}
\item
a consistency check,
\item
to
bring more understanding in how the ideas used in the section \ref{iratsec} work.
\end{itemize}
This derivation
is made in the second part of each case analysis (to be exact,
it is made for illustrative reasons for one set splitting only).

%------------------------------------------------------------------------
\subsubsection{Some common notation}

We will use the following notation:
\begin{equation}
f_{A}\equiv\sum_{i=1}^{H}\frac{Z_{Ai}}{\kappa_{i}},
\end{equation}
but if there is some index $\tilde A$, such that $f_{\tilde
A}=-f_{A}$, we rename it to $\tilde A=-A$. So always
$f_{-A}=-f_{A}$. We will also use $m_{A,B}\equiv m_{A}-m_{B}$ and
for the equation \eqref{E:11} of the form
\begin{equation}
\mathrm{gap}(\omega_{n})(f_{A}-f_{B})=2m_{A,B}
\end{equation}
we use the symbol $E_{A,B}$. Furthermore let us add one conceptual
explanation: We say that equations $E_{A,B}$ and $E_{\tilde A,\tilde
B}$ are linearly independent if there do not exist such integers
$m_{A,B}$, $m_{\tilde A,\tilde B}$, that the equations $E_{A,B}$ and
$E_{\tilde A,\tilde B}$ will be linearly dependent in the usual
sense of the word. If $E_{A,B}$ and $E_{\tilde A,\tilde B}$ are not
linearly independent, we say they are linearly dependent.

\subsubsection{2 Horizons case, S-dS black hole by monodromy calculations:\\ (1) spin 1 perturbations, (2) spin
0 and 2 perturbations}
%------------------------------------------------------------------------

Take the first case, which is formula derived by the use of monodromy techniques for the S-dS black hole:
\begin{equation}
\cosh{\left(\frac{\pi\omega}{\kappa_{-}}-\frac{\pi\omega}{\kappa_{+}}\right)}+[1+2\cos(\pi
j)]
\cosh{\left(\frac{\pi\omega}{\kappa_{-}}+\frac{\pi\omega}{\kappa_{+}}\right)}=0.
\end{equation}
Here ~$j=0$~ for spin ~0,~2 ~and ~$j=1$~ for spin ~1.

\bigskip

This becomes for spin ~1~ perturbation:

\begin{equation}
e^{(\frac{\pi\omega}{\kappa_{-}} -
\frac{\pi\omega}{\kappa_{+}})}+e^{-(\frac{\pi\omega}{\kappa_{-}} -
\frac{\pi\omega}{\kappa_{+}})}-e^{(\frac{\pi\omega}{\kappa_{-}}+\frac{\pi\omega}{\kappa_{+}})}-e^{-(\frac{\pi\omega}{\kappa_{-}}
+ \frac{\pi\omega}{\kappa_{+}})}=0,\label{s1p}
\end{equation}
and for spin ~0 and~2 perturbations:

\begin{equation}
e^{(\frac{\pi\omega}{\kappa_{-}} -
\frac{\pi\omega}{\kappa_{+}})}+e^{-(\frac{\pi\omega}{\kappa_{-}} -
\frac{\pi\omega}{\kappa_{+}})}+3e^{(\frac{\pi\omega}{\kappa_{-}}+\frac{\pi\omega}{\kappa_{+}})}+3e^{-(\frac{\pi\omega}{\kappa_{-}}
+ \frac{\pi\omega}{\kappa_{+}})}=0.\label{s2p}
\end{equation}
In the case of spin 1 perturbation we already know that ratios of
surface gravities might be completely arbitrary and we still get periodic
solutions.

\paragraph{The question of surface gravities rational ratios}

The matrix $Z_{Ai}$ has rank 2 ($=H$) hence if some subset of $D_{A}$ does not sum to 0, the rational ratio of surface gravities is implied.

Let us have a look what happens here: The functions $f_{i}$ are
given as

\[f_{1}=\frac{1}{\kappa_{-}}-\frac{1}{\kappa_{+}},\]

\[f_{2}=\frac{1}{\kappa_{-}}+\frac{1}{\kappa_{+}},\]
and we have also ~$f_{-1}$ ~and $f_{-2}$. If the coefficients
$D_{A}$ split into two sets each having two elements (which is the
only way how they can be non-trivially split in this case), there
are two equations and three possibilities how one can split them:

\begin{itemize}
\item
$\{ D_{1}, D_{-1}\}$ and $\{ D_{2}, D_{-2}\}$,

\item
$\{ D_{1}, D_{2}\}$ and $\{ D_{-1}, D_{-2}\}$,

\item
$\{ D_{1}, D_{-2}\}$ and $\{ D_{2}, D_{-1}\}$.

\end{itemize}
In terms of equations this leads to the following combinations:

\begin{itemize}

\item
$E_{1,-1},~E_{2,-2}$  ~~~gives~ 2 linearly independent equations,

\item
$E_{1,2},~E_{-1,-2}$  ~~~gives~ only 1 linearly independent equation,

\item
$E_{1,-2},~E_{2,-1}$  ~~~gives~ only 1 linearly independent equation.

\end{itemize}

Because we have two surface gravities and two linearly independent equations the first combination of equations leads to the
condition that the ratio of surface gravities must be rational. That means
one needs to explore only the second and the third combination of equations. Here we have to check the step $a)$ from the end of the previous section. If there exists ~$\omega_{0}$~ giving us $\B_{a}$ sets leading to the second, or the third combination of equations, it
must fulfill the following conditions:

\begin{itemize}

\item

In the case of the second combination ($E_{1,2},~E_{-1,-2}$) it must fulfil the equations

\begin{equation}
\frac{\pi\omega_{0}}{\kappa_{+}}=i2\pi m_{1,2}+\ln\left|\frac{C_{1}}{C_{2}}\right|,
\end{equation}
\begin{equation}
-\frac{\pi\omega_{0}}{\kappa_{+}}=i2\pi m_{-1,-2}+\ln\left|\frac{C_{-1}}{C_{-2}}\right|.
\end{equation}

\item
In the case of the third combination ($E_{1,-2},~E_{-1,2}$) it must fulfil the equations

\begin{equation}
\frac{\pi\omega_{0}}{\kappa_{-}}=i2\pi m_{1,-2}+\ln\left|\frac{C_{1}}{C_{-2}}\right|,
\end{equation}
\begin{equation}
-\frac{\pi\omega_{0}}{\kappa_{-}}=i2\pi m_{-1,2}+\ln\left|\frac{C_{-1}}{C_{2}}\right|.
\end{equation}

\end{itemize}
But for the second combination this means:
\begin{equation}
\ln\left|\frac{C_{1}}{C_{2}}\right|=-\ln\left|\frac{C_{-1}}{C_{-2}}\right|,\label{combination2}
\end{equation}
and for the third combination this means:
\begin{equation}
\ln\left|\frac{C_{1}}{C_{-2}}\right|=-\ln\left|\frac{C_{-1}}{C_{2}}\right|.\label{combination3}
\end{equation}
From (\ref{s1p}) we see that for spin 1 perturbation it holds ~
\begin{equation}
|C_{1}|=|C_{2}|=|C_{-1}|=|C_{-2}|=1.\label{cs1p}
\end{equation}
As a result of (\ref{cs1p}) the conditions (\ref{combination2}) and (\ref{combination3}) are trivially fulfilled, since all the
logarithms are ~0.~ But for spin 0 and 2 ~$|C_{1}|=|C_{-1}|=1$~ and
~$|C_{2}|=|C_{-2}|=3$, which means that in each case we get
~$\ln(\frac{1}{3})$.~ This means there is no way how to fulfil (\ref{combination2}), or (\ref{combination3}), hence split the
coefficients in such way that we do \emph{not} get the surface gravities
rational ratio condition. This means that \emph{for spin 0 and spin 2 periodicity implies the rational ratio of surface
gravities.}

For the spin 1 perturbation we already showed we can
find explicit $\omega_{0}$-s leading to two different families of QNMs, each family related to different horizon surface gravity ($in\kappa_\pm$). This means that in the case of spin 1 perturbation rational ratios are \emph{not} implied.
Note also that each of the two families is just a result of different splitting of coefficients $D_A$.
%---------------

\paragraph{``Gap'' derivation by using our approach (spin 1 perturbation)}

~~~~As previously noted, the general solutions split into two
families, one related to one surface gravity ($\omega=i n
\kappa_{-}$), the other to another ($\omega=i n \kappa_{+}$). Now
take one solution from the set $\{ in\kappa_{-}\}$ (for example
$\omega=i\kappa_-$) and substitute it in \eqref{s1p} to get the
coefficients $D_{A}$. We see that the two set splitting of $D_{A}$
is:
\begin{eqnarray}
\left\{ D_{1}=-e^{-
\frac{i\pi\kappa_{-}}{\kappa_{+}}}, ~D_{-2}=e^{-
\frac{i\pi\kappa_{-}}{\kappa_{+}}}\right\}, ~~~\left\{ D_{-1}=-e^{\frac{i\pi\kappa_{-}}{\kappa_{+}}}, ~D_{2}=e^{ \frac{i\pi\kappa_{-}}{\kappa_{+}}}\right\}.
\end{eqnarray}
 (We can easily observe that the sum of those couples of coefficients
within each set is 0).
Now, since $E_{1,-2}$ and $E_{-1,2}$ are linearly dependent, the second equation can be taken only as a definition of $m_{-1,2}$. The only independent equation is then:

\[\mathrm{gap}(\omega_{n})=m_{1,-2}\kappa_{-}.\]
To obtain from this equation the basic gap structure, we have to choose the integer $m_{1,-2}$ to be such, that we obtain the ``narrowest'' gap. This is obtained by $m_{1,-2}=1$, confirming the known result.

If we start with the second set of solutions related to the cosmological horizon surface gravity, we can proceed completely analogically and verify also the second result. The gap will be in such case given by the cosmological horizon surface gravity.

\paragraph{``Gap'' derivation by using our approach (spin 0 and 2 perturbation)}

Here we proved we have 2 linearly independent equations (the rest of the equations are only definitions of $m_{A,B}$ terms):

\begin{itemize}

\item
$\mathrm{gap}(\omega_{n})=-m_{1,2}\kappa_{+}$,

\item
$\mathrm{gap}(\omega_{n})=-m_{1,-2}\kappa_{-}$.

\end{itemize}
Now, they immediately lead to the condition
$\frac{\kappa_{-}}{\kappa_{+}}=\frac{m_{1,2}}{m_{1,-2}}$, hence surface gravities ratio being
rational.

To explore the gap structure one has to explore all the other independent conditions as well (here there are three independent conditions obtained by equating 4 terms). This is because definitions of $m_{A,B}$ might put some additional constrains on the gap function. (The constraints come from the fact that the $m_{A,B}$ have to always be integers. This will be seen in the analysis of the following case.) But in this case the third independent condition can be given as
\begin{equation}
\mathrm{gap}(\omega_{n})\left(\frac{1}{\kappa_{-}}-\frac{1}{\kappa_{+}}\right)=m_{1,-1}
\end{equation}
and is consistent with taking the ``narrowest'' gap as
\[
\mathrm{gap}(\omega_{n})=\kappa_{*}=\kappa_{-} m_{1,-2}=\kappa_{-}p_-=\kappa_{+}
m_{1,2}=\kappa_{+}p_+=\frac{p_-}{b_-}=\frac{p_+}{b_+}=\frac{1}{b_{*}}.
\]
(Here $b_{*}$ is the highest common divisor of $b_-,b_+$ with
respect to integers.) This means: $-m_{1,2}=p_+$ and $-m_{1,-2}=p_-$.

\subsubsection{2 Horizons, S-dS black hole by analytically solvable potentials}
%-----------------------------------------------------------------------

The equation we obtained is the following:
\begin{equation}\label{analytic}
\cosh{\left(\frac{\pi\omega}{\kappa_{-}} -
\frac{\pi\omega}{\kappa_{+}}\right)}-\cosh{\left(\frac{\pi\omega}{\kappa_{-}}+\frac{\pi\omega}{\kappa_{+}}\right)}=2\cos(\pi\alpha_+)\cos(\pi\alpha_-).
\end{equation}

\paragraph{The question of surface gravities rational ratios}

~~~~In the previous section we already proved that in this case the rational ratios are implied
by the periodicity. We can prove the same by analyzing all the
possibilities how to split coefficients into different sets
~$\{D_{A}\}$.~ It is an alternative proof to the one in the previous
section. It is definitely a more complicated proof, but its advantage
is that it is part of more general approach, which generates proofs
for all the cases where the implication holds.

Let us show the ``ugly'' way of proving it by the new general
method:
The matrix $Z_{Ai}$ is of rank 2 ($=H$) and hence if the $D_{A}$ coefficients do not split into nontrivial subsets, the rational ratios of surface gravities is proven.
Now let us explore what ways of splitting the coefficients one can obtain and what they generally mean. We can rewrite the equation \eqref{analytic} to get the following functions:\\

$f_{1}=\frac{3}{\kappa_{-}}+\frac{1}{\kappa_{+}}$,
~~~~~$f_{2}=\frac{1}{\kappa_{-}}+\frac{3}{\kappa_{+}}$,
~~~~~$f_{3}=3(\frac{1}{\kappa_{-}}+\frac{1}{\kappa_{+}})$,\\

$f_{4}=\frac{1}{\kappa_{-}}+\frac{1}{\kappa_{+}}$,
~~~~~$f_{5}=2\left(\frac{1}{\kappa_{-}}-\frac{1}{\kappa_{+}}\right)$.

\bigskip

Let us explore what happens when we split the $D_{A}$ coefficients
into two sets, one having three and the other two elements (these
are the only possible nontrivial ways of splitting the
coefficients). This gives for each splitting 3 equations. One can
analyze all the combinations of equations in the following way: Pick
$f_{\tilde A}$ belonging to a coefficient in the three element set
and take all combinations of such equations $E_{A,B}$ (related to
all the possible ways of splitting the coefficients in which
$D_{\tilde A}$ is in the set of three elements), that they do
\emph{not} involve the given $f_{\tilde A}$. If these combinations
give 2 independent equations we are finished, and do not need to
explore the 3-rd equation involving ~$f_{\tilde A}$. If there is
only one independent equation from the given couple, we need to
explore if there is a way how to add an equation involving
~$f_{\tilde A}$~ ($E_{\tilde A,B}$), by keeping the number of
independent equations to be still ~1.  The $I/D$ letters in the
table mean linearly independent/dependent:

\bigskip

\begin{tabular}{|c|c|c|c|c|c|c|}
\hline $f_{\tilde A}$ & combination 1 & I/D & combination 2 & I/D & combination 3 & I/D \\
\hline $f_{1}$ & $E_{2,3},~E_{4,5}$ & I & $E_{2,4},~E_{3,5}$ & I & $E_{2,5},~E_{3,4}$ & I \\
\hline $f_{2}$ & $E_{1,3},~E_{4,5}$ & I & $E_{1,4},~E_{3,5}$ & I & $E_{1,5},~E_{3,4}$ & I \\
\hline $f_{3}$ & $E_{1,2},~E_{4,5}$ & I & $E_{1,4},~E_{2,5}$ & I & $E_{1,5},~E_{2,4}$ & I \\
\hline $f_{4}$ & $E_{1,2},~E_{3,5}$ & I & $E_{1,3},~E_{2,5}$ & I & $E_{1,5},~E_{2,3}$ & I \\
\hline $f_{5}$ & $E_{1,2},~E_{3,4}$ & I & $E_{1,3},~E_{2,4}$ & D & $E_{1,4},~E_{2,3}$ & D \\
\hline
\end{tabular}

\bigskip

We see that only the last two cases give linearly dependent
equations, but it is very easy to check that in both of these cases it holds, that if we add arbitrary third equation of the type ~$E_{5,B}$~ (hence
involving $f_{5}$ ), the number of linearly
independent equations grows to ~2. This is a result of the fact that in
~$E_{1,3},~E_{2,4}$~ there is only ~$\kappa_{+}$~ present and in
~$E_{1,4},~E_{2,3}$~ there is only $\kappa_{-}$ present.
This proves that here \emph{the rational ratios are implied by the
periodicity.}
%------------------------------------------------------------------------

\paragraph{The ``gap'' derivation by using our approach}

If there are no non-trivial ways of splitting the coefficients, then
we get from our result \eqref{E:11} 4 independent conditions
(although, in the sense defined, only 2 linearly independent
equations) and they are:
\begin{itemize}

\item [1)]
~~~$\mathrm{gap}(\omega_{n})\left(\frac{1}{\kappa_{-}} - \frac{1}{\kappa_{+}}\right)=m_{1,2}$~,

\item [2)]
~~~$\mathrm{gap}(\omega_{n})=-m_{1,3}\kappa_{+}$~,

\item [3)]
~~~$\mathrm{gap}(\omega_{n})=m_{1,4}\kappa_{-}$~,

\item [4)]
~~~$\mathrm{gap}(\omega_{n})\left(\frac{1}{\kappa_{-}} - \frac{1}{\kappa_{+}}\right)=2m_{1,5}$~.

\end{itemize}

Now any two from the first three equations\footnote{The same holds
for any two of the last three equations.} give surface gravity
rational ratio condition. Choose for this purpose, for example,
equations 2) and 3). (Notice here that in order to get the surface
gravities ratio positive, $m_{1,3},m_{1,4}$ must have opposite
signs.) Equation 1) is then uninteresting since it is only defining
$m_{1,2}$, without constraining the possible gap function. Now the
whole problem is encoded in the equation 4) and particularly in the
fact that there is 2 on the right side of the equation. In fact it
leads to ~$m_{1,4}+m_{1,3}=2m_{1,5}$,~ and that means one can fulfil
the equation 4) only if both $m_{1,4}, m_{1,3}$ are odd or both are
even. This means that the definition of $m_{1,5}$ constrains the gap
function. Now realize that only the ratio of $m_{1,3}, m_{1,4}$ is
determined by the ratio of the surface gravities $p_-/p_+$, so there
is still freedom to multiply both, $p_+$ and $p_-$, by the same
arbitrary scaling integer. We have to determine the integer to
fulfill the equation 4) and simultaneously to give the ``narrowest''
gap. But then, if the ratio
\[
\frac{\kappa_{-}}{\kappa_{+}}=\frac{b_+}{b_-}=\frac{p_+}{p_-}
\]
is in its most reduced form given by one odd and one even number ($p_+\cdot p_-$ is even), we
have to take both $m_{1,4}, m_{1,3}$ even by:
$m_{1,4}=2 p_-$ and $-m_{1,3}=2 p_+$. Then the gap will be
$\mathrm{gap}(\omega_{n})=2\kappa_{*}=\frac{2}{b_{*}}$ ~ $\left(\kappa_{*}=\kappa_{-} p_-=\kappa_{+}
p_+=\frac{1}{b_{*}}\right)$.
In the case both $p_-,p_+$ are odd numbers (hence $p_-\cdot p_+$ is odd), the 4)-th equation
is automatically fulfilled and to get the ``narrowest'' gap one
chooses the scaling integer to be $\pm 1$, hence $m_{1,4}=p_-$ and
$m_{1,3}=-p_+$. Then the gap is obtained as ~$\mathrm{gap}(\omega_{n})=\kappa_*=\frac{1}{b_{*}}$.

\subsubsection{2 Horizons, R-N black hole}
%------------------------------------------------------------------------

Take as another example the equation \eqref{E:RN} with $j=0$. We can rewrite it as:
\begin{equation}
e^{\pi\omega(\frac{1}{\kappa_{+}}+\frac{1}{\kappa_{-}})}+3e^{-\pi\omega(\frac{1}{\kappa_{+}}+\frac{1}{\kappa_{-}})}+3e^{\pi\omega(\frac{1}{\kappa_{-}}-\frac{1}{\kappa_{+}})}=0.
\end{equation}
There cannot be any non-trivial subset of coefficients $D_{A}$ summing up to 0. (This is because there are
three non-zero coefficients, so it cannot happen that two
of them will give ~0~ by the summation). That means we have always two independent conditions:

\begin{itemize}

\item[1)]
 $\mathrm{gap}(\omega_{n})=m_{1,3}\kappa_{+}$~,

\item[2)]
 $\mathrm{gap}(\omega_{n})=-m_{2,3}\kappa_{-}$~.

\end{itemize}
But $1)$ and $2)$ are also linearly independent equations, so this immediately implies the same results, as in the previous case.
These are:
\begin{itemize}
\item
the rational ratio of surface gravities
given as $-m_{1,3}/m_{2,3}$ (note again that to get the ratio
positive, $m_{1,3}$ $m_{2,3}$ have to be chosen with the opposite
signs),
\item
the ``narrowest'' gap given by $m_{1,3}=p_+$ and $m_{2,3}=-p_-$, hence
\[
\mathrm{gap}(\omega_{n})=\kappa_{*}=\kappa_{-}p_-=\kappa_{+}p_+=\frac{1}{b_{*}}~.
\]
\end{itemize}
So in this case \emph{the rational ratios are implied by the periodicity.}

%------------------------------------------------------------------------
\subsubsection{3 Horizons, Monodromy results: R-N-dS black hole}
%------------------------------------------------------------------------

Take the equation \eqref{E:RNdS} with
any $j$, for which the coefficients are non-zero (trivial example is
standard ~$j=0$~).

\paragraph{The question of surface gravities rational ratios}

Now this is a case with six functions $f_{A}$:

\medskip

$f_{1}=\frac{1}{\kappa_{+}}-\frac{1}{\kappa_{C}}$,
~~~$f_{2}=\frac{1}{\kappa_{+}}+\frac{1}{\kappa_{C}}$,
~~~$f_{3}=\frac{2}{\kappa_{-}}+\frac{1}{\kappa_{+}}-\frac{1}{\kappa_{C}}$~~
and ~~$f_{-1}, ~f_{-2}, ~f_{-3}$.\medskip\\
Here we have 3 surface gravities and the rank of $Z_{Ai}$ matrix is 3, so if there is no nontrivial subset of $D_{A}$ coefficients summing to 0, the surface gravities rational ratios are implied.

Let us explore the nontrivial ways of splitting the coefficients.
The splitting of coefficients giving the lowest number of
independent conditions is the splitting into three sets, each having
2 elements. It gives 3 equations (and we need three linearly
dependent). There are three basic ways how to split the
coefficients:

\begin{itemize}
 \item
The first splitting is giving $E_{1,-1}, ~E_{2,-2}, ~E_{3,-3}$, but these are necessarily 3 linearly independent equations as ~$f_{1}, ~f_{2}, ~f_{3}$ are linearly independent set of functions.

\item
The other type of splitting is $E_{A_{1},A_{2}}, ~E_{-A_{2},A_{3}},
~E_{-A_{3},-A_{1}}$, ~$A_{i}=\pm 1,\pm 2,\pm 3$, ~$|A_{i}|\neq
|A_{j}|$ ~if~ $i\neq j$. But all these ways of splitting the
coefficients give 3 linearly independent equations as well.

\item
The third basic way how to split coefficients into three two element
sets gives ~$E_{A_{1},-A_{1}}, ~E_{A_{2},A_{3}},
~E_{-A_{2},-A_{3}}$, ~$A_{i}$~ being defined as before.  This
splitting gives only 2 linearly independent equations.
\end{itemize}
That means in the last case we have to take the second step.
Fortunately here we can use for different $C_{A}$ coefficients
multiplying different $\cosh(\cdot)$ terms the same argument as in
the case of S-dS and spin ~0 and ~2 perturbations. This argument
excludes the possibility of such ways of splitting the coefficients.

Any splitting into two sets, one having 2 and the other 4 elements is
just more constrained version of some ``three set each having two
elements'' splitting. This means the arguments provided in the
previous case transfer automatically to this case.

The last, deeply nontrivial splitting is when $D_{A}$ coefficients
split into two sets, each having three elements. In such case we
have 4 independent conditions. There are two possible basic ways of
splitting the equations:

\begin{itemize}

\item
$E_{A_{1},-A_{1}}, ~E_{A_{1},A_{2}}, ~E_{-A_{2},A_{3}}, ~E_{A_{3},-A_{3}}$

\item
$E_{A_{1},A_{2}}, ~E_{A_{2},A_{3}}, ~E_{-A_{1},-A_{2}}, ~E_{-A_{2},-A_{3}}$

\end{itemize}
Now the first splitting leads to three independent equations, but
the second splitting only to two independent equations. So in the
second splitting case one has to proceed to the next step: One can
observe that if there exists an ~$\omega_{0}$~  giving the second
splitting, then the following two equations have to be fulfilled.
Take:
\begin{itemize}
\item
$z\equiv e^{\pi\omega_{0}}$,
\item
$\alpha\equiv\frac{1}{\kappa_{+}}-\frac{1}{\kappa_{C}}$,
\item
$\beta\equiv\pm\left(\frac{1}{\kappa_{+}}+\frac{1}{\kappa_{C}}\right)$,
\item
$\delta\equiv\pm\left(\frac{2}{\kappa_{-}}+\frac{1}{\kappa_{+}}+\frac{1}{\kappa_{C}}\right)$
~and
\item
$K\equiv 1+\cos(\pi j)$.
\end{itemize}
Then the following holds:
\begin{equation}
z^{\alpha}+Kz^{\beta}+2Kz^{\delta}=0,\label{first}
\end{equation}
\begin{equation}
z^{-\alpha}+Kz^{-\beta}+2Kz^{-\delta}=0.\label{second}
\end{equation}
But then by substituting to (\ref{second})
~$z^{-\delta}=-\frac{K}{z^{\alpha}+Kz^{\beta}}$~ we will obtain

\[\frac{1-3K^{2}}{K}+z^{\alpha-\beta}+z^{-(\alpha-\beta)}=0,\]
and then we get the result:
\[ z^{\alpha-\beta}=-\frac{1-3K^{2}}{2K}\pm\frac{1}{2}\sqrt{\left(\frac{1-3K^{2}}{K}\right)^{2}-4}.\]
For ~$K>\frac{1}{\sqrt{3}}$~ we have ~$z^{\alpha-\beta}>0$. ~We see
that if we take ~$j=\frac{n}{2}$, the only nontrivial cases (~$K\neq
0$~) are ~$K=1,~2$ and fulfil this condition. But the equation
(\ref{first}) is in fact:
\begin{equation}
z^{\alpha-\beta}+K+2Kz^{(\alpha-\beta)\frac{\delta-\beta}{\alpha-\beta}}=0.\label{first2}
\end{equation}
Define by $u\equiv z^{\alpha-\beta}$. Then by solving (\ref{first2}) we obtain

\[\frac{\delta-\beta}{\alpha-\beta}=\log_{u}\left(-\frac{u+K}{2K}\right).\]
But since the logarithm argument is for ~$K>\frac{1}{\sqrt{3}}$~ negative (and $u$ is in such case positive), it means that
$\frac{\delta-\beta}{\alpha-\beta}$ must be nonreal. But considering
how $\alpha,\beta,\delta$ are defined, this cannot happen for
surface gravities being real numbers. Hence it is proven that there
is no $\omega_{0}$ giving the splitting considered.
This means we proved that for (\ref{E:RNdS}) \emph{the periodicity implies rational ratios of surface gravities.}

\paragraph{The ``gap'' analysis by using our approach}

 If we consider again only such $\omega_{0}$-s, for which there is no nontrivial splitting of $D_{A}$ coefficients, we obtain the following 5 independent conditions:
\begin{itemize}

\item
$\mathrm{gap}(\omega_{n})\frac{1}{\kappa_{-}}=m_{2,3}$,

\item
$\mathrm{gap}(\omega_{n})\frac{1}{\kappa_{+}}=m_{1,-2}$,

\item
$\mathrm{gap}(\omega_{n})\frac{1}{\kappa_{C}}=-m_{1,2}$,

\item $\mathrm{gap}(\omega_{n})(\frac{1}{\kappa_{+}}-\frac{1}{\kappa_{C}})=m_{1,-1}$,

\item
$\mathrm{gap}(\omega_{n})\frac{1}{\kappa_{-}}=m_{-2,-3}$.

\end{itemize}
Now note: the last two equations are uninteresting, since they
are only definitions of $m_{1,-1}$ and $m_{-2,-3}$ without putting any constraint on the gap function. On the other hand the first
three equations tell us that the ratio between arbitrary two surface gravities is a rational number.
Then to get the ``narrowest'' gap, consider the following:
\[
\frac{\kappa_{-}}{\kappa_{+}}=\frac{p_+}{p_{1-}}=\frac{b_+}{b_-},
\qquad\qquad
\frac{\kappa_{-}}{\kappa_{C}}=\frac{p_{C}}{p_{2-}}=\frac{b_{C}}{b_-}.
\]
Further
~$\kappa_{*}=\frac{1}{b_{*}}=\kappa_{-}p_{1-}=\kappa_{+}p_+$~ and
~$\kappa'_{*}=\frac{1}{b'_{*}}=\kappa_{-}p_{2-}=\kappa_{C}p_{C}$.
~Then the gap must be given by
~$\mathrm{gap}(\omega_{n})=\mathrm{lcm}\{\kappa_{*},\kappa'_{*}\}=l_1
\kappa_{*}=l_2 \kappa'_{*}$, ($l_1, l_2\in\mathbb{Z}$). ~That means
the $m$ integers must be taken as: ~$m_{2,3}=l_1\cdot p_{1-}$~,
~$m_{1,-2}=l_1\cdot p_+$ ~and ~$-m_{1,2}=l_2\cdot p_{C}$.

\subsection{Conclusions}

It seems to be very hard (if not impossible) to find a more
constrained form of the monodromy results than the formulation
\eqref{master}. Since in \eqref{master} it is not a general result (but seems to be
generic enough) that the rational ratios of surface gravities are
implied by periodicity (see S-dS spin 1 perturbations as
counter-example), we developed a general method how to prove the
implication for every case in which it holds. The necessary step was
to prove the theorem at the beginning of the second section. By
using our method we proved that the implication (periodicity
$\rightarrow$ rational ratios) holds in every monodromy case (from
the cases described before), apart of the one given counter-example.

\bigskip

%----------------------------------------------------------------------------------------------------------------------------------------
%\clearpage
%----------------------------------------------------------------------------------------------------------------------------------------

%-----------------------------------------------------------------------------------------------------------------------------------------

\chapter{Multiplication of tensorial distributions}

\section{Introduction}
%%%%%%%%%%%%%%%%%%%%%%%%%%%%%%%%%%%%%%%%%%%%%%%%%%%

This chapter is devoted to a topic from the field of mathematical physics which is closely related to the general theory of relativity. It offers possible significant conceptual extensions of general relativity at short distances/high energies, and gives another arena in which one can conceptually/physically modify the classical theory.
But the possible meaning of these ideas is much wider than just the general theory of relativity. It is related to the general questions of how one defines the theory of distributions for any \emph{geometrically} formulated physics describing \emph{interactions}.

\paragraph{The main reasons why we ``bother''}
Let us start by giving the basic reasons why one should work with the language of distributions rather than with the old language of functions:

\begin{itemize}
\item
First there are deep physical reasons for working with distributions
rather than with smooth tensor fields. We think distributions are
more than just a convenient tool for doing computations in those
cases in which one cannot use standard differential geometry. We
consider them to be mathematical objects which much more accurately
express what one actually measures in physics experiments, more so
than when we compare them to the old language of smooth functions.
The reason is that the question: ``What is the `amount' of physical
quantity contained in an open set?'' is in our view a much more
reasonable physical question, (reasonable from the point of view of
what we can ask the experimentalists to measure), than the question:
``What is the value of quantity at a given point?''. But ``point
values'' as ``recovered'' by delta distributions do seem to give a
precise and reasonable meaning to the last question. There is also a
strong intuition that the ``amount'' of a physical quantity in the
open set $\Omega_{1}\cup \Omega_{2}$, where $\Omega_{1},\Omega_{2}$
are disjoint is the sum of the ``amounts'' of that quantity in
$\Omega_{1}$ and $\Omega_{2}$. That means it is more appropriate to
speak about distributions rather than general smooth mappings from
functions to the real numbers (the mappings should also be linear).

\item
The second reason is that many physical applications suggest the need for a much
richer language than the language of smooth tensor fields. Actually when we look for physically interesting solutions it might be always a matter of importance to have a much larger class of objects available than the class of smooth tensor fields.

\item
The third reason (which is a bit more speculative) is the relation of
the language defining the multiplication of distributions to quantum field theories (but specifically to quantum gravity). Note that the problems requireing distributions, that means problems going
beyond the language of classical differential geometry, might be related to physics on small scales. At the same time understanding some operations with distributions, specifically their product has a large \emph{formal} impact on the foundations of quantum field theory, particularly on the
problem with interacting fields. (See, for example, \cite{QFT}.) As a result of this it can have significant consequences for quantum gravity as well.
\end{itemize}

\paragraph{The intuition behind our ideas}
Considerations about language intuitiveness lead us to an interesting conclusion: the language of distributions (being connected with our intuition)~
strongly suggests that the properties of classical
tensor fields should not depend on the sets having (in every chart) Lebesgue measure 0. If we follow the idea that these measure zero sets  do not have any impact on physics, we should naturally expect that we will be able to
generalize our language from a smooth manifold into a piecewise
smooth manifold (which will bring higher symmetry to our conceptual network).
The first traces of piecewise smooth coordinate transformations are
already seen, for example, in~\cite{Penrose}.

\paragraph{The current situation is ``strange''}
Now it is worth noting how strange the current situation of the theory of distributions is: we have a useful and
meaningful language of distributions, which can be geometrically
generalized, but this language works only for linear physics. But linear physics
is only a starting point (or at best a rough approximation) for
describing real physical interactions, and hence nonlinear physics. So one
naturally expects that the ``physical'' language of distributions will be a result of some
mathematical language defining their product. Moreover,
at the same time we want this language to contain the old language of differential geometry (as a special case), as in the case of linear
theories. It is quite obvious that the nonlinear generalization of the geometric distributional theory and the construction of generalized differential geometry are just two routes to the same mathematical theory. The natural feeling that such theory might exist is the main motivation for this work. The practical need of this language is obvious as well, as we
see in the numerous applications \cite{Kerr, GerTra, QFT, Schwarzschild, geodesic, Reis-Nor, waves, Vickers}. (Although this is not the main
motivation of our work.) But is it necessary that such more general mathematical language exists as a full well defined theory? No, not at all. The potentially successful uses of distributions that go behind the Schwarz original theory might be only ``ad hoc'' from the fundamental reasons. Take the classical physics. The success of such uses of distributions here might not be a result of some more general language than smooth tensor calculus being a classical limit of the more fundamental physics. There might be only hidden specific reasons in the more fundamental physics why such ``ad hoc'' calculations in some of the particular cases work. But it is certainly very interesting and important to explore and answer the question whether: (i) such a full mathematical langauge exists, and (ii) to which extent it is useful to the physics community. For the first question this work suggests a positive answer, to answer the second question much more work has to be done.

\paragraph{So what did we achieve?}
The main motivation for this work is the development of
a language of distributional tensors, strongly connected with physical intuition. (This also means it has to be based on the concept of a piecewise smooth manifold.) This language must contain all the basic tensorial operations in a generalized way, enabling us to understand
the results the community has already achieved, and also the problems attached to them. It is worth to stress that some of this motivation results from a shift in view regarding the foundations of classical physics (so it is given by ``deeper'' philosophical reasons), but it can have also an impact on practical physical questions. The scale of this impact has still to be explored. We claim that the goals defined by our motivation (as described at the beginning) are achieved in this work. Particularly, we have generalized all the basic concepts from smooth tensor field calculus (including the fundamental concept of the covariant derivative) in two basic directions:

\begin{itemize}
\item
The first generalization goes in the direction of the class of objects that, (in every chart on the piecewise smooth manifold), are indirectly related to sets of piecewise continuous functions. This class we call $D'^{m}_{n E A}(M)$. For a detailed understanding see section \ref{GTFsection}.
\item
The second generalization is a generalization to the class of objects naturally connected with a smooth manifold belonging to our piecewise smooth manifold (in the sense that the smooth atlas of the smooth manifold is a subatlas of our piecewise smooth atlas). This is a good analogy to the generalization known from the classical distribution theory. The class of such objects we call $D'^{m}_{n (\mathcal{\mathcal{\mathcal{\mathcal{S}}}} o)}(M)$. For detailed understanding see again the section \ref{GTFsection}.
\end{itemize}
We view our calculus as the most natural and straightforward
construction achieving these two particular goals. The fact that
such a natural construction seems to exist supports our faith in the
practical meaning of the mathematical language here developed.

The last goal of this chapter is to suggest much more ambitious, natural generalizations, which are unfortunately at present only in the form of conjectures. Later in the text we provide the reader with such conjectures.

\paragraph{The structure of this chapter}
The structure of this chapter is the following:
In the first part we want to present the
current state of the Colombeau algebra theory and its geometric
formulations. We want to indicate where its
weaknesses are.
This part is followed by several technical sections, in which we define our theory and prove the basic theorems. First we define the basic concepts underlying our theory. After that we define the concept of generalized tensor fields, their important subclasses and basic operations on the generalized tensor fields (like tensor product). This is followed by the definition of the basic concept of our theory, the concept of equivalence between two generalized tensor fields. The last technical part deals with the definition of the covariant derivative operator and formulation of the initial value problem in our theory.
All these technical parts are followed by explanatory sections, where we discuss our results and show how our theory relates to the practical results already achieved (as described in the first part of the chapter).

\section{Overview of the present state of the theory}

\subsection{The theory of Schwartz distributions}

Around the middle of the 20-th century Laurent Schwartz found a mathematically rigorous way
for extending the language of physics from the
language of smooth functions into the language of distributions. Physicists such as Heaviside and Dirac had already given good physics reasons for believing that such a mathematical structure might exist.

The classical formulations of the distribution theory were directly connected with
$\mathbb{R}^{n}$ and were non-geometric. Distributions in such a formulation are typically understood as continuous, linear maps
from compactly supported smooth functions (on $\mathbb{R}^{n}$) to
real numbers. (The class of such functions is typically denoted by $D(\mathbb{R}^{n})$. The dual to such space, which is what distributions are, is typically denoted by $D'(\mathbb{R}^{n})$.) Here the word ``continuous'' refers to the following topology on
the given space of compactly supported smooth functions:
\emph{The sequence of smooth compactly supported functions $f_{l}(x_{i})$ converges
for $l\to\infty$ to a smooth compactly supported function $f(x_{i})$ iff an arbitrary
degree derivative with respect to arbitrary variables $\frac{\partial^{n}f_{l}(x_{i})}{\partial x_{1}^{m_{1}}...\partial x_{k}^{m_{k}}}$ $\left(\sum_{i=1}^{k}m_{k}=n\right)$
converges uniformly on each compact $\mathbb{R}^{n}$ subset to $\frac{\partial^{n}f(x_{i})}{\partial x_{1}^{m_{1}}...\partial x_{k}^{m_{k}}}$.}

The alternative classical way of formulating the theory of distributions is to extend the space of test objects to be such that they still follow appropriate fall-off properties. The properties can be summarized as:

$f(x_{i})$ belongs to such space if it is smooth and
\begin{eqnarray}
(\forall\alpha,~\forall\beta)~~~~~~ \lim_{x_{1},\dots x_{k}\to\infty}\left(x_{1}^{n_{1}}\dots x_{k}^{n_{k}}\frac{\partial^{\alpha}f(x_{i})}{\partial x_{1}^{m_{1}}...\partial x_{k}^{m_{k}}}\right)=0,\\
\sum_{i=1}^{k}n_{i}=\beta, ~~~ \sum_{j=1}^{k}m_{j}=\alpha.~~~~~~~~~~~~~~~~~~~~~~~~~~~~\nonumber
\end{eqnarray}
This is a topological space with topology given by a set of semi-norms $P_{\alpha,\beta}$ defined\footnote{We admit that the notation $P_{\alpha,\beta}$ might be somewhat misleading, since the $\alpha,\beta$ values do not specify the semi-norm in a unique way.} as
\begin{eqnarray}
P_{\alpha,\beta}=\sup_{x_{i}}\left|x_{1}^{n_{1}}\dots x_{k}^{n_{k}}\frac{\partial^{\alpha}f(x_{i})}{\partial x_{1}^{m_{1}}...\partial x_{k}^{m_{k}}}\right|,\\
\sum_{i=1}^{k}n_{i}=\beta, ~~~ \sum_{j=1}^{k}m_{j}=\alpha.~~~~~~~~~~~~~~~~~~~\nonumber
\end{eqnarray}
Naturally, the space of continuous and linear maps on such a space is more restricted as in the first, more common formulation. Objects belonging to such duals are called ``tempered distributions''. The advantage of this more restricted version is that Fourier transform is a well defined mapping on the space of tempered distributions (see \cite{Neumann}).

If we refer to the more common, first formulation, the space of distributions accommodates the linear
space of smooth functions by the mapping:

\begin{equation}\label{map}
f(x_{i})\to\int_{\mathbb{R}^{n}}f(x_{i})\cdots d^{n}x.
\end{equation}

Here $f(x_{i})$ is a smooth function on $\mathbb{R}^{n}$ and
$\int_{\mathbb{R}^{n}}f(x)\cdots dx$ is a distribution defined as the
mapping:

\begin{equation}
\Psi(x_{i})\to\int_{\mathbb{R}^{n}}f(x_{i})\Psi(x_{i})~d^{n}x, ~~~~\Psi(x_{i})\in D(\mathbb{R}^{n}).
\end{equation}

Moreover, by use of this mapping one can map into the space of
distributions any Lebesgue integrable function (injectively up to a
function the absolute value of which has Lebesgue integral 0) . The
distributional objects defined by the images of the map \eqref{map}
are called regular distributions. The map \eqref{map} always
preserves the linear structure, hence the space of Lebesgue integrable functions is a linear
subspace of the space of distributions. But the space of distributions
is a larger space than the space of regular distributions. This can be
easily demonstrated by defining the delta distribution

\begin{equation}
\delta(\Psi)\equiv\Psi(0)
\end{equation}
and showing that such mapping
cannot be obtained by a regular distribution. Note particularly that delta
distribution had an immediate use in physics in the description of
point-like sources. (Its intuitive use in the work of Paul A.M. Dirac before the theory of Schwartz distributions was found, was one of the main physics reasons why people searched for such language extension.)

These considerations show that the space of distributions is \emph{significantly} lar\-ger than the space
of smooth functions. The space of distributions is classically taken to be a
topological space with the weak ($\sigma-$) topology. In this
topology the space of regular distributions given by smooth
functions is a dense set.

Moreover one can continuously extend the derivative operator from the space
of $C^{1}(\mathbb{R}^{n})$ functions to the space of distributions by using the definition:

\begin{equation}
T_{,~x^{i}}(\Psi)\equiv -T(\Psi_{,~x^{i}}),
\end{equation}
(where $T$ denotes a distribution). This means that $C^{\infty}(\mathbb{R}^{n})$ functions form not only a
linear subspace in the space of distributions, but also a differential
linear subspace. It looks like there stands ``almost'' nothing in the
way of fully and satisfactorily extending the language of
$C^{1}(\mathbb{R}^{n})$ functions to the language of Schwartz
distributions. Unfortunately there is still one remaining trivial
operation and this is the operation of multiplication. That means we
need to obtain some distributional algebra having as subalgebra the algebra of
Lebesgue integrable functions factorized by functions the absolute
value of which has Lebesgue interal 0.
Unfortunately, shortly after Laurent Schwartz formulated the theory
of distributions he proved the following ``no-go'' result \cite{Schwarz}:

The requirement of constructing an algebra that fulfills the following three conditions is inconsistent:

\begin{itemize}
\item[a)] the space of distributions is linearly embedded into the
algebra,
\item[b)] there exists a linear derivative operator, which fulfills
the Leibniz rule and reduces on the space of distributions to the
distributional derivative,
\item[c)] there exists a natural number $k$ such, that our algebra has as subalgebra the algebra of $C^{k}(\mathbb{R}^{n})$
functions.
\end{itemize}
This is called the Schwartz impossibility theorem. There is a nice
example showing where the problem is hidden: Take the Heaviside
distribution $H$. Suppose that $a)$ and $b)$ hold and we multiply
the functions/distributions in the usual way. Then since ~$H^{m}=H$ the following must hold:

\begin{equation}
H'=(H^{m})'=\delta=mH^{m-1}\delta~.
\end{equation}
But this actually implies that $\delta=0$, which is nonsense.

The closest one can get to fulfill the conditions $a)-c)$ from the
Schwarz impossibility theorem is the Colombeau algebra (as defined in the next section) where
conditions $a)- c)$ are fulfilled with the exception that in the $c)$
condition $k$ is taken to be infinite. This means that only the smooth
functions form a subalgebra of the Colombeau algebra. This is obviously not
satisfactory, since we know (and need to recover) rules for multiplying
multiply much larger classes of functions than only smooth functions. In the case of Colombeau algebras this problem is ``resolved'' by
the equivalence relation, as we will see in the following
section.

\subsection{The standard $\mathbb{R}^{n}$ theory of Colombeau
algebras}

\subsubsection{The \emph{special} Colombeau algebra and the embedding of distributions}
The so called special Colombeau algebra is on
$\mathbb{R}^{n}$ defined as:
\begin{equation}
\mathcal{G}(\mathbb{R}^{n})=
\mathcal{E}_{M}(\mathbb{R}^{n})/\mathcal{N}(\mathbb{R}^{n}).
\end{equation}
Here $\mathcal{E}_{M}(\mathbb{R}^{n})$ (moderate functions) is
defined as the algebra of functions:
\begin{equation}
\mathbb{R}^{n} \times(0,1]\to\mathbb{R}
\end{equation}
 that are smooth on $\mathbb{R}^{n}$ (this is usually called $\mathcal{E} (\mathbb{R}^{n})$), and for any compact subset $K$ of ~$\mathbb{R}^{n}$ (for which we will henceforth use the notation ~$K\subset\subset\mathbb{R}^{n}$) it  holds that:
\begin{equation}
\forall\alpha\in\mathbb{N}^{n}_{0},~\exists p\in\mathbb{N} ~\hbox{
such that}\footnote{In this definition, (and also later in the text), the symbol $\mathbb{N}^{n}_{0}$ denotes a sequence of $n$ members formed of natural numbers (with 0 included).}  ~\sup_{x\in K}|D^{\alpha}f_{\epsilon}(x_{i})|\leq
O(\epsilon^{-p}) ~\hbox{ as } ~\epsilon\to
 0.
\end{equation}
The $\mathcal{N}(\mathbb{R}^{n})$ (negligible functions) are
functions from $\mathcal{E}(\mathbb{R}^{n})$ where for any
$K\subset\subset\mathbb{R}^{n}$ it holds that:
\begin{equation}
\forall\alpha\in\mathbb{N}^{n}_{0},~\forall p\in\mathbb{N} ~\hbox{
we have } ~\sup_{x_{i}\in K}|D^{\alpha}f_{\epsilon}(x_{i})|\leq
O(\epsilon^{p}) ~\hbox{ as } ~\epsilon\to 0.
\end{equation}
 The first definition tells us that moderate functions are those whose partial derivatives of arbitrary degree (with respect to variables $x_{i}$) do not diverge faster then any arbitrary negative power of $\epsilon$, as $\epsilon\to 0$. Negligible functions are those moderate functions whose partial derivatives of arbitrary degree go to zero faster than any positive power of $\epsilon$, as $\epsilon\to 0$. This simple formulation can be straightforwardly generalized
into general manifolds just by substituting the concept of Lie derivative for the ``naive'' derivative
used before.

It
can be shown, by using convolution with an arbitrary smoothing
kernel (or mollifier), that we can embed a distribution into the Colombeau algebra. By a
smoothing kernel we mean, in the widest sense a compactly supported, smooth
function $\rho_{\epsilon}(x_{i})$, with $\epsilon\in (0,1]$, such that:
\begin{itemize}
\item
$\supp(\rho_{\epsilon})\to\{0\}$ ~~for ~~($\epsilon\to 0$),
\item
$\int_{\mathbb{R}^{n}}\rho_{\epsilon}(x_{i})~d^{n}x\to 1$~~~ for ~~($\epsilon\to 0$),
\item
$\forall\eta >0$ ~~$\exists C$,~~
$\forall\epsilon\in (0,\eta)$~~~ $\sup_{x_{i}}
|\rho_{\epsilon}(x_{i})|<C$.
\end{itemize}
This most generic embedding approach is mentioned for example in
\cite{Penrose}~(in some sense also in \cite{geodesic}).

More ``restricted'' embeddings to $\mathcal{G}(\mathbb{R}^{n})$ are
also commonly used. We can choose for instance a subclass of
mollifiers called ~$A^{0}(\mathbb{R}^{n})$, which are smooth
functions from ~$D(\mathbb{R}^{n})$ (smooth, compactly supported)
and (i.e. \cite{Gsponer}) such that

\begin{equation}
\forall\epsilon~~\hbox{holds}~~\int_{\mathbb{R}^{n}}\rho_{\epsilon}(x_{i})d^{n}x =
 1.
\end{equation}
Their dependence on $\epsilon$ is given\footnote{Later in the text will the notation $\rho_{\epsilon}(x_{i}), \psi_{\epsilon}(x_{i})$ (etc.) automatically mean dependence on the variable $\epsilon$ as in \eqref{epsilon}.} as
\begin{equation}\label{epsilon}
\rho_{\epsilon}(x_{i})\equiv \frac{1}{\epsilon^{n}}\rho\left(\frac{x_{i}}{\epsilon}\right)~.
\end{equation}

Sometimes the class is even more restricted. To
obtain such a formulation, we shall define classes ~$A^{m}(\mathbb{R}^{n})$ as classes
of smooth, compactly supported functions, such that
\begin{equation}
 \int_{\mathbb{R}^{n}}
x^{i}_{1}\cdots x^{j}_{l}~\phi(x_{i})d^{n}x=\delta_{0k}~~ \hbox{ for }~~
i+\cdots+j=k\leq m.
\end{equation}
Clearly $A^{m+1}(\mathbb{R}^{n})\subset
A^{m}(\mathbb{R}^{n})$. Then the most restricted class of mollifiers
is taken to be the class $A^{\infty}(\mathbb{R}^{n})$. This approach
is taken in the references
\cite{Schwarzschild, Oberguggenberger, Reis-Nor, waves, Vickers}.

Even in the case of the more restricted class of mollifiers the embeddings are generally non-canonical \cite{Gsponer, Vickers}. The exception are smooth distributions, where the difference between two embeddings related to two different mollifiers is always a negligible function.

\subsubsection{The \emph{full} Colombeau algebra and the embedding of distributions}

What is usually considered
to be the canonical formulation of Colombeau algebras in
$\mathbb{R}^{n}$ is the following: The theory is formulated in terms
of functions
\begin{equation}
\mathbb{R}^{n} \times A^{0}(\mathbb{R}^{n})\to\mathbb{R} ~~~\hbox{ (call
them $F$) }.
\end{equation}
The Colombeau algebra is defined in such way that it is a factor
algebra of moderate functions over negligible functions, where:

\begin{itemize}
\item
Moderate functions are functions from $F$ that satisfy:
\begin{eqnarray}
\forall m\in \mathbb{N}^{n}_{0}, ~\forall K\subset\subset\mathbb{R}^{n} ~~\exists N\in\mathbb{N} ~\hbox{ such that
if } ~\phi\in A^{N}(\mathbb{R}^{n}), \hbox{ there are } \alpha, \rho
>0, \nonumber\\  \hbox{ such that }   \sup_{x_{i}\in K}~\left
|D^{m}F(\phi_{\epsilon},x_{i})\right|\leq
\alpha\epsilon^{-N} ~\hbox{ if }~ 0<\epsilon<\rho.~~~~~~~~~~~~~~~~
\end{eqnarray}
\item
Negligible functions are functions from $F$ that satisfy:
\begin{eqnarray}
  \forall m\in\mathbb{N}, ~\forall K\subset\subset\mathbb{R}^{n},~~\forall p\in\mathbb{N} ~~\exists q\in\mathbb{N} ~\hbox { such that if } ~~\phi\in A^{q}(\mathbb{R}^{n}), ~\exists\alpha,\rho>0, \nonumber\\ \hbox { we have } \sup_{x_{i}\in K}\left|D^{m}F(\phi_{\epsilon},x_{i})\right|\leq\alpha\epsilon^{p} ~\hbox { if }~ 0<\epsilon<\rho.~~~~~~~~~~~~~~~~~~~~~~~~
\end{eqnarray}

\end{itemize}

Then ordinary distributions automatically define such functions by
the convolution (\cite{multi, Damyanov, geodesic, Vickers}
etc.):\footnote{Here the ``$B_{x}$'' notation means that $x$ is the
variable removed by applying the distribution.}
\begin{equation}
 B\to
B_{x}\left[\frac{1}{\epsilon^{n}}\;
\phi\left(\frac{y_{i}-x_{i}}{\epsilon}\right)\right], ~~\phi\in
A^{0}(\mathbb{R}^{n}).
\end{equation}

\subsubsection{Important common feature of both formulations}

All of these formulations have two important consequences. Given that $C$ denotes the embedding mapping:
\begin{itemize}
\item
Smooth functions ($C^{\infty}(\mathbb{R}^{n})$) form a
subalgebra of the Colombeau algebra
($C(f)C(g)=C(f \cdot g)$~ for $f$, $g$ being
smooth distributions).
\item
Distributions form a differential linear subspace of Colombeau algebra
(this means
for instance that $C(f')=C'(f)$~).
\end{itemize}

\subsubsection{The relation of equivalence in the \emph{special} Colombeau algebra}

We can formulate a relation of equivalence between an element of the
special Colombeau algebra $f_{\epsilon}(x_{i})$ and a distribution $T$. We call them equivalent,
if for any ~$\phi\in D(\mathbb{R}^{n})$,~ we have
\begin{equation}
\lim_{\epsilon\to
0}\int_{\mathbb{R}^{n}} f_{\epsilon}(x_{i})\phi(x_{i})d^{n}x= T(\phi).
\end{equation}
 Then two elements of
Colombeau algebra $f_{\epsilon}(x_{i})$, $g_{\epsilon}(x_{i})$ are equivalent,
if for any ~$\phi(x_{i})\in D(\mathbb{R}^{n})$
\begin{equation}
 \lim_{\epsilon\to
0}\int_{\mathbb{R}^{n}} \left(f_{\epsilon}(x_{i})-g_{\epsilon}(x_{i})\right)\phi(x_{i})~d^{n}x=0.
\end{equation}
For the choice of
$A^{\infty}(\mathbb{R}^{n})$ mollifiers the following relations are respected by
the equivalence:
\begin{itemize}
\item
It respects multiplication of distribution by a
smooth distribution \cite{Vickers} in the sense that: ~~$C(f\cdot g)\approx
C(f)\cdot C(g)$,~ where ~$f$ is a smooth distribution and $g\in D'(\mathbb{R}^{n})$.~
\item
It
respects (in the same sense) multiplication of piecewise continuous
functions (we mean here regular distributions given by
piecewise continuous functions) \cite{multi}.
\item
If $g$ is a distribution and $f\approx g$,
then for arbitrary natural number $n$~ it holds ~$D^{n}f\approx D^{n}g$ ~\cite{Damyanov}.
\item
If $f$ is
equivalent to distribution $g$, and if $h$ is a smooth
distribution, then $f\cdot h$ is equivalent to $g\cdot h$ ~\cite{Damyanov}.
\end{itemize}

\subsubsection{The relation of equivalence in the \emph{full} Colombeau algebra}

In the canonical formulation the
equivalence relation is again formulated either between an element of the Colombeau
algebra and a distribution, or analogously between two elements of
the Colombeau algebra:
If there  $\exists m$, such that for any
$\phi\in A^{m}(\mathbb{R}^{n})$, and for any
~$\Psi(x_{i})\in D(\mathbb{R}^{n})$, it holds that
\begin{equation}
\lim_{\epsilon\to 0}\int_{\mathbb{R}^{n}}
(f(\phi_{\epsilon},x_{i})-g(\phi_{\epsilon},x_{i}))\Psi(x_{i})~d^{n}x=0,
\end{equation}
then we say that $f$ and $g$ are equivalent ($f\approx g$).
For the canonical
embedding and differentiation we have the same commutation relations as in the
non-canonical case. It can be also
proven that for $f_{1}\dots f_{n}$ being regular distributions given
by piecewise continuous functions it follows that
\begin{equation}
C(f_{1})...C(f_{n})\approx C(f_{1}...f_{n}),
\end{equation}
 and for $f$ being
arbitrary distribution and $g$ smooth distribution it holds that
\begin{equation}
C(f)\cdot C(g)\approx C(f\cdot g).
\end{equation}

\subsubsection{How this relates to some older Colombeau papers}

In
older Colombeau papers \cite{multi, Amulti} all these concepts are formulated (equivalently)
as the relations between elements of a Colombeau algebra taken as a
subalgebra of the ~$C^{\infty}(D(\mathbb{R}^{n}))$ algebra. The definitions of
moderate and negligible elements are almost exactly the same as in
the canonical formulation, the only difference is that their domain is taken here to be the class
~$D(\mathbb{R}^{n})\times\mathbb{R}^{n}$~ (being a larger domain than $A^{o}(\mathbb{R}^{n})\times\mathbb{R}^{n}$). The canonical formulation is related to the elements of the class $C^{\infty}(D(\mathbb{R}^{n}))$ through their convolution with the objects from the class
$D(\mathbb{R}^{n})$.  It is easy to see that you can
formulate all the previous relations as relations between elements of the ~$C^{\infty}(D(\mathbb{R}^{n}))$
subalgebra (with pointwise multiplication), containing also distributions.

\subsection{Distributions in the geometric approach}

This part is devoted to review the distributional theory in the geometric framework. How to define arbitrary rank tensorial distribution
on arbitrary manifolds by avoiding reference to preferred charts? Usually we mean by a
distribution representing an $(m,n)$ tensor field an element from the
dual to the space of objects given by the tensor product of $(m,n)$ tensor fields and smooth
compactly supported $k$-form fields
 (on $k$ dimensional space). That means for example a regular distributional $(m,n)$ tensor field $B^{\mu...}_{\nu...}$ is introduced as a map
\begin{equation}
T\otimes\omega\to\int B^{\nu...\beta}_{\mu...\alpha}~T^{\mu...\alpha}_{\nu...\beta}~\omega ~.
\end{equation}
This is very much the same as to say that the test space are smooth
compactly supported tensor densities
$T^{\mu....\alpha}_{\nu...\beta}$
\cite{Garfinkle, GerTra}. The topology taken on this space is the
usual topology of uniform convergence for arbitrary derivatives
related to arbitrary charts (so the convergence from $\mathbb{R}^{n}$ theory should be valid in
all charts). The derivative operator acting on this space is typically Lie derivative. (Lie derivative along a smooth vector field $\xi$ we denote $L_{\xi}$.) This does make sense, since:
\begin{itemize}
 \item
To use derivatives of distributions we
automatically need derivatives along vector fields.
 \item
Lie derivative preserves $p$-forms.
 \item
In case of Lie derivatives, we do not need to apply any additional geometric structure (such as connection in the case of covariant derivative).
\end{itemize}

There is an equivalent formulation to
\cite{GerTra}, given by \cite{Vickers}, which
takes the space of tensorial distributions to be ~~$D'(M)\otimes T^{m}_{n}(M)$.~~ Here $D'(M)$ is the
dual to the space of smooth, compactly supported $k$-form fields ($k$
dimensional space). Or in other words, it is the space of sections with distributional coefficients. In \cite{geometry} the authors
generalize the whole construction by taking the tensorial
distributions to be the dual to the space of compactly supported
sections of the bundle ~$E^{*}\otimes ~V\!ol^{1-q}$. Here
~$V\!ol^{1-q}$~ is a space of ~($1-q$)-densities and ~$E^{*}$~ is a
dual to a tensor bundle ~$E$~ (hence to the dual belong objects
given as ~$E\otimes ~V\!ol^{q}$). In all these formulations Lie
derivative along a smooth vector field represents the differential
operator\footnote{We can mention also another classical formulation
of distributional form fields, which comes from the old book of
deRham \cite{deRham}~(it uses the expression ``current''). It
naturally defines the space of distributions to be a dual space to
space of all compactly supported form fields.}.

Let us mention here that there is one
unsatisfactory feature of these constructions, namely that for
physical purposes we need much more to incorporate the concept
of the covariant derivative rather than the Lie derivative. There
was some work done in this direction \cite{Hartley, Marsden, Wilson}, but it
is a very basic sketch, rather than a full and satisfactory theory.
It is unclear (in the papers cited) how one can obtain for the
covariant derivative operator the expected and meaningful results
outside the class of smooth tensor fields.

\subsection{Colombeau algebra in the geometric approach}

\subsubsection{Scalar \emph{special} Colombeau algebra}

For arbitrary general manifold it is easy to find a covariant formulation of the \emph{special}
Colombeau algebra. At the end of the day you obtain a
space of $\epsilon$-sequences of functions on the general manifold $M$.
For the non-canonical case the definitions are
similar to the non-geometric formulations, the basic difference
is that Lie derivative plays here the role of the $\mathbb{R}^{n}$
partial derivative. Thus the
definition of the Colombeau algebra will be again:
\begin{equation}
\mathcal{G}(\Omega)=\mathcal{E}_{M}(\Omega)/\mathcal{N}(\Omega).
\end{equation}
$\mathcal{E}_{M}(\Omega)$ (moderate functions) are defined as
algebra of functions~ $\Omega \times (0,1]\to \mathbb{R}$, such that
are smooth on $\Omega$ (this is usually called $\mathcal{E}
(\Omega)$) and for any $K\subset\subset\Omega$ we insist that
\begin{eqnarray}
\forall k\in\mathbb{N}^{n}_{0},~\exists p\in\mathbb{N}~ \hbox{
such that } ~\forall \xi_{1}...\xi_{k}~ \hbox{ which are smooth
vector fields, }\nonumber\\  ~\sup_{x\in
K}|L_{\xi_{1}}...L_{\xi_{k}}f_{\epsilon}(x)|\leq O(\epsilon^{-p})
~\hbox{ as } ~\epsilon\to 0.~~~~~~~~~~~~~~~~~~~~~~~~~~~~~~\label{Def.}
\end{eqnarray}
$\mathcal{N}(\Omega)$ (negligible functions) we define exactly in
the same analogy to the non-geometric formulation\footnote{This
version is due to \cite{Schwarzschild}. There are also different
definitions: in \cite{geometry} the authors use instead of ``for
every number of Lie derivatives along all the possible smooth vector
fields'' the expression ``for every linear differential operator'',
but they prove that these definitions are equivalent. This is also
equivalent to the statement that in any chart holds:
$\Phi\in\mathcal{E}_{M}(\mathbb{R}^{n})$ (see
 \cite{geometry}).}:
\begin{eqnarray}
\forall K\subset\subset\Omega,~\forall k\in\mathbb{N}^{n}_{0},~\forall p\in\mathbb{N},~~\forall \xi_{1}...\xi_{k}~ \hbox{ which are smooth
vector fields, } \nonumber\\  ~\sup_{x\in
K}|L_{\xi_{1}}...L_{\xi_{k}}f_{\epsilon}(x)|\leq O(\epsilon^{p})
~\hbox{ as } ~\epsilon\to 0.~~~~~~~~~~~~~~~~~~~~~~~~~~~~~~\label{Def.}
\end{eqnarray}

\subsubsection{Tensor \emph{special} Colombeau algebra and the embedding of distributions}

After one defines the scalar special Colombeau algebra, it is easy to define the generalized Colombeau tensor
algebra as the tensor product of sections of a tensor bundle and Colombeau algebra. This
can be formulated more generally \cite{geometry} in terms of
maps from $M$ to arbitrary manifold. One can define them by changing the absolute value in the definition \eqref{Def.} to the expression ``any
Riemann measure on the target space''. Then you get the algebra \cite{geometry, generalized, Vickers}:
\begin{equation}
\mathbf{\Gamma_{C}}(X,Y)=\mathbf{\Gamma}_{M}(X,Y)/\mathbf{N}(X,Y).
\end{equation}
The tensor fields are represented when the target space is taken to
be the $TM$ manifold\footnote{This is equivalent to
~$\mathcal{G}(X)\otimes\Gamma(X,E)$ tensor valued Colombeau
generalized functions \cite{generalized}.}. It is clear that any
embedding of distributions into such algebra will be non-canonical
from various reasons. First, it is non-canonical even on
$\mathbb{R}^{n}$. Another, second reason is that this embedding will
necessarily depend on some preferred class of charts on $M$. The
embedding one defines as \cite{geometry}:

We pick an atlas, and take a smooth partition of
unity subordinate to ~$V_{\alpha_{i}}$,~ ($\Theta_{j}$,\\~
$\supp(\Theta)\subseteq V_{\alpha_{j}}$ $j\in\mathbb{N}$) and we
choose for every $j$,~ $\xi_{j}\in D(V_{\alpha_{j}})$, such that
$\xi_{j}=1$~ on ~$\supp(\xi_{j})$.~ Then we can choose in fixed charts
an $A^{\infty}(\mathbb{R}^{n})$ element $\rho$, and the embedding is given by
\begin{equation}
\sum^{\infty}_{j=1}\left(((\xi_{j}(\Theta_{j}\circ\psi^{-1}_{\alpha_{j}})u_{\alpha_{j}})*\rho_{\epsilon})\circ
\psi_{\alpha_{j}}\right)_{\epsilon},
\end{equation}
 where $\psi_{\alpha_{j}}$ is a
coordinate mapping and ~$*$~ is a convolution.

\subsubsection{The equivalence relation}

Now let us define the equivalence relation in analogy to the $\mathbb{R}^{n}$
case. Since in \cite{geometry} the strongest constraint
on the mollifier is taken, one would expect that strong
results will be obtained, but the definition is more
complicated. And in fact, standard results (such as
embedding of smooth function multiplying distribution is equivalent to
product of their embeddings) are not valid here
\cite{geometry}. That is why the stronger concept of
$k$-association is formulated. It states that $U\in\Gamma_{C}$ is $k$ associated to
function $f$, if ~~

\begin{equation}
\lim_{\epsilon\to 0} L_{\xi_{1}}...L_{\xi_{l}}
(U_{\epsilon}-f)\to 0
\end{equation}
 uniformly on compact sets for all ~$l\leq
k$. The cited paper does not contain a precise definition of $k$ equivalence
between two generalized functions, but
it can be easily derived.

\subsubsection{The older formulation of scalar \emph{full} Colombeau algebra}

If we want to get a canonical formulation, we certainly cannot
generalize it straight from the $\mathbb{R}^{n}$ case (the reason is
that the definition of the classes $A^{n}(\mathbb{R}^{n})$ is not
diffeomorphism invariant). However, there is an approach providing
us with a canonical formulation of generalized scalar fields
\cite{global}. The authors define the space $\mathcal{E}(M)$ as a
space of $C^{\infty}(M~\times ~A^{0}(M))$, where $A^{0}(M)$ is the
space of $n$-forms ($n$-dimensional space), such that ~$\int
\omega=1$. Now the authors define a smoothing kernel as $C^{\infty}$
map from
\begin{equation}
M\times I\to A^{0}(M),
\end{equation}
such that it satisfies:
\begin{itemize}
 \item[(i)] $\forall K\subset\subset M~~ \exists\epsilon_{0},~~ C>0~~ \forall p\in K~~ \forall\epsilon\leq\epsilon_{0},~~ \supp~\phi(\epsilon, p)\subseteq B_{\epsilon C}(p)$,
 \item[(ii)] $\forall K\subset\subset M, ~~\forall k,l\in\mathbb{N}_{0}, ~~\forall X_{1},\cdots X_{k},Y_{1}\cdots Y_{l}$~ smooth vector fields,\\ $\sup_{p\in K, q\in M}\left\Vert L_{Y_{1}}\cdots L_{Y_{l}}(L'_{X_{1}}+L_{X_{k}})\cdots (L'_{X_{k}}\cdots L_{X_{k}})\Phi(\epsilon,p)(q)\right\Vert = \\ = O(\epsilon^{-(n+1)})$.
\end{itemize}
Here $L'$ is defined as: ~
\begin{equation}
L'_{X}f(p,q)=L_{X}(p\to f(p,q))=\frac{d}{dt}f((Fl^{x}_{t})(p),q)|_{0}.
\end{equation}
$B_{\epsilon C}$ is a ball centered at $C$ having radius
${\epsilon}$ measured relatively to arbitrary Riemannian metric. Let
us call the class of such smoothing kernels ~$A^{0}(M)$. Then in
\cite{global} classes $A^{m}(M)$ are defined as the set of all
$\Phi\in A^{0}(M)$ such that $\forall f\in C^{\infty}(M)$ and
$\forall K\subset \subset M$ (compact subset) it holds:
\begin{equation}
\sup\left| f(p)-\int_{M}f(q)\Phi(\epsilon,p)(q)\right|=O(\epsilon^{m+1}).
\end{equation}
Moderate and the negligible functions are defined in the following way:\\
$R\in \mathcal{E}(M)$ is moderate if ~~$\forall K\subset\subset
M~~\forall k \in\mathbb{N}_{0} ~~\exists N \in \mathbb{N}~~\forall
X_{1},....X_{k}$ ($X_{1},....X_{k}$ are smooth vector fields) and
~$\forall \Phi\in A^{0}(M)$~ one has:
\begin{equation}
\sup_{p\in
K}~\left\Vert L_{X_{1}}.....L_{X_{k}}(R(\Phi(\epsilon,p),p))\right\Vert=O(\epsilon^{-N}).
\end{equation}
$R\in\mathcal{E}(M)$ is negligible if ~~$\forall K\subset\subset
M,~\forall k,l\in\mathbb{N}_{0}~~\exists m\in\mathbb{N}~~\forall
X_{1},...X_{k}$ ($X_{1},...X_{k}$ are again smooth vector fields)
and ~$\forall\Phi\in A_{m}(M)$~ one has:
\begin{equation}
\sup_{p\in
K}~\left\Vert L_{X_{1}}...L_{X_{k}}(R(\Phi(\epsilon,p),p)\right\Vert=O(\epsilon^{l}).
\end{equation}
Now we can define the Colombeau algebra in the usual way as a factor
algebra of moderate functions over negligible functions. Scalar distributions, defined
as dual to $n$-forms, can be embedded into such algebra in a complete analogy
to the canonical $\mathbb{R}^{n}$ formulation. Also association is in this case
defined in the ``usual'' way (integral with compactly supported
smooth $n$-form field) and has for multiplication the usual properties.
However any attempt to get a straightforward generalization from scalars to tensors
brings immediate problems, since the embedding does not commute
with the action of diffeomorphisms. This problem was finally resolved in
\cite{CanTen}.

\subsubsection{Tensor \emph{full} Colombeau algebra and the embedding of distributions}

The authors of reference \cite{CanTen} realized that
diffeomorphism invariance can be achiev\-ed by adding some background structure
defining how tensors transport from point to point, hence a
transport operator. Colombeau  $(m,n)$ rank tensors are then taken
from the class of smooth maps ~~$C^{\infty}(\omega, q, B)$~~having values in
~$(T^{m}_{n})_{q}M$, ~~ where ~$\omega\in A^{0}(M)$, $q\in M$  ~and~
$B$~ is from the class of compactly supported transport operators.
After defining how Lie derivative acts on such objects and
the concept of the ``core'' of a transport operator, the authors of reference \cite{CanTen} define (in a
slightly complicated analogy to the previous case) the moderate and the
negligible tensor fields. Then by usual factorization they obtain the
canonical version of the generalized tensor fields (for more details
see \cite{CanTen}). The canonical embedding of tensorial
distributions is the following:
The smooth tensorial objects are embedded as

\begin{equation}
\tilde t(p,\omega,B)=\int t(q)B(p,q)\omega(q) dq
\end{equation}
where as expected $\omega\in A^{0}(M)$, $t$ is the smooth tensor field and
$B$ is the transport operator. Then the arbitrary tensorial distribution $s$ is embedded (to
$\tilde s$) by the condition

\begin{equation}
\tilde s(\omega, p, B)\cdot t(p)=\left(s, B(p,.)\cdot
t(p)\otimes\omega(.)\right),
\end{equation}
where on the left side we are contracting the embedded object with a
smooth tensor field $t$, and on the right side we are applying the
given tensorial distribution $s$ in the variable assigned by the
dot. It is shown that this embedding fulfills all
the important properties, such as commuting with the Lie
derivative operator \cite{CanTen}. All the other results related to equivalence relation
(etc.) are obtained in complete analogy to the previous cases.

\subsubsection{The generalized geometry (in \emph{special} Colombeau algebras)}

In \cite{geometry, Connection, generalized} the authors generalized all the basic geometric structures, like connection, covariant derivative, curvature, or geodesics into the geometric formulation of the special Colombeau algebra. That means they defined the whole generalized geometry.

\subsubsection{Why is this somewhat unsatisfying?}

 However in our view, the crucial part is missing. What we would like to see is an intuitive and clear definition of the covariant derivative operator acting on the distributional objects in the canonical Colombeau algebra formulation, on one hand reproducing all the classical results, and on the other hand extending them in the same natural way as in the classical distributional theory with the classical derivative operator. Whether there is any way to achieve this goal by the concept of generalized covariant derivative acting on the generalized tensor fields, as defined in \cite{generalized}, is unclear. Particularly it is not clear whether such generalized geometry can be formulated also within the canonical Colombeau algebra approach. There exist definitions of the covariant derivative operator within the distributional tensorial framework \cite{Hartley, Marsden}. (The reference \cite{Hartley} gives particularly nice application of such distributional tensor theory to signature changing spacetimes.) But these approaches are still ``classical'', in the sense, that they do not fully involve the operation of the tensor product of distributional tensors.
There cannot be any hope of
finding a more appropriate, generalized formulation of classical physics
without finding such a clear and intuitive definition of both, the covariant derivative and the tensor product. All that can be in this situation achieved is
to use these constructions to solve some specific problems within
the area of physics. But as we see, the more ambitious goal can be
very naturally achieved by our own construction, which follows after
this overview.

\subsection{Practical application of the standard results}

Now we will briefly review various applications of
the Coulombeau theory presented before.

\subsubsection{Classical shock waves}
The first application we will mention is the non-general-relativistic one. In
\cite{shock} the authors provide us with weak solutions of
nonlinear partial differential equations (using Colombeau algebra)
representing shock waves. They use a special version of the
Colombeau algebra, and specifically the relation
\begin{equation}
H^{n}H'\approx \frac{1}{n+1}~H',
\end{equation}
(which is related to mollifiers from
$A^{0}(\mathbb{R}^{n})$ class). The more general analysis related to the existence and uniqueness of
weak solutions of nonlinear partial differential
equations can be found in \cite{Oberguggenberger}.

\subsubsection{Black hole ``distributional'' spacetimes}
In general relativity there are many results obtained by the use of Colombeau
algebras. First, we will focus on the distributional Schwarzschild geometry, which is
analysed for example in \cite{Schwarzschild}. The authors of \cite{Schwarzschild} start to
work in Schwarzschild coordinates using the special Colombeau algebra
and $A^{\infty}(\mathbb{R}^{n})$ classes of mollifiers. They obtain
the delta-functional results (as expected) for the Einstein tensor, and
hence also for the stress-energy tensor. But in Schwarz\-schild coordinates there
are serious problems with the embedding of the distributional tensors, since these coordinates do not
contain the 0 point. As a result, if one looks for smooth embeddings, one does
not obtain an inverse element in the Colombeau algebra in the
neighbourhood of 0 for values of $\epsilon$ close to 0. (Although there is no problem, if we
require that the inverse relation should apply only in the sense of
equivalence.) Progress can be made by
turning to Eddington-Finkelstein coordinates \cite{Schwarzschild}. The metric is obtained in Kerr-Schild form, in which one is able to compute $R^a{}_b,
G^a{}_b$ and hence $T^a{}_b$ as delta-functional objects (which is expected). (The (1,1) form of the field equations is used since the metric dependence has a relatively simple form in Eddington-Finkelstein coordinates.) This
result does not depend on the mollifier (see also \cite{Vickers}),
but one misses the analysis of the relation between different embeddings given by the different
coordinate systems \footnote{It seems that
the authors use relations between $R^\mu{}_\nu$ and components of
$g^{\mu\nu}$ obtained in Eddington-Finkelstein coordinates by using algebraic tensor
computations. Then it is not obvious, whether these results can be
obtained by computation in the Colombeau algebra using the $\approx$
relation, since in such case some simple tricks (such as substitution) cannot in general be used.}.
Even in the case of Kerr geometry there is a computation
of $\sqrt{(g')}R^\mu{}_\nu$ (where $g^{ab}=g'^{ab}+fk^{a}k^{b}$)
given by Balasin, but this is
mollifier dependent \cite{Kerr, Vickers}. Here the
coordinate dependence of the results is even more
unclear.

\subsubsection{Aichelburg metric}
There exists an ultrarelativistic weak limit of the Schwarzschild
metric. It is taken in Eddington-Finkelstein coordinates $u=t+r^{*}$,
$v=t-r^{*}$, (where $r^{*}$ is tortoise coordinate) by taking the boost in the weak $v\to c$ limit.
We obtain the ``delta functional'' Aichelburg metric. Reference \cite{geodesic} provides a computation of geodesics in such a geometry. The authors of \cite{geodesic} take the special Colombeau algebra (and take
$A^{0}(\mathbb{R}^{n})$ as their class of mollifiers), and they prove
that geodesics are given by the refracted lines. The results are mollifier
independent. This is again expected. Moreover, what seems to be really
interesting is that there is a continuous metric which is connected with
the Aichelburg metric by a generalized coordinate transformation
\cite{Penrose, Vickers}.

\subsubsection{Conical spacetimes}
The other case we
want to mention are conical spacetimes. One of the papers where the conical spacetimes are analysed
is an old paper of Geroch and Traschen \cite{GerTra}. In \cite{GerTra} it is shown that conical spacetimes can not be analysed through the concept of
$gt$-metric. These are metrics which provide us with a distributional Ricci
tensor in a very naive sense. The multiplication is given just by a simple product
of functions defining the regular distributions. A calculation of the stress energy tensor was given by Clarke, Vickers and Wilson \cite{Clarke}, but this is mollifier dependent (although it is
coordinate independent \cite{Vickers}).

\section{How does our approach relate to current theory?}

\paragraph{General summary}
How does our own approach (to be described in detail in the next section) relate to all what has presently been achieved in the Colombeau theory? We can summarize what we will do in three following points:

\begin{itemize}
\item
We will define tensorial distributional objects, and the basic related operations (especially the \emph{covariant derivative}). The definition directly follows our physical intuition (there is a unique way of constructing it). This generalizes the Schwartz $\mathbb{R}^{n}$ distribution theory.

\item
We will formulate the Colombeau equivalence relation in our approach and obtain all the usual equivalence results from the Colombeau theory.

\item
We will prove that the classical results for the covariant derivative operator (as known within differential geometry) significantly generalize in our approach.
\end{itemize}
The most important point is that our approach is fully based on the Colombeau equivalence relation translated to our language. That means we take and use only this particular feature of the Colombeau theory and completely avoid the Colombe\-au algebra construction.

\paragraph{The advantages compared to usual Colombeau theory}
What are the advantages of this approach comparing to Colombeau theory?
\begin{itemize}
\item
First, by avoiding the algebra factorization (which is how Colombeau algebra is constructed) we fulfill the physical intuitiveness condition of the language used.
\item
Second, we can naturally and easily generalize the concept of the covariant derivative in our formalism, which has not been completely satisfactorily achieved by the Colombeau algebra approach\footnote{As previously mentioned, the current literature lists some ambiguous attempts to incorporate the covariant derivative, but no satisfactory theory.}. This must be taken as an absolutely necessary condition that any generalization of a fundamental physical language must fulfill.
\item
It is specifically worth discussing the third advantage:  Why is the classical approach so focused on Colombeau
algebras? The answer is simple: We want to get an algebra of
$C^{\infty}(M)$ functions as a subalgebra of our algebra. (This is
why we need to factorize by negligible functions, and we need to get
the largest space where they form an ideal, which is the space of
moderate functions.) But there is one strange thing: all our efforts
are aimed at reaching the goal of getting a more rich space than is provided by the space
of smooth functions. But there is no way of getting a larger
differential subalgebra than the algebra of smooth functions, as
is shown by Schwartz impossibility result. That is why we use only
the equivalence relation instead of straight equality. But then the
question remains: Why one should still prefer smooth
functions? Is not the key part of all the theory the equivalence
relation? So why is it that we are not satisfied with the way the equivalence relation recovers multiplication of the smooth objects (we require something stronger), but we are satisfied with the way it recovers multiplication
within the larger class? Unlike the Colombeau algebra based approach, we are simply taking seriously the idea that one should treat all the objects in an equal way. This means we do not see any reason to try to achieve ``something more'' with smooth objects than we do with objects outside this class. And the fact that we treat all the objects in the same way provides the third advantage of our theory; it makes the theory much more natural than the Colombeau algebra approach.
\item
The fourth advantage is that it naturally works with the much more general (and for the language of distributions natural) concept of piecewise smooth manifolds, so the generalization of the physics language into such conceptual framework will give it much higher symmetry.
\item
The fifth and last, ``small'' advantage comes from the fact that by avoiding the Colombeau algebra construction we automatically remove the problem of how to canonically embed arbitrary tensorial distributions into the algebra. But one has to acknowledge that this problem was already solved also within the Colombeau theory approach \cite{CanTen}.
\end{itemize}

\paragraph{The disadvantages compared to usual Colombeau theory}
What are the disadvantages of this approach compared to Colombeau theory? A conservative person might be not satisfied with the fact that we do not have a smooth tensor algebra as a subalgebra of our algebra. This means that the classical smooth tensorial fields have to be considered to be solutions of equivalence relations only (as opposed to the classical, ``stronger'' view to take them as solutions of the equations). But in my view, this is not a problem at all. The
equivalence relations contain all the classical smooth tensorial solutions
(equivalently smooth tensor fields), for the smooth initial value
problem. So for the smooth initial value problem the equivalence relations
reduce to classical equations. We will suggest how to extend the initial
value problem for larger classes of distributions and
show that it has unique solutions. It is true that the equivalence relations
might have also many other (generally non-linear) solutions in the
space constructed,\footnote{By ``solution'' we mean here any
object fulfilling the particular equivalence relations.} but the
situation that there exist many physically meaningless solutions is
for physicists certainly nothing new. Our previous considerations
suggest that we shall look only for distributional solutions (they
are the ones having physical relevance) where the solution is
provided to be unique (up to initial values obviously), if it exists.
(This will also recover the common, but also many ``less'' common
physical results \cite{Vickers}.)

\section{New approach}\label{NewApp}

\subsection{The basic concepts/definitions}\label{basicconcepts}

Before saying anything about distributional
tensor fields, we have to define the basic concepts which will be
used in all the following mathematical constructions. This task is
dealt with in this section, so this is the part crucially important for
understanding all the subsequent theory. An attentive reader, having read through the introduction and abstract will understand
why we are particularly interested in defining these concepts.

\subsubsection{Definition of (M,$\mathcal{A}$)}

\theoremstyle{definition}
\newtheorem{defn}{Definition}[section]
\begin{defn}
 By piecewise smooth function we mean a function from an open set
$\Omega_{1}(\subseteq\mathbb{R}^{4})\to\mathbb{R}^{m}$ such, that there exists
an open set $\Omega_{2}$ (in the usual ``open ball'' topology on
$\mathbb{R}^{4}$) on which this function is smooth and
$\Omega_{2}=\Omega_{1}\setminus \Omega'$ (where $\Omega'$ has a Lebesgue measure 0).
\end{defn}

Take $M$ as a 4D paracompact\footnote{We will use specifically
4-dimensional manifolds, but one can immediately generalize all the following
constructions for $n$-dimensional manifolds.}, Hausdorff locally
Euclidean space, on which there exists a smooth atlas
$\mathcal{\mathcal{\mathcal{\mathcal{\mathcal{S}}}}}$. Hence
$(M,\mathcal{\mathcal{\mathcal{\mathcal{S}}}})$ is a smooth
manifold. Now take a ordered couple $(M,\mathcal{A})$, where
$\mathcal{A}$ is the maximal atlas, where all the maps are
connected by piecewise smooth transformations such that:
\begin{itemize}
\item
the transformations and their inverses have on every compact subset of $\mathbb{R}^{4}$ all the first derivatives (on the domains
where they exist) bounded\\ (hence Jacobians, inverse
Jacobians are on every compact set bounded),
\item
it contains at least one maximal smooth subatlas $\mathcal{\mathcal{\mathcal{\mathcal{S}}}}\subseteq\mathcal{A}$,\\ (coordinate transformations between maps are smooth
there).
\end{itemize}
\medskip
\paragraph{Notation.} The following notation will be used:
\begin{itemize}
\item
By the letter $\mathcal{\mathcal{\mathcal{\mathcal{\mathcal{S}}}}}$ we will always mean some maximal smooth subatlas of $\mathcal{A}$.
\item
Every subset of $M$ on
which there exists a chart from our atlas $\mathcal{A}$, we call $\Omega_{Ch}$.
An arbitrary chart on $\Omega_{Ch}$ from our atlas $\mathcal{A}$ is denoted
$Ch(\Omega_{Ch})$.
\end{itemize}

\paragraph{Notation.}
 Take some set $\Omega_{Ch}$. Take some open subset of that set
$\Omega'\subset\Omega_{Ch}$. Then $ Ch(\Omega_{Ch})_{|\Omega'}$  is defined simply as
$Ch(\Omega')$, which is obtained from $Ch(\Omega_{Ch})$ by limiting the domain to $\Omega'$.

\subsubsection{Definition of ``continuous to the maximal possible degree''}
\theoremstyle{definition}
\newtheorem{defn3.001}[defn]{Definition}
\begin{defn3.001}
We call a function on $M$ continuous to the maximal possible degree, if on arbitrary $\Omega$ of Lebesgue measure 0 it is only in such cases:\footnote{The expression
``Lebesgue measure 0 set'' will have in this chapter extended meaning. It refers to such subsets of a general manifold $M$ that they have in arbitrary chart Lebesgue measure 0.}

\begin{itemize}
\item[a)]
either undefined,
\item[b)]
or defined and discontinuous,
\end{itemize}
 in which there does not exist a way of turning it into a function continuous on $\Omega$ by
\begin{itemize}
\item
in the case of $a)$ extending its domain by the $\Omega$ set,
\item
in the case of $b)$ re-defining it on $\Omega$.
\end{itemize}
\end{defn3.001}

\subsubsection{Jacobians and algebraic operations with Jacobians} Now it is
obvious that since transformations between maps do not have to be
everywhere once differentiable, the Jacobian and inverse Jacobian may always be undefined on a set having Lebesgue measure 0. Now if we understand product in
the sense of a limit, then the relation
\begin{equation}
J^{\mu}_{\alpha}~(J^{-1})^{\alpha}_{\nu}=\delta^{\mu}_{\nu},\label{maxcon}
\end{equation}
for example, might hold even at the points where both Jacobian and inverse
Jacobian are undefined. This generally means the following: any algebraic operation with tensor fields is understood in such way, that in every chart it gives sets of functions continuous to a maximal possible degree. From this follows that the matrix product (\ref{maxcon})
must be, for $\mu=\nu$, equal to 1 and, for $\mu\neq\nu$, ~0.

\subsubsection{Tensor fields on $M$} We understand the tensor field on
$M$ to be an object which is:
\begin{itemize}
\item
Defined relative to the 1-dif\-ferent\-iable subatlas of $\mathcal{A}$ everywhere except for a set
having Lebesgue measure 0 (this set is a function of the given 1-dif\-ferentia\-ble subatlas).
\item
In every chart from $\mathcal{A}$ it is given by
functions continuous to a maximal possible degree.
\item
It transforms $\forall\Omega_{Ch}$ between charts
$Ch_{1}(\Omega_{Ch}), ~Ch_{2}(\Omega_{Ch})\in\mathcal{A}$ in the tensorial way

\begin{equation}
T^{\mu...}_{\nu...~Ch_{2}}=J^{\mu}_{\alpha}~(J^{-1})^{\beta}_{\nu}\cdots
~T^{\alpha...}_{\beta...~Ch_{1}}~~~~~~\hbox{a.e.},\footnote{This expression means ``almost everywhere'', that is, ``everywhere apart from a set having Lebesgue measure 0''.}
\end{equation}

 where $J^{\mu}_{\alpha}$ is the Jacobian of the coordinate transformation from $Ch_{1}(\Omega_{Ch})$ to $Ch_{2}(\Omega_{Ch})$, and $T^{\mu...}_{\nu...~Ch_{1}}, T^{\mu...}_{\nu...~Ch_{2}}$ are tensor field components in charts $Ch_{1}$, $Ch_{2}$.
As we already mentioned: If $T^{\mu...}_{\nu...~Ch_{1}}$ is at some
given point undefined, so in some chart $Ch_{1}(\Omega_{Ch})$ the tensor components
do not have a defined limit,
this limit can still exist in $Ch_{2}(\Omega_{Ch})$, since Jacobians
and inverse Jacobians of the transformation from
$Ch_{1}(\Omega_{Ch})$ to $Ch_{2}(\Omega_{Ch})$ might be undefined at
that point as well. This limit then defines
$T^{\mu...}_{\nu...~Ch_{2}}$ at the given point.
\end{itemize}

\subsubsection{Important classes of test objects}

\paragraph{Notation.} The following notation will be used:
\begin{itemize}
\item
We denote by $C^{P}(M)$ the class of 4-form fields on $M$, such that they are
compactly supported and their support lies within some $\Omega_{Ch}$. For such 4-form fields we will generally use the symbol ~$\omega$.
\item
For the scalar density related to $\omega\in C^{P}(M)$ we use always the symbol ~$\omega'$.
\item
By $C^{P}(\Omega_{Ch})$ we mean a subclass of $C^{P}(M)$, given by 4-form fields having support inside $\Omega_{Ch}$. Note that only the
$C^{P}(\Omega_{Ch})$ subclasses form linear spaces.
\item
Take such maximal atlas $\mathcal{\mathcal{\tilde S}}$, ~($\exists
\mathcal{\mathcal{\mathcal{\mathcal{S}}}}\subset
\mathcal{\mathcal{\tilde S}}\subset\mathcal{A}$),~ that there exist
4-forms from $C^{P}(M)$, such that they are given in this atlas by
everywhere smooth scalar density ~$\omega'$. (``Maximal'' here means
that these 4-forms have in every chart outside this atlas non-smooth
scalar densities.) By $C^{P}_{S (\mathcal{\mathcal{\tilde S}})}(M)$
we mean a class of all such elements from $C^{P}(M)$, that they have
everywhere smooth scalar density in ~$\mathcal{\tilde S}$.
\item
The letter $\mathcal{\mathcal{\tilde S}}$ will from now on be reserved for maximal atlases defining $C^{P}_{S (\mathcal{\mathcal{\tilde S}})}(M)$ classes.
\item
$C^{P}_{S}(M)$ is defined as: $C^{P}_{S}(M)\equiv \cup_{\mathcal{\mathcal{\tilde S}}}C^{P}_{S (\mathcal{\mathcal{\tilde S}})}(M)$.
\item
$C^{P}_{S (\mathcal{\mathcal{\tilde S}})}(\Omega_{Ch})$ ~(or $C^{P}_{S}(\Omega_{Ch})$)
means $C^{P}_{S (\mathcal{\mathcal{\tilde S}})}(M)$  ~(or $C^{P}_{S}(M)$) element having
support inside the given $\Omega_{Ch}$.
\end{itemize}

\subsubsection{Topology on $C^{P}_{S (\mathcal{\mathcal{\tilde S}})}(\Omega_{Ch})$}
Consider the following topology on each ~$C^{P}_{S (\mathcal{\tilde
S)}}(\Omega_{Ch})$: ~A sequence from ~$\omega_{n}\in$\\~ $C^{P}_{S
(\mathcal{\mathcal{\tilde S}})}(\Omega_{Ch})$ converges to an
element ~$\omega$~ from that set if all the supports of
~$\omega_{n}$~ lie in a single compact set, and in any chart
$Ch(\Omega_{Ch})\in \mathcal{\mathcal{\tilde S}}$, for arbitrary
$k$ it is true that $\frac{\partial^{k}\omega'_{n}(x^{i})}{\partial
x^{l_{1}}..\partial x^{l_{k}}}$ converges uniformly to
$\frac{\partial^{k}\omega'(x^{i})}{\partial x^{l_{1}}..\partial
x^{l_{k}}}$.

\subsection{Scalars}

This section deals with the definition of scalar distributions as the easiest
particular example of a generalized tensor field. The explanatory reasons are the main ones
why we deal with scalars separately, instead of taking more
``logical'', straightforward way to tensor fields of arbitrary rank.

\subsubsection{Definition of $D'(M)$, and hence of \emph{linear generalized scalar fields}}

\theoremstyle{definition}
\newtheorem{defn4.99}{Definition}[section]
\begin{defn4.99}
We say that $B$, being a function that maps some subclass of $C^{P}(M)$ to $\mathbb{R}^{n}$, is linear, if the following holds: Take such $\omega_{1}$ and $\omega_{2}$ from the domain of $B$, that they belong to the same class $C^{P}(\Omega_{Ch})$.
Whenever the domain of $B$ contains also their linear combination
$\lambda_{1}\omega_{1}+\lambda_{2}\omega_{2}$, where ~$\lambda_{1},\lambda_{2}\in\mathbb{R}$,~ then ~$B(\lambda_{1}\omega_{1}+\lambda_{2}\omega_{2})=\lambda_{1}B(\omega_{1})+\lambda_{2}B(\omega_{2})$.\label{linear}
\end{defn4.99}

\theoremstyle{definition}
\newtheorem{defn4.999}[defn4.99]{Definition}

\begin{defn4.999}
 Now take the space of linear maps $F\to\mathbb{R}$, where $F$ is such set that $\exists\mathcal{\mathcal{\tilde S}}$, such that
$C^{P}_{S (\mathcal{\mathcal{\tilde S}})}(M)\subseteq F\subseteq
C^{P}(M)$. These linear maps are also required to be for every
$\Omega_{Ch}$ on $C^{P}_{S \mathcal{(\tilde S)}}(\Omega_{Ch})$
continuous, (relative to the topology taken in section
\ref{basicconcepts}). Call set of such maps $D'(M)$, or in words,
the set of \emph{linear generalized scalar fields}.
\end{defn4.999}

%\bigskip

\subsubsection{Important subclasses of $D'(M)$}

\paragraph{Notation.} The following notation will be used:
\begin{itemize}
\item
Now take a subset of $D'(M)$ given by regular distributions
defined as integrals of piecewise continuous
functions (everywhere on $C^{P}(M)$, where it converges). We denote it by $D'_{E}(M)$.\footnote{Actually, it holds that if and only if the function is integrable in every chart on every compact set in $\mathbb{R}^n$, then this function defines a regular $D'(M)$ distribution and is defined at least on the whole $C^{P}_{S}(M)$ class.}
\item
Take such subset of $D'_{E}(M)$ that there  $\exists \mathcal{\mathcal{\mathcal{\mathcal{S}}}}$
in which the function under the integral is smooth. Call this class
$D'_{S}(M)$.
\item
 Now~ take ~subsets ~of ~$D'(M)$~ such ~that~ they~ have
some ~common ~set\\
$\cup_{n}C^{P}_{S (\mathcal{\mathcal{\tilde S}}_{n})}(M)$
belonging to their domains and are ~~ $\forall\Omega_{Ch}$,
~$\forall n$~ continuous on $C^{P}_{S (\mathcal{\mathcal{\tilde
S}}_{n})}(\Omega_{Ch})$. Denote such subsets by
$D'_{(\cup_{n}\mathcal{\tilde S}_{n})}(M)$. Obviously ~$T\in
D'_{(\cup_{n}\mathcal{\mathcal{\tilde S}}_{n})}(M)$ means
~$T\in\cap_{n}D'_{(\mathcal{\mathcal{\tilde S}}_{n})}(M)$.
\item
By $D'_{(\cup_{n}\mathcal{\mathcal{\tilde S}}_{n} o)}(M)$ we mean
objects such that they belong to $D'_{(\cup_{n}\mathcal{\mathcal{\tilde S}}_{n})}(M)$ and their full domain is given as ~$\cup_{n}C^{P}_{S (\mathcal{\mathcal{\tilde S}}_{n})}(M)$.
\item
If we use
the notation $D'_{E (\cup_{n}\mathcal{\mathcal{\tilde S}}_{n} o)}(M)$, we mean objects defined by integrals of piecewise continuous functions, with their domain being the class $\cup_{n}C^{P}_{S (\mathcal{\mathcal{\tilde S}}_{n})}(M)$.
\item
By using $D'_{S(\cup_{n}\mathcal{\mathcal{\tilde S}}_{n} o)}(M)$ we automatically mean subclass of $D'_{E(\cup_{n}\mathcal{\mathcal{\tilde S}}_{n} o)}(M)$, such that it is given by
an integral of a smooth function in some smooth subatlas
$\mathcal{\mathcal{\mathcal{\mathcal{S}}}}\subseteq\cup_{n}\mathcal{\mathcal{\tilde S}}_{n} $.
\end{itemize}

\subsubsection{$D'_{A}(M)$, hence \emph{generalized scalar fields}}

\paragraph{Notation.}
 Let us for any arbitrary set $D'_{(\mathcal{\mathcal{\tilde S}})}(M)$\footnote{The union is here trivial, it just means one element $\mathcal{\mathcal{\tilde S}}$.} construct, by the use of pointwise multiplication of its elements,
an algebra. Another way to describe the algebra is that it is a set of multivariable arbitrary degree polynomials, where different variables represent different elements of $D'_{(\mathcal{\mathcal{\tilde S}})}(M)$. Call it $D'_{(\mathcal{\mathcal{\tilde S}}) A}(M)$.

By pointwise multiplication of linear generalized scalar fields ~~~$B_{1},~~B_{2}~~\in \\ D'_{(\mathcal{\mathcal{\tilde S}})}(M)$~ we mean a mapping from $\omega$ into product of the images (real numbers) of the $B_{1},B_{2}$ mappings:
~$(B_{1}\cdot B_{2})(\omega)\equiv B_{1}(\omega)\cdot B_{2}(\omega)$.~ The domain on which the
product (and the linear combination as well) is defined is an
intersection of domains of $B_{1}$ and $B_{2}$ (trivially always nonempty,
containing $C^{P}_{S (\mathcal{\mathcal{\tilde S}})}(M)$ at least).
Note also that the resulting arbitrary element of $D'_{(\mathcal{\mathcal{\tilde S}}) A}(M)$ has in general all the properties defining
$D'(M)$ objects, except that of being necessarily linear.
\theoremstyle{definition}
\newtheorem{defn5}[defn4.99]{Definition}
\begin{defn5}
 The set of objects obtained by the union~ $\cup_{\mathcal{\mathcal{\tilde S}}}D'_{(\mathcal{\mathcal{\tilde S}}) A}(M)$~ we denote $D'_{A}(M)$, and call
them \emph{generalized scalar fields} (GSF).
\end{defn5}

\subsubsection{Topology on $D'_{(\mathcal{\mathcal{\tilde S}} o)}(M)$}

If we take objects from $D'_{(\mathcal{\mathcal{\tilde S}} o)}(M)$ and we take a weak
($\sigma -$) topology on that
set, we know that any object is a limit of some sequence from $D'_{S
(\mathcal{\mathcal{\tilde S}} o)}(M)$ objects from that set (they form a dense subset of
that set). That is known from the classical theory. Such a
space is complete.

\subsection{Generalized tensor fields}\label{GTFsection}

\emph{This section is of crucial importance.} It provides us with
definitions of all the basic objects we are interested in, the
generalized tensor fields and all their subclasses of special
importance as well.

\subsubsection{The class $D'^{m}_{n}(M)$ of \emph{linear generalized tensor fields}}

First let us clearly state how to interpret the $J^{\mu}_{\nu}$ Jacobian in all the following definitions.~ It is a matrix of
piecewise smooth functions  $\Omega_{1}\setminus \Omega_{2}\to\mathbb{R}$,~ $\Omega_{1}$ being
a open subset of $\mathbb{R}^{4}$ and $\Omega_{2}$ having Lebesgue measure 0. Let it represent transformations from $Ch_{1}(\Omega_{Ch})$ to $Ch_{2}(\Omega_{Ch})$.
We can map the Jacobian by the inverse of the $Ch_{1}(\Omega_{Ch})$ coordinate mapping
to $\Omega_{Ch}$ and it will become a matrix of functions
$\Omega_{Ch}\setminus \Omega'\to\mathbb{R}$, ~~$\Omega'$ having Lebesgue measure 0.
~The object $J^{\mu}_{\nu}\cdot\omega$~ is then understood as a matrix of 4-forms from $C^{P}(\Omega_{Ch})$, which also means that outside $\Omega_{Ch}$ we trivially define them to be 0.

\theoremstyle{definition}
\newtheorem{defn6}{Definition}[section]
\begin{defn6}
Take some set $F$, ~$F_{1}\subseteq F\subseteq F_{2}$, where $F_{1}$
us such set that ~~$\exists\mathcal{\mathcal{\tilde S}}$~ and
~~$\exists \mathcal{\mathcal{\mathcal{\mathcal{S}}}}\subseteq
\mathcal{\mathcal{\tilde S}}$~~
($\mathcal{\mathcal{\mathcal{\mathcal{\mathcal{S}}}}}$ is some
maximal smooth atlas) defining it as:
\begin{equation*}
F_{1}\equiv \cup_{\Omega_{Ch}}\left\{
\big(\omega,Ch(\Omega_{Ch})\big):~\omega\in C^{P}_{S
(\mathcal{\tilde S})}(\Omega_{Ch}),~
Ch(\Omega_{Ch})\in\mathcal{\mathcal{\mathcal{\mathcal{\mathcal{S}}}}}\right\}.
\end{equation*}
$F_{2}$ is defined as:
\begin{equation*}
F_{2}\equiv \cup_{\Omega_{Ch}}\left\{
\big(\omega,Ch(\Omega_{Ch})\big):~\omega\in C^{P}(\Omega_{Ch}),~
Ch(\Omega_{Ch})\in\mathcal{A})\right\}.
\end{equation*}
By a ~$D'^{m}_{n}(M)$~ object, the \emph{linear generalized tensor
field}, we mean a linear mapping from ~$F\to\mathbb{R}^{4^{m+n}}$
for which the following holds: \footnote{We could also choose for
our basic objects maps taking ordered couples from
~$C^{P}(\Omega_{Ch})\times Ch(\Omega'_{Ch})$,
~($\Omega_{Ch}\neq\Omega'_{Ch}$). The linearity condition then
automatically determines their values, since for $\omega\in
C^{P}(\Omega_{Ch})$, whenever it holds that ~$\Omega'_{Ch}\cap
\hbox{supp}(\omega)=\{0\}$, they must automatically give ~0~ for any
chart argument. Hence these two definitions are trivially connected
and choice between them is just purely formal (only a matter of
``taste'').}

\begin{itemize}
\item
$\forall\Omega_{Ch}$~ it is ~$\forall Ch_{k}(\Omega_{Ch})\in
\mathcal{\mathcal{\mathcal{\mathcal{S}}}}\subset\mathcal{\mathcal{\tilde
S}}$~ continuous on the class $C^{P}_{S (\mathcal{\tilde
S})}(\Omega_{Ch})$. (Both $\mathcal{S}$, ~$\mathcal{\tilde S}$ are from the definition of  $F_{1}$.)
\item
This map also ~~$\forall\Omega_{Ch}$ ~~transforms between two charts
from its domain,\\~ $Ch_{1}(\Omega_{Ch})$, ~$Ch_{2}(\Omega_{Ch})$, as:

\begin{equation*}
 T^{\mu...\gamma}_{\nu...\delta}(Ch_{1},\omega)=T^{\alpha...\lambda}_{\beta...\rho}\left(Ch_{2},J^{\mu}_{\alpha}...J^{\gamma}_{\lambda}(J^{-1})^{\beta}_{\nu}....(J^{-1})^{\rho}_{\delta}\cdot\omega\right)~.
\end{equation*}

\item
The following consistency condition holds: If
~$\Omega'_{Ch}\subset\Omega_{Ch}$, then
~$T^{\mu...\alpha}_{\nu...\delta}$~  gives on ~$\omega~\times
~Ch(\Omega_{Ch})_{|\Omega'_{Ch}}$, ~$\omega\in
C^{P}(\Omega'_{Ch})$ the same results\footnote{By the ``same
results'' we mean that they are defined on the same domains, and by
the same values.} as on ~$\omega~\times ~Ch(\Omega_{Ch})$.
\end{itemize}

\end{defn6}

We can formally extend this notation
also for the case $m=n=0$. This means scalars, exactly as
defined before. So from now on $m$, $n$ take also the value 0, which means the theory in the following sections holds also for the scalar objects.

\subsubsection{Important subclasses of $D'^{m}_{n}(M)$}

\paragraph{Notation.} The following notation will be used:
\begin{itemize}
\item
By a complete analogy to scalars we define classes
~$D'^{m}_{n E}(M)$:
On arbitrary $\Omega_{Ch}$, being fixed in
arbitrary chart ~$Ch_{1}(\Omega_{Ch})\in\mathcal{A}$ we can express it in
another arbitrary chart ~$Ch_{2}(\Omega_{Ch})\in\mathcal{A}$, as an integral from a
multi-index matrix of piecewise continuous
functions on such subset of ~$C^{P}(M)$, on which the integral is
convergent\footnote{Actually we will use the
expression ``multi-index matrix'' also later in the text and it just means
specifically ordered set of functions.}.
\item
Analogously the class $D'^{m}_{n S}(M)\subset D'^{m}_{n E}(M)$ ~is defined by objects which can, for some maximal
smooth atlas $\mathcal{\mathcal{\mathcal{\mathcal{\mathcal{S}}}}}$, in arbitrary charts \footnote{The first chart is
an argument of this generalized tensor field and the second chart is
the one in which we express the given integral.}
~$Ch_{1}(\Omega_{Ch})$, ~$Ch_{2}(\Omega_{Ch})\in\mathcal{\mathcal{\mathcal{\mathcal{\mathcal{S}}}}}$, ~be expressed
by an integral from a multi-index matrix of smooth functions.
\item
$D'^{m}_{n (\cup_{l}At(\mathcal{\mathcal{\tilde S}}_{l}))}(M)$~ means a class of objects being for every $\Omega_{Ch}$ in every $Ch(\Omega_{Ch})\in At(\mathcal{\mathcal{\tilde S}}_{l})$ continuous on $C^{P}_{S(\mathcal{\mathcal{\tilde S}}_{l})}(\Omega_{Ch})$. (Here $At(\mathcal{\tilde S})$ stands for a map from atlases $\mathcal{\tilde S}$ to some subatlases of $\mathcal{A}$.)
\item
$D'^{m}_{n (\cup_{l}At(\mathcal{\mathcal{\tilde S}}_{l})o)}(M)$ means a class of objects from $D'^{m}_{n (\cup_{l}At(\mathcal{\mathcal{\tilde S}}_{l}))}(M)$ having as their domain the union
\[\cup_{\Omega_{Ch}}\cup_{l}\left\{ (\omega,Ch(\Omega_{Ch})):~\omega\in C^{P}_{S (\mathcal{\mathcal{\tilde S}}_{l})}(\Omega_{Ch}),~
Ch(\Omega_{Ch})\in At(\mathcal{\mathcal{\tilde S}}_{l})\right\}.\]
\item
If we have classes $D'^{m}_{n (\cup_{l}At(\mathcal{\mathcal{\tilde S}}_{l}))}(M)$ and $D'^{m}_{n (\cup_{l}At(\mathcal{\mathcal{\tilde S}}_{l})o)}(M)$ where $At(\mathcal{\mathcal{\tilde S}}_{l})=\mathcal{\mathcal{\mathcal{\mathcal{S}}}}_{l}\subset\mathcal{\mathcal{\tilde S}}_{l}$, we use the simple notation ~$D'^{m}_{n (\cup_{l} \mathcal{\mathcal{\mathcal{\mathcal{S}}}}_{l})}(M)$, $D'^{m}_{n (\cup_{l} \mathcal{\mathcal{\mathcal{\mathcal{S}}}}_{l}o)}(M)$.\footnote{ We have to realize that the subatlas $\mathcal{\mathcal{\mathcal{\mathcal{S}}}}_{n}$ specifies
completely the atlas $\mathcal{\mathcal{\tilde S}}_{n}$, since taking forms smooth in $\mathcal{\mathcal{\mathcal{\mathcal{S}}}}_{n}$ determines automatically the whole
set of charts in which they are still smooth. This fact contributes to the simplicity of this notation.}
\end{itemize}

\subsubsection{Definition of $D'^{m}_{n A}(M)$, hence \emph{generalized tensor
fields}}

\theoremstyle{definition}
\newtheorem{defn8}[defn6]{Definition}
\begin{defn8}
 Now define~$D'^{m}_{n (\mathcal{\mathcal{\mathcal{\mathcal{S}}}}) A}(M)$ to be the algebra constructed from the objects ~$D'^{m}_{n (\mathcal{\mathcal{\mathcal{\mathcal{S}}}})}(M)$ by the tensor product,
exactly in analogy to the case of scalars (this reduces for scalars to the
product already defined). The
object, being a result of the tensor product, is again a mapping ~$V\to\mathbb{R}^{4^{m+n}}$, defined in every chart by componentwise multiplication.
Now denote by $D'^{m}_{n A}(M)$ a set given as $\cup_{\mathcal{\mathcal{\mathcal{\mathcal{S}}}}} D'^{m}_{n (\mathcal{\mathcal{\mathcal{\mathcal{S}}}}) A}(M)$, meaning a union of all possible $D'^{m}_{n (\mathcal{\mathcal{\mathcal{\mathcal{S}}}}) A}(M)$. Call the objects belonging to this set \emph{the generalized tensor fields (GTF)}.
\end{defn8}

\paragraph{Notation.}
Furthermore let us use the same procedure as in the previous
definition, just instead of constructing the algebras from
the classes $D'^{m}_{n (\mathcal{\mathcal{\mathcal{\mathcal{S}}}})}(M)$, we now construct them only from the classes ~~$D'^{m}_{n
(\cup_{l}At(\mathcal{\mathcal{\tilde S}}_{l}))}(M)\cap D'^{m}_{n (\mathcal{\mathcal{\mathcal{\mathcal{S}}}})}(M)$,
~~(again by tensor product). For the union of such algebras we use the notation
~$D'^{m}_{n (\cup_{l}At(\mathcal{\mathcal{\tilde S}}_{l})) A}(M)$.

\subsubsection{Definition of $\Gamma-$objects, their classes and algebras}

\paragraph{Notation.}
Now let us define the generalized space of objects $\Gamma^{l}(M)$. (For example the Christoffel symbol would fall into this class.) These objects are defined exactly in the same way as $D'^{m}_{n}(M)$~ ($m+n=l$)~ objects, we just do not require that they transform between charts in the tensorial way, (second point in the definition of generalized tensor fields).

Note the following:
\begin{itemize}
\item
The definition of $\Gamma^{l}(M)$ includes also the case
$m=0$. Now we see, that the scalars can be
taken as subclass of $\Gamma^{0}(M)$, given by objects that are constants with respect to the chart argument.
\item
Note also that for a general $\Gamma^{m}(M)$ object
there is no meaningful differentiation between ``upper'' and
``lower'' indices, but we will still use formally the $T^{\mu...}_{\nu...}$ notation (for all cases).
\end{itemize}

\paragraph{Notation.}
In the same way, (by just not putting requirements on the transformation properties), we can generalize the classes
\begin{itemize}
\item
 $D'^{m}_{n (\cup_{l}At(\mathcal{\mathcal{\tilde S}}_{l}))}(M)$ ~to~ $\Gamma^{m+n}_{(\cup_{l}At(\mathcal{\mathcal{\tilde S}}_{l}))}(M)$,
\item
~$D'^{m}_{n (\cup_{l}At(\mathcal{\mathcal{\tilde S}}_{l})o)}(M)$ ~to~ $\Gamma^{m+n}_{(\cup_{l}At(\mathcal{\mathcal{\tilde S}}_{l})o)}(M)$,
\item
 ~$D'^{m}_{n (\cup_{l}\mathcal{\mathcal{\mathcal{\mathcal{S}}}}_{l})}(M)$ ~to~ $\Gamma^{m+n}_{(\cup_{l}\mathcal{\mathcal{\mathcal{\mathcal{S}}}}_{l})}(M)$,
\item
 ~$D'^{m}_{n (\cup_{l} \mathcal{\mathcal{\mathcal{\mathcal{S}}}}_{l}o)}(M)$ ~to~  $\Gamma^{m+n}_{(\cup_{l}\mathcal{\mathcal{\mathcal{\mathcal{S}}}}_{l}o)}(M)$,
\item
~$D'^{m}_{n E}(M)$ ~to~ $\Gamma^{m+n}_{E}(M)$,
\item
~$D'^{m}_{n (\cup_{l}At(\mathcal{\mathcal{\tilde S}}_{l})) A}(M)$ ~to~ $\Gamma^{m+n}_{(\cup_{l}At(\mathcal{\mathcal{\tilde S}}_{l})) A}(M)$, ~and
\item
$D'^{m}_{n A}(M)$~ to ~$\Gamma^{m+n}_{A}(M)$.
\end{itemize}
\medskip

It is obvious that all the latter classes contain all the former classes as their subclasses, (this is a result of what we called a ``generalization'').

Note that when we fix $\Gamma^{m}_{E}(M)$ objects in arbitrary chart from $\mathcal{A}$, they must be expressed by integrals from multi-index matrix of functions integrable on every compact set. In the case of the $D'^{m}_{n E}(M)$ subclass it can be required in only one chart, since the transformation properties together with boundedness of Jacobians and inverse Jacobians, provide that it must hold in any other chart from $\mathcal{A}$.
The specific subclass of $\Gamma^{m}_{E}(M)$
is ~$\Gamma^{m}_{S}(M)$, which is a subclass of distributions given in any chart from $\mathcal{A}$, (being an argument
of the given $\Gamma-$ object), by integrals from multi-index matrix of smooth
functions (when we express the integrals in the same chart, as the one taken as
the argument). ~$\Gamma^{m}_{S (\cup_{n}\mathcal{\mathcal{\mathcal{\mathcal{S}}}}_{n} o)}(M)$ stands again for ~$\Gamma^{m}_{S}(M)$~ objects with domain limited to

\[\cup_{\Omega_{Ch}}\cup_{n}\left\{ (\omega,Ch(\Omega_{Ch})):~\omega\in C^{P}_{S (\mathcal{\mathcal{\tilde S}}_{n})}(\Omega_{Ch}),~
Ch(\Omega_{Ch})\in\mathcal{\mathcal{\mathcal{\mathcal{\mathcal{S}}}}}_{n}\right\},\]
 where $\mathcal{\mathcal{\tilde S}}_{n}$ is given by the condition $\mathcal{\mathcal{\mathcal{\mathcal{S}}}}_{n}\subset\mathcal{\mathcal{\tilde S}}_{n}$.

\paragraph{Notation.}
Take some arbitrary elements
~$T^{\mu...}_{\nu...}\in\Gamma^{m}_{E}(M)$, ~$\omega\in
C^{P}(\Omega_{Ch})$,~ and ~$Ch_{k}(\Omega_{Ch})\in\mathcal{A}$.
~The $T^{\mu...}_{\nu...}(Ch_{k},\omega)$ can be always expressed as
$\int_{\Omega_{Ch}}T^{\mu...}_{\nu...}(Ch_{k})\cdot\omega$. ~Here
$T^{\mu...}_{\nu...}(Ch_{k})$ appearing under the integral denotes
some multi-index matrix of functions continuous to a maximal possible degree
on $\Omega_{Ch}$. For $T^{\mu...}_{\nu...}\in D'^{m}_{n E}(M)$ the
$T^{\mu...}_{\nu...}(Ch_{k})$ multi-index matrix components can be obtained from a tensor
field by:
\begin{itemize}
\item
expressing the tensor field components in $Ch_{k}(\Omega_{Ch})$ on some subset of $\mathbb{R}^{4}$,
\item
mapping the tensor field components to $\Omega_{Ch}$ by the inverse of $Ch_{k}(\Omega_{Ch})$.
\end{itemize}
Furthermore $\omega'(Ch_{k})$ will denote the 4-form scalar density in the
chart $Ch_{k}(\Omega_{Ch})$.

\subsubsection{Topology on $\Gamma^{m}_{(\cup_{n}\mathcal{\mathcal{\mathcal{\mathcal{S}}}}_{n}o)}(M)$}

If we take the class of ~$\Gamma^{m}_{(\cup_{n}
\mathcal{\mathcal{\mathcal{\mathcal{S}}}}_{n} o)}(M)$, and we impose
on this class the weak (point or $\sigma-$) topology, then the
subclass of ~$\Gamma^{m}_{(\cup_{n}
\mathcal{\mathcal{\mathcal{\mathcal{S}}}}_{n} o)}(M)$~ defined as
~$\Gamma^{m}_{S (\cup_{n}
\mathcal{\mathcal{\mathcal{\mathcal{S}}}}_{n} o)}(M)$ is dense in
~$\Gamma^{m}_{(\cup_{n}
\mathcal{\mathcal{\mathcal{\mathcal{S}}}}_{n} o)}(M)$. The same
holds for ~$D'^{m}_{n (\cup_{l}
\mathcal{\mathcal{\mathcal{\mathcal{S}}}}_{l} o)}(M)$ and $D'^{m}_{n
S (\cup_{l}\mathcal{\mathcal{\mathcal{\mathcal{S}}}}_{l} o)}(M)$.

\subsubsection{Definition of contraction}

\theoremstyle{definition}
\newtheorem{defn85}[defn6]{Definition}
\begin{defn85}
We define the contraction of a ~$\Gamma^{m}_{A}(M)$ object in the expected way: It is a map that transforms the object ~$T^{...\mu...}_{...\nu...}\in\Gamma^{m}_{A}(M)$~ to~ the object $T^{...\mu...}_{...\mu...}\in\Gamma^{m-2}_{A}(M)$.
\end{defn85}

Now contraction is a mapping ~$D'^{m}_{n}(M)\to
D'^{m-1}_{n-1}(M)$ and ~$\Gamma^{m}(M)\to\Gamma^{m-2}(M)$, but it is
not in general the mapping~ $D'^{m}_{n A}(M)\to D'^{m-1}_{n-1
A}(M)$, only~  $D'^{m}_{n A}(M)\to\Gamma^{m+n-2}_{A}(M)$.

\subsubsection{Interpretation of physical quantities}

The interpretation of
physical observables as ``amounts'' of quantities on the open sets
is dependent on our notion of volume. So how shall we get the notion
of volume in the context of our language? First, by volume we mean a
volume of an open set. But we will consider only open sets belonging
to some $\Omega_{Ch}$. So take some ~$\Omega_{Ch}$~ and some
arbitrary ~$\Omega'\subset\Omega_{Ch}$.~ Let us now assume that we
have a metric tensor from ~$D'^{m}_{n E}(M)$. This induces a
(volume) 4-form. Multiply this 4-form by a noncontinuous function
$\chi_{\Omega'}$ defined to be 1 inside ~$\Omega'$~ and everywhere
else 0. Call it $\omega_{\Omega'}$. Then by volume of an open set
$\Omega'$ we understand: ~$\int\omega_{\Omega'}$. Also
$\omega_{\Omega'}$ is object from ~$C^{P}(M)$ (particularly from
$C^{P}(\Omega_{Ch})$). The ``amounts'' of physical quantities on
$\Omega'$ we obtain, when the ~$D'^{m}_{n A}(M)$ objects act on
$\omega_{\Omega'}$.

\subsection{The relation of equivalence ($\approx$)}

This section now provides us with the fundamental concept of the theory,
the concept of equivalence of generalized tensor fields. Most of the first
part is devoted to fundamental definitions, the beginning of the second part
deals with the basic, important theorems, which just
generalize some of the basic Colombeau theory results to the
tensor product of generalized tensor fields. It adds several
important conjectures as well. The first part ends with the subsection
``some additional definitions'' and the second part with the subsection ``some additional theory''. They both deal with much
less central theoretical results, but they serve very well to put
light on what equivalence of generalized tensor fields means ``physically''.

\subsubsection{The necessary concepts to define the equivalence relation}

\paragraph{Notation.}
Take some subatlas of our atlas, this will be a
maximal subatlas of charts, which are maps to the whole of
$\mathbb{R}^{4}$. Such maps exist on each set $\Omega_{Ch}$ and they
will be denoted as $Ch'(\Omega_{Ch})$. We say that a chart $Ch'(\Omega_{Ch})$ is centered at the point $q\in\Omega_{Ch}$, if this point is mapped by this chart to 0 (in
$\mathbb{R}^{4}$). We will use the
notation $Ch'(q,\Omega_{Ch})$.

\paragraph{Notation.}
Take some $\Omega_{Ch}$,~$q\in\Omega_{Ch}$~ and
~$Ch'(q,\Omega_{Ch})\in\mathcal{\tilde S}$. The set of 4-forms
~$\omega_{\epsilon}\in A^{n}(\mathcal{\tilde
S},Ch'(q,\Omega_{Ch}))$ is defined in such way that
~$\omega_{\epsilon}\in C^{P}_{S (\mathcal{\mathcal{\tilde
S}})}(\Omega_{Ch})$ belongs to this class if:
\begin{itemize}
\item[a)] in the given $Ch'(q,\Omega'_{Ch})$, $\forall\epsilon$~ it holds that:

~~~~~$\int_{Ch'(\Omega'_{Ch})}(\prod_{i}x^{k_{i}}_{i})~\omega'_{\epsilon}(x)~d^{4}x=\delta_{k
0},~~ \sum_{i} k_{i}=k$,~~ $k\leq n$, ~~$n\in\mathbb{N}$,

\item[b)] the dependence on $\epsilon$ is in $Ch'(q,\Omega'_{Ch})$ given as $\epsilon^{-4}\omega'(\frac{x}{\epsilon})$.
\end{itemize}

\paragraph{Notation.}
Take an arbitrary ~$q$, ~~$\Omega_{Ch}$ ~($q\in\Omega_{Ch}$),
~~$Ch'(q,\Omega_{Ch})\in\mathcal{\mathcal{\tilde S}}$ ~and some natural number $n$. For
any ~$\omega_{\epsilon}\in A^{n}(Ch'(q,\Omega_{Ch}),\mathcal{\mathcal{\tilde S}})$ ~we can,
relatively to ~$Ch'(\Omega_{Ch})$, ~define a continuous set of maps
(depending on the parameter $y$)
\begin{equation}
A^{n}(Ch'(q,\Omega_{Ch}),\mathcal{\mathcal{\tilde S}})\to
C^{P}_{S (\mathcal{\mathcal{\tilde S}})}(\Omega_{Ch}),\label{basic}
\end{equation}
such that they are,  on ~$\Omega_{Ch}$ ~and in ~$Ch'(\Omega_{Ch})$, given as
~~$\omega'(\frac{x}{\epsilon}){\epsilon^{-4}}\to\omega'(\frac{y-x}{\epsilon})\epsilon^{-4}$.
~(To remind the reader $\omega'$ is the density expressing
 $\omega$ in this chart.) This gives us (depending on the parameter ~$y\in\mathbb{R}^{4}$) various $C^{P}_{S (\mathcal{\mathcal{\tilde S}})}(\Omega_{Ch})$ objects, such that they are in the fixed $Ch'(\Omega_{Ch})$ expressed by ~$\omega'(\frac{x-y}{\epsilon})~\epsilon^{-4}dx^{1}\bigwedge ...\bigwedge dx^{4}$. Denote these 4-form fields by $\tilde\omega_{\epsilon}(y)$.

\paragraph{Notation.}
Now, take any ~$T^{\mu...}_{\nu...}\in\Gamma^{m}_{ (At(\mathcal{\mathcal{\tilde S}}))A}(M)$. By applying it in an arbitrary fixed chart  ~~$Ch_{k}(\Omega_{Ch})\in At(\mathcal{\mathcal{\tilde S}},\Omega_{Ch})$~~
 on the 4-form field ~~$\tilde\omega_{\epsilon}(y)$,~~ obtained from ~~$\omega_{\epsilon}\in$ $A^{n}(Ch'(q,\Omega_{Ch}),\mathcal{\tilde
 S})$~ through the map (\ref{basic}), we get a function ~$\mathbb{R}^{4}\to\mathbb{R}^{4^{m+n}}$. As a consequence, the resulting function depends on the following objects:
~$T^{\mu...}_{\nu...}\left(\in\Gamma^{m}_{A}(M)\right),~\omega_{\epsilon}\left(\in
A^{n}(q,\Omega_{Ch},Ch'(\Omega_{Ch}))\right)$ ~and
$~Ch_{k}(\Omega_{Ch})$.~ We denote it by:
~~~~~$F'^{\mu...}_{\nu...}\big(T^{\mu...}_{\nu...}, \mathcal{\mathcal{\tilde S}},
~\Omega_{Ch}, Ch'(q,\Omega_{Ch}),n,\tilde\omega_{\epsilon}(y),
Ch_{k}(\Omega_{Ch})\big)$.

\subsubsection{Definition of the equivalence relation}

\theoremstyle{definition}
\newtheorem{defn11}{Definition}[section]
\begin{defn11}
 ~~$B^{\mu...}_{\nu...},T^{\mu...}_{\nu...}\in \Gamma^{m}_{A}(M)$
~are called equivalent ~($B^{\mu...}_{\nu...}\approx
T^{\mu...}_{\nu...}$), if:
\begin{itemize}
\item
 they belong to the same classes
~$\Gamma^{m}_{(At(\mathcal{\mathcal{\tilde S}}))A}(M)$,
\item
~$\forall\Omega_{Ch},~~~\forall q~~(q\in\Omega_{Ch}),~~~\forall
Ch'(q,\Omega_{Ch})\in\mathcal{\mathcal{\tilde S}}$ ~~~~(~such ~that ~~$B^{\mu...}_{\nu...}$,
~$T^{\mu...}_{\nu...}~\in\\ ~~\Gamma^{m}_{ (At(\mathcal{\mathcal{\tilde S}}))A}(M)$ ),~
~~$\forall Ch(\Omega_{Ch})~\in ~At(\mathcal{\mathcal{\tilde S}},\Omega_{Ch})$~~ $\exists n$,~
~~~such ~~that ~~~ $\forall \omega_{\epsilon}~\in \\
A^{n}(Ch'(q,\Omega_{Ch}),\mathcal{\mathcal{\tilde S}})$ ~~and for any compactly supported,
smooth function\\ $\mathbb{R}^{4}\to\mathbb{R}$, ~~$\phi$,~~ it holds:
\begin{eqnarray}
\lim_{\epsilon\to 0}
~\int_{\mathbb{R}^{4}}\bigg\{F'^{\mu...}_{\nu...}\big(B^{\mu...}_{\nu...}, q,
\Omega'_{Ch}, Ch'(\Omega'_{Ch}),n,\tilde\omega_{\epsilon}(y),
Ch(\Omega_{Ch})\big) \nonumber~~~~~~~~~~~~~~~~~~~~~~~~~~~~~~~\\
- ~F'^{\mu...}_{\nu...}\big(T^{\mu...}_{\nu...}, q, \Omega'_{Ch},
Ch'(\Omega'_{Ch}),n,\tilde\omega_{\epsilon}(y), Ch(\Omega_{Ch})\big)\bigg\} \cdot~\phi(y)
~d^{4}y~=~0.~~~~~~~
\end{eqnarray}
\end{itemize}
\end{defn11}

Note that for
$B^{\mu...}_{\nu...},C^{\mu...}_{\nu...},D^{\mu...}_{\nu...},T^{\mu...}_{\nu...}$
having the same domains and being from the same
$\Gamma^{n}_{(At(\mathcal{\tilde S}))}(M)$ classes, it trivially
follows that: $T^{\mu...}_{\nu...}\approx B^{\mu...}_{\nu...}$,~
$C^{\mu...}_{\nu...}\approx D^{\mu...}_{\nu...}$~ implies
~$\lambda_{1}T^{\mu...}_{\nu...}+\lambda_{2}C^{\mu...}_{\nu...}\approx\lambda_{1}B^{\mu...}_{\nu...}+\lambda_{2}D^{\mu...}_{\nu...}$~
for ~$\lambda_{1}, \lambda_{2}\in\mathbb{R}$.

\paragraph{Notation.}
Now since we have defined an equivalence relation, it divides the objects ~$\Gamma^{m}_{A}(M)$ naturally into equivalence classes. The set of such equivalence classes will be denoted as ~$\tilde\Gamma^{m}_{A}(M)$.
Later we may also use sets of more limited classes of
equivalence ~$\tilde D'^{m}_{n A}(M)$, ~$\tilde D'^{m}_{n E A}(M)$
(etc.), which contains equivalence classes (only) of the objects
belonging to ~$D'^{m}_{n A}(M)$, ~$D'^{m}_{n E A}(M)$
(etc.).

\paragraph{Notation.}
In some of the following theorems, (also for example in the definition of the covariant derivative), we will use some convenient notation:
Take some object $B^{\mu...}_{\nu...}\in\Gamma^{m}_{E}(M)$. The
expression ~$T^{\mu...}_{\nu...}(B^{\alpha...}_{\beta...}\omega)$ will be understood in the following way: Take ~$Ch_{k}(\Omega_{Ch}) \times \omega$ ~~($\omega\in C^{P}(\Omega_{Ch})$) from the domain of $T^{\mu...}_{\nu...}$. ~Then ~$B^{\alpha...}_{\beta...}(Ch_{k})\cdot\omega$~ is a multi-index matrix of $C^{P}(\Omega_{Ch})$ objects. This means that outside $\Omega_{Ch}$ set they are defined to be trivially 0. We substitute this multi-index matrix of $C^{P}(\Omega_{Ch})$ objects
to ~$T^{\mu...}_{\nu...}$, with the chart ~$Ch_{k}(\Omega_{Ch})$ taken as the argument.

\subsubsection{Relation to Colombeau equivalence}

A careful reader now understands the relation between our concept of equivalence and the Colombeau equivalence relation. It is simple: The previous definition just translates the Colombeau equivalence relation (see \cite{Amulti}) into our language and the equivalence classes will naturally preserve all the features of the Colombeau equivalence classes (this will be proven in the following theorems).

\subsubsection{Some additional definitions (concepts of associated field and
$\Lambda$ class)}

We define the concept (of association) to bring
some insight to what our concepts mean in the most simple (but most
important and useful) cases. It enables us to see better
the relation between the calculus we defined (concerning
equivalence) and the classical tensor calculus. It brings us also
better understanding of what equivalence means in
terms of physics (at least in the simple cases). It just means that the quantities might differ on the
large scales, but take the same small scale limit (for the small
scales they approach each other).

\theoremstyle{definition}
\newtheorem{defn12}[defn11]{Definition}
\begin{defn12}\label{DefAssoc}
 Take $T^{\mu...}_{\nu...}\in\Gamma^{m}_{A}(M)$. Assume that:
\begin{itemize}
 \item [a)]
~~$\forall \mathcal{\mathcal{\mathcal{\mathcal{S}}}}$,~~~ such~~ that ~~~$T^{\mu...}_{\nu...}~\in~\Gamma^{m}_{( \mathcal{\mathcal{\mathcal{\mathcal{S}}}})A}(M)$, ~~~$\forall\Omega_{Ch}$~~~ and ~~~$\forall Ch_{k}(\Omega_{Ch})~\in ~\mathcal{\mathcal{\mathcal{\mathcal{S}}}}$ \\  $\exists~\Omega_{Ch}~\setminus ~\Omega'(Ch_{k})$,~~~(the set ~$\Omega'(Ch_{k})$~ being ~0~ in
any Lebesgue measure),
~~such ~that ~~$\forall q~\in~\Omega_{Ch}~\setminus ~\Omega'(Ch_{k})$, ~~~$\forall Ch'(q,\Omega_{Ch})~\in ~\mathcal{\mathcal{\mathcal{\mathcal{S}}}}~\subset~\mathcal{\mathcal{\tilde S}}$  ~~~~$\exists n$,\\~ such ~that  ~$\forall \omega_{\epsilon}\in A^{n}(Ch'(q,\Omega_{Ch}),\mathcal{\mathcal{\tilde S}})$
\begin{equation}
\exists~\lim_{\epsilon\to 0}
T^{\mu...}_{\nu...}(Ch_{k},\omega_{\epsilon}).\label{limit}
\end{equation}
\item [b)]
The limit (\ref{limit}) is
~~$\forall Ch'(q,\Omega_{Ch})~\in ~\mathcal{\mathcal{\mathcal{\mathcal{S}}}}~\subset~\mathcal{\mathcal{\tilde S}}$,~ ~$\forall\omega_{\epsilon}~\in ~A^{n}(Ch'(q,\Omega_{Ch}),\mathcal{\mathcal{\tilde S}})$ ~the same.
\end{itemize}
 If both $a)$ and $b)$ hold, then the object defined by the limit (\ref{limit})
 is a mapping:
\begin{equation}
Ch(\Omega_{Ch})(\in\mathcal{\mathcal{\mathcal{\mathcal{\mathcal{S}}}}}))\times\Omega_{Ch}\setminus \Omega'(Ch)\to\mathbb{R}^{4^{m+n}}.
\end{equation}
 We call this map the field associated to ~$T^{\mu...}_{\nu...}\in\Gamma^{m}_{A}(M)$,
 and we use the expression
 ~$A_{s}(T^{\mu...}_{\nu...})$. (It necessarily fulfills the same consistency conditions for ~$\Omega^{1}_{Ch}\subset\Omega^{2}_{Ch}$ as the $\Gamma^{m}_{A}(M)$ objects.)

\end{defn12}

\theoremstyle{definition}
\newtheorem{deff12}[defn11]{Definition}
\begin{deff12}
Denote by ~$\Lambda\subset\Gamma^{m}_{E (\cup_{n}At(\mathcal{\mathcal{\tilde S}}_{n}) o)}(M)$ a class of objects, such that
each ~$T^{\mu...}_{\nu...}\in\Lambda$~~~ can be
~~~$\forall\Omega_{Ch}$,~~~ $\forall n$,~~~ $\forall\omega\in C^{P}_{S (\mathcal{\mathcal{\tilde S}}_{n})}(\Omega_{Ch})$, ~~~$\forall \mathcal{\mathcal{\mathcal{\mathcal{S}}}}\subset\mathcal{\mathcal{\tilde S}}_{n}~\cap ~At(\mathcal{\mathcal{\tilde S}}_{n})$,\\~~ $\forall Ch_{k}(\Omega_{Ch})\in\mathcal{\mathcal{\mathcal{\mathcal{\mathcal{S}}}}}$~ expressed as a map
\begin{equation}
(\omega,Ch_{k})\to\int_{\Omega_{Ch}}T^{\mu...}_{\nu...}(Ch_{k})\cdot\omega,
\end{equation}
where for $T^{\mu...}_{\nu...}(Ch_{k})$ holds the following: In each
chart from $\mathcal{\mathcal{\mathcal{\mathcal{\mathcal{S}}}}}$ for
every point ~$\mathbf{z_{0}}$, ~where $T^{\mu...}_{\nu...}(Ch_{k})$
is continuous ~$\exists~\delta>0,~\exists~ K^{\mu}_{\nu}>0$, such
that ~$\forall\epsilon$~ $(0\leq\epsilon\leq\delta))$ and for
arbitrary unit vector $\mathbf{n}$~(in the Euclidean metric on
$\mathbb{R}^{4}$)~~
\begin{equation}
T^{\mu...}_{\nu...}(Ch_{k},\mathbf{z_{0}})-K^{\mu...}_{\nu....}\epsilon\leq
T^{\mu...}_{\nu...}(Ch_{k},\mathbf{z_{0}}+\mathbf{n}\epsilon)\leq
T^{\mu...}_{\nu...}(Ch_{k},\mathbf{z_{0}})+K^{\mu...}_{\nu...}\epsilon.
\end{equation}
\label{Lambda}
\end{deff12}

\paragraph{Notation.}
Take from (\ref{Lambda}) arbitrary, fixed ~$T^{\mu...}_{\nu...}$,
~$\Omega_{Ch}$ ~and ~$Ch_{k}(\Omega_{Ch})$. ~By the notation ~$\tilde\Omega(Ch_{k})\subset\Omega_{Ch}$ we denote a set (having Lebesgue measure 0) on
which is  ~$T^{\mu...}_{\nu...}(Ch_{k})$ discontinuous.

\subsubsection{Reproduction of the basic results by the equivalence relation}

\newtheorem{uniqueness}{Theorem}[section]

\begin{uniqueness}
 Any class ~$\tilde\Gamma^{m}_{(\cup_{n}At(\mathcal{\mathcal{\tilde S}}_{n}) o)A}(M)$ contains maximally one linear element.
\end{uniqueness}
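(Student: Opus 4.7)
The plan is to suppose that $B^{\mu\ldots}_{\nu\ldots}$ and $T^{\mu\ldots}_{\nu\ldots}$ are two linear elements lying in the same equivalence class of $\tilde\Gamma^{m}_{(\cup_{n}At(\mathcal{\tilde S}_{n}) o)A}(M)$, set $D^{\mu\ldots}_{\nu\ldots} := B^{\mu\ldots}_{\nu\ldots} - T^{\mu\ldots}_{\nu\ldots}$, and show that $D^{\mu\ldots}_{\nu\ldots}$ is the zero map on its full domain. Note that $D^{\mu\ldots}_{\nu\ldots}$ is linear (difference of two linear maps with identical domains), lies in $D'^{m}_{n (\cup_{n} At(\mathcal{\tilde S}_{n}) o)}(M)$, and by the linearity remark following the definition of $\approx$ it satisfies $D^{\mu\ldots}_{\nu\ldots} \approx 0$. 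So the task reduces to: show that a linear element of $D'^{m}_{n (\cup_{n} At(\mathcal{\tilde S}_{n}) o)}(M)$ which is equivalent to $0$ must vanish identically.

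To establish vanishing, I would pick an arbitrary pair $(\omega,Ch(\Omega_{Ch}))$ in the domain of $D^{\mu\ldots}_{\nu\ldots}$, with $\omega \in C^{P}_{S(\mathcal{\tilde S}_{n})}(\Omega_{Ch})$ and $Ch(\Omega_{Ch}) \in At(\mathcal{\tilde S}_{n})$, and show $D^{\mu\ldots}_{\nu\ldots}(Ch,\omega)=0$. Choose a point $q$ and a chart $Ch'(q,\Omega_{Ch}) \in \mathcal{\tilde S}_{n}$, and let $\omega_{\epsilon} \in A^{n}(Ch'(q,\Omega_{Ch}),\mathcal{\tilde S}_{n})$ be the scaled mollifier supplied by the equivalence condition; let $\tilde\omega_{\epsilon}(y)$ denote its $y$-translate in that chart. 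The equivalence $D^{\mu\ldots}_{\nu\ldots} \approx 0$ then guarantees
\begin{equation*}
\lim_{\epsilon \to 0} \int_{\mathbb{R}^{4}} D^{\mu\ldots}_{\nu\ldots}\bigl(Ch,\tilde\omega_{\epsilon}(y)\bigr)\,\phi(y)\,d^{4}y = 0
\end{equation*}
for every compactly supported smooth $\phi:\mathbb{R}^{4}\to\mathbb{R}$.

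The key step is now to exchange $D^{\mu\ldots}_{\nu\ldots}$ with the $y$-integration. By linearity, any Riemann sum satisfies
\begin{equation*}
\sum_{i} D^{\mu\ldots}_{\nu\ldots}\bigl(Ch,\tilde\omega_{\epsilon}(y_{i})\bigr)\phi(y_{i})\Delta y_{i} = D^{\mu\ldots}_{\nu\ldots}\Bigl(Ch,\sum_{i}\tilde\omega_{\epsilon}(y_{i})\phi(y_{i})\Delta y_{i}\Bigr),
\end{equation*}
and as the mesh refines, the argument on the right converges in $C^{P}_{S(\mathcal{\tilde S}_{n})}(\Omega_{Ch})$ (with the topology defined in section \ref{basicconcepts}) to the $4$-form whose density in $Ch'$ is the ordinary convolution $(\omega'_{\epsilon} * \phi)(x)$. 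Continuity of $D^{\mu\ldots}_{\nu\ldots}$ on $C^{P}_{S(\mathcal{\tilde S}_{n})}(\Omega_{Ch})$ therefore lets me commute $D^{\mu\ldots}_{\nu\ldots}$ with the integral and conclude $\lim_{\epsilon \to 0} D^{\mu\ldots}_{\nu\ldots}(Ch,\omega_{\epsilon} * \phi) = 0$. Since $\omega_{\epsilon} * \phi \to \phi$ in that same topology (standard mollifier convergence for smooth compactly supported densities), a second application of continuity gives $D^{\mu\ldots}_{\nu\ldots}(Ch,\phi)=0$ for every $\phi$ whose density in $Ch'$ is smooth and compactly supported in $\Omega_{Ch}$; but this already sweeps out the whole class $C^{P}_{S(\mathcal{\tilde S}_{n})}(\Omega_{Ch})$, so $D^{\mu\ldots}_{\nu\ldots}$ vanishes on its entire domain, and hence $B^{\mu\ldots}_{\nu\ldots} = T^{\mu\ldots}_{\nu\ldots}$.

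The main obstacle I expect is the Riemann-sum/continuity argument that lets one commute $D^{\mu\ldots}_{\nu\ldots}$ with the $y$-integration: one must verify that the map $y \mapsto \tilde\omega_{\epsilon}(y)\phi(y)$, viewed with values in $C^{P}_{S(\mathcal{\tilde S}_{n})}(\Omega_{Ch})$, is continuous, compactly supported, and has Riemann sums that converge to $\omega_{\epsilon}*\phi$ not just pointwise but in the topology under which $D^{\mu\ldots}_{\nu\ldots}$ is continuous (uniform convergence of all partial derivatives on compacts). This is where the boundedness properties of the mollifiers in $A^{n}(Ch'(q,\Omega_{Ch}),\mathcal{\tilde S}_{n})$ and the smoothness of $\phi$ both have to be exploited; all remaining steps are then routine manipulations already implicit in the definitions of section \ref{basicconcepts}.
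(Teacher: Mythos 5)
Your proposal is correct in substance, but it reaches the conclusion by a different route than the paper. The paper's proof is a reduction-plus-citation: fix $\Omega_{Ch}$, an atlas $\mathcal{\tilde S}$ from the common domain and a chart $Ch'(\Omega_{Ch})\in\mathcal{\tilde S}$, push the test forms $C^{P}_{S(\mathcal{\tilde S})}(\Omega_{Ch})$ forward to compactly supported smooth functions on $\mathbb{R}^{4}$, note that for each fixed chart argument $Ch(\Omega_{Ch})\in At(\mathcal{\tilde S})$ the two linear elements become (vector-valued) distributions and the equivalence reduces to the $\mathbb{R}^{4}$ Colombeau equivalence, and then invoke the known result that no two distinct distributions are equivalent; letting the fixed data range over everything covers the whole domain. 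You instead re-prove that underlying fact directly: setting $D=B-T\approx 0$, commuting the linear map with the $y$-integration via Riemann sums and continuity so that the equivalence integral equals $D(Ch,\omega_{\epsilon}*\phi)$, and using $\omega_{\epsilon}*\phi\to\phi$ (this is where the normalization $\int\omega'_{\epsilon}=1$ built into the $A^{n}$ classes enters) plus continuity to get $D(Ch,\cdot)=0$ on all of $C^{P}_{S(\mathcal{\tilde S}_{n})}(\Omega_{Ch})$. This is essentially the standard proof of the very Colombeau statement the paper cites, transplanted into the geometric language; it buys self-containedness and makes explicit where continuity and the mollifier moment conditions are used, at the price of the bookkeeping you flag yourself — checking that the Riemann sums converge, and that the convolution form actually lies, in $C^{P}_{S(\mathcal{\tilde S}_{n})}(\Omega_{Ch})$ with its topology (all derivatives, common compact support, all charts of $\mathcal{\tilde S}_{n}$). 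The paper sidesteps exactly this bookkeeping by outsourcing it to the $\mathbb{R}^{n}$ theory, so your version is a valid, more elementary alternative rather than a gap.
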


\begin{proof}
We need to prove that there do not exist such two elements of
~$\Gamma^{m}(M)$, which are equivalent. Take two elements $B$ and $T$
from the class $\Gamma^{m}_{(\cup_{n}At(\mathcal{\mathcal{\tilde S}}_{n})o)A}(M)$  (both with the given domains and continuity). Take
arbitrary ~$\Omega_{Ch}$, arbitrary ~$\mathcal{\mathcal{\tilde S}}$ from their domains,
and arbitrary ~$Ch'(\Omega_{Ch})\in\mathcal{\mathcal{\tilde S}}$. Map all the
~$C^{P}_{S (\mathcal{\mathcal{\tilde S}})}(\Omega_{Ch})$ objects to smooth, compact
supported functions on $\mathbb{R}^{4}$ through this fixed chart
mapping. Now both $B$ and $T$ give, in fixed but arbitrary
~$Ch(\Omega_{Ch})\in At(\mathcal{\mathcal{\tilde S}})$ linear, continuous maps on the compactly supported smooth
functions. (The only difference from Colombeau distributions is that
it is in general a map to $\mathbb{R}^{m}$, so the difference is only ``cosmetic''.)

Now after applying this construction, our
concept of equivalence reduces for every
~$Ch(\Omega_{Ch})\in At(\mathcal{\mathcal{\tilde S}})$ to Colombeau equivalence from
\cite{Amulti}. The same results must hold. One of the
results says that there are no two distributions being
equivalent. All the parameters are fixed but arbitrary and
all the 4-forms from domains of $B$ and $T$ can be mapped to the
~$\mathbb{R}^{4}$ functions for some proper fixing of ~$\Omega_{Ch}$
and $\mathcal{\mathcal{\tilde S}}$. Furthermore, the $C^{P}_{S}(M)$ 4-forms are arguments of $B$ and $T$ only in the
charts, in which $B$ and $T$ were compared as maps on the spaces of
~$\mathbb{R}^{4}$ functions. So this ``arbitrary
chart fixing'' covers all their domain. As a result $B$ and $T$ must be identical and that is what needed to be proven.
\end{proof}

\newtheorem{extension}[uniqueness]{Theorem}
\begin{extension}
 Any class of equivalence ~$\tilde\Gamma^{m}_{E A}(M)$ contains maximally one linear element.
\end{extension}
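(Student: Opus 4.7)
The plan is to mimic the strategy of the preceding theorem, reducing the statement chart-by-chart to the classical Colombeau uniqueness result, but with the added twist that elements of $\Gamma^{m}_{E A}(M)$ are \emph{regular} objects given by integrals against piecewise continuous multi-index matrices. So the target is to promote ``equivalent'' to ``pointwise equal (a.e.)\ in every chart'' and then to ``genuinely equal as $\Gamma^{m}_{E}(M)$ objects''.

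First I would take two linear elements $B^{\mu\ldots}_{\nu\ldots},T^{\mu\ldots}_{\nu\ldots}\in\Gamma^{m}_{E}(M)$ lying in the same equivalence class. Fix an arbitrary $\Omega_{Ch}$, an arbitrary atlas $\mathcal{\tilde S}$ with $B,T\in\Gamma^{m}_{(At(\mathcal{\tilde S}))A}(M)$, a chart $Ch_{k}(\Omega_{Ch})\in At(\mathcal{\tilde S})$, and write $B$ and $T$ in this chart as integrals against multi-index matrices $B^{\mu\ldots}_{\nu\ldots}(Ch_{k})$ and $T^{\mu\ldots}_{\nu\ldots}(Ch_{k})$ of piecewise continuous functions on $\Omega_{Ch}$. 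Pulling everything back to $\mathbb{R}^{4}$ via the chart mapping, the equivalence relation of the present theory reduces componentwise to the classical Colombeau equivalence between two regular distributions defined by locally integrable functions. Invoking the Colombeau uniqueness result (which is precisely the engine used in the previous theorem, and which applies here without change because piecewise continuous functions are locally integrable), I conclude that $B^{\mu\ldots}_{\nu\ldots}(Ch_{k})$ and $T^{\mu\ldots}_{\nu\ldots}(Ch_{k})$ agree almost everywhere on the image of $\Omega_{Ch}$.

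Next I would upgrade ``a.e.\ equal'' to ``equal as representatives continuous to the maximal possible degree''. Both $B^{\mu\ldots}_{\nu\ldots}(Ch_{k})$ and $T^{\mu\ldots}_{\nu\ldots}(Ch_{k})$ are, by the standing convention from Section \ref{basicconcepts}, functions continuous to the maximal possible degree; two such functions that coincide off a Lebesgue-null set must coincide everywhere they are defined. (Indeed, on the common domain where both are continuous the a.e.\ equality forces pointwise equality by continuity, and at the remaining points the maximality clause forbids any resolvable discontinuity from being dropped or redefined to break the agreement.) Hence $B^{\mu\ldots}_{\nu\ldots}(Ch_{k})= T^{\mu\ldots}_{\nu\ldots}(Ch_{k})$ identically, and so $B$ and $T$ act identically on every $\omega\in C^{P}(\Omega_{Ch})$ lying in both their domains when the chart argument is $Ch_{k}$.

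Finally I would globalize. Since $\Omega_{Ch}$, $\mathcal{\tilde S}$, and $Ch_{k}\in At(\mathcal{\tilde S})$ were arbitrary, the equality holds in every chart within each $At(\mathcal{\tilde S})$ in the common domain of $B$ and $T$. The tensorial transformation rule (together with the boundedness of Jacobians and inverse Jacobians on compact sets built into the definition of $(M,\mathcal{A})$) then propagates the identity to the whole atlas $\mathcal{A}$, and the consistency condition for nested $\Omega'_{Ch}\subset\Omega_{Ch}$ stitches the local identities into a single global equality of maps. Therefore $B^{\mu\ldots}_{\nu\ldots}=T^{\mu\ldots}_{\nu\ldots}$ as elements of $\Gamma^{m}_{E}(M)$, proving the theorem. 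The only step that I expect to require genuine care is the second one: verifying that a.e.\ agreement of two piecewise continuous representatives, each ``continuous to the maximal possible degree'', actually forces literal agreement on their whole domain of definition, rather than merely agreement up to a set of measure zero. This hinges on the maximality clause, and is the only place where our framework really differs from the standard Colombeau argument reproduced in the preceding theorem.
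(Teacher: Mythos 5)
Your proposal is correct and follows essentially the paper's route: reduce the comparison chart by chart to the classical Colombeau statement that no two distinct distributions are equivalent, and then note that agreement on the smooth-density test forms forces agreement on the whole domain. The paper disposes of that last step with a one-line remark (the action on $C^{P}_{S}(M)$ uniquely determines how a $\Gamma^{m}_{E}(M)$ element acts outside $C^{P}_{S}(M)$), whereas you route it through a.e.\ equality of the piecewise continuous densities; these amount to the same thing, and your version is if anything more explicit. Two small caveats. First, your second step (upgrading a.e.\ equality of representatives to literal pointwise equality via the ``continuous to the maximal possible degree'' convention) is not needed for the statement: the linear elements are \emph{maps}, and a.e.\ equality of the densities already makes the two integrals, hence the two maps, coincide on every $\omega$ in their common domain. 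Second, your globalization step leans on ``the tensorial transformation rule,'' but the theorem concerns $\tilde\Gamma^{m}_{E A}(M)$, and $\Gamma$-objects carry no transformation law --- that is exactly what distinguishes $\Gamma^{m}_{E}(M)$ from $D'^{m}_{n\,E}(M)$. Fortunately this appeal is also unnecessary: as the paper points out, $\Gamma^{m}_{E}$ elements are defined and continuous on each $C^{P}_{S(\mathcal{\tilde S})}$ in \emph{every} chart of $\mathcal{A}$, so your ``arbitrary chart'' already ranges over all of $\mathcal{A}$ and the local identities cover the full domain directly.
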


\begin{proof}
 First notice that the elements of ~$\Gamma^{m}_{E A}(M)$ are continuous and
 defined on every ~$C^{P}_{S (\mathcal{\mathcal{\tilde S}})}(M)$ in every
chart from $\mathcal{A}$, so they are required to be compared in any arbitrary chart from $\mathcal{A}$. By taking this into account, we can repeat the previous proof. There is one additional trivial fact one has to notice:
the ~$C^{P}_{S}(M)$ domain also uniquely determines how the ~$\Gamma^{m}_{E}(M)$ element acts
outside ~$C^{P}_{S}(M)$. So if $B,~T\in \Gamma^{m}_{E A}(M)$ give the same map on ~$C^{P}_{S}(M)$, they give the same map everywhere.
\end{proof}

\newtheorem{multiplication}[uniqueness]{Theorem}

\begin{multiplication}\label{multiplication}
The following statements hold:
\begin{itemize}
\item[a)] Take ~$T^{\mu...\alpha}_{\nu...\beta}\in\Gamma^{a}_{E A}(M)$ ~such that ~$\forall\Omega_{Ch}$,
~$\forall Ch_{k}(\Omega_{Ch})\in\mathcal{A}$ ~and ~$\forall\omega\in C^{P}(\Omega_{Ch})$ ~~$T^{\mu...\alpha}_{\nu...\beta}$~ is defined as a map
\begin{equation}
(Ch_{k},\omega)\to\int_{\Omega_{Ch}}
T^{\mu...}_{1~\nu...}(Ch_{k})~\omega~...~\int_{\Omega_{Ch}} T^{\alpha...}_{N \beta...}(Ch_{k})~\omega~.\label{Teq}
\end{equation}
Then the class of equivalence ~$\tilde\Gamma^{a}_{E A}(M)$,
to which ~$T^{\mu...}_{\nu...}$~ belongs,
contains a linear element defined (on arbitrary ~$\Omega_{Ch}$) as the map: ~~$\forall\omega~\in ~C^{P}(\Omega_{Ch})$, ~~$\forall Ch_{k}(\Omega_{Ch})~\in ~\mathcal{A}$,
\begin{eqnarray}
(Ch_{k},\omega)~\to~\int_{\Omega_{Ch}}
T^{\mu...}_{1~\nu...}(Ch_{k})...T^{\alpha...}_{N \beta...}(Ch_{k})~\omega~,
\end{eqnarray}
if and only if ~~$\forall\Omega_{Ch}$,~~ $\forall Ch_{k}(\Omega_{Ch})~\in ~\mathcal{A}$ ~~~ $\exists ~Ch_{l}(\Omega_{Ch})~\in ~\mathcal{A}$, ~~such that
\begin{eqnarray}
\int_{Ch_{l}(\Omega_{Ch})_{|\Omega'}} T^{\mu...}_{1~\nu...}(Ch_{k})...T^{...\alpha}_{N...\beta}(Ch_{k}) ~d^{4}x~~~~~~~~~~~~~~~~~~~~~~~~~
\end{eqnarray}
converges on every
compact set $\Omega'\subset\Omega_{Ch}$.

The same statement holds,
if we take instead of ~$\Gamma^{m}_{E A}(M)$ its subclass ~$D'^{a}_{b
E A}(M)$ and instead of the equivalence class ~$\tilde\Gamma^{m}_{E
A}(M)$, the equivalence class ~$\tilde D'^{a}_{b E A}(M)$.

The same statement also holds if we take instead of ~$\Gamma^{m}_{E
A}(M)$ and ~$D'^{m}_{n E A}(M)$ classes, the classes ~$\Gamma^{m}_{E
A (\cup_{l}At(\mathcal{\mathcal{\tilde S}}_{l}) o)}(M)$ and
~$D'^{m}_{n E A (\cup_{l}At(\mathcal{\mathcal{\tilde
S}}_{l})o)}(M)$, ~(with the exception that the given convergence
property shall be considered only for charts from ~
$\cup_{l}At(\mathcal{\tilde S}_{l})$ ).
\item[b)] For any distribution ~~$A^{\alpha...}_{\beta...}~\in ~\Gamma^{a}_{ S
(\cup_{n} \mathcal{\mathcal{\mathcal{\mathcal{S}}}}_{n}~ o)}(M)$,~~ and an element
~~$T^{\mu...}_{\nu...}\in\\
\Gamma^{m}_{(\cup_{n} \mathcal{\mathcal{\mathcal{\mathcal{S}}}}_{n})}(M)$, ~~we have
that ~~$A^{\alpha...}_{\beta...}T^{\mu...}_{\nu...}$~~ is equivalent
to an element of\\ ~~$\Gamma^{m+a}_{(\cup_{n} \mathcal{\mathcal{\mathcal{\mathcal{S}}}}_{n} ~o)}(M)$, ~~(and~
for ~subclasses ~~$D'^{a}_{b
(\cup_{n}\mathcal{\mathcal{\mathcal{\mathcal{S}}}}_{n})}(M)~\subset~\Gamma^{a+b}_{(\cup_{n}\mathcal{\mathcal{\mathcal{\mathcal{S}}}}_{n})}(M)$ ~~and~~\\
$D'^{k}_{l S( \cup_{n}\mathcal{\mathcal{\mathcal{\mathcal{S}}}}_{n}
o)}(M)~\in~\Gamma^{k+l}_{S (\cup_{n}
\mathcal{\mathcal{\mathcal{\mathcal{S}}}}_{n} ~o)}(M)$ ~~it ~is
~equivalent ~to ~an ~element ~of ~~$D'^{k+a}_{l+b
(\mathcal{\mathcal{\mathcal{\mathcal{S}}}} ~o)}(M)$).~ The element
is on its domain given as the mapping
\begin{equation}
(\omega,Ch_{k})\to T^{\mu...}_{\nu...}(A^{\alpha...}_{\beta...}\omega).
\end{equation}
\item[c)] For ~any ~tensor ~distribution ~~$A^{\alpha...}_{\beta...}~\in
~\Gamma^{a}_{S}(M)$  ~~~and ~an ~element
~~$T^{\mu...}_{\nu...}~\in\\
\Gamma^{m}_{(\cup_{n} \mathcal{\mathcal{\mathcal{\mathcal{S}}}}_{n} ~o)}(M)$,
~we ~have ~that ~~$A^{\alpha...}_{\beta...}T^{\mu...}_{\nu...}$~~ is
equivalent ~to ~an ~element ~of ~~\\$\Gamma^{m+a}_{(\cup_{n} \mathcal{\mathcal{\mathcal{\mathcal{S}}}}_{n}
~o)}(M)$. ~The element is on its domain given as mapping
\begin{equation}
(\omega,Ch_{k})\to T^{\mu...}_{\nu...}(A^{\alpha...}_{\beta...}\omega).
\end{equation}
\end{itemize}
\end{multiplication}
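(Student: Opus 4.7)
The approach I would take is to reduce every claim to its Colombeau $\mathbb{R}^{4}$ counterpart chart by chart, exactly in the spirit of the two uniqueness theorems proved immediately above. Concretely, I would fix $\Omega_{Ch}$ and a chart $Ch_{k}(\Omega_{Ch})$, transport each $C^{P}_{S(\mathcal{\tilde S})}$ test form through the chart mapping to obtain a compactly supported smooth test density on a subset of $\mathbb{R}^{4}$, and observe that under this identification each component of our generalized tensor field becomes an $\mathbb{R}^{4}$-valued functional on test functions of precisely the type treated in the classical Colombeau equivalence calculus of \cite{multi, Amulti}. The equivalence relation of Definition~4.3 then reduces, chart-by-chart, to standard Colombeau equivalence, and bounded Jacobians (together with bounded inverse Jacobians) guarantee that chart changes preserve both the function-space membership and the integrability hypothesis.

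For part a), the stated convergence condition on the iterated integral is exactly the local integrability of the pointwise product of the $T_{i}(Ch_{k})$ in the chart $Ch_{l}$; boundedness of Jacobians propagates this to every other chart in $\mathcal{A}$ (resp.\ $\cup_{l} At(\mathcal{\tilde S}_{l})$). Under this hypothesis the classical Colombeau result, that the product of regular embeddings of piecewise continuous functions is equivalent to the embedding of their pointwise product (cf.\ \cite{multi}), yields the asserted equivalence componentwise, after which one verifies that the candidate linear element satisfies the tensorial transformation law (automatic, since $T_{1}\cdots T_{N}$ is a polynomial in tensorial data and the equivalence relation is invariant under bounded-Jacobian chart changes) and the $\Omega_{Ch}$-consistency condition (inherited from the consistency of each factor). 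For the converse direction I would suppose that the equivalence class contains some linear $L$, invoke the uniqueness theorems already established to force $L$ to coincide with the pointwise-product integral on every smooth test form where the latter is defined, and then read off local integrability from the fact that $L$ lies in $\Gamma^{a}_{EA}$ (resp.\ $D'^{a}_{bEA}$) and is therefore representable by a locally integrable density in some chart.

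For part b), smoothness of $A^{\alpha\ldots}_{\beta\ldots}$ in the subatlas $\cup_{n}\mathcal{S}_{n}$ ensures that $A(Ch_{k})$ is a smooth function in every $Ch_{k}\in\cup_{n}\mathcal{S}_{n}$, and hence $A(Ch_{k})\cdot\omega\in C^{P}(\Omega_{Ch})$ lies in the domain of $T$. The classical Colombeau ``smooth times distribution'' lemma then applies componentwise and gives the equivalence of $A\cdot T$ with the linear map $(\omega,Ch_{k})\mapsto T^{\mu\ldots}_{\nu\ldots}(A^{\alpha\ldots}_{\beta\ldots}\omega)$; the tensorial character for the $D'$ subclasses is automatic because multiplication by a smooth tensor field commutes with the tensorial change of coordinates. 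Part c) is an immediate strengthening, since $A\in\Gamma^{a}_{S}(M)$ is smooth in \emph{every} chart of $\mathcal{A}$, so no restriction to a privileged subatlas is required and the verification of part b) transfers verbatim.

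The step I expect to be the principal obstacle is the ``only if'' direction in part a), specifically maintaining the chart-consistency condition (the third bullet of the definition of $D'^{m}_{n}$) for the candidate linear element when the product integrand becomes singular on Lebesgue-null sets that differ from chart to chart. Bounded-Jacobian transport of null sets, combined with the density of $\Gamma^{m}_{S(\cup_{n}\mathcal{S}_{n}o)}$ in $\Gamma^{m}_{(\cup_{n}\mathcal{S}_{n}o)}$ in the weak topology, ought to close the argument, but the bookkeeping is delicate. A secondary technical point, arising in part b), is to ensure that the moment-vanishing order $n$ of the mollifier class $A^{n}(Ch'(q,\Omega_{Ch}),\mathcal{\tilde S})$ is chosen at least as large as the distributional order of $T$, so that the cross terms produced when the mollifier is distributed across the smooth factor $A$ vanish in the $\epsilon\to 0$ limit; this is the standard Colombeau moment-cancellation trick and I would invoke it directly.
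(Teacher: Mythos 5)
Your proposal follows essentially the same route as the paper's own proof: fix $\Omega_{Ch}$ and a chart, transport the test 4-forms through the chart mapping to compactly supported smooth densities on $\mathbb{R}^{4}$, and invoke the classical Colombeau equivalence results componentwise — the product of regular embeddings of piecewise continuous functions being equivalent to the embedding of the pointwise product under the local-integrability hypothesis for a), and the ``smooth times distribution'' lemma for b) and c) — with tensoriality and the restricted-domain variants following directly. The extra care you devote to the ``only if'' direction and to mollifier moment orders goes slightly beyond the paper's (terser) argument but does not change the method.
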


\begin{proof}
\begin{itemize}
\item[a)]Use exactly the same construction as in the previous proof. For arbitrary $\Omega_{Ch}$ and arbitrary $Ch_{k}(\Omega_{Ch})\in\mathcal{A}$,
we see that ~$T^{\mu...}_{\nu...}$ is for every $\omega\in C^{P}_{S (\mathcal{\mathcal{\tilde S}})}(\Omega_{Ch})$ given by (\ref{Teq}) (it is
continuous in arbitrary chart on every ~$C^{P}_{S (\mathcal{\mathcal{\tilde S}})}(\Omega_{Ch})$).
 We can express the map (\ref{Teq}) in some chart
$Ch_{l}(\Omega_{Ch})$ as ~
\begin{eqnarray}
(Ch_{k},\omega')~~~~~~~~~~~~~~~~~~~~~~~~~~~~~~~~~~~~~~~~~~~~~~~~~~~~~~~~~~~~~~~~~~~~~~~~~~~~~~~~~~~~~~~~~~~~~~~~\nonumber\\~
\to~\int_{Ch_{l}(\Omega_{Ch})}
T^{\mu...}_{1~\nu...}(Ch_{k})~\omega'~d^{4}x...\int_{Ch_{l}(\Omega_{Ch})}T^{\alpha...}_{N \beta...}(Ch_{k})~\omega'~d^{4}x.~~~~\label{Teq2}
\end{eqnarray}

Then it is a result of Colombeau
theory that if
\begin{equation}
(Ch_{k},\omega')~\to~\int_{Ch_{l}(\Omega_{Ch})}
T^{\mu...}_{1~\nu...}(Ch_{k})...T^{...\alpha}_{N...\beta}(Ch_{k})~\omega'~d^{4}x\label{Teq3}
\end{equation}
is defined as
a linear mapping on compactly supported, smooth ~$\mathbb{R}^{4}$ ~functions ~$\omega'$ ~(in our case
they are related by $Ch_{l}(\Omega_{Ch})$ to given
~$C^{P}_{S}(M)$ objects), it is equivalent to (\ref{Teq2}). Now
everything was fixed, but arbitrary, so the result is
proven. From this proof we also see that the simple
transformation properties of the ~$D'^{m}_{n}(M)$
objects\footnote{We include also the scalar objects here.} are fulfilled by the map (\ref{Teq3}) if the objects multiplied are from ~$D'^{m}_{n
A}(M)$. So the second result can be proven immediately. The last two
results concerning the classes with limited domains trivially follow from the previous proof.
\item[b)] is proven completely in the same way, we just have to understand
that because of the ``limited'' domain of the ~$D'^{a}_{b
S(\cup_{n}\mathcal{\mathcal{\mathcal{\mathcal{S}}}}_{n} o)}(M)$ objects, we can effectively use the concept
of smoothness in this case.
\item[c)] is just the same as b), the only difference is that the domain
of the product is limited because of the ``second'' term in the
product.
\end{itemize}
\end{proof}

Note that this means that tensor product gives, on appropriate
subclasses of ~$D'^{m}_{n E A}(M)$, the mapping ~$\tilde D'^{a}_{b E A}(M)\times\tilde D'^{m}_{n E A}\to \tilde D'^{a+m}_{b+n E A}(M)$. ~It also means that this procedure gives, on
appropriate subclasses of $\Gamma^{m}_{E A}(M)$, the mapping ~$\tilde \Gamma^{a}_{E A}(M)\times\tilde
\Gamma^{m}_{E A}\to \tilde \Gamma^{a+m}_{E A}(M)$.
The disappointing fact is that this cannot be extended to ~$D'^{m}_{n
A}(M)$.

\newtheorem{Comm2}[uniqueness]{Theorem}
\begin{Comm2}
Take ~$T^{\mu...}_{\nu...}\in\Gamma^{m}_{(\cup_{n}
\mathcal{\mathcal{\mathcal{\mathcal{S}}}}_{n}~ o)A}(M)$,~
$B^{\mu...}_{\nu...}\in\Gamma^{m}_{(\cup_{n}
\mathcal{\mathcal{\mathcal{\mathcal{S}}}}_{n}~ o)}(M)$ and
~$L^{\alpha...}_{\beta...}\in\Gamma^{n}_{S (\cup_{l}
\mathcal{\mathcal{\mathcal{\mathcal{S}}}}_{l}~ o)}(M)$. Then
~$T^{\mu...}_{\nu...}\approx B^{\mu...}_{\nu...}$ implies
\footnote{It is obvious that we can extend the definition domains
either of $T^{\mu...}_{\nu...}$ and $B^{\mu...}_{\nu...}$, or of
$L^{\mu...}_{\nu...}$.} ~$(L\otimes
T)^{\alpha...\mu...}_{\beta...\nu...}\approx (L\otimes
B)^{\alpha...\mu...}_{\beta...\nu...}$.
\end{Comm2}

\begin{proof}
Use the same method as previously. It trivially follows from the
results of Colombeau theory (especially from the theorem saying that
if a Colombeau algebra object is equivalent to a distribution, then
after multiplying each of them by a smooth distribution, they remain
equivalent).
\end{proof}

\newtheorem{Contraction}[uniqueness]{Theorem}

\begin{Contraction}
 Contraction (of $\mu$ and $\nu$ index) is always, for such objects
~$T^{...\mu...}_{...\nu...}\in D'^{m}_{n E A}(M)$ that they are
equivalent to some linear element, a map to some element of the
equivalence class from ~~~$\tilde\Gamma^{m+n-2}_{E A}(M)$.~~~ The
~equivalence ~class ~from \\ $\tilde\Gamma^{m+n-2}_{E A}(M)$ is
such, that it contains (exactly) one element from ~$D'^{m-1}_{n-1
E}(M)$ and this element is defined as the map:
~~$\forall\Omega_{Ch}$, ~~$\forall Ch_{k}(\Omega_{Ch})~\in
~\mathcal{A}$, ~~$\forall\omega~\in ~C^{P}(\Omega_{Ch})$,
\begin{eqnarray}
(\omega,Ch_{k})\to\int_{\Omega_{Ch}}
T^{...\alpha...}_{...\alpha...}(Ch_{k})~\omega.~~~~~~~~~~~~
\end{eqnarray}
\end{Contraction}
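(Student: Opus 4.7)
The plan is to reduce the statement to Theorem \ref{multiplication}(a), which already identifies the linear element in the equivalence class of a product $T_1 \otimes \cdots \otimes T_N$, and then observe that contraction is a componentwise, finite linear combination that commutes with the $\epsilon \to 0$ limit in the definition of $\approx$.

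First I would observe that the first assertion (that contraction lands in some class of $\tilde\Gamma^{m+n-2}_{E A}(M)$) is almost tautological: by Definition \ref{defn85} contraction is a componentwise map $\Gamma^{m+n}_{A}(M) \to \Gamma^{m+n-2}_{A}(M)$, and restricted to $D'^{m}_{n E A}(M) \subset \Gamma^{m+n}_{E A}(M)$ it produces an element of $\Gamma^{m+n-2}_{E A}(M)$, which then belongs to some equivalence class in $\tilde\Gamma^{m+n-2}_{E A}(M)$. The substantive content is the claim about the unique $D'^{m-1}_{n-1 E}(M)$ representative.

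Next I would use the hypothesis that $T^{...\mu...}_{...\nu...}$ is equivalent to a linear element together with Theorem \ref{multiplication}(a), which guarantees that $\forall \Omega_{Ch}$ and $\forall Ch_k(\Omega_{Ch}) \in \mathcal{A}$ there exists a chart $Ch_l(\Omega_{Ch}) \in \mathcal{A}$ in which the componentwise product $T_1(Ch_k) \cdots T_N(Ch_k)$ is Lebesgue integrable on every compact subset of $\Omega_{Ch}$. Contracting the matched indices of this product yields, componentwise, a finite sum of products of $T_i$-components, which is therefore locally integrable as well. The candidate map
\begin{equation*}
(\omega, Ch_k) \to \int_{\Omega_{Ch}} T^{...\alpha...}_{...\alpha...}(Ch_k)\, \omega
\end{equation*}
is then well defined on all $\omega \in C^{P}(\Omega_{Ch})$ (where the integral converges), is linear in $\omega$, and its tensorial transformation law is inherited from that of $T$ via the fact that contraction in matched indices commutes with tensorial coordinate changes. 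Hence the candidate is a genuine element of $D'^{m-1}_{n-1 E}(M)$.

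Then I would establish equivalence between the contracted $\Gamma^{m+n-2}_{E A}$-object and this candidate by writing the difference, after pairing with a mollifier $\omega_\epsilon \in A^n(Ch'(q, \Omega_{Ch}), \tilde{\mathcal{S}})$, as a finite linear combination (over the contracted index) of differences of the form already controlled by Theorem \ref{multiplication}(a) for the uncontracted product. Linearity of the $\epsilon \to 0$ limit then delivers $\approx$. Uniqueness of the $D'^{m-1}_{n-1 E}(M)$ representative inside the equivalence class is immediate from Theorem 4.2, since any such representative is a linear element of $\Gamma^{m+n-2}_{E A}(M)$ and each $\tilde\Gamma^{m+n-2}_{E A}(M)$ class contains at most one linear element.

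The only genuine obstacle is a bookkeeping one: when the two contracted indices belong to different factors of the tensor-product representative $T_1 \otimes \cdots \otimes T_N$, one must carefully enumerate the diagonal sum and check that the integrability and the equivalence established factor-by-factor in Theorem \ref{multiplication}(a) passes through this finite linear combination. No new analytic input is required beyond what that theorem and the Colombeau equivalence machinery already provide.
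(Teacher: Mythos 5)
Your proposal is correct and follows essentially the same route as the paper: the paper's (much terser) proof simply notes that contraction commutes with the equivalence relation because it is a finite componentwise linear combination and $\approx$ respects linear combinations, which is exactly the core of your argument, supplemented as you do by Theorem \ref{multiplication}(a) for the linear representative and the uniqueness-of-linear-element theorem. Your extra bookkeeping on local integrability and the transformation law of the contracted candidate is a fleshing-out of details the paper leaves implicit, not a different method.
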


\begin{proof}
The proof trivially follows from the fact that
contraction commutes with the relation of equivalence (this trivially follows from our previous note about
addition and equivalence).
\end{proof}

\subsubsection{Some interesting conjectures}

\theoremstyle{definition}
\newtheorem{ADD1}{Conjecture}[section]
\begin{ADD1}
Tensor product gives these two maps:
\begin{itemize}
\item
~$\tilde D'^{a}_{b E A}(M)\times\tilde D'^{m}_{n E A}(M)\to
\tilde D'^{a+m}_{b+n E A}(M)$,
\item
~$\tilde \Gamma^{a}_{E A}(M)\times\tilde \Gamma^{b}_{E
A}(M)\to \tilde \Gamma^{a+b}_{E A}(M)$.
\end{itemize}
\end{ADD1}

\theoremstyle{definition}
\newtheorem{ADD2}[ADD1]{Conjecture}
\begin{ADD2}
Take some
~$B^{\mu...}_{\nu...}\in\Gamma^{a}_{(\cup_{n}At(\mathcal{\tilde
S}_{n}) ~o)}(M)$. Take an element
~$T^{\mu...}_{\nu...}\in\Gamma^{b}_{E}(M)$, such that
~$\forall\mathcal{\mathcal{\tilde
S}}\subseteq\cup_{n}\mathcal{\mathcal{\tilde S}}_{n}$,
~$\forall\Omega_{Ch}$, ~$\forall Ch_{k}(\Omega_{Ch})\in
At(\mathcal{\mathcal{\tilde S}})$ it holds that
 ~$\forall\omega\in C^{P}_{S (\mathcal{\mathcal{\tilde S}})}(M)$ ~the elements of the multi-index matrix ~$T^{\mu...}_{\nu...}(Ch_{k})\cdot\omega$~~ remain to be
 from the class ~$C^{P}_{S (\mathcal{\mathcal{\tilde S}})}(M)\subset\cup_{n}C^{P}_{S
(\mathcal{\mathcal{\tilde S}}_{n})}(M)$. Then it holds that
~$B^{\alpha...}_{\beta...}T^{\mu...}_{\nu...}$ is equivalent to an
element of ~$\Gamma^{a+b}_{(\cup_{n}At(\mathcal{\mathcal{\tilde
S}}_{n}) ~o)}(M)$. (For subclasses ~$D'^{m}_{n
(\cup_{l}At(\mathcal{\mathcal{\tilde S}}_{l}) ~o)}(M)$ and
~$D'^{a}_{b E}(M)$ it is equivalent to an element ~$D'^{m+a}_{n+b
(\cup_{l}At(\mathcal{\tilde S}_{l}) ~o)}(M)$.) The element is on its
domain given as mapping
\begin{equation}
(\omega, Ch_{k})\to
B^{\mu...}_{\nu...}(T^{\alpha...}_{\beta...}\omega).
\end{equation}
\end{ADD2}

\subsubsection{Some additional theory}

\newtheorem{association1}{Theorem}[section]
\begin{association1}\label{Association1}
Any arbitrary ~$T^{\mu...}_{\nu...}\in\Lambda$ (as defined by \ref{Lambda}) defines an
~$A_{s}(T^{\mu...}_{\nu...})$ object on $M$. Take any arbitrary ~$\mathcal{\mathcal{\tilde S}}$~ from the domain of $T^{\mu...}_{\nu...}$ ,~any arbitrary ~$\mathcal{\mathcal{\mathcal{\mathcal{S}}}}\subset At(\mathcal{\mathcal{\tilde S}})\cap\mathcal{\mathcal{\tilde S}}$,~any arbitrary $\Omega_{Ch}$ and any arbitrary $Ch_{k}(\Omega_{Ch})\in\mathcal{\mathcal{\mathcal{\mathcal{\mathcal{S}}}}}$. Then for $\Omega_{Ch}\setminus\tilde\Omega(Ch_{k})$ it holds that multi-index matrix of functions $T^{\mu...}_{\nu...}(Ch_{k})$ can be obtained from the tensor components of $A_{s}(T^{\mu...}_{\nu...})$ in $Ch_{k}(\Omega_{Ch})$ by the inverse mapping to $Ch_{k}(\Omega_{Ch})$.
\end{association1}

\begin{proof}
 For ~$\forall\Omega_{Ch}$,~ take fixed but arbitrary ~$ Ch_{k}(\Omega_{Ch})\in\mathcal{\mathcal{\mathcal{\mathcal{\mathcal{S}}}}}$ and take
~$T^{\mu...}_{\nu...}(Ch_{k})$. Then ~$\forall q\in\Omega_{Ch}$, ~$\forall Ch'(\Omega_{Ch},q)\in\mathcal{\mathcal{\mathcal{\mathcal{\mathcal{S}}}}}$,  and ~$\forall ~\omega_{\epsilon}\in A^{n}(\mathcal{\mathcal{\tilde S}}_{l}, Ch'(q,\Omega_{Ch}))$, ~we see that  ~$\omega'(Ch')$~ is a
delta-sequence. That means we just have to show, that on the set where $T^{\mu...}_{\nu...}(Ch_{k})$ is continuous in the \ref{Lambda} sense, the delta-sequencies
give the value of this multi-index matrix. So write the
integral:
\begin{equation*}
\int_{Ch'(\Omega_{Ch})}
T^{\mu...}_{\nu...}\left[(Ch_{k}) (\mathbf{x})\right]~\frac{1}{\epsilon^{4}}~~\omega'\left(\frac{\mathbf{x}}{\epsilon}\right)~d^{4}x.
\end{equation*}
By substitution ~$\mathbf{x}=\epsilon.\mathbf{z}$ we obtain:

\begin{equation*}
\int_{Ch'(\Omega_{Ch})} T^{\mu...}_{\nu...}\left[(Ch_{k})~(\epsilon.\mathbf{
z})\right]~\omega'(\mathbf{z})~d^{4}z.
\end{equation*}
 But from the properties of
~$T^{\mu...}_{\nu...}\left[(Ch_{k})(\mathbf{x})\right]$ it follows that

\begin{eqnarray*}
\int_{Ch'(\Omega_{Ch})}(T^{\mu...}_{\nu...}\left[(Ch_{k})~(\mathbf
z_{0})-K^{\mu...}_{\nu...}\epsilon)\right]~\omega'(\mathbf{z})~d^{4}z~~~~~~~~~~~~~\\
\leq\int_{Ch'(\Omega_{Ch})}T^{\mu...}_{\nu...}\left[(Ch_{k})(\mathbf
z_{0}+\mathbf{n}\epsilon)\right]~\omega'(\mathbf{z})~d^{4}z~~~~~~~~\\
\leq\int_{Ch'(\Omega_{Ch})}(T^{\mu...}_{\nu...}\left[(Ch_{k})(\mathbf
z_{0})\right]+K^{\mu...}_{\nu...}\epsilon)~\omega'(\mathbf{z})~d^{4}z,
\end{eqnarray*}
 for some $\epsilon$ small
enough.

 But we are taking the limit ~$\epsilon\to 0$ which,
considering the fact that ~$\omega(\mathbf{x})$ are normed to 1,
means that the integral must give
~$T^{\mu...}_{\nu...}\left[(Ch_{k})(\mathbf{z_{0}})\right]$. The set, where it is not
continuous in the sense of \ref{Lambda}, has Lebesgue measure 0. That means the
~$\Omega_{Ch}$ part, which is mapped to this set has Lebesgue measure
0. But then the values of the multi-index matrix in the given chart
at this arbitrary, but fixed point give us an associated field (and are independent on delta sequence obviously).
\end{proof}

\newtheorem{association}[association1]{Theorem}

\begin{association}
 The field associated to
a ~$T^{\mu...}_{\nu...}\in \Lambda\cap D'^{m}_{n E}(M)$, ~transforms for each
~$\Omega_{Ch}$, ~for every pair of charts from its
domain, ~$Ch_{1}(\Omega_{Ch}), ~Ch_{2}(\Omega_{Ch})$ ~~on some ~$M/(\tilde\Omega(Ch_{1})\cup \tilde\Omega(Ch_{2}))$, ~ as an ordinary tensor field with
piecewise smooth transformations\footnote{Of course, some transformations in a
generalized sense might be defined also on the $\tilde\Omega(Ch_{1})\cup \tilde\Omega(Ch_{2})$ set.}.
\end{association}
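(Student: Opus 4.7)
The plan is to combine the tensorial transformation law that every object in $D'^{m}_{n}(M)$ satisfies by definition with the pointwise identification of the associated field provided by Theorem \ref{Association1}. First, fix some $\Omega_{Ch}$ and two charts $Ch_{1}(\Omega_{Ch}), Ch_{2}(\Omega_{Ch})$ from the domain of $T^{\mu...}_{\nu...}$, and let $J^{\mu}_{\alpha}$ denote the Jacobian of the transition from $Ch_{1}$ to $Ch_{2}$. Because $T^{\mu...}_{\nu...} \in D'^{m}_{n E}(M)$, the definition of a linear generalized tensor field gives, for every $\omega \in C^{P}(\Omega_{Ch})$,
\begin{equation*}
T^{\mu...}_{\nu...}(Ch_{2},\omega) \;=\; T^{\alpha...}_{\beta...}\!\left(Ch_{1},\, J^{\mu}_{\alpha}\cdots(J^{-1})^{\beta}_{\nu}\cdots\omega\right),
\end{equation*}
while the $E$-character of $T$ realises both sides as integrals against the multi-index matrices $T^{\mu...}_{\nu...}(Ch_{2})$ and $T^{\alpha...}_{\beta...}(Ch_{1})$ respectively.

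Next I would invoke the standard uniqueness-of-integrand argument (the fundamental lemma applied to the class of test $4$-forms allowed here) to descend from the distributional identity to the pointwise a.e.\ identity
\begin{equation*}
T^{\mu...}_{\nu...}(Ch_{2})(\mathbf{x}) \;=\; J^{\mu}_{\alpha}(\mathbf{x})\cdots (J^{-1})^{\beta}_{\nu}(\mathbf{x})\cdots T^{\alpha...}_{\beta...}(Ch_{1})(\mathbf{x}),
\end{equation*}
valid almost everywhere on $\Omega_{Ch}$ in the sense of Lebesgue measure in either chart. The varying of $\omega$ over $C^{P}(\Omega_{Ch})$, together with the piecewise continuity of the two integrand matrices, forces coincidence everywhere except on a set of Lebesgue measure zero; this step is the core technical input. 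By Theorem \ref{Association1} the left-hand and right-hand multi-index matrices are precisely the components of the associated field $A_{s}(T^{\mu...}_{\nu...})$ in $Ch_{2}$ and $Ch_{1}$ respectively, on $\Omega_{Ch}\setminus\tilde\Omega(Ch_{k})$. Restricting to $\Omega_{Ch}\setminus(\tilde\Omega(Ch_{1})\cup\tilde\Omega(Ch_{2}))$, where both matrices represent $A_{s}(T^{\mu...}_{\nu...})$ pointwise and are continuous, we obtain the ordinary tensorial transformation law, holding everywhere rather than merely almost everywhere.

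Finally, to conclude the piecewise smoothness claim, observe that by the defining properties of $(M,\mathcal{A})$ the transition functions between any two charts of $\mathcal{A}$ are piecewise smooth with Jacobians and inverse Jacobians bounded on compact subsets. Hence $J^{\mu}_{\alpha}\cdots(J^{-1})^{\beta}_{\nu}\cdots$ is piecewise smooth on $\Omega_{Ch}$, so the tensorial identity established above exhibits $A_{s}(T^{\mu...}_{\nu...})$ on $\Omega_{Ch}\setminus(\tilde\Omega(Ch_{1})\cup\tilde\Omega(Ch_{2}))$ as an object transforming between $Ch_{1}$ and $Ch_{2}$ by piecewise smooth tensorial transformations, which is the desired conclusion. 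The main obstacle in carrying out this plan is the descent step from the distributional tensor transformation law to the pointwise-a.e.\ identity of the integrands: this requires that the class $C^{P}(\Omega_{Ch})$ together with the multiplication by $J^{\mu}_{\alpha}\cdots\omega$ still separates piecewise continuous integrands in the appropriate sense, which ultimately relies on the standard density arguments already implicitly used in the Colombeau-type theorems proved earlier in the chapter.
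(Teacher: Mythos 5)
Your proposal is correct and follows essentially the same route as the paper: identify the associated field in each chart with the defining multi-index matrices via the preceding theorem, use the tensorial transformation of those matrices together with the piecewise smoothness and compact-set boundedness of the transition Jacobians in $\mathcal{A}$, and note that $\tilde\Omega(Ch_{1})\cup\tilde\Omega(Ch_{2})$ still has Lebesgue measure zero. The only difference is one of explicitness: the paper's one-line proof treats the almost-everywhere tensorial transformation of the integrand matrices as already built into the definition of $D'^{m}_{n E}(M)$, whereas you re-derive it from the distributional transformation law by a du Bois-Reymond type separation argument, which is a harmless (and arguably useful) filling-in of detail.
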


\begin{proof}
All this immediately follows from what was done in
the previous proof, and from the fact that union of sets with Lebesgue measure 0 has Lebesgue measure 0.
\end{proof}

Note, that if there exists such point that for the object ~$\Gamma^{m}_{A}(M)$ we have

\begin{equation*}
\lim_{\epsilon\to 0}
T^{\mu...}_{\nu...}(\omega_{\epsilon})=\pm\infty
\end{equation*}
at that
point, then the field associated to this object, can be associated to
another object, which is nonequivalent to this object. This means that the same ~field can be
associated to mutually non-equivalent elements of ~$\Gamma^{m}_{A}(M)$. This is
explicitly shown and proven by the next example.

\newtheorem{dfunction}[association1]{Theorem}

\begin{dfunction}
 Take $\delta(q,Ch_{k}(\Omega_{Ch}))\in D'_{(\mathcal{\mathcal{\mathcal{\mathcal{S}}}} o)}(M)$ being defined as mapping from each 4-form ~$C^{P}_{S (\mathcal{\mathcal{\tilde S}})}(M)~(\mathcal{\mathcal{\mathcal{\mathcal{S}}}}\subseteq\mathcal{\mathcal{\tilde S}})$ ~to the
 value of this form's density at the point $q$ in the chart ~$Ch_{k}(\Omega_{Ch})\in\mathcal{\mathcal{\mathcal{\mathcal{\mathcal{S}}}}}$,~ ($q\in\Omega_{Ch}$). Then any
 power ~$n\in\mathbb{N}_{+}$ of $\delta(q,Ch_{k}(\Omega_{Ch}))$ is associated to the function being defined on the domain $M\setminus\{q\}$ and everywhere 0. Note that this function is associated to any power ($n\in\mathbb{N}_{+}$) (being a nonzero natural number) of $\delta(q)$, but different powers of $\delta(q)$ are mutually nonequivalent\footnote{It is hard to find in our theory a more ``natural'' definition generalizing the concept of delta function from $\mathbb{R}^{n}$. But there is still another natural generalization: it is an object from ~$\Gamma^{0}_{(\cup_{n}\mathcal{\mathcal{\mathcal{\mathcal{S}}}}_{n}o)}(M)$, defined as:~
$\delta(Ch_{k}(\Omega_{Ch}),q,\omega)=\omega'\left[(Ch_{k})(\tilde
q)\right]$,
~$Ch_{k}(\Omega_{Ch})\in\mathcal{\mathcal{\mathcal{\mathcal{\mathcal{S}}}}}_{n}$,
~$\omega\in C^{P}_{S (\mathcal{\tilde S}_{n})}(\Omega_{Ch})$
~($\mathcal{\mathcal{\mathcal{\mathcal{S}}}}_{n}\subset\mathcal{\mathcal{\tilde
S}}_{n}$), ~$\tilde q$ is image of $q$ given by the chart mapping
$Ch_{k}(\Omega_{Ch})$. So it gives value of the density $\omega'$ in
the chart $Ch_{k}(\Omega_{Ch})$, at the chart image of the point
$q$.}.\label{dfunction}
\end{dfunction}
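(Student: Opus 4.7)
The plan has two parts. For the first assertion, I would fix any point $p\in M\setminus\{q\}$ and a chart $Ch'(p,\Omega_{Ch})\in\mathcal{\tilde S}$ centered at $p$. By construction, a delta-sequence $\omega_\epsilon\in A^n(Ch'(p,\Omega_{Ch}),\mathcal{\tilde S})$ has its support contained in an $O(\epsilon)$-ball around the origin of $Ch'$, hence in a shrinking neighbourhood of $p$. Since $p\neq q$, for all sufficiently small $\epsilon$ the point $q$ lies outside $\mathrm{supp}(\omega_\epsilon)$, so $\omega'_\epsilon(Ch_k)(q)=0$ and therefore $\delta(q,Ch_k(\Omega_{Ch}))(\omega_\epsilon)=0$. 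Because scalar multiplication in $D'_A(M)$ is pointwise on test forms, raising this to the $n$-th power still gives zero, and the limit in \eqref{limit} vanishes independently of the chosen delta-sequence. By Definition \ref{DefAssoc}, $A_s(\delta^n)$ exists on $M\setminus\{q\}$ and equals the zero function there.

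For the second assertion I would prove $\delta^n\not\approx\delta^m$ whenever $n>m\geq 1$ by exhibiting a single point, chart, test form and cutoff $\phi$ that defeat the equivalence condition. I would test at $q$ itself, take $Ch'(q,\Omega_{Ch})$ coinciding with the chart $Ch_k$ in which $\delta$ is defined (the general case differs only by a bounded nonzero Jacobian factor at $q$), and use the translated density $\tilde\omega'_\epsilon(y)(q)=\epsilon^{-4}\omega'(y/\epsilon)$. Then $\delta^k(\tilde\omega_\epsilon(y))=\epsilon^{-4k}[\omega'(y/\epsilon)]^k$, and the substitution $z=y/\epsilon$ reduces the equivalence integrand to
\begin{equation*}
\int_{\mathbb{R}^4}[\delta^n-\delta^m](\tilde\omega_\epsilon(y))\,\phi(y)\,d^4y=\epsilon^{-4(n-1)}I_n(\epsilon)-\epsilon^{-4(m-1)}I_m(\epsilon),
\end{equation*}
where $I_k(\epsilon)=\int[\omega'(z)]^k\phi(\epsilon z)\,d^4z\to\phi(0)\int[\omega'(z)]^k\,d^4z$ as $\epsilon\to 0$. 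Selecting $\omega$ with $\int[\omega']^n\neq 0$ and $\phi$ with $\phi(0)\neq 0$ lets the $\epsilon^{-4(n-1)}$ term dominate, so the limit is infinite and certainly nonzero, establishing inequivalence.

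The main obstacle I anticipate is of a bookkeeping rather than conceptual character: correctly tracking the Jacobian relating $Ch'(q,\Omega_{Ch})$ to $Ch_k$ when the two charts differ, and confirming that a delta-sequence with $\int[\omega']^n\neq 0$ is compatible with the moment-annihilation conditions that define $A^N$. The latter is secured by observing that these conditions amount to finitely many linear constraints on $\omega'$ fixing only its first $N$ polynomial moments, whereas $\int[\omega'(z)]^n$ for $n\geq 2$ is a nonlinear functional unrelated to those constraints; one may therefore perturb any admissible representative freely to arrange the nonvanishing of any chosen power-integral.
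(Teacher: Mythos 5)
Your proposal is correct and follows essentially the same route as the paper's own proof: shrinking supports of the delta-sequences centred at $p\neq q$ give the zero associated field on $M\setminus\{q\}$, and the rescaling $z=y/\epsilon$ shows the difference of two distinct powers scales like $\epsilon^{-4(k-1)}$ and diverges under the equivalence test, so the powers are mutually nonequivalent. Your extra bookkeeping (the Jacobian factor between $Ch'$ and $Ch_{k}$, and checking that $\int[\omega']^{n}\neq 0$ is compatible with the $A^{N}$ moment constraints) only makes explicit what the paper leaves implicit.
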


\begin{proof}
Contracting powers of
~$\delta(q,Ch_{k}(\Omega_{Ch}))$ with a sequence of 4-forms from
arbitrary ~$A^{n}(Ch'(q',\Omega'_{Ch}),\mathcal{\mathcal{\tilde S}}), ~(q'\neq q)$ (they
have a support converging to another point than $q$) will give
0. For $q=q'$ (\ref{limit}) gives
\[\lim_{\epsilon\to 0}~\epsilon^{-4}~\omega'\left[(Ch_{k})(0)\right]=\pm\infty~.\]
 Now explore the equivalence between different powers of ~$\delta(Ch_{k},q)$.\\ $\delta^{n}(q,Ch_{k}(\Omega_{Ch}))$ applied to
~$\omega_{\epsilon}(x)\in A^{n}(Ch'_{k}(q,\Omega_{Ch}),\mathcal{\mathcal{\tilde S}})$ will
lead to the expression
~$\epsilon^{-4n}\omega^{n}(\frac{x}{\epsilon})$. Then if we
want to compute

\begin{equation*}
\lim_{\epsilon\to
0}\int_{Ch'(\Omega_{Ch})}\left(\frac{1}{\epsilon^{4n}}~\omega^{n}\left(\frac{\mathbf{x}}{\epsilon}\right)-\frac{1}{\epsilon^{4m}}~\omega^{m}\left(\frac{\mathbf{x}}{\epsilon}\right)\right)~\Phi(\mathbf{x})~d^{4}x
\end{equation*}
it leads to

\begin{equation*}
 \lim_{\epsilon\to
0}~\frac{1}{\epsilon^{4m-4}}~\Phi(0)~\int_{Ch'(\Omega_{Ch})}\left(\omega^{n}(\mathbf{x})-\frac{1}{\epsilon^{4(n-m)}}~\omega^{m}(\mathbf{x})\right)~d^{4}x
\end{equation*}

 which is for $n\neq m, ~~n,m\in\mathbb{N}_{+}$ clearly
divergent, hence nonzero.
\end{proof}

Note that despite of the fact that within our algebras we, naturally, have all the $n\in\mathbb{N_+}$ powers of the delta
distribution, they are for $n>1$, ~unfortunately, ~\emph{not} equivalent to any distribution.\newline

\newtheorem{nsmooth}[association1]{Theorem}

\begin{nsmooth}
 We see that the map $A_{s}$ is linear (in the sense analogous to \ref{linear}), and for arbitrary number of
~$g^{\mu...}_{\nu...},...,h^{\mu...}_{\nu...}\in\Lambda\cap\Gamma^{m}_{E (\cup_{n}At(\mathcal{\mathcal{\tilde S}}_{n})o)}(M)$ one has:
Take $\forall\Omega_{Ch}$, ~$\forall n$, ~$\forall \mathcal{\mathcal{\mathcal{\mathcal{S}}}}\subset\mathcal{\mathcal{\tilde S}}_{n}\cap At(\mathcal{\mathcal{\tilde S}})$, ~$\forall Ch_{k}(\Omega_{Ch})\in\mathcal{\mathcal{\mathcal{\mathcal{\mathcal{S}}}}}$,~ $\cup_{i}\tilde\Omega_{i}(Ch_{k})$~ to be the union of all $\tilde\Omega_{i}(Ch_{k})$~ related to the objects $g^{\mu...}_{\nu...},...,h^{\mu...}_{\nu...}$. Then
\[A_{s}(g^{\alpha...}_{\beta...}\otimes...\otimes
h^{\mu...}_{\nu...})=A_{s}(g^{\alpha...}_{\beta...})\otimes...\otimes
A_{s}(h^{\mu...}_{\nu...})\]
on $\Omega_{Ch}\setminus \cup_{i}\tilde\Omega_{i}(Ch_{k})$. ~Here the
first term is a product between $\Lambda$ objects and the second is the
classical tensor product.
\end{nsmooth}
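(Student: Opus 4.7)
The plan is to reduce everything to two observations already established earlier: (i) Theorem \ref{Association1}, which says that on the complement of the discontinuity set $\tilde\Omega(Ch_k)$ the associated field is literally the multi-index matrix $T^{\mu\ldots}_{\nu\ldots}(Ch_k)$ read through the inverse of the chart mapping; and (ii) Theorem \ref{multiplication}(a), which tells us that a tensor product of $\Gamma^{m}_{E}(M)$ objects is (equivalent to, and here in fact equal to, after identification in each chart with the pointwise-continuous representative) the object whose multi-index matrix in any chart is the componentwise product of the individual multi-index matrices.

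First I would dispose of the linearity clause. Given $g^{\mu\ldots}_{\nu\ldots}, h^{\mu\ldots}_{\nu\ldots}\in\Lambda\cap\Gamma^{m}_{E (\cup_{n}At(\mathcal{\tilde S}_{n})o)}(M)$ with a common atlas-type domain, and $\lambda_1,\lambda_2\in\mathbb{R}$, the limit \eqref{limit} defining the associated field is $\mathbb{R}$-linear in its integrand. Hence on the intersection $\Omega_{Ch}\setminus\bigl(\tilde\Omega_g(Ch_k)\cup\tilde\Omega_h(Ch_k)\bigr)$ where both associated fields exist, $A_s(\lambda_1 g+\lambda_2 h)=\lambda_1 A_s(g)+\lambda_2 A_s(h)$ in the sense of Definition \ref{linear}. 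The fact that $\lambda_1 g+\lambda_2 h$ still lies in $\Lambda$ on this set follows because the local Lipschitz-type bounds defining $\Lambda$ add.

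Next I would handle the tensor-product claim. Fix $\Omega_{Ch}$, a common smooth subatlas $\mathcal{S}\subset\mathcal{\tilde S}_n\cap At(\mathcal{\tilde S}_n)$, a chart $Ch_k(\Omega_{Ch})\in\mathcal{S}$, and let $E:=\bigcup_i\tilde\Omega_i(Ch_k)\subset\Omega_{Ch}$, which has Lebesgue measure zero. By Theorem \ref{multiplication}(a), in the chart $Ch_k$ the product $g^{\alpha\ldots}_{\beta\ldots}\otimes\cdots\otimes h^{\mu\ldots}_{\nu\ldots}$ is equivalent to the $\Gamma^{\bullet}_{E}(M)$-object whose representing multi-index matrix is the componentwise product $g(Ch_k)\cdots h(Ch_k)$. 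This representative is continuous on $\Omega_{Ch}\setminus E$, and on that set it inherits the Lipschitz-type bound required in Definition \ref{Lambda} as a sum of products of bounded-by-$K\varepsilon$ perturbations with locally bounded factors; hence the product also belongs to $\Lambda$ on $\Omega_{Ch}\setminus E$. Applying Theorem \ref{Association1} to this product object, the associated field at any $q\in\Omega_{Ch}\setminus E$ is obtained by evaluating this multi-index matrix, i.e.\ by the componentwise product of the individual matrices $g(Ch_k),\ldots,h(Ch_k)$ at $q$. But by Theorem \ref{Association1} applied separately to each factor, each $A_s(g^{\alpha\ldots}_{\beta\ldots}),\ldots,A_s(h^{\mu\ldots}_{\nu\ldots})$ is represented in $Ch_k$ on $\Omega_{Ch}\setminus E$ by precisely $g(Ch_k),\ldots,h(Ch_k)$. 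Since the classical tensor product is by definition the componentwise product of tensor components in any chart, the two sides agree pointwise on $\Omega_{Ch}\setminus E$.

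The only genuine subtlety I foresee is verifying that the componentwise product $g(Ch_k)\cdots h(Ch_k)$ satisfies the Lipschitz-type control demanded by membership in $\Lambda$ uniformly on $\Omega_{Ch}\setminus E$, rather than just being continuous there; this is what legitimately allows Theorem \ref{Association1} to be invoked on the product. The estimate is elementary (a telescoping argument using the individual $K^{\mu\ldots}_{\nu\ldots}$ constants and local boundedness of continuous functions on compact neighbourhoods inside $\Omega_{Ch}\setminus E$), but it is the one step that is not purely formal. Everything else is either a direct citation of the earlier theorems or a chart-independence remark, since the identity has been established in one but arbitrary smooth chart $Ch_k\in\mathcal{S}$ and both sides transform covariantly under smooth coordinate changes on $\Omega_{Ch}\setminus E$.
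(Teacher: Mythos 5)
Your argument computes the associated field of the wrong object. The theorem concerns $A_{s}(g^{\alpha...}_{\beta...}\otimes...\otimes h^{\mu...}_{\nu...})$, where the tensor product is by definition the \emph{nonlinear} map $(Ch_{k},\omega)\mapsto g^{\alpha...}_{\beta...}(Ch_{k},\omega)\cdots h^{\mu...}_{\nu...}(Ch_{k},\omega)$, i.e.\ a componentwise product of integrals. This is \emph{not} equal to the linear element whose multi-index matrix is $g^{\alpha...}_{\beta...}(Ch_{k})\cdots h^{\mu...}_{\nu...}(Ch_{k})$ --- your parenthetical ``in fact equal to, after identification'' is false, since $\int g(Ch_{k})\,\omega\cdot\int h(Ch_{k})\,\omega\neq\int g(Ch_{k})h(Ch_{k})\,\omega$ as maps of $\omega$; by Theorem \ref{multiplication}(a) the two are only \emph{equivalent}. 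Membership in $\Lambda$ and Theorem \ref{Association1} apply to that linear representative, so what your argument actually yields is the associated field of the representative, not of the tensor product itself. Nothing in the paper, and nothing in your write-up, licenses transporting the pointwise limit that defines association across the relation $\approx$: equivalence is a weak, smeared statement (limits of integrals of $F'$ against test functions $\phi$), whereas $A_{s}$ is the pointwise limit $\lim_{\epsilon\to 0}T^{\mu...}_{\nu...}(Ch_{k},\omega_{\epsilon})$ at each point $q$; bridging the two would need at least a dominated-convergence plus continuity argument that you do not supply, and the paper treats association and equivalence as logically distinct (the same field can be associated to non-equivalent objects).

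The intended proof is the direct one, and it is essentially one line: by the definition of the tensor product and of association, $A_{s}(g\otimes...\otimes h)(Ch_{k})$ at $q$ is $\lim_{\epsilon\to 0} g^{\alpha...}_{\beta...}(Ch_{k},\omega_{\epsilon})\cdots h^{\mu...}_{\nu...}(Ch_{k},\omega_{\epsilon})$; since each factor lies in $\Lambda$, Theorem \ref{Association1} guarantees that each individual limit exists for every $q\in\Omega_{Ch}\setminus\cup_{i}\tilde\Omega_{i}(Ch_{k})$, is independent of the chosen $\omega_{\epsilon}$ and chart, and equals the corresponding component of $A_{s}(g),\ldots,A_{s}(h)$; a finite product of convergent factors converges to the product of the limits, which is precisely the classical tensor product $A_{s}(g)\otimes...\otimes A_{s}(h)$. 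In particular, the Lipschitz-type ($\Lambda$-membership) estimate for the product matrix that you single out as the one nontrivial step is not needed for the theorem at all --- it only matters for the subsequent remark that the product is equivalent to an element of $\Lambda$ --- so your route inserts an unnecessary estimate while omitting the step that actually carries the proof.
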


\begin{proof}
It is trivially connected with previous proofs:
Note that from the definition (\ref{DefAssoc}) for appropriate 4-form fields $\omega_{\epsilon}$ we have
\begin{equation}
A_{s}(g^{\alpha...}_{\beta...}\otimes...\otimes
h^{\mu...}_{\nu...})(Ch_{k})=\lim_{\epsilon\to 0}~ g^{\alpha...}_{\beta...}(Ch_{k},\omega_{\epsilon})...
h^{\mu...}_{\nu...}(Ch_{k},\omega_{\epsilon}).
\end{equation}

But for the objects $g^{\alpha...}_{\beta...},...,h^{\alpha...}_{\beta...}\in\Lambda$, with respect to the theorem (\ref{Association1}) necessarily
\begin{equation}
\lim_{\epsilon\to 0}~ g^{\alpha...}_{\beta...}(Ch_{k},\omega_{\epsilon})...
h^{\mu...}_{\nu...}(Ch_{k},\omega_{\epsilon})=A_{s}(g^{\alpha...}_{\beta...})\otimes...\otimes
A_{s}(h^{\mu...}_{\nu...}).
\end{equation}
This proves the theorem.
\end{proof}

This means that the result of tensor multiplication of elements from $\Lambda$ (it has product of two scalars as a subcase) is always equivalent to some element from $\Lambda$.
This is a result closely related to the theorem (\ref{multiplication}). It tells us that multiplication is a mapping between equivalence classes formed of more constrained classes as those mentioned in the part $a)$ of the theorem (\ref{multiplication}).

\theoremstyle{definition}
\newtheorem{conj1}{Conjecture}[section]
\begin{conj1}
 If ~$T^{\mu...}_{\nu...}\in D'^{m}_{n (\mathcal{\mathcal{\mathcal{\mathcal{S}}}} o)}(M)$ has an
associated field, then it transforms on its domains as a
tensor field.
\end{conj1}

\theoremstyle{definition}
\newtheorem{conj2}[conj1]{Conjecture}
\begin{conj2}
 If ~$T^{\mu...}_{\nu...}\in \Gamma^{m}_{(\mathcal{\mathcal{\mathcal{\mathcal{S}}}} o)}(M)$ has an
associated field and ~$L^{\alpha...}_{\beta...}\in\Gamma^{n}_{S}(M)$,
then ~$A_{s}(T^{\mu...}_{\nu...}\otimes L^{\alpha...}_{\beta...})=A_{s}(T^{\mu...}_{\nu...})\otimes A_{s}(L^{\alpha...}_{\beta...})$. ~~(The $\otimes$ sign has again slightly different meaning on the different sides of
the equation).
\end{conj2}

\theoremstyle{definition}
\newtheorem{conj3}[conj1]{Conjecture}
\begin{conj3}
Take
~$C^{\mu...}_{\nu...},D^{\mu...}_{\nu...},F^{\alpha...}_{\beta...},B^{\alpha...}_{\beta...}\in\Gamma^{m}_{A}(M)$,
~such that they belong to the same classes
~$\Gamma^{m}_{(At(\mathcal{\mathcal{\tilde S}}))A}(M)$. Also assume
that ~$\forall\mathcal{\mathcal{\tilde S}}$, such that
~$C^{\mu...}_{\nu...},D^{\mu...}_{\nu...},F^{\alpha...}_{\beta...},\\
B^{\alpha...}_{\beta...}\in\Gamma^{m}_{(At(\mathcal{\mathcal{\tilde
S}}))A}(M)$, ~and ~$\forall
\mathcal{\mathcal{\mathcal{\mathcal{S}}}}\subseteq\mathcal{\mathcal{\tilde
S}}$,~ each of the elements
~$C^{\mu...}_{\nu...},D^{\mu...}_{\nu...},F^{\alpha...}_{\beta...},B^{\alpha...}_{\beta...}$
has for every ~$At(\mathcal{\mathcal{\tilde S}})$ associated fields
defined on the whole $M$. Then ~$F^{\alpha...}_{\beta...}\approx
B^{\alpha...}_{\beta...}$ and ~$C^{\mu...}_{\nu...}\approx
D^{\mu...}_{\nu...}$ implies ~$(C\otimes
F)^{\mu...\alpha...}_{\nu...\beta...}\approx (D\otimes
B)^{\mu...\alpha...}_{\nu...\beta...}$.
\end{conj3}

\subsection{Covariant derivative}

~~~~~The last missing fundamental concept is the
covariant derivative operator on generalized tensor fields (GTF). This operator is necessary to formulate an appropriate language for physics and generalize
physical laws. Such an operator must obviously reproduce our concept
of the covariant derivative on the smooth tensor fields (through the
given association relation to the smooth manifold). This is provided
in the following section. The beginning of the first part is again devoted to
fundamental definitions. The beginning of the second part gives us fundamental
theorems, again just generalizing Colombeau results for our case. After these theorems we, (similarly to previous section),
formulate conjectures representing the very important and natural
extensions of our results (bringing a lot of new significance to
our results). The last subsection in the second part being again called ``some additional theory''
brings (analogously to previous section) just physical insight to our abstract calculus and is of lower mathematical
importance.

\subsubsection{Definition of  $\partial$-derivative and connection coefficients}

\theoremstyle{definition}
\newtheorem{defn15}{Definition}[section]
\begin{defn15}
 We define a map, called the $\partial$-derivative, given by smooth vector
field $U^{i}$ (smooth in the atlas $\mathcal{\mathcal{\mathcal{\mathcal{\mathcal{S}}}}}$)~ as a mapping
$\Gamma^{m}_{(\mathcal{\mathcal{\mathcal{\mathcal{S}}}})}(M)\to \Gamma^{m}_{(\mathcal{\mathcal{\mathcal{\mathcal{S}}}})}(M)$,
given on its domain ~$\forall\Omega_{Ch}$~ and ~$Ch_{k}(\Omega_{Ch})$ as:
\begin{equation*}
T^{\mu...}_{\nu...~,(U)}(Ch_{k},\omega)\equiv
-T^{\mu...}_{\nu...}(Ch_{k},(U^{\alpha}\omega)_{,\alpha})~~~~~~\omega\in
C^{P}(\Omega_{Ch}).
\end{equation*}
Here
~$(U^{\alpha}\omega)_{,\alpha}$ is understood in the following way: We
express ~$U^{\alpha}$ in the chart ~~$Ch_{k}(\Omega_{Ch})$ ~~and take the derivatives
~~$\left(U^{\alpha}(Ch_{k})~\omega'(Ch_{k})\right)_{,\alpha}$~~ in the same chart\footnote{We will further express that the derivative is taken in $Ch_{k}(\Omega_{Ch})$ by using the notation $(U^{\alpha}\omega'(Ch_{k}))_{, [(Ch_{k})\alpha]}$.~} ~$Ch_{k}(\Omega_{Ch})$.~ They give us some
function in $Ch_{k}(\Omega_{Ch})$, which can be (in this chart) taken as expression for density of some object from
~$C^{P}(\Omega_{Ch})$. This means we trivially extend it to $M$ by taking it to be 0
everywhere outside ~$\Omega_{Ch}$.~ This is the object used as an argument in $T^{\mu...}_{\nu...}$.
\end{defn15}

To make a consistency check: This ~$T^{\mu...}_{\nu...,(U)}$~ is an
object which is defined at least on the domain ~$C^{P}_{S
(\mathcal{\tilde S})}(M)$~ for
~$\mathcal{\mathcal{\mathcal{\mathcal{S}}}}\subseteq\mathcal{\mathcal{\tilde
S}}$~ ($\mathcal{\mathcal{\mathcal{\mathcal{\mathcal{S}}}}}$
related to $U^{i}$ in the sense that $U^{i}$ is smooth in
$\mathcal{\mathcal{\mathcal{\mathcal{\mathcal{S}}}}}$),~ and is
continuous on the same domain. This means it belongs to the class
~$\Gamma^{n}(M)$.~ To show this take some arbitrary ~$\Omega_{Ch}$
and some arbitrary chart
~$Ch_{k}(\Omega_{Ch})\in\mathcal{\mathcal{\mathcal{\mathcal{\mathcal{S}}}}}$.
We see that within ~$Ch_{k}(\Omega_{Ch})$ the expression
~$(U^{\alpha}\omega'),_{\alpha}$ is smooth and describes 4-forms,
which are compactly supported, with their support being subset of
$\Omega_{Ch}$. Hence they are from the domain of
~$T^{\mu...}_{\nu...}$ in every chart from
$\mathcal{\mathcal{\mathcal{\mathcal{\mathcal{S}}}}}$. In any
arbitrary chart from
$\mathcal{\mathcal{\mathcal{\mathcal{\mathcal{S}}}}}$ we
trivially observe, (from the theory of distributions), that if
~$\omega_{n}\to\omega$, than ~$T^{\mu...}_{\nu...,
(U)}(\omega_{n})\to T^{\mu...}_{\nu...,(U)}(\omega)$. ~It means that
$T^{\mu...}_{\nu...,(U)}(\omega)$ is continuous.

\theoremstyle{definition}
\newtheorem{defn16}[defn15]{Definition}
\begin{defn16}
 Now, by generalized connection we denote an
object from ~$\Gamma^{3}(M)$ such that:
\begin{itemize}
\item
The set
\[\cup_{\Omega_{Ch}}\cup_{\mathcal{\mathcal{\tilde S}}}\bigg\{\big(\omega,Ch(\Omega_{Ch})\big):\omega\in C^{P}_{S(\mathcal{\mathcal{\tilde S}})}(\Omega_{Ch}), ~~Ch(\Omega_{Ch})\in\mathcal{A}\bigg\}\]
 belongs to its domain.
\item
It is ~$\forall\mathcal{\mathcal{\tilde S}}$,~ $\forall\Omega_{Ch}$,~ $\forall Ch_{k}(\Omega_{Ch})\in\mathcal{A}$~
continuous ~on ~$C^{P}_{S(\mathcal{\mathcal{\tilde S}})}(\Omega_{Ch})$ ~with ~$Ch_{k}(\Omega_{Ch})$ ~taken as its argument.
\item
It transforms as:
\begin{equation}
\Gamma^{\alpha}_{\beta\gamma}(Ch_{2},\omega)=
\Gamma^{\mu}_{\nu\delta}(Ch_{1},((J^{-1})^{\nu}_{\beta}(J^{-1})^{\delta}_{\gamma}J^{\alpha}_{\mu}-J^{\alpha}_{m}(J^{-1})^{m}_{\beta,\gamma})~\omega).
\end{equation}
\end{itemize}
\end{defn16}

\subsubsection{Definition of covariant derivative}

\theoremstyle{definition}
\newtheorem{defn17}[defn15]{Definition}
\begin{defn17}
 By a covariant derivative ~ (on $\Omega_{Ch}$) in the
direction of a vector field ~$U^{i}(M)$, smooth with respect to atlas $\mathcal{\mathcal{\mathcal{\mathcal{\mathcal{S}}}}}$, of an object
~$T^{\mu...}_{\nu...}\in \Gamma^{m}_{(\mathcal{\mathcal{\mathcal{\mathcal{S}}}})}(\Omega_{Ch})$,~ we mean:
\begin{equation}
D_{C (U)}T^{\mu...}_{\nu...}(\omega)\equiv
T^{\mu...}_{\nu...~,(U)}(\omega)+\Gamma^{\mu}_{\alpha\rho}(\omega)T^{\rho...}_{\nu...}(U^{\alpha}\omega)-\Gamma^{\alpha}_{\nu\rho}(\omega)T^{\mu...}_{\alpha...}(U^{\rho}\omega).\label{cov}
\end{equation}

\end{defn17}

\medskip

This definition (\ref{cov}) automatically defines covariant derivative everywhere on
~$\Gamma^{m}_{(\mathcal{\mathcal{\mathcal{\mathcal{S}}}})}(M)$. This can be easily observed: The
$\partial$-derivative still gives us an object from ~$\Gamma^{m}_{(\mathcal{\mathcal{\mathcal{\mathcal{S}}}})}(M)$, and the
second term containing generalized connection is from
~$\Gamma^{m}_{(\mathcal{\mathcal{\mathcal{\mathcal{S}}}})}(M)$ trivially too.

\theoremstyle{definition}
\newtheorem{defn177}[defn15]{Definition}
\begin{defn177}
Furthermore, let us
extend the definition of covariant derivative to the class ~$\Gamma^{m}_{
(\mathcal{\mathcal{\mathcal{\mathcal{S}}}})A}(M)$ just by stating that on every nonlinear object (note that every such object is constructed by tensor product of linear objects) it is defined by the Leibniz rule. (This means
it is a standard derivative operator, since it is trivially linear
as well.)
\end{defn177}

\subsubsection{The $S'_n$ class}

\paragraph{Notation.}
Take as ~$\mathcal{D}_{n}$ some $n$-times continuously
differentiable subatlas of $\mathcal{A}$. Then the class $S'_{n}$ related to the atlas ~$\mathcal{D}_{n}$~
is formed by objects ~$T^{\mu...}_{\nu...}\in\Gamma^{a}_{E
(\cup_{m}At(\mathcal{\mathcal{\tilde S}}_{m})o)A}(M)$, such that:
\begin{itemize}
\item
$\cup_{m}\mathcal{\mathcal{\tilde S}}_{m}=\mathcal{D}_{n}$~ and
~$\forall m,~~At(\mathcal{\mathcal{\tilde
S}}_{m})=\mathcal{D}_{n}$,
\item
it is given  $\forall\Omega_{Ch}$, ~$\forall
Ch_{k}(\Omega_{Ch})\in\mathcal{D}_{n}$, ~$\forall m$, ~$\forall\omega\in
C^{P}_{S (\mathcal{\mathcal{\tilde S}}_{m})}(\Omega_{Ch})$ as a map
\[(\omega, Ch_{k})\to\int_{\Omega_{Ch}}T^{\mu...}_{\nu...}(Ch_{k})~\omega.~~~~~~~\]

 Here $T^{\mu...}_{\nu...}(Ch_{k})$ is a multi-index matrix of $n$-times continuously differentiable functions (if it is being expressed in any arbitrary chart from $\mathcal{D}_{n}$).
\end{itemize}

\subsubsection{Basic equivalence relations related to differentiation and some of the
interesting conjectures}

\newtheorem{lcovder1}{Theorem}[section]

\begin{lcovder1}
The following statements hold:
\begin{itemize}
\item[a)] Take a vector field ~$U^{i}$, which is smooth at
 ~$\mathcal{\mathcal{\mathcal{\mathcal{S}}}}\subset \mathcal{D}_{n+1}\subset \mathcal{D}_{n}$ for $n\geq 1$, ~(formally including also ~$n=n+1=\infty$, hence $\mathcal{\mathcal{\mathcal{\mathcal{\mathcal{S}}}}}$), plus a generalized
 connection
 ~$\Gamma^{\mu}_{\nu\alpha}\in\Gamma^{3}_{E}(M)$,~ and
~$T^{\mu...}_{\nu...}\in D'^{a}_{b}(M)\cap S'_{n}$. ~ $S'_{n}$ is
here related to the given ~$\mathcal{D}_{n}$. Then it follows that:
$D_{C(U)}T^{\mu...}_{\nu...}$ is an object from $S'_{n+1}$ ~(being
related to the given $\mathcal{D}_{n+1}$). Moreover, the equivalence
class ~$\tilde S'_{n+1}$ of the image contains exactly one linear
element given for every chart from its domain as integral from some
multi-index matrix of piecewise continuous functions. Particularly
this element is for $\forall\Omega_{Ch}$ and arbitrary such
$\omega\in C^{P}(\Omega_{Ch})$, ~$Ch_{k}(\Omega_{Ch})$ that are from
its domain a map:
\begin{equation}
(Ch_{k},\omega)\to\int
U^{\alpha}T^{\mu...}_{\nu...;\alpha}(Ch_{k})~\omega.
\end{equation}
 ~(Here ``$;$'' means the
classical covariant derivative related to the ``classical'' connection, components of which are in $Ch_{k}$ given by
~$\Gamma^{\alpha}_{\beta\delta}(Ch_{k})$, and the tensor field, components of which are in $Ch_{k}$ given by $T^{\mu...}_{\nu...}(Ch_{k})$.)
\item[b)] Take $U^{i}$ being smooth in $\mathcal{\mathcal{\mathcal{\mathcal{\mathcal{S}}}}}$ and
~$\Gamma^{\mu}_{\nu\alpha}\in\Gamma^{3}_{S}(M)$. Then the following holds:
The covariant derivative is a map  ~$D'^{m}_{n (\mathcal{\mathcal{\mathcal{\mathcal{S}}}}
o)}(M)\to\tilde\Gamma^{m+n}_{ (\mathcal{\mathcal{\mathcal{\mathcal{S}}}} o)A}(M)$~ and the classes
~~$\tilde\Gamma^{m+n}_{ (\mathcal{\mathcal{\mathcal{\mathcal{S}}}} o)A}(M)$ ~of the image contain (exactly)
one element of $D'^{m}_{n (\mathcal{\mathcal{\mathcal{\mathcal{S}}}} o)}(M)$.
\end{itemize}\label{lcovder1}
\end{lcovder1}

\begin{proof}

\begin{itemize}
\item[a)]Covariant derivative is, on its domain, given as
\begin{equation*}
(\omega, Ch_{k})\to T^{\mu...}_{\nu...~,(U)}(\omega)+\Gamma^{\mu}_{\alpha\rho}(\omega)T^{\rho...}_{\nu...}(U^{\alpha}\omega)-\Gamma^{\alpha}_{\nu\rho}(\omega)T^{\mu...}_{\alpha...}(U^{\rho}\omega).
\end{equation*}

Take the first term ~$T^{\mu...}_{\nu...,(U)}$. Express it on
$\Omega_{Ch}$ in ~$Ch_{k}(\Omega_{Ch})\in  \mathcal{D}_{n}$ as the
map:
\[\omega\to\int_{Ch_{k}(\Omega_{Ch})}T^{\mu....}_{\nu...}(Ch_{k})~\omega'(Ch_{k})~d^{4}x.\]
 ~So analogously:
\[T^{\mu...}_{\nu...,(U)}(Ch_{k},\omega)=
-\int_{Ch_{k}(\Omega_{Ch})}T^{\mu...}_{\nu...}(Ch_{k})\left(U^{\alpha}(Ch_{k})~\omega'(Ch_{k})\right)_{,[(Ch_{k})\alpha]}~d^{4}x.\]

Here $\omega'$ is in arbitrary chart from ~$\mathcal{D}_{n}$,
being~ $n$-times continuously differentiable, (the domain is limited
to such objects by the second covariant derivative term, which is
added to the $\partial$-derivative), and such that the expression
~$(U^{i}(Ch_{k})~\omega'(Ch_{k}))_{,[(Ch_{k})i]}$ is in any
$Ch_{k}(\Omega_{Ch})\in\mathcal{D}_{n}$~ $n$-times continuously
differentiable.

Now by using integration by parts,
(since all the objects under the integral are at least continuously
differentiable ($n\geq 1$), it can safely be used), and considering the compactness of
support we obtain:

\begin{eqnarray}
T^{\mu...}_{\nu...,(U)}(Ch_{k},\omega)~~~~~~~~~~~~~~~~~~~~~~~~~~~~~~~~~~~~~~~~~~~~~~~~~~~~~~~~~~~~~~~~~~~~~~~~~~~~~~~~~\nonumber\\
=-\int_{Ch_{k}(\Omega_{Ch})}T^{\mu...}_{\nu...}(Ch_{k})\left(U^{\alpha}(Ch_{k})~\omega'(Ch_{k})\right)_{,[(Ch_{k})\alpha]}~d^{4}x\nonumber\\ =\int_{Ch_{k}(\Omega_{Ch})}T^{\mu...}_{\nu...}(Ch_{k})_{,[(Ch_{k})\alpha]}U^{\alpha}(Ch_{k})~\omega'(Ch_{k})~d^{4}x.~~~
\end{eqnarray}
Then it holds that:
\begin{eqnarray}
T^{\mu...}_{\nu...,(U)}(Ch_{m}(\Omega_{Ch}),\omega)~~~~~~~~~~~~~~~~~~~~~~~~~~~~~~~~~~~~~~~~~~~~~~~~~~~~~~~~~~~~~~~~~~~~~~~~~\nonumber\\
=\int_{Ch_{m}(\Omega_{Ch})}T^{\mu...}_{\nu...}(Ch_{m})_{,[(Ch_{m})\alpha]}(U^{\alpha}(Ch_{m})~\omega'(Ch_{m})~d^{4}x~~~~~~~~~~~~~~~~~~~~\nonumber\\
=\int_{Ch_{k}(\Omega_{Ch})}(J^{\mu}_{\beta}(J^{-1})
^{\delta}_{\nu}....T^{\delta...}_{\beta...}(Ch_{k}))_{,[(Ch_{k})\alpha]}U^{\alpha}(Ch_{k})~\omega'(Ch_{k})~d^{4}x.~~~~~
\end{eqnarray}
We see that this is defined and continuous in any arbitrary chart
from $\mathcal{D}_{n+1}$ for every $C^{P}_{S
(\mathcal{\mathcal{\tilde S}})}(M)$, such that $\exists
\mathcal{\mathcal{\mathcal{\mathcal{S}}}}\subset\mathcal{\mathcal{\tilde
S}}$ and $\mathcal{\mathcal{\mathcal{\mathcal{S}}}}\subset
\mathcal{D}_{n+1}$.

Now we see that the second term in the covariant
derivative expression is equivalent to the map (with the same domain):
\begin{equation*}
(Ch_{k},\omega)\to\int(\Gamma^{\mu}_{\alpha\rho}U^{\alpha}T^{\rho...}_{\nu...}-\Gamma^{\alpha}_{\nu\rho}U^{\rho}T^{\mu...}_{\alpha...})(Ch_{k})~\omega,
\end{equation*}

(see
theorem \ref{multiplication}), and between
charts the objects appearing
inside the integral transform exactly as their classical analogues. This must hold, since $T^{\mu...}_{\nu...}$ is
everywhere continuous (in every chart considered), hence on every
compact set bounded, so the given object is well defined.
This means that when we fix this object in chart ~$Ch_{m}(\Omega_{Ch})$, and express it through the chart ~$Ch_{k}(\Omega_{Ch})$ and
Jacobians (with the integral expressed at chart
~$Ch_{k}(\Omega_{Ch})$ ), as in the previous case, we discover
(exactly as in the classical case), that the resulting object under
the integral transforms as some object ~$D'^{m}_{n E}(M)$, with the classical expression for the
covariant derivative of a tensor field appearing under the integral.
\item[b)]The resulting object is defined particularly only on ~$C^{P}_{S
(\mathcal{\mathcal{\tilde S}})}(M)$ ~($\mathcal{\mathcal{\mathcal{\mathcal{S}}}}\subset\mathcal{\mathcal{\tilde S}}$). We have to realize that
~$T^{\mu...}_{\nu...}$ can be written as a ($N\to\infty$) weak limit
(in every chart from $\mathcal{\mathcal{\mathcal{\mathcal{\mathcal{S}}}}}$) of ~$T^{\mu...}_{\nu... N}\in D'^{m}_{n S
(\mathcal{\mathcal{\mathcal{\mathcal{S}}}} o)}(M)$. It is an immediate result of previous constructions and
Colombeau theory, that

\begin{equation}
-\Gamma^{\alpha}_{\nu\rho}(.)~T^{\mu...}_{\alpha...}(U^{\rho}.)\approx -T^{\mu...}_{\alpha...}(\Gamma^{\alpha}_{\nu\rho}U^{\rho}.)
\label{object}
\end{equation}

Now take ~$\forall\Omega_{Ch}$~ both, $T^{\mu...}_{\nu...,(U)}$ and (\ref{object}) fixed in arbitrary ~$Ch_{k}(\Omega_{Ch})\in\mathcal{\mathcal{\mathcal{\mathcal{\mathcal{S}}}}}$.~ Write both of those objects as limits of integrals of some
sequence of ``smooth'' objects in ~$Ch_{k}(\Omega_{Ch})$.~ But now we
can again use for ~$T^{\mu...}_{\nu...,(U)}$~ an integration per parts and from the ``old'' tensorial relations; we get the
``tensorial'' transformation properties under the limit. This means
that the resulting object, which is a limit of those objects
transforms in the way the ~$D'^{m}_{n}(M)$ objects transform.
\end{itemize}
\end{proof}

\newtheorem {meantime}[lcovder1]{Theorem}
\begin{meantime}

Part $a)$ of the theorem (\ref{lcovder1}) can be also formulated
through a generalized concept of covariant derivative, where we do
not require the $U^{i}$ vector field to be smooth at some
$\mathcal{\mathcal{\mathcal{\mathcal{\mathcal{S}}}}}$, but it
is enough if it is $n+1$ differentiable in $\mathcal{D}_{n+1}$.

\end{meantime}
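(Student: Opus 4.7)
The plan is to identify where the smoothness of $U^i$ was actually used in the proof of Theorem \ref{lcovder1}(a), and verify that only $(n+1)$-differentiability in $\mathcal{D}_{n+1}$ is required there. In that proof, smoothness of $U$ entered essentially in two places: first, in making sense of $(U^\alpha \omega')_{,\alpha}$ as a legitimate test object for $T^{\mu...}_{\nu...}$ in the definition of the $\partial$-derivative; and second, in justifying the integration-by-parts step that transfers one derivative from $\omega'$ onto $T$. Both uses are rather mild and can be loosened.

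First, I would introduce a generalized $\partial$-derivative. For $T \in D'^a_b(M)\cap S'_n$ and $U^i$ that is $(n+1)$-times continuously differentiable in $\mathcal{D}_{n+1}$, the formula
\begin{equation*}
T^{\mu...}_{\nu...,(U)}(Ch_k,\omega)\equiv -T^{\mu...}_{\nu...}\bigl(Ch_k,(U^\alpha \omega)_{,\alpha}\bigr)
\end{equation*}
still makes sense, provided we agree to extend the natural integral representation of any element of $S'_n$ to act on compactly supported test forms whose densities are $n$-times continuously differentiable (not necessarily piecewise smooth). This extension is canonical, because an element of $S'_n$ is given chart-by-chart by an honest integral against an $n$-differentiable multi-index matrix, which integrates perfectly well against any $n$-differentiable compactly supported density. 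With $\omega'$ smooth in some $\tilde{\mathcal{S}}\supseteq\mathcal{S}\subset \mathcal{D}_{n+1}$ and $U^\alpha$ of class $C^{n+1}$ in $\mathcal{D}_{n+1}$, the product $U^\alpha \omega'$ is $C^{n+1}$ and its divergence is $C^n$, hence a legitimate test object in this enlarged sense.

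Next, I would check the transformation law and continuity of this generalized $\partial$-derivative on the appropriate classes $C^P_{S(\tilde{\mathcal{S}})}(\Omega_{Ch})$ with $\tilde{\mathcal{S}}\subset \mathcal{D}_{n+1}$, exactly as in the unmodified case; the boundedness of Jacobians and inverse Jacobians in $\mathcal{A}$ guarantees that nothing changes qualitatively. The covariant derivative is then defined by the same formula \eqref{cov}, with the $\partial$-derivative reinterpreted; the other two terms $\Gamma^{\mu}_{\alpha\rho}T^{\rho...}_{\nu...}(U^\alpha \cdot)$ and $\Gamma^{\alpha}_{\nu\rho}T^{\mu...}_{\alpha...}(U^\rho\cdot)$ only involve pointwise multiplication by $U^\alpha$, which is clearly bounded on compacts once $U$ is $C^{n+1}$.

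Finally, I would redo the integration-by-parts step of Theorem \ref{lcovder1}(a) verbatim. Because $T^{\mu...}_{\nu...}(Ch_k)$ is $C^n$ and $U^\alpha(Ch_k)\,\omega'(Ch_k)$ is $C^{n+1}$, their product is $C^n$ and classical integration by parts applies, giving
\begin{equation*}
T^{\mu...}_{\nu...,(U)}(Ch_k,\omega)=\int_{Ch_k(\Omega_{Ch})} T^{\mu...}_{\nu...}(Ch_k)_{,[(Ch_k)\alpha]}\,U^\alpha(Ch_k)\,\omega'(Ch_k)\,d^4x.
\end{equation*}
Together with Theorem \ref{multiplication} applied to the remaining two terms (whose coefficients $\Gamma$ and $U$ are continuous, so their pointwise product with $T$ is well-controlled), this yields the same linear representative $\int U^\alpha T^{\mu...}_{\nu...;\alpha}(Ch_k)\,\omega$ inside the equivalence class, and shows that the class lies in $\tilde S'_{n+1}$ related to $\mathcal{D}_{n+1}$; uniqueness inside that class follows from the same argument as in the original proof.

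The main obstacle, I expect, will be purely book-keeping: making the enlargement of the test-object domain precise enough that the generalized $\partial$-derivative still fits inside the framework of Section \ref{NewApp} (in particular, that the equivalence relation of Definition \ref{defn11} can still be applied, since the mollifier sequences $\omega_\epsilon\in A^n(Ch'(q,\Omega_{Ch}),\tilde{\mathcal{S}})$ must remain inside the new domain). This is not a real difficulty because those $\omega_\epsilon$ are smooth in $\tilde{\mathcal{S}}$ by construction and $\tilde{\mathcal{S}}$ may always be chosen inside $\mathcal{D}_{n+1}$, so the analytic core of the argument is unchanged; the generalization is essentially a careful tracking of differentiability indices rather than a new idea.
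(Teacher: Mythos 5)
Your proposal is correct and follows essentially the same route as the paper: the paper's own proof is just the observation that smoothness of $U^{i}$ entered the argument for Theorem \ref{lcovder1}(a) only because the original definition of the covariant derivative demanded it, so with the generalized definition the integration-by-parts and multiplication steps go through with $U^{i}$ merely $(n+1)$-differentiable in $\mathcal{D}_{n+1}$. Your more detailed tracking of where the test-object domain must be enlarged and why the mollifier sequences remain admissible is a faithful elaboration of that same remark, not a different argument.
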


\begin{proof}
We just have to follow our proof and realize that the only reason why we used
smoothness of $U^{i}$ in $\mathcal{\mathcal{\mathcal{\mathcal{\mathcal{S}}}}}$ was that it is required by our
definition of covariant derivative (for another good reasons related to different cases).
\end{proof}

This statement has a crucial importance, since it shows that not only
all the classical calculus of smooth tensor fields with all the
basic operations is contained in our language (if we take the
equivalence instead of equality being the crucial part of our
theory), but it can be even extended to arbitrary objects from $S'_{n}$.
(If the covariant derivative is obtained through connection from the class
$\Gamma^{3}_{E}(M)$.) In other words it is more general than the
classical tensor calculus.

We can now think
about conjectures extending our results in a very important way:

\theoremstyle{definition}
\newtheorem{cc1}{Conjecture}[section]
\begin{cc1}
 Take an arbitrary piecewise smooth, and on every compact set bounded \footnote{To be exact, the expression ``covariant derivative'' is used in this and the following conjecture in a more general way, since we do not put on $U^{i}$ the condition of being smooth in some subatlas $\mathcal{\mathcal{\mathcal{\mathcal{\mathcal{S}}}}}$.} vector field ~$U^{i}$. Take also ~$\Gamma^{\mu}_{\nu\alpha}\in\Gamma^{3}_{E}(M)$ and such
~$T^{\mu...}_{\nu...}\in D'^{m}_{n E}(M)$, that ~$\forall\Omega_{Ch}$,~ $\forall Ch_{k}(\Omega_{Ch})\in\mathcal{A}$ ~$\exists Ch_{l}(\Omega_{Ch})\in\mathcal{A}$,~ in which\footnote{This means we are trivially integrating~ $T^{\mu...}_{\nu...}\Gamma^{\alpha}_{\beta\delta}$~on compact sets within subset of $\mathbb{R}^{4}$ given as image of the given chart mapping. }
\[\int_{Ch_{l|\Omega'}(\Omega_{Ch})}
T^{\mu...}_{\nu...}(Ch_{k})\Gamma^{\alpha}_{\beta\delta}(Ch_{k})~d^{4}x\]
converges on every compact set~ $\Omega'\subset\Omega_{Ch}$. ~Then the following holds:
The covariant derivative (along $U^{i}$) maps this object to an
element of some equivalence class from ~$\tilde\Gamma^{m+n}_{ A }(M)$. This class contains (exactly) one
element from ~$D'^{m}_{n}(M)$.
\end{cc1}

\theoremstyle{definition}
\newtheorem{cc2}[cc1]{Conjecture}
\begin{cc2}
Take ~$U^{i}$ being a piecewise smooth vector field, and
~$\Gamma^{\mu}_{\nu\alpha}\in\Gamma^{3}_{S}(M)$. Then the following
holds: The covariant derivative along this vector field is a map:
~$D'^{m}_{n (\cup_{l}\mathcal{\mathcal{\mathcal{\mathcal{S}}}}_{l}
~o)}(M)\to\tilde\Gamma^{m+n}_{
(\cup_{l}\mathcal{\mathcal{\mathcal{\mathcal{S}}}}_{l}~o)A}(M)$, and
the classes ~$\tilde\Gamma^{m+n}_{
(\cup_{l}\mathcal{\mathcal{\mathcal{\mathcal{S}}}}_{l}~o)A}(M)$ of
the image contain (exactly) one element of ~$D'^{m}_{n}(M)$.
\end{cc2}

\newtheorem{Comm}[lcovder1]{Theorem}
\begin{Comm}
For ~$U^{i}$ being a smooth tensor field in $\mathcal{\mathcal{\mathcal{\mathcal{\mathcal{S}}}}}$ with the connection taken
from ~$\Gamma^{3}_{S}(M)$, ~$T^{\mu...}_{\nu...}\in\Gamma^{m}_{ (\mathcal{\mathcal{\mathcal{\mathcal{S}}}}
o)A}(M)$ and ~$B^{\mu...}_{\nu...}\in\Gamma^{m}_{(\mathcal{\mathcal{\mathcal{\mathcal{S}}}} o)}(M)$, it holds that
~$T^{\mu...}_{\nu...}\approx B^{\mu...}_{\nu...}$~ implies
~$D^{n}_{C(U)}T^{\mu...}_{\nu...}\approx D^{n}_{C(U)}B^{\mu...}_{\nu...}$~ for arbitrary natural number
$n$.

\end{Comm}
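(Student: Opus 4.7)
The natural approach is induction on $n$. The base case $n=0$ is the hypothesis itself, and by Theorem \ref{lcovder1}(b) each covariant derivative stays inside a class $\tilde\Gamma^{m+n}_{(\mathcal{S}o)A}(M)$ that can be re-differentiated, so the induction is well-posed and it suffices to handle the $n=1$ step: namely, that $T \approx B$ implies $D_{C(U)}T \approx D_{C(U)}B$.

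Unfolding the definition,
\begin{equation*}
D_{C(U)}T^{\mu\ldots}_{\nu\ldots}(\omega) = T^{\mu\ldots}_{\nu\ldots,(U)}(\omega) + \Gamma^{\mu}_{\alpha\rho}(\omega)\,T^{\rho\ldots}_{\nu\ldots}(U^{\alpha}\omega) - \Gamma^{\alpha}_{\nu\rho}(\omega)\,T^{\mu\ldots}_{\alpha\ldots}(U^{\rho}\omega),
\end{equation*}
and similarly for $B$. I would dispatch the two connection-bearing pieces first: since $\Gamma^{\mu}_{\nu\alpha} \in \Gamma^{3}_{S}(M)$ is smooth and $U^{\alpha}$ is smooth in $\mathcal{S}$, multiplying $T$ or $B$ by these smooth factors (and contracting) respects the equivalence relation by the Comm2 theorem together with Theorem \ref{multiplication}(c). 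Hence the difference of the connection pieces between $T$ and $B$ is already $\approx 0$, and the entire content of the $n=1$ step reduces to the partial-derivative term.

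To show $T_{,(U)} \approx B_{,(U)}$ I would test the equivalence on $\tilde\omega_\epsilon(y)$ built from some $\omega_\epsilon \in A^{n}(Ch'(q,\Omega_{Ch}),\mathcal{\tilde S})$ and smeared against a compactly supported $\phi$. The definition of the $\partial$-derivative combined with the Leibniz rule gives, in the chart,
\begin{equation*}
(U^{\alpha}\tilde\omega_\epsilon(y))_{,\alpha} = U^{\alpha}_{,\alpha}\,\tilde\omega_\epsilon(y) + U^{\alpha}\,\tilde\omega_{\epsilon,\alpha}(y),
\end{equation*}
so one summand is a smooth multiple of $\tilde\omega_\epsilon(y)$, to be absorbed as in the connection case, while the other carries a derivative on the mollifier. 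For the latter I would use $\tilde\omega_{\epsilon,x^{\alpha}}(y) = -\tilde\omega_{\epsilon,y^{\alpha}}(y)$, integrate by parts in $y$ against $\phi$, and thereby convert the situation into the original equivalence test of $T$ against $B$, but now with the smooth compactly supported function $(U^{\alpha}\phi)_{,\alpha}$ in place of $\phi$. The hypothesis $T\approx B$ then forces the limit to vanish.

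The main obstacle is the mismatch between the $x$-dependence of $U^{\alpha}$ inside the chart integral and the $y$-dependence of $\phi$: the Leibniz split yields $U^{\alpha}(x)\tilde\omega_{\epsilon,x^{\alpha}}(y)$ rather than $U^{\alpha}(y)\tilde\omega_{\epsilon,x^{\alpha}}(y)$, so the transfer of the derivative onto $\phi$ is only correct modulo remainder terms proportional to $U^{\alpha}(x)-U^{\alpha}(y)$. Controlling these remainders uniformly on the compact $y$-support of $\phi$ — using the smoothness of $U$ in $\mathcal{S}$, the moment-vanishing conditions defining $A^{n}$, and the $O(\epsilon)$ concentration of $\tilde\omega_\epsilon$ around $y$ — is the genuine technical step. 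Once this is carried out, the $n=1$ equivalence holds and the induction closes, giving $D^{n}_{C(U)}T \approx D^{n}_{C(U)}B$ for every natural $n$.
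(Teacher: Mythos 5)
Your strategy is sound but genuinely different from the paper's, and it is worth seeing what each buys. The paper does no induction and no mollifier estimates at all: it fixes an arbitrary $\Omega_{Ch}$ and a chart from $\mathcal{S}$, uses the chart mapping to regard $T^{\mu...}_{\nu...}$, $B^{\mu...}_{\nu...}$, the connection and $U^{\alpha}$ as objects of the classical $\mathbb{R}^{4}$ Colombeau framework, notes that in that picture both the $\partial$-derivative term (via $T((U^{\alpha}\omega)_{,\alpha})\approx U^{\alpha}T_{,\alpha}$, with the Colombeau distributional derivative obeying the Leibniz rule) and the connection terms arise from $T$, $B$ purely by distributional differentiation and multiplication by smooth objects, and then simply invokes the known Colombeau results cited earlier in the chapter — association with a distribution is preserved under derivatives of arbitrary degree and under multiplication by a smooth distribution. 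Arbitrary $n$ then comes for free, with no induction, and chart-arbitrariness converts the chartwise statement into the paper's notion of $\approx$. Your route instead re-proves the key step from the definition of $\approx$ (Leibniz split, trading $\partial_{x}$ for $-\partial_{y}$ on the mollifier, integration by parts against $\phi$), which is essentially the proof of those cited lemmas done by hand: it is more self-contained, but the step you yourself flag — controlling the $U^{\alpha}(x)-U^{\alpha}(y)$ remainder using the $A^{n}$ moment conditions and the $O(\epsilon)$ concentration — is left unexecuted, and it is exactly the content of the standard result the paper leans on; the same remark applies to the substitution $\omega\to U^{\alpha}\omega$ inside the connection terms for nonlinear $T$, which Theorem \ref{multiplication}(c) and the Comm2 theorem alone do not quite cover, since (c) is stated for linear arguments. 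Finally, your induction does close, but only because Theorem \ref{lcovder1}(b) supplies a linear representative of each $D_{C(U)}B$-class and $\approx$ is transitive, so the $n=1$ statement can be re-entered at every stage; you gesture at this correctly, whereas the paper sidesteps the iteration entirely through the ``arbitrary degree'' derivative result. So either carry out the remainder estimate explicitly or, as the paper does, cite the Colombeau-theory lemmas — as written, your proof is a sketch precisely at its crucial point.
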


\begin{proof}
Pick an arbitrary ~$\Omega_{Ch}$ and an arbitrary fixed chart
~$Ch'(\Omega_{Ch})\in\mathcal{\mathcal{\mathcal{\mathcal{\mathcal{S}}}}}$. Such a chart maps all the 4-forms from
the domain of ~$\Gamma^{m}_{ (\mathcal{\mathcal{\mathcal{\mathcal{S}}}}o)A}(M)$ objects to smooth compact
supported functions (given by densities expressed in that chart). The objects
~$T^{\mu...}_{\nu...}((U^{\alpha}\omega)_{,\alpha})$ and
~$B^{\mu...}_{\nu...}((U^{\alpha}\omega)_{,\alpha})$  are taken as objects of the
Colombeau algebra (the connection, fixed in that chart,
is also an  object of the Colombeau algebra) and are equivalent to
~$U^{\alpha}(\omega)T^{\mu...}_{\nu...,\alpha}(\omega)$ and
~$U^{\alpha}(\omega)B^{\mu...}_{\nu...,\alpha}(\omega)$. Here the
derivative means the ''distributional derivative'' as used in the
Colombeau theory (fulfilling the Leibniz rule) and ~$U^{\alpha}(\omega)$
is simply a ~$D'^{m}_{n S (\mathcal{\mathcal{\mathcal{\mathcal{S}}}} o)}(M)$ object with the given vector
field appearing under the integral. But in the Colombeau
theory one knows that if some object is equivalent to a distributional
object, then their derivatives of arbitrary degree are also equivalent.
It also holds that if any arbitrary object is equivalent to a
distributional object, then they remain equivalent after being multiplied by arbitrary smooth
distribution. In the fixed chart we have (still in the Colombeau theory sense),

\begin{equation*}
T^{\mu...}_{\nu...}\approx B^{\mu...}_{\nu...}.
\end{equation*}

 But since their
$\partial$-derivatives were, in the fixed chart, obtained only by the distributional derivatives and multiplication by a smooth function, also their
$\partial$-derivatives must remain equivalent. The same holds about the second
covariant derivative term (containing connection). So the objects
from classical Colombeau theory, (classical theory just trivially
extended to what we call multi-index matrices of functions),
obtained by the chart mapping of the covariant derivatives of ~$T^{\mu...}_{\nu...}$~ and ~$B^{\mu...}_{\nu...}$,~ are equivalent in the
sense of the Colombeau theory. But the ~$\Omega_{Ch}$ set was
arbitrary and also the chart was an arbitrary chart from the
domain of ~$T^{\mu...}_{\nu...},~ B^{\mu...}_{\nu...}$.~ So ~$T^{\mu...}_{\nu...}$~ and ~$B^{\mu...}_{\nu...}$~ are equivalent with respect to our definition.
\end{proof}

We can try to extend this statement to a conjecture:

\theoremstyle{definition}
\newtheorem{ADD3}[cc1]{Conjecture}
\begin{ADD3}
Take ~$U^{i}$ being piecewise smooth tensor field, take connection
from the class ~$\Gamma^{3}_{S}(M)$,
~$T^{\mu...}_{\nu...}\in\Gamma^{m}_{
(\cup_{l}\mathcal{\mathcal{\mathcal{\mathcal{S}}}}_{l}~o)A}(M)$, and
~$B^{\mu...}_{\nu...}\in\Gamma^{m}_{(\cup_{l}\mathcal{\mathcal{\mathcal{\mathcal{S}}}}_{l}~o)}(M)$.
Then it holds that ~$T^{\mu...}_{\nu...}\approx
B^{\mu...}_{\nu...}$~ implies
~$D^{n}_{C(U)}T^{\mu...}_{\nu...}\approx
D^{n}_{C(U)}B^{\mu...}_{\nu...}$~ for arbitrary natural number $n$,
if such covariant derivative exists.
\end{ADD3}

This conjecture in fact means that if we have connection from the
class $\Gamma^{3}_{S}(M)$, then the covariant derivative is a map
from such element of the class ~$\tilde\Gamma^{m}_{
(\cup_{l}\mathcal{\mathcal{\mathcal{\mathcal{S}}}}_{l}~o)A}(M)$,
that it contains some linear element, to ~$\tilde\Gamma^{m}_{A}(M)$.
Note that we can also try to prove an extended version of the
conjecture, taking the same statement and just extending the classes
~$\Gamma^{m}_{
(\cup_{l}\mathcal{\mathcal{\mathcal{\mathcal{S}}}}_{l}~o)A}(M)$,
~$\Gamma^{m}_{
(\cup_{l}\mathcal{\mathcal{\mathcal{\mathcal{S}}}}_{l}~o)}(M)$~ to
the classes ~$\Gamma^{m}_{
(\cup_{l}\mathcal{\mathcal{\mathcal{\mathcal{S}}}}_{l})A}(M)$,
~$\Gamma^{m}_{
(\cup_{l}\mathcal{\mathcal{\mathcal{\mathcal{S}}}}_{l})}(M)$.

\subsubsection{Some additional theory}

\newtheorem{lcovder}[lcovder1]{Theorem}

\begin{lcovder}
Take ~$U^{i}$ to be vector field smooth in some
~$\mathcal{\mathcal{\mathcal{\mathcal{S}}}}\subset
\mathcal{D}_{n}$,~ with $\Gamma^{\mu}_{\nu\alpha}\in\Lambda$~ and
~$T^{\mu...}_{\nu...}\in S'_{n}\cap D'^{a}_{b}(M)$ ($n\geq 1$,
$S'_{n}$ is related to $\mathcal{D}_{n}$).~ Then $D_{C (U)}
T^{\mu...}_{\nu...}$ has an associated field which is on
~$M\setminus\tilde\Omega(Ch)$~ the classical covariant derivative of
~$A_{s}(T^{\mu...}_{\nu...})$. ($\tilde\Omega(Ch)$ is a set on which is
~$\Gamma^{\mu}_{\nu\alpha}(Ch)$~ continuous and is of 0 Lebesgue
measure.) It is defined on the whole $\mathcal{D}_{n+1}$
~($\mathcal{\mathcal{\mathcal{\mathcal{S}}}}\subset
\mathcal{D}_{n+1}$). That means association and covariant
differentiation in this case commute.\label{T244}
\end{lcovder}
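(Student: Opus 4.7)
The plan is to reduce the claim to Theorem \ref{lcovder1}(a) combined with Theorem \ref{Association1}, and then to propagate the associated field across the equivalence. First I would invoke Theorem \ref{lcovder1}(a): since $U^i$ is smooth in some $\mathcal{S}\subset \mathcal{D}_{n+1}\subset\mathcal{D}_n$, $\Gamma^{\mu}_{\nu\alpha}\in\Lambda\subset\Gamma^{3}_{E}(M)$, and $T^{\mu...}_{\nu...}\in S'_n\cap D'^{a}_{b}(M)$, there exists a unique linear element $W^{\mu...}_{\nu...}\in S'_{n+1}\cap D'^{a}_{b}(M)$ in the equivalence class of $D_{C(U)}T^{\mu...}_{\nu...}$, represented in every chart $Ch_k(\Omega_{Ch})\in\mathcal{D}_{n+1}$ by the classical integrand
\begin{equation*}
W^{\mu...}_{\nu...}(Ch_k)=U^{\alpha}(Ch_k)\,T^{\mu...}_{\nu...;\alpha}(Ch_k),
\end{equation*}
where ``$;$'' is the ordinary covariant derivative built from the piecewise continuous connection components $\Gamma^{\alpha}_{\beta\delta}(Ch_k)$ and the continuous tensor components $T^{\mu...}_{\nu...}(Ch_k)$.

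Next I would show $W^{\mu...}_{\nu...}\in\Lambda$. On each $\Omega_{Ch}\setminus\tilde\Omega(Ch_k)$ the connection components are continuous and satisfy the local Lipschitz-type bound from Definition \ref{Lambda}; the tensor components $T^{\mu...}_{\nu...}(Ch_k)$ and their first partial derivatives are continuous (because $T\in S'_n$ with $n\ge 1$) and hence locally Lipschitz on compact subsets; the smooth $U^{\alpha}(Ch_k)$ contributes the same property. Since sums and products of functions satisfying the $\Lambda$-bound on a common set of continuity again satisfy such a bound, $W^{\mu...}_{\nu...}(Ch_k)$ is continuous outside $\tilde\Omega(Ch_k)$ with the required local Lipschitz estimate, so $W^{\mu...}_{\nu...}\in\Lambda\cap S'_{n+1}\cap D'^{a}_{b}(M)$.

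I would then apply Theorem \ref{Association1} to $W^{\mu...}_{\nu...}$: it possesses an associated field $A_s(W^{\mu...}_{\nu...})$ whose value in every chart $Ch_k\in\mathcal{D}_{n+1}$ and at every point of $\Omega_{Ch}\setminus\tilde\Omega(Ch_k)$ is just the multi-index matrix $W^{\mu...}_{\nu...}(Ch_k)$, i.e.\ $U^{\alpha}T^{\mu...}_{\nu...;\alpha}$. Because $T\in S'_n$ has, by Theorem \ref{Association1} applied to $T$ itself, associated components $A_s(T^{\mu...}_{\nu...})(Ch_k)=T^{\mu...}_{\nu...}(Ch_k)$, the expression $U^{\alpha}T^{\mu...}_{\nu...;\alpha}$ evaluated at those points coincides (on $M\setminus\tilde\Omega(Ch_k)$, where all derivatives and the connection are defined classically) with the classical covariant derivative $(D_{U}A_s(T))^{\mu...}_{\nu...}$. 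This yields the desired identification $A_s(D_{C(U)}T^{\mu...}_{\nu...})=D_{U}A_s(T^{\mu...}_{\nu...})$ on $M\setminus\tilde\Omega(Ch)$.

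The main obstacle is the final step: transferring the associated field from the representative $W^{\mu...}_{\nu...}$ to the actual GTF $D_{C(U)}T^{\mu...}_{\nu...}$, since the existence of an associated field is a pointwise-in-$y$ statement about $\lim_{\epsilon\to 0}T(Ch_k,\omega_\epsilon)$ whereas equivalence is a statement about $\int\phi(y)F'(\ldots,\tilde\omega_\epsilon(y),\ldots)\,d^{4}y\to 0$. I would handle this by unwinding the definition of $D_{C(U)}T$ directly in a fixed chart: the three constituents $T^{\mu...}_{\nu...,(U)}$ and the two connection terms can each be rewritten, via integration by parts (legitimate since $n\geq 1$, so $T(Ch_k)$ is $C^{1}$ and $\omega_\epsilon\in A^{n}$ is smooth) and the mollifier identities for the $\Lambda$-class (cf.\ the computation in Theorem \ref{Association1}), as delta-like convolutions of precisely the classical integrand $U^{\alpha}T^{\mu...}_{\nu...;\alpha}$ evaluated at the base point, with correction terms of order $\epsilon$. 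Taking $\epsilon\to 0$ at any point of $M\setminus\tilde\Omega(Ch)$ where the integrand is continuous gives the pointwise limit directly, which is independent of the chosen $\omega_\epsilon\in A^{n}(Ch'(q,\Omega_{Ch}),\mathcal{\tilde S})$. This establishes both the existence of $A_s(D_{C(U)}T)$ and its defining formula on $M\setminus\tilde\Omega(Ch)$, so association and covariant differentiation commute wherever the classical expression is defined.
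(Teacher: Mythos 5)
Your proposal is correct and follows essentially the same route as the paper: use Theorem \ref{lcovder1}(a) to identify the classical integrand $U^{\alpha}T^{\mu...}_{\nu...;\alpha}$, then invoke the $\Lambda$/association machinery of Theorem \ref{Association1} to read off the associated field on $M\setminus\tilde\Omega(Ch)$. The only difference is explicitness: the paper simply defines the linear mapping with $U^{\nu}\left[A_{s}(T^{\mu...}_{\nu...})_{;\mu}\right](Ch_{k})$ under the integral and asserts the association of $D_{C(U)}T^{\mu...}_{\nu...}$, while you spell out the delta-sequence and integration-by-parts computation that transfers the pointwise limit to $D_{C(U)}T^{\mu...}_{\nu...}$ itself, a step the paper leaves implicit.
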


\begin{proof}
Just take the definition of the classical covariant derivative, and
define the linear mapping given ~$\forall\Omega_{Ch}$~ and arbitrary
~$Ch_{k}(\Omega_{Ch})\in \mathcal{D}_{n}$~ as
\begin{equation}
(\omega, Ch_{k})\to\int_{\Omega_{Ch}}U^{\nu
}(Ch_{k})\left[A_{s}(T^{\mu...}_{\nu...})_{;\mu}\right](Ch_{k})~\omega.\label{AsC}
\end{equation}

This is an internally consistent definition, since
$\left[A_{s}(T^{\mu...}_{\nu...})_{;\mu}\right](Ch_{k})$ is defined everywhere
apart of a set having L measure 0. Now from our previous results
follows that everywhere outside ~$\tilde\Omega(Ch)$
\[\left[A_{s}(T^{\mu...}_{\nu...})_{;\mu}\right](Ch_{k})=T^{\mu...}_{\nu...
~;\mu}(Ch_{k}),\] and so the linear mapping (\ref{AsC}) is
equivalent to the object ~~$D_{C(U)}T^{\mu...}_{\nu...}$. ~Then~~\\ $U^{\nu
}A_{s}(T^{\mu...}_{\nu...})_{;\mu}=A_{s}(D_{C(U)}T^{\mu...}_{\nu...})$~everywhere
outside the set $\tilde\Omega(Ch)$.
\end{proof}

\newtheorem{meantime2}[lcovder1]{Theorem}

\begin{meantime2}
An extended analogy of theorem (\ref{T244}), can be proven, if we
use generalized concept of covariant derivative, without assuming
that vector field is smooth in some
$\mathcal{\mathcal{\mathcal{\mathcal{\mathcal{S}}}}}$, but only
$n+1$ continuously differentiable within $\mathcal{D}_{n+1}$.
\end{meantime2}

\begin{proof}
Exactly the same as before.
\end{proof}

This means that the aim to define a
concept of covariant derivative, ``lifted'' from the smooth manifold
and smooth tensor algebra to GTF in sense of association, has been achieved. It
completes the required connection with the old tensor
calculus.

\theoremstyle{definition}
\newtheorem{conj4}[cc1]{Conjecture}
\begin{conj4}
 Take ~$U^{i}$ smooth
in ~$\mathcal{\mathcal{\mathcal{\mathcal{\mathcal{S}}}}}$, ~~$T^{\mu...}_{\nu...}\in D'^{m}_{n (\mathcal{\mathcal{\mathcal{\mathcal{S}}}} o)}(M)$, ~such that ~$\exists ~A_{s}(T^{\mu...}_{\nu...})$,
~$\Gamma^{\mu}_{\nu\alpha}\in\Gamma^{3}_{S}(M)$. Then
~$\exists ~A_{s}(D_{C(U)}T^{\mu...}_{\nu...})$ and holds that \[A_{s}(D_{C(U)}T^{\mu...}_{\nu...})=U^{\alpha}A_{s}(T^{\mu...}_{\nu...})_{;\alpha}.\]
(Hence, similarly to $\otimes$, the covariant derivative
operator commutes with association for some significant number of objects.)
\end{conj4}

Note that for
every class ~$\Gamma^{m}_{At(\mathcal{\mathcal{\mathcal{\mathcal{S}}}})}(M)$ we can easily define the operator of Lie derivative along arbitrary in $\mathcal{\mathcal{\mathcal{\mathcal{\mathcal{S}}}}}$ smooth vector field $V$ (not along the generalized vector field, but even in the case of covariant derivative we did not prove
anything about larger classes of vector fields than smooth vector
fields). Lie derivative can be defined as
~$(L_{V}T,\omega)\equiv (T,L_{V}\omega)$. This is because the Lie derivative preserves $n$-forms and also preserves the
properties of such ~$C^{P}_{(\mathcal{\mathcal{\tilde S}})}(M)$ classes, for which it holds that ~$\mathcal{\mathcal{\mathcal{\mathcal{S}}}}\subset\mathcal{\mathcal{\tilde S}}$.

\subsection{Basic discussion of previous results and open questions}

We have constructed the algebra of GTFs, being able to incorporate
the concept of covariant derivative, (with the given conditions on
vector fields and connection), for a set of algebras constructed
from specific distributional objects. The use of these ideas in
physics is meaningful where the operations of tensor product and
covariant derivative give a map from appropriate subclass of
~$D'^{m}_{n}(M)$ class to the elements of ~$\tilde\Gamma^{m}(M)$
containing a ~$D'^{m}_{n}(M)$ element. This is always guaranteed to
work between appropriate subclasses of piecewise continuous
distributional objects, but a given physical equivalence might
specify a larger set of objects for which these operations provide
such mapping. Note that the whole problem lies in the multiplication
of distributions outside the ~$D'^{m}_{n E}(M)$ class. (For instance
it can be easily seen that square of ~$\delta(Ch_{k},q)$ as
introduced before is not equivalent to any distribution.) This is
because the product does not have to be necessarily equivalent to a
distributional object. Even worse, in case it is not equivalent to a
distributional object, the product is not necessarily a mapping
between equivalence classes of the given algebra elements ($\tilde
D'^{m}_{n A}(M)$). The same holds about contraction.

But even in such cases there can be a further hope. For example we
can abandon the requirement that certain quantities must be linear,
(for example the connection), and only some results of their
multiplication are really physical (meaning linear). Then it is a
question whether they should be constructed (constructed from the
linear objects as for example metric connection from the metric
tensor) through the exact equality or only through the equivalence.
If we take only the weaker (equivalence) condition, then there is a
vast number of objects we can choose, and many other important
questions can be posed. Even in the case that the mathematical
operations do depend on particular representatives of the
equivalence classes, there is no necessity to give up; in such
situation it might be an interesting question if there are any
specific ``paths'' which can be used to solve the physical
equivalence relations. The other point is that if these operations do depend
on the class members, then we can reverse this process. It means
that for example in the case of multiplication of two delta
functions we can find their nonlinear equivalents first and then
take their square, thus obtaining possibly an object belonging to an
equivalence class of a distribution.

As I mentioned in the introduction, these are not attempts to deal
with physical problems in a random, ad hoc way. Rather I want to
give the following interpretation to what is happening: The
differential equations in physics should be changed into
equivalence relations. For that reason they have plentiful solutions in the
given algebra\footnote{After one for example proves that covariant
derivative is a well defined operator on all GTF elements, then it
is the whole GTF algebra.}. (By a ``solution'' one here means any
object fulfilling the given relations.) One obtains much ``more''
solutions than in the case of classical partial differential
equations (but all the smooth distributions representing
``classical'' solutions of the ``classical'' initial value problem
are there), but what is under question is the possibility to
formulate the initial value problem for larger classes of objects
than $D'^{m}_{n E}(M)$ and $D'^{m}_{n (\mathcal{\mathcal{\mathcal{\mathcal{S}}}}o)}(M)$ (see the next
section). Moreover, if this is possible then there remains another
question about the physical meaning of those solutions. It means
that even in the case we get nonlinear objects as solutions of some
general initial value problem formulation, this does not have to be
necessarily something surprising; the case where physical laws are
solved also by physically meaningless solutions is nothing new. The
set of objects where we can typically search for physically
meaningful solutions is defined by most of the distributional
mappings (that is why classical calculus is so successful), but they
do not have to be necessarily the only ones.

\subsection{Some notes on the initial value problem within the partial
differential equivalence relations ($\approx$) on $D'^{m}_{n}(M)$}

In this section we will suggest how to complete the mathematical structure
developed and will get some idea how a physical problem can be
formulated in our language. It is again divided
into what we call ``basic ideas'' and ``some additional ideas''. The
first part is of a considerable importance, the second part is less
important, it just gives a suggestion how to recover the classical
geometric concept of geodesics in our theory.

\subsubsection{The basic ideas}

\paragraph{The approach giving the definition of the initial value problem}

Take a hypersurface (this can be obviously generalized to any
submanifold of lower dimension) ~$N\subset M$, which is such that it
gives in some subatlas
~$\mathcal{\mathcal{A}}_{N}\subset\mathcal{A}$~ a piecewise
smooth submanifold.  In the same time $\mathcal{\mathcal{A}}_{N}$
is such atlas that ~$\exists
\mathcal{\mathcal{\mathcal{\mathcal{S}}}}$~
~$\mathcal{\mathcal{\mathcal{\mathcal{S}}}}\subseteq
\mathcal{\mathcal{A}}_{N}$.

If we
consider space of 3-form fields living on $N$
(we give up on the
idea of relating them to 4-forms on $M$), we get two types of
important maps:

\begin{itemize}

\item
Take such ~$D'^{m}_{n E}(M)$ objects, that they have in every chart
from ~$\mathcal{\mathcal{A}}_{N}$ associated (tensor) fields
defined everywhere on $N$, apart from a set having 3 dimensional
Lebesgue measure equal to 0. Such objects can be, in every smooth
subatlas
~$\mathcal{\mathcal{\mathcal{\mathcal{S}}}}\subset\mathcal{\mathcal{A}}_{N}$,
mapped to the class ~$D'^{m}_{n E}(N)$ by embedding their associated
tensor fields\footnote{As previously noted, the given
~$T^{\mu}_{\nu}\in D'^{m}_{n E}(M)$ we can define in every chart by
$(Ch_{k},\omega)\to\int_{\Omega_{Ch}}
\left[A_{s}(T^{\mu}_{\nu})\right](Ch_{k})~\omega$.} into $N$. This
defines a tensor field $T^{\mu...}_{\nu...}$ living on a piecewise
smooth manifold $N$. Furthermore if
~$\mathcal{\mathcal{\mathcal{\mathcal{A}}}}_{3D}$~ is some
largest piecewise smooth atlas on $N$, the tensor field
~$T^{\mu...}_{\nu...}$~ defines on its domain a map
~$\forall\Omega_{Ch}\subset N$,~ $ Ch_{k}(\Omega_{Ch})\in
\mathcal{\mathcal{A}}_{3D}$ ~and ~$\omega\in C^{P}(\Omega_{Ch})$
\begin{equation}
(\omega,
Ch_{k})\to\int_{\Omega_{Ch}}T^{\mu...}_{\nu...}(Ch_{k})~\omega .
\end{equation}
 This is object from the class ~$D'^{m}_{n E}(N)$. What remains to be proven is that in
any smooth subatlas we map the same ~$D'^{m}_{n E}(M)$ object to
~$D'^{m}_{n}(N)$, otherwise this formulation is meaningless.

\item
The other case is a map ~$D'^{m}_{n (
\mathcal{\mathcal{\mathcal{\mathcal{S}}}} o)}(M)\to D'^{m}_{n ~(
\mathcal{\mathcal{\mathcal{\mathcal{S}}}} o)}(N)$ ($
\mathcal{\mathcal{\mathcal{\mathcal{S}}}}\subset\mathcal{\mathcal{A}}_{N}$),~
defined in a simple way: The objects from ~$D'^{m}_{n S (
\mathcal{\mathcal{\mathcal{\mathcal{S}}}} o)}(M)$ are mapped as
associated smooth tensor fields (in the previous sense). After this
step is taken, one maps the rest of distributional objects from
~$D'^{m}_{n ( \mathcal{\mathcal{\mathcal{\mathcal{S}}}} o)}(M)$
by using the fact that they are weak limits of smooth distributions
~$D'^{m}_{n S ( \mathcal{\mathcal{\mathcal{\mathcal{S}}}}
o)}(M)$. (This is coordinate independent for arbitrary tensor
distributions.) So in the case of objects outside the class
~$D'^{m}_{n S ( \mathcal{\mathcal{\mathcal{\mathcal{S}}}}
o)}(M)$ we embed the smooth distributions first, and take the limit
afterwards (exchanging the order of operations). The basic
conjecture is that if this limit exists on $M$, it will exist on $N$
(in the weak topology), by using the embedded smooth distributions.
\end{itemize}

Now we can say that initial value conditions of, (for example),
second-order partial~ differential ~equations ~are ~given ~by ~two
~distributional ~objects ~from ~~$D'^{m}_{n ( \mathcal{\mathcal{\mathcal{\mathcal{S}}}} o)}(N_{1})$,
~$D'^{m}_{n (\mathcal{\mathcal{\mathcal{\mathcal{S}}}} o)}(N_{2})$~ (~$D'^{m}_{n E}(N_{1})$, ~$D'^{m}_{n
E}(N_{2})$~)~ on two hypersurfaces ~$N_{1}, N_{2}$~ (not
intersecting each other). The solution is a distributional object
from the same class, which fulfills the ~$\approx$ equation and is
mapped (by the maps introduced in this section) to these two
distributional objects.

\paragraph{Useful conjecture related to our approach}

Note, that we can possibly (if the limit commutes) extend this
``initial value'' approach through the ~$D'^{m}_{n S}(M)$ class to
all the weak topology limits of the sequences formed by the objects from this class.
This means extension to the class of objects belonging to
~$D'^{m}_{n}(M)$, such that for any chart from $\mathcal{A}$ they have the full
~$C^{P}(M)$ domain and ~$D'^{m}_{n S}(M)$ is
dense in this class.

\subsubsection{Some additional ideas}

\paragraph{``Null geodesic solution'' conjecture}

Let us conjecture the following:

\theoremstyle{definition}
\newtheorem{conj7}{Conjecture}[section]
\begin{conj7}\label{conj7}
Pick some atlas $\mathcal{\mathcal{\mathcal{\mathcal{\mathcal{S}}}}}$. Pick some ~$g_{\mu\nu}\in D^{0}_{2 S (\mathcal{\mathcal{\mathcal{\mathcal{S}}}}
o)}(M)$, such that it has  $A_{s}(g_{\mu\nu})$, being a
Lorentzian signature metric tensor field. Take some $\Omega_{Ch}$
and two spacelike hypersurfaces $H_{1},~H_{2}$, ~$H_{1}\cap
H_{2}=\{0\}$,
~$H_{1}\cap\Omega_{Ch}\neq\{0\}$,~$H_{2}\cap\Omega_{Ch}\neq\{0\}$. ~Furthermore ~$H_{1}, H_{2}$~ are such that there exist two points ~$q_{1}\in
H_{1}\cap\Omega_{Ch}$, ~$q_{2}\in H_{2}\cap\Omega_{Ch}$ separated by
a null curve geodesics (relatively to ~$A_{s}(g_{\mu\nu})$), and the geodesics lies within
~$\Omega_{Ch}$. Construct such chart ~$Ch_{k}(\Omega_{Ch})\in \mathcal{\mathcal{\mathcal{\mathcal{\mathcal{S}}}}}$,
that both of the hypersurfaces are hypersurfaces
(they are smooth manifolds relatively to $\mathcal{\mathcal{\mathcal{\mathcal{\mathcal{S}}}}}$) given by $u=const.$
condition ($u$ is one of the coordinates) and the given geodesics is representing $u$-coordinate curve.

Take classical free field equation with equivalence:
~$\Box_{g}\Phi\approx 0$~ (with $g^{\mu\nu}$ as previously defined).
Then look for the distributional solution of this equation with the
initial value conditions being ~$\delta\big(Ch_{1k}(\Omega_{Ch}\cap
H_{1}),q_{1}\big)\in D'^{m}_{n (\mathcal{\mathcal{\mathcal{\mathcal{S}}}} o)}(H_{1})$~ on the first
hypersurface and ~$\delta\big(Ch_{2k}(\Omega_{Ch}\cap H_{2}),q_{2}\big)\in
D'^{m}_{n(\mathcal{\mathcal{\mathcal{\mathcal{S}}}} o)}(H_{2}) $~ on the second hypersurface. (Here
~$Ch_{1k}(\Omega_{Ch}\cap H_{1}),~Ch_{2k}(\Omega_{Ch}\cap H_{2})$
are coordinate charts, which are the same on the intersection of the
given hypersurface and ~$\Omega_{Ch}$ as the original coordinates
without $u$.) Then the solution of this initial value problem is a
mapping ~$\Phi$, which is such, that it can be in chart
~$Ch_{k}(\Omega_{Ch})$~ expressed as
\[\omega~\to~\int
du\int\prod_{i}dx^{i}\left(\delta(x^{i}(q_{1}))\omega'(Ch_{k})(u,x^{j})\right)\]
(where
$x^{i}(q_{1})$ is image of $q_{1}$ in chart mapping
~$Ch_{k}(\Omega_{Ch})$ ).
\end{conj7}

We can formulate similar conjectures for timelike and spacelike
geodesics, we just have to:
\begin{itemize}
\item
instead of point separation by null curve, consider the
separation by timelike or spacelike curve,
\item
instead of the ``massless'' equation we have to solve
the ~$(\Box_{g}\pm m^{2})\Phi\approx 0$~ equation ($m$ being
arbitrary nonzero real number). Here $\pm$ depends on the
signature we use and on whether we look for timelike or spacelike
geodesics.
\end{itemize}

The rest of the conditions are unchanged (see \ref{conj7}). Some
insight to our conjectures can be brought by calculating the
massless case for flat Minkowski space, using modified cartesian
coordinates ($u=x-ct,x,y,z$). We get the expected results.\newline

\section{What previous results can be recovered, and how?}

As was already mentioned, our approach is in some sense a generalization of
Colombeau approach from \cite{multi}, which is equivalent to
canonical $\mathbb{R}^{n}$ approach. So for $\Omega_{Ch}$, after
we pick some ~ $Ch'(\Omega_{Ch})\in\mathcal{\mathcal{\mathcal{\mathcal{\mathcal{S}}}}}$, ~(which determines the classes
$A^{n}(M)$ related to this chart), and by considering
only the objects ~$C^{P}_{S (\mathcal{\mathcal{\tilde S}})}(\Omega_{Ch})$ ~($\mathcal{\mathcal{\mathcal{\mathcal{S}}}}\subset\mathcal{\mathcal{\tilde S}}$),~ (hence considering the ~$D'^{m}_{n (\mathcal{\mathcal{\mathcal{\mathcal{S}}}} o)}(M)$ class only), we
obtain from our construction the mathematical language used in
\cite{multi}. But all the basic equivalence relations from Colombeau
approach have been generalized first to the class ~$D'^{m}_{n  (\mathcal{\mathcal{\mathcal{\mathcal{S}}}} o)A}(M)$ and also to appropriate subclasses of the ~$D'^{m}_{n E A}(M)$
class.

\subsection{Generalization of some particular statements}

Now there are certain statements in $\mathbb{R}^{n}$, where one has
to check whether they are just a result of this specific
reduction, or not. A good example is a statement
\begin{equation}
H^{n}~\delta\approx\frac{1}{n+1}~\delta~.
\end{equation}
($H$ is Heaviside distribution.) What we have to do is to interpret
the symbols inside this equation geometrically. This is a
$\mathbb{R}^{1}$ relation. $H$ is understood as a ~$D'_{(\mathcal{\mathcal{\mathcal{\mathcal{S}}}} o)}(M)$
element and defined on the manifold (one dimensional, so the
geometry would be quite trivial) by integral given by a function (on
$M$) obtained by the inverse coordinate mapping substituted to $H$. Now take some fixed chart ~$Ch_{k}(\mathbb{R}^{1})$.
~The derivative is a covariant derivative along the smooth vector
field ~$U$,~ which is constant and unit in the fixed coordinates ~$Ch_{k}(\mathbb{R}^{1})$.
Then $\delta$ can be reinterpreted as ~$\delta(Ch_{k},q)$, where $q$ is
the 0 point in the chart ~$Ch_{k}(\mathbb{R}^{1})$.~ Then the relation can be generalized,
since it is obvious that (see the covariant derivative section)
~$D_{C(U)}H=\delta(q,Ch_{k})$ and so
\begin{eqnarray}
D_{C(U)}H=D_{C(U)}L(H_{f})~~~~~~~~~~~~~~~~~~~~~~~~~~~~~~~~~~~~~~~~~~~~~~~~~~~~~~~~~~~~~~~~~~~~~~~~~~~~~~~~~~~~~~~~~~\nonumber\\
=D_{C(U)}L(H_{f}^{n+1})\approx D_{C(U)}(H^{n+1})=(n+1)~H^{n}~D_{C(U)}H.~~~~
\end{eqnarray}
By $L$ we mean here a regular distribution defined by the function in the
brackets, hence an object
from $D'_{E}(M)$. ~(To be precise and to avoid confusion in the notation, we
used for the Heaviside function the symbol $H_{f}$, while for the Heaviside distribution the usual symbol $H$.) This is nice, but rather trivial
illustration.

This can be generalized to more nontrivial cases.  Take the flat
$\mathbb{R}^{n}$ topological manifold. Fix such chart ~$Ch_{k}(\mathbb{R}^{n})$~ covering the
whole manifold, that we can express Heaviside distribution in this chart through
~$H_{f}(x_{1})\in D'_{(\mathcal{\mathcal{\mathcal{\mathcal{S}}}} o)}(M)$.~ This means the hypersurface where
~$H_{f}$ is ~discontinuous is given in ~$Ch_{k}(\mathbb{R}^{n})$~ as $x_{1}=0$. Now the derivative will be a covariant derivative taken along a smooth vector
field being perpendicular (relatively to the flat space metric\footnote{Note
that since it is flat space it makes sense to speak about a
perpendicular vector field, since we can uniquely transport vectors
to the hypersurface.}) to the hypersurface on which is $H_{f}$ discontinuous. We easily see that
the covariant derivative of $H$ along such vector field gives a distribution (call it
~$\tilde\delta\in D'_{(\mathcal{\mathcal{\mathcal{\mathcal{S}}}} o)}(M)$), which is in the chart $Ch_{k}(\mathbb{R}^{n})$ expressed
as
\begin{equation}
\tilde\delta(\phi)=\delta_{x_{1}}\left(\int\phi(x_{1},x_{2}\dots x_{n})
~dx_{2}\dots dx_{n}\right)~.
\end{equation}
This distribution reminds us in some sense the
``geodesic'' distribution from the previous part. Then the
following holds:
\begin{equation}
H^{n}~\tilde\delta\approx\frac{1}{n+1}~\tilde\delta~.
\end{equation}
This generalized form of our previous statement can be used for computations with Heaviside functional metrics (computation
of connection in fixed coordinates).

\subsection{Relation to practical computations}

These considerations (for example) imply that the result from
canonical Co\-lom\-beau $\mathbb{R}^{n}$ theory derived by
\cite{multi} can be derived in our formalism as well. This is also
true for the geodesic computation in curved space geometry from
\cite{geodesic}. The results derived in special Colombeau algebras
(in geometrically nontrivial cases) are more complicated, since in such cases the strongest, ~$A^{\infty}(\mathbb{R}^{n})$,~ version
of the theory is used. This version is not contained in our chart representations.
(This is because we are using only ~$A^{n}(\mathbb{R}^{n})$ with finite $n$.)
It is clear that all the equivalence relations from our theory must hold in
such stronger formulation (since obviously ~$A^{\infty}(M)\subset
A^{n}(M)~~\forall n\in\mathbb{N}$) and the uniqueness of
distribution solution must hold as well. This means that at this
stage there seems to be no obstacle to reformulate our theory by
using ~$A^{\infty}(Ch_{k},q,\Omega_{Ch})$ classes (and taking
elements from ~$D'_{(\mathcal{\mathcal{\tilde S}} o)}(M)$ at least), if necessary. But
it is unclear whether one can transfer all the calculations using
~$A^{\infty}(Ch_{k},q,\Omega_{Ch})$ classes to our weaker
formulation.

The strong formulation was used also in the Schwarzschild case
\cite{Schwarzschild}, but there is a problem. The fact that the authors of \cite{Schwarzschild}
regularize various functions piece by piece does not have to be necessarily a
problem in Colombeau theory\footnote{Although the authors use (in the
first part, not being necessarily connected with the results) quite
problematic embeddings.}. But, as already mentioned, the problem lies in the use of formula for
~$R^{~~\nu}_{\mu}$, originally derived within smooth tensor field algebra. If we want to derive in the Schwarzschild case Ricci tensor
straight from its definition, we cannot avoid multiplications of
delta function by a non-smooth function. This is in Colombeau
theory deeply non-trivial.

In the cases of Kerr's geometry and conical spacetimes theory this problem appears as well. As a
consequence of this fact, calculations are mollifier dependent, not being (in the strict sense)
results of our theory anymore. On the other hand there is
no reasonable mathematical theory in which these calculations make sense.
This means that a better understanding of these results will be
necessary. By better understanding of these results provided by our theory we mean
their derivation by a net of equivalence relations, by taking
some intermediate quantities to be nonlinear. So the results should follow from
the principle that the equivalence relations are the
fundamental part of all the mathematical formulation of physics.

\section{Conclusions}

The main objectives of this work were to build foundations of a
mathematical language reproducing the old language of
smooth tensor calculus and extending it at the same time. The reasons for these objectives were given at the
beginning of this chapter. This work is a first step to such theory,
but it already achieves its  basic goals. That means we consider these results as useful independently of how successful future work on the topic will turn out to be. On the other hand, the territory it opens for further exploration is in my opinion large and significant. It offers a large area of possibilities for future work.

Just to summarize: the result of our work is a theory based purely
on equivalence relations instead of equalities, using a well defined concept
of generalized tensor field and the covariant derivative operator. This operator
is well defined at least on the proper subclass of generalized
tensor fields. We also defined (using some conjectures) the initial value problem for partial differential
equivalence relations. Our theory naturally relates to many results beyond
the classical smooth tensor calculus, already derived.

~
\pagestyle{plain}
\chapter*{Conclusions}\label{C:con}
\addcontentsline{toc}{chapter}{Conclusions}

As we mentioned in the introduction, this thesis is based on three separate research projects. 

The first project was related to pseudo-Finsler extensions of the general theory of relativity, and represented an attempt to find a natural geometric framework for possible high energy Lorentz violations. The reason why one was interested in such construction is the question of whether it is possible to find a weaker interpretation of Einstein's equivalence principle consistent with Lorentz symmetry violations. The result obtained was, when mathematical simplicity was taken as the guidance principle, unfortunately a ``no-go'' theorem, at least for significant number of cases. The cases particularly affected were the bi-metric theories, but the analogue model based on bi-refringent crystal optics indicated that the problem might affect much larger class of theories. The problem lies in the fact that, unlike what one would naturally expect, introducing Lorentzian signature puts very tight constraints on Finsler geometry, at least if one wants to keep some of the basic geometric concepts well defined and meaningful. This ``no-go'' result we consider to be disappointing, (which is the case of most ``no-go'' results), but certainly very useful. 

The second project focused on the highly damped quasi-normal modes of different black hole spacetimes. The method of approximation by analytically solvable potentials was used to estimate the highly damped modes for the Sch\-warz\-schild and the Schwarzschild-de Sitter (S-dS) black holes. The first served more as a consistency check (since the asymptotic formula for the highly damped QNMs of the Sch\-warz\-schild black hole is well known). But for the Schwarzschild-de Sitter black hole a lot of new information was extracted from those models, especially the link between rational ratios of horizon surface gravities to the periodic behaviour of the QNMs. Also, when periodic, in general the highly damped modes do not form only one equi-spaced family as in the case of Sch\-warz\-schild black hole, but split into multiple families. Strikingly, as we discovered, the same patterns can be observed in a complementary set of analytic estimations for the highly damped modes, those approximations being obtained by monodromy techniques. This holds for all types of black hole spacetime so far analysed by those techniques. That means we were able to significantly generalize our theorems about the highly damped mode behaviour to all the presently known analytic results. Our results might be interesting also from the viewpoint of the black hole thermodynamics, as the asymptotic QNM behaviour is suspected to be linked to the black hole area spectrum \cite{Hod, Maggiore}.

The third project dealt with the problem of multiplication of tensorial distributions. Despite the fact that lot has been done in the field in the past \cite{II, Vickers}, the full generalization of the covariant derivative operator was, for example, not yet achieved. On the other hand practical results confirm a need for such generalization (see \cite{Vickers}). We built an alternative construction, which fully operates with the Colombeau equivalence relation, but technically avoids Colombeau algebra construction. It generalizes the concept of covariant derivative to tensorial distributions and operates on the much more general, but for the language of distributions natural, piecewise smooth manifolds. We are convinced that such language might offer conceptual extension of general relativity and might have possibly interesting consequences for quantum gravity as well.  

\pagestyle{headings}
\appendix

\chapter{Bi-refringent crystals}

%---------------------------------------------------

%---------------------------------------------------------------------------------------------------------
%---------------------------------------------------------------------------------------------------------
\label{A:BW}

\section{Basic characteristics of the crystal media}
%---------------------------------------------------------------------------------------------------------
The basic optics reference we shall use is Born and  Wolf,
\emph{Principles of Optics} \cite{BW}. In particular we shall focus
on Chapter XV, ``Optics of crystals'', pages 790--818. See
especially pages 796--798 and pages 808--811. Specific page,
chapter, and section references below are to  the 7th (expanded)
edition, 1999/2003.

The theory of bi-refringent crystal optics is formulated in the preferred inertial system, the inertial system of the crystal. The optical medium of the crystal is characterized by permeability and permittivity.
Permeability $\mu$ is taken to be a scalar, permittivity
$\epsilon_{ij}$ a ``spatial'', (relative to the inertial system of the crystal), $3\times3$ tensor. (This is an \emph{excellent}
approximation for all known optically active media.) By going to the
principal axes we can, without loss of generality, take
$\epsilon_{ij}$ to be diagonal
\begin{equation}
\epsilon_{ij} = \left[ \begin{array}{ccc} \ep_x & 0 & 0 \\ 0 & \ep_y
& 0 \\ 0 & 0& \ep_z \end{array} \right].
\end{equation}
This fixes the relativistic inertial coordinate system in a unique way.

We furthermore define ``principal velocities''
\begin{equation}
v_x = {c\over\sqrt{\mu \ep_x}}; \qquad v_y = {c\over\sqrt{\mu
\ep_y}}; \qquad v_z = {c\over\sqrt{\mu \ep_z}}.
\end{equation}
Note (this is a tricky point that has the potential to cause
confusion) that $v_x$ is \emph{not} the velocity of light in the $x$
direction --- since $\ep_x$ (and so $v_x$) is related to the
properties of the electric field in the $x$ direction, the principal
velocity $v_x$  is instead the velocity of a light wave whose
electric field is pointing in the $x$ direction. That is, for light
waves propagating in the $y$-$z$ plane, \emph{one} of the
polarizations will propagate with speed $v_x$.

%---------------------------------------------------------------------------------------------------------
\section{Group velocity and ray equation}
%---------------------------------------------------------------------------------------------------------

The group velocity, $v_g$, in the framework used by Born and Wolf,
is identical to the  ``ray velocity'', and is controlled by the
so-called ``ray equation". See   (15.2.29), page 797. To set some
conventions, $\hat\n$ will always denote a unit vector in physical
space --- a unit with respect to the usual Euclidean norm, while
$\n$ is a generic position in physical 3-space. In contrast, $\hat
\k$ will be reserved for a unit wave-vector in the dual
``wave-vector space''.

Born and Wolf exhibit the ray equation in a form equivalent (Born
and Wolf use $\t$ where we use $\n$) to:
\begin{equation}
{\hat n_x^2\over 1/v_g^2 - 1/v_x^2} +  {\hat n_y^2\over 1/v_g^2 -
1/v_y^2} +  {\hat n_z^2\over1/ v_g^2 - 1/v_z^2} = 0.
\end{equation}
Here the group velocity (ray velocity) is defined by looking at the
energy flux and
\begin{equation}
\v_g = v_g \; \hat \n.
\end{equation}
We can rewrite this as
\begin{equation}
{\hat n_x^2 v_x^2\over v_g^2 - v_x^2} +  {\hat n_y^2 v_y^2 \over
v_g^2 - v_y^2} +  {\hat n_z^2 v_z^2\over v_g^2 - v_z^2} = 0.
\end{equation}
This form of the ray equation encounters awkward ``division by
zero'' problems when one looks along the principal axes, so it is
advisable to eliminate the denominators by multiplying through by
the common factor $ (v_g^2 - v_x^2) ( v_g^2 - v_y^2) ( v_g^2 -
v_z^2)$, thereby obtaining:
\begin{eqnarray}
&& \hat n_x^2 v_x^2 ( v_g^2 - v_y^2) ( v_g^2 - v_z^2) + \hat n_y^2
v_y^2 ( v_g^2 - v_z^2) ( v_g^2 - v_x^2) \qquad
\nonumber\\
&& \quad + \hat n_z^2 v_z^2 ( v_g^2 - v_x^2) ( v_g^2 - v_y^2) = 0.
\qquad
\end{eqnarray}
It is this form of the ray equation that, (because it is much better
behaved), we shall use as our starting point. Now this is clearly a
quartic in $v_g$, and by regrouping it we can write
\begin{eqnarray}
&&v_g^4 \left[  \hat n_x^2 v_x^2  +  \hat n_y^2 v_y^2  + \hat n_x^2
v_z^2  \right]
\nonumber\\
&& - v_g^2 \left[ \hat n_x^2 v_x^2 (v_y^2+v_z^2) +\hat n_y^2 v_y^2
(v_z^2+v_x^2) +\hat n_z^2 v_z^2 (v_x^2+v_y^2) \right] \nonumber
\\
 && \qquad%\qquad\qquad
+ \left[ v_x^2 v_y^2v_z^2 \right] = 0.
\end{eqnarray}
Equivalently
\begin{eqnarray}
v_g^4 \left[ \hat n_x^2 v_y^{-2} v_z^{-2}  + \hat n_y^2 v_z^{-2}
v_x^{-2}  + \hat n_z^2 v_x^{-2} v_y^{-2}  \right]~~~~~~~~~~~~~~~~~~~~~~~~~~~~~~~~~~~~~~~~~~~~~~~~~~~~~~~~~ \nonumber
\\
- ~v_g^2 \big[ \hat n_x^2  (v_y^{-2}+v_z^{-2}) +\hat n_y^2
(v_z^{-2}+v_x^{-2})+\hat n_z^2  (v_x^{-2}+v_y^{-2}) \big]~~~~~~~~~~~~~~~~~~~~~\nonumber\\
 +1 = 0.~~~~~~~~~~~~~~~
\end{eqnarray}
Now define two quadratics (in terms of the three direction cosines
$\hat n_i$)
\begin{equation}
\label{E:bar-q0} \bar q_0(\hat\n,\hat\n) = \left[ \hat n_x^2
v_y^{-2} v_z^{-2}  + \hat n_y^2 v_z^{-2} v_x^{-2}  + \hat n_z^2
v_x^{-2} v_y^{-2}  \right] ,
\end{equation}
\begin{eqnarray}
\label{E:bar-q2} \bar q_2(\hat\n,\hat\n) &=& {1\over2}  \big[ \hat
n_x^2  (v_y^{-2}+v_z^{-2}) +\hat n_y^2 (v_z^{-2}+v_x^{-2}) +\hat
n_z^2  (v_x^{-2}+v_y^{-2}) \big] ,
\end{eqnarray}
then
\begin{equation}
v_g^2(\hat\n) = {\bar q_2(\hat\n,\hat\n) \pm \sqrt{ \bar
q_2(\hat\n,\hat\n)^2 - \bar q_0(\hat\n,\hat\n)}\over \bar
q_0(\hat\n,\hat\n)}.
\end{equation}
But, since $\hat\n$ is a unit vector, we could equally well rewrite
this as
\begin{equation}
v_g^2(\hat\n) = {\bar q_2(\hat\n,\hat\n) \pm \sqrt{ \bar
q_2(\hat\n,\hat\n)^2 - \bar q_0(\hat\n,\hat\n) \; (\hat\n \cdot
\hat\n) }\over \bar q_0(\hat\n,\hat\n)}.
\end{equation}
In this form both numerator and denominator are manifestly
homogeneous and quadratic in the components of $\hat\n$, so for any
3-vector $\n$ (now \emph{not} necessarily of unit norm) we can take
the further step of writing
\begin{equation}
v_g^2(\n) = {\bar q_2(\n,\n) \pm \sqrt{ \bar q_2(\n,\n)^2 - \bar
q_0(\n,\n)\;(\n \cdot \n) }\over \bar q_0(\n,\n)}.
\end{equation}
The function $v_g(\n)$ so defined is homogeneous of degree zero in
the components of $\n$:
\begin{equation}
v_g( \kappa\, \n ) = v_g(\n) = v_g(\hat\n).
\end{equation}
The homogeneous degree zero property should remind one of the
relevant feature exhibited by the Finsler metric. It is also useful
to note that
\begin{equation}
{1\over v_g(\n)^2} =  {\bar q_2(\n,\n) \mp \sqrt{ \bar q_2(\n,\n)^2
- \bar q_0(\n,\n)\;(\n\cdot \n) }\over (\n\cdot\n)}.
\end{equation}

%---------------------------------------------------------------------------------------------------------
\section{Phase velocity and Fresnel equation}
%---------------------------------------------------------------------------------------------------------

In contrast, the phase velocity, in the framework used by Born and
Wolf, is controlled by the so-called ``equation of wave normals",
also known as the ``Fresnel equation''. See equation (15.2.24), page
796. The relevant computations are similar to, but not quite
identical to,  those for the group velocity.

Let us consider a plane wave $\exp(i[\k \cdot \x - \omega t])$ and
define the phase velocity by
\begin{equation}
\v_p = v_p \; \hat \k = {\omega\over k} \; \hat \k,
\end{equation}
then the Fresnel equation is equivalent (Born and Wolf use $\s$
where we use $\hat\k$)  to
\begin{equation}
{\hat k_x^2\over v_p^2 - v_x^2} +  {\hat k_y^2\over v_p^2 - v_y^2} +
{\hat k_z^2\over v_p^2 - v_z^2} = 0.
\end{equation}
This form of the equation exhibits ``division by zero'' issues if
you try to look along the principal axes, so it is for many purposes
better to multiply through by the common factor $ ( v_p^2 - v_x^2) (
v_p^2 - v_y^2) ( v_p^2 - v_z^2)$ thereby obtaining the equivalent of
their equation (15.3.1) on page 806:
\begin{eqnarray}
\hat k_x^2 ( v_p^2 - v_y^2) ( v_p^2 - v_z^2) + \hat k_y^2 ( v_p^2
- v_z^2) ( v_p^2 - v_x^2)~~~~~~~~~~~~~~~~~~~
\nonumber\\
+ ~\hat k_z^2 ( v_p^2 - v_x^2) ( v_p^2 - v_y^2) = 0.
\end{eqnarray}
This is clearly a quartic in $v_p$ and by regrouping it, and using
$\hat\k \cdot \hat\k=1$, we can write
\begin{eqnarray}
v_p^4 - v_p^2 \left[ \hat k_x^2 (v_y^2+v_z^2) +\hat k_y^2
(v_z^2+v_x^2) +\hat k_z^2 (v_x^2+v_y^2) \right]~~~~~~~~~~~~~~~~~~
\nonumber\\
+ \left[ \hat k_x^2 v_y^2v_z^2 +\hat k_y^2 v_z^2v_x^2 +\hat k_z^2
v_x^2v_y^2 \right] = 0.
\end{eqnarray}
Let us now define two quadratics (in terms of the direction cosines
$\hat k_i$)
\begin{equation}
\label{E:q2} q_2(\hat\k,\hat\k) = {1\over2} \left[ \hat k_x^2
(v_y^2+v_z^2) +\hat k_y^2 (v_z^2+v_x^2) +\hat k_z^2 (v_x^2+v_y^2)
\right],
\end{equation}
and
\begin{equation}
\label{E:q0} q_0(\hat\k,\hat\k) =  \left[ \hat k_x^2 v_y^2v_z^2
+\hat k_y^2 v_z^2v_x^2 +\hat k_z^2 v_x^2v_y^2 \right],
\end{equation}
so as a function of direction the phase velocity is
\begin{equation}
v_p^2(\hat\k) = q_2(\hat\k,\hat\k) \pm\sqrt{ q_2(\hat\k,\hat\k)^2 -
q_0(\hat\k,\hat\k) }.
\end{equation}
This is very similar to the equations obtained for the ray velocity.
In fact, we can naturally extend this formula to arbitrary
wave-vector $\k$ by writing
\begin{equation}
v_p^2(\k) = {q_2(\k,\k) \pm\sqrt{ q_2(\k,\k)^2 - q_0(\k,\k) \;
(\k\cdot \k) }\over (\k\cdot\k)}.
\end{equation}
This expression is now homogeneous of order zero in $\k$, so that
 \begin{equation}
v_p(\kappa  \, \k) = v_p(\k) = v_p(\hat \k).
\end{equation}
Again, we begin to see a hint of Finsler structure emerging.

\section{Connecting the ray and the wave vectors}
%----------------------------------------------------------------------------------------------------------

Connecting the ray-vector $\hat \n$ and the wave-vector $\hat \k$ in
birefringent optics is rather tricky --- for instance, Born and Wolf
provide a rather turgid discussion on page 798 --- see section
15.2.2, equations (34)--(39).  The key result is
\begin{equation}
{v_g(\n)\; \hat n_i\over v_g(\n)^2-v_i^2} = {v_p(\k) \; \hat
k_i\over v_p(\k)^2-v_i^2},
\end{equation}
which ultimately can be manipulated to calculate $\hat \n$ as a
rather complicated ``explicit'' function of $\hat \k$ --- albeit an
expression that is so complicated that even Born and Wolf do not
explicitly write it down. Unfortunately if it comes to pseudo-Finsler geometry, all the extra technical
machinery provided by Finsler notions of norm and distance do not
serve to simplify the situation. (The fact that phase and group
velocities can be used to define quite distinct, and in some
situations completely unrelated,  effective metrics has also been
noted in the context of acoustics \cite{trieste, bled}.)

\section{Optical axes}
%--------------------------------------------------------------------------------------------------------
\label{A:axes}
%--------------------------------------------------------------------------------------------------------
To find the ray optical axes we (without loss of generality) take
$v_z>v_y>v_z$, and define quantities $\bar \Delta_\pm$ (this is of
course the result of considerable hindsight) by:
\begin{equation}
v_x^2 = {v_y^2 \over 1- v_y^2  \; \bar \Delta_+^2}; \qquad v_z^2 =
{v_y^2 \over 1 + v_y^2 \; \bar \Delta_-^2}.
\end{equation}
Furthermore eliminate $\hat n_y$ by using
\begin{equation}
\hat n_y^2 = 1 - \hat  n_x^2 -\hat  n_z^2,
\end{equation}
then
\begin{eqnarray}
 \bar D = \bar q_2^2 - \bar q_0~~~~~~~~~~~~~~~~~~~~~~~~~~~~~~~~~~~~~~~~~~~~~~~~~~~~~~~~~~~~~~~~~~~~~~~~~~~~~~~~~~~~~~~~~~~~~~~~~~~~~~~~~~
\nonumber\\
=
 {1\over4}
\left[ (\hat n_x \bar\Delta_+ +\hat n_z \bar\Delta_-)^2 -
(\bar\Delta_+^2 + \bar\Delta_-^2 ) \right]~~~~~~~~~~~~~~~~~~~~~~~~~~~~~~~~~~~~~~~~~~~~~~~~
 \nonumber\\
~~~~~~~~~~~~~~~~~~~~~~~~~~~\times
\left[ (\hat n_x \bar\Delta_+ -\hat n_z \bar\Delta_-)^2 -
(\bar\Delta_+^2 + \bar\Delta_-^2 ) \right]~.~~~~~~~~~~~~~~~~~~~~~~~~~~~~~~~~~~~~~~~~~
\end{eqnarray}
Thus the (ray) optical axes are defined by
\begin{equation}
(\hat n_x \bar\Delta_+ \pm \hat n_z \;\bar\Delta_-)^2 =
(\bar\Delta_+^2 + \bar\Delta_-^2 ).
\end{equation}
But thanks to the Cauchy--Schwartz inequality
\begin{equation}
(\hat n_x \bar\Delta_+ \pm \hat n_z \bar\Delta_-)^2 \leq (\hat n_x^2
+ \hat n_z^2)  (\bar\Delta_+^2 + \bar\Delta_-^2 ) \leq
(\bar\Delta_+^2 + \bar\Delta_-^2 ).
\end{equation}
Therefore on the (ray) optical axis we must have $\hat n_y=0$, and
$(\hat n_x^2 + \hat n_z^2)=1$. So (up to irrelevant overall signs)
\begin{equation}
\bar \e_{1,2} = \left( \pm { \bar\Delta_+ \over\sqrt{\bar\Delta_+^2
+ \bar\Delta_-^2 }}; \;\; 0 \;\; ;
 { \bar\Delta_- \over\sqrt{\bar\Delta_+^2 + \bar\Delta_-^2 }}  \right),
\end{equation}
which we can recast in terms of the principal velocities as
\begin{equation}
\bar \e_{1,2} = \left( \pm {
\sqrt{1/v_y^2-1/v_x^2\over1/v_z^2-1/v_x^2}}; \;\; 0 \;\; ;
 { \sqrt{1/v_z^2-1/v_y^2\over1/v_z^2-1/v_x^2}}  \right),
\end{equation}
or
\begin{equation}
\bar \e_{1,2} = \left( \pm { {v_z\over v_y} \sqrt{v_x^2-v_y^2\over
v_x^2-v_z^2}}; \;\; 0 \;\; ;
 { {v_x\over v_y} \sqrt{v_y^2-v_z^2\over v_x^2-v_z^2}}  \right).
\end{equation}
These are the two ray optical axes. (Compare with equation (15.3.21)
on p.~811 of Born and Wolf.)

A similar computation can be carried through for the phase optical
axes. We again take $v_z>v_y>v_z$, and now define
\begin{equation}
v_x^2 = v_y^2 + \Delta_+^2; \qquad  v_z^2 = v_y^2 - \Delta_-^2.
\end{equation}
Eliminate $\hat k_y$ by using
\begin{equation}
\hat k_y^2 = 1 - \hat  k_x^2 -\hat  k_z^2.
\end{equation}
Then
\begin{eqnarray}
D &=& q_2^2-q_4
\\
&=& {1\over4} \left[ (\hat k_x \Delta_+ + \hat k_z \Delta_-)^2 -
(\Delta_+^2+\Delta_-^2) \right] \nonumber
\\
&& \times \left[ (\hat k_x \Delta_+ - \hat k_z \Delta_-)^2 -
(\Delta_+^2+\Delta_-^2) \right]\!\!.
\end{eqnarray}
This tells us that the discriminant factorizes, \emph{always}. The
discriminant vanishes if
\begin{equation}
(\hat k_x \Delta_+ \pm \hat k_z \Delta_-)^2 = \Delta_+^2+\Delta_-^2.
\end{equation}
But by the Cauchy--Schwartz inequality
\begin{equation}
(\hat k_x \Delta_+ \pm \hat k_z \Delta_-)^2 \leq (\hat k_x^2 + \hat
k_z^2) ( \Delta_+^2+\Delta_-^2) \leq  ( \Delta_+^2+\Delta_-^2).
\end{equation}
Thus on the (phase) optical axis we must have $\hat k_y=0$ and $
(\hat k_x^2 + \hat k_z^2) = 1$. The two unique directions (up to
irrelevant overall sign flips) that make the discriminant vanish are
thus
\begin{equation}
\e_{1,2} = \left( \pm {\Delta_+\over\sqrt{ \Delta_+^2+\Delta_-^2}} ;
\;\; 0 \;\; ; {\Delta_-\over\sqrt{ \Delta_+^2+\Delta_-^2}}\right),
\end{equation}
which can be rewritten as
\begin{equation}
\e_{1,2} = \left( \pm { \sqrt{v_x^2-v_y^2\over v_x^2-v_z^2}}; \;\; 0
\;\; ;
 { \sqrt{v_y^2-v_z^2\over v_x^2-v_z^2}}  \right).
\end{equation}
These are the two phase optical axes. (Compare with equation
(15.3.11) on p.~810 of Born and Wolf.)

%----------------------------------------------------------------------------------------------------------

%---------------------------------------------------------------------------------------------------------

\chapter{Special functions: some important formulas}

\section{Trigonometric identities}\label{A:trig}
%-----------------------------------------------------------------------------------------------------------------------------------------

In the body of the thesis we needed to use some slightly unusual
trigonometric identities. They can be derived from standard ones
without too much difficulty but  are sufficiently unusual to be
worth mentioning explicitly:
 \begin{equation}
 \label{E:trig1}
 \tan A \; \tan B = { \cos(A-B) - \cos(A+B)\over\cos(A-B) + \cos(A+B)};
 \end{equation}
 \begin{equation}
  \label{E:trig2}
 \tan\left({ A+B\over2}\right)  \; \tan\left({ A-B\over2}\right) = { \cos B- \cos A\over\cos B + \cos A};
 \end{equation}
 and
 \begin{equation}
  \label{E:trig3}
\cos(A+2B) + \cos A = 2 \cos B \; \cos(A+B).
\end{equation}

%-----------------------------------------------------------------------------------------------------------------------------------------
\section{Gamma function identities and approximations}\label{A:gamma}
%-----------------------------------------------------------------------------------------------------------------------------------------
The key Gamma function identity we need is
\begin{equation}
\label{E:reflection} \Gamma(z)\; \Gamma(1-z) = {\pi\over\sin(\pi
z)}.
\end{equation}
The Stirling approximation for Gamma function gives

\begin{equation}
\Gamma(x)=\sqrt{\frac{2\pi}{x}}\left(\frac{x}{e}\right)^{x}\left(1+O\left(\frac{1}{x}\right)\right);~~~~~Re(x)>0,~~~~~|x|\to\infty.
\end{equation}
We also need the following asymptotic estimate based on the Stirling
approximation
\begin{equation}
\label{E:stirling}
 {\Gamma(x+{1\over2})\over\Gamma(x)} = \sqrt{x} \;  \left[1+ O\left({1\over x}\right)\right];   \qquad\qquad Re(x)>0,~~~~~~|x|\to \infty.
\end{equation}

%-----------------------------------------------------------------------------------------------------------------------------------------
\section{Hypergeometric function identities}\label{A:hyper}
%-----------------------------------------------------------------------------------------------------------------------------------------
The key hypergeometric function identities we need are Bailey's
theorem
\begin{equation}
\label{E:bailey} _2F_1\left(a,1-a,c,{1\over2}\right) =
{\Gamma({c\over2}) \Gamma({c+1\over2}) \over \Gamma({c+a\over2})
\Gamma({c-a+1\over2})},
\end{equation}
which is easily found in many standard references (for example \cite{specialfunctions}), and the
particular differential identity
\begin{equation}
\label{E:differential} {\d \left\{ _2F_1\left(a,b,c,z\right)
\right\} \over \d z} = {c-1\over z} \left[  \;
_2F_1\left(a,b,c-1,z\right) -  \;_2F_1\left(a,b,c,z\right) \right],
\end{equation}
which is easy
enough to verify once it has been presented.

\section{Bessel function identities and expansions}\label{A:Bessel}

The important differential identity which holds for Bessel functions is the following:

\begin{eqnarray}\label{Besselderivative}
\frac{dJ_{\alpha}(x)}{dx}=\frac{1}{2}\big[J_{\alpha-1}(x)-J_{\alpha+1}(x)\big].
\end{eqnarray}
What was also needed was the asymptotic expansion (see for example \cite{specialfunctions}):
\begin{eqnarray}\label{Besselasymptotic}
J_{\alpha}(x)=\sqrt{\frac{2}{\pi x}}\bigg[P(\alpha,x)\cos\bigg(x-\frac{\pi\alpha}{2}-\frac{\pi}{4}\bigg)~~~~~~~~~~~~~~~~~~~~~~~~~~~\nonumber\\
-~Q(\alpha,x)\sin\bigg(x-\frac{\pi\alpha}{2}-\frac{\pi}{4}\bigg)\bigg],
\end{eqnarray}
where
\begin{equation}\label{P}
P(\alpha,x)\equiv\sum_{n=0}^{\infty}(-1)^{n}\frac{\Gamma\left(\frac{1}{2}+\alpha+2n\right)}{(2x)^{2n}(2n)!\Gamma\left(\frac{1}{2}+\alpha-2n\right)}
\end{equation}
and
\begin{equation}\label{Q}
Q(\alpha,x)\equiv\sum_{n=0}^{\infty}(-1)^{n}\frac{\Gamma\left(\frac{1}{2}+\alpha+2n+1\right)}{(2x)^{2n+1}(2n+1)!\Gamma\left(\frac{1}{2}+\alpha-2n-1\right)}.
\end{equation}

%----------------------------------------------------------------------------------------------------------------------------------------
%\clearpage
%----------------------------------------------------------------------------------------------------------------------------------------

%add about connection etc.

\pagestyle{plain}
\chapter*{Table of symbols and terms for Chapter 3}
\addcontentsline{toc}{chapter}{Table of symbols and terms for Chapter 3}

\begin{tabular}{|p{4cm}|p{8.5cm}|p{2cm}|}
\hline
{\bf Symbol:}  &   {\bf Definition/Explanation:}  &  {\bf Page:}   \\
\hline
$D(\mathbb{R}^{n})$ &  compactly supported, smooth functions ($C^{\infty}(\mathbb{R}^{n})$)  &  101   \\
\hline
$D'(\mathbb{R}^{n})$  &  space of distributions on the space of compactly supported, smooth functions  & 102   \\
\hline
$f_{\epsilon}(x_{i})$, ~~$x_{i}\in\mathbb{R}^{n}$  &  ~~~~~~~~~~~~~~~~~~~~~~~~$\frac{1}{\epsilon^{n}}f\left(\frac{x_{i}}{\epsilon}\right)$  & 106  \\
\hline
$\mathcal{E}_{M}(\mathbb{R}^{n})$  &   ~~~~~~~~~~~~~~~Moderate functions   & 105,107,111 \\
\hline
$\mathcal{N}(\mathbb{R}^{n})$   &  ~~~~~~~~~~~~~~~Negligible functions  & 105,107,111 \\
\hline
mollifier,   &  (usually) $\epsilon-$sequence of smooth, compactly  &   105   \\
smoothing kernel &  supported functions with integral normed to 1~ and with support ``stretching'' to ~0~ as ~$\epsilon\to 0$ &  \\
\hline
$C(f)$   &   embedding of distributions into Colombeau algebra by a convolution with a mollifier  &  107 \\
\hline
$M$  &   (in the section \ref{NewApp}) manifold, on which one can define a smooth atlas  & 121 \\
\hline
$\mathcal{A}$   &   maximal piecewise smooth atlas, where the transformation Jacobians are bounded on every compact set    &  121 \\
\hline
$\mathcal{S}$  &   maximal smooth subatlas of $\mathcal{A}$   & 122 \\
\hline
$\Omega_{Ch}$  &   subset of a manifold, such that it can be mapped to $\mathbb{R}^{n}$ by a chart mapping  & 122 \\
\hline
\end{tabular}

\begin{tabular}{|p{3cm}|p{8.5cm}|p{1.5cm}|}
\hline
{\bf Symbol:}  &   {\bf Definition/Explanation:}  &  {\bf Page:}   \\
\hline
$Ch(\Omega_{Ch})$  &  chart from the atlas $\mathcal{A}$ on the set $\Omega_{Ch}$  & 122 \\
\hline
$C^{P}(\Omega_{Ch})$  &   class of compactly supported piecewise smooth 4-forms, (we work from the beginning with a 4D manifold), having their support inside the set $\Omega_{Ch}$  & 124 \\
\hline
$\mathcal{\tilde S}$    &   maximal subatlas of $\mathcal{A}$, such that there exist elements of the class $C^{P}(\Omega_{Ch})$, that have in this subatlas smooth scalar densities   & 124  \\
\hline
$C^{P}_{S(\mathcal{\tilde S})}(\Omega_{Ch})$  &    set of all 4-forms from $C^{P}(\Omega_{Ch})$, that are in $\mathcal{\tilde S}$ given by smooth scalar densities  & 124 \\
\hline
$C^{P}_{S}(\Omega_{Ch})$  &  $\cup_{\mathcal{\tilde S}} ~C^{P}_{S (\mathcal{\tilde S})}(\Omega_{Ch})$ & 124\\
\hline
$C^{P}(M)$  & $\cup_{\Omega_{Ch}}C^{P}(\Omega_{Ch})$ & 124 \\  $C^{P}_{S (\mathcal{\tilde S})}(M)$ &    $\cup_{\Omega_{Ch}}C^{P}_{S (\mathcal{\tilde S})}(\Omega_{Ch})$  &  \\ $C^{P}_{S}(M)$  & $\cup_{\Omega_{Ch}}C^{P}_{S}(\Omega_{Ch})$  &  \\
\hline
$D'(M)$  &   linear generalized scalar fields  & 125 \\
\hline
$D'_{(\mathcal{\tilde S})}(M)$  &  linear generalized scalar fields defined (at least) on the class $C^{P}_{S (\mathcal{\tilde S})}(M)$  & 126 \\
\hline
$D'_{(\mathcal{\tilde S}o)}(M)$  & linear generalized scalar fields defined exclusively on the class $C^{P}_{S (\mathcal{\tilde S})}(M)$  & 126 \\
\hline
$D'_{E}(M)$, $D'_{E (\cup_{n}\mathcal{\tilde S}_{n} o)}(M)$  & distributional analogue of piecewise smooth scalar fields  & 125, 126 \\
\hline
$D'_{S}(M)$,  $D'_{S (\cup_{n}\mathcal{\tilde S}_{n} o)}(M)$  & objects as close as possible to a distributional analogue of smooth scalar fields  & 125, 126 \\
\hline
$D'_{A}(M)$  &  generalized scalar fields  & 126 \\
\hline
$D'_{E A}(M)$, $D'_{S A}(M)$ (etc.)  &  generalized scalar fields constructed from the elements of the classes $D'_{E}(M)$, $D'_{S}(M)$ (etc.)  & 129 \\
\hline
$D'^{m}_{n}(M)$  &  linear generalized tensor fields of rank $(m,n)$ & 127-128 \\
\hline
$At(\mathcal{\tilde S})$  &  specific function, which maps atlases to atlases  & 129  \\
\hline
\end{tabular}

\begin{tabular}{|p{4.7cm}|p{8.5cm}|p{1.5cm}|}
\hline
{\bf Symbol:}  &   {\bf Definition/Explanation:}  &  {\bf Page:}   \\
\hline
$D'^{m}_{n (\cup_{l}At(\mathcal{\tilde S}_{l}))}(M)$  & linear generalized tensor fields of rank $(m,n)$, defined (at least) on all the classes $C^{P}_{S (\mathcal{\tilde S}_{l})}(M)$, labeled by $l$, in the atlases $At(\mathcal{\tilde S}_{l})$ & 129 \\
\hline
$D'^{m}_{n (\cup_{l}At(\mathcal{\tilde S}_{l})o)}(M)$  & linear
generalized tensor fields of rank $(m,n)$, defined exclusively on the classes $C^{P}_{S (\mathcal{S}_{l})}(M)$, labeled by $l$, in the atlases $At(\mathcal{\tilde S}_{l})$ & 128\\  \hline
$D'^{m}_{n E}(M)$  & distributional analogue of rank $(m,n)$ piecewise smooth tensor fields  & 128  \\ \hline
$D'^{m}_{n S}(M)$  &  objects as close as possible to a distributional analogue of rank $(m,n)$ smooth tensor fields & 128 \\ \hline
$D'^{m}_{n (\cup_{l}\mathcal{\tilde S}_{l})}(M)$  & subclass of $D'^{m}_{n (\cup_{l}At(\mathcal{\tilde S}_{l}))}(M)$, such that  $At(\mathcal{\tilde S}_{l})$  has as image of $\mathcal{\tilde S}_{l}$ the atlas $\mathcal{S}_{l}\subseteq\mathcal{\tilde S}_{l}$  & 129 \\
\hline
$D'^{m}_{n A}(M)$  &  generalized tensor fields of rank $(m,n)$  & 129 \\
\hline
$D'^{m}_{n E A}(M)$,  $D'^{m}_{n S A}(M)$ (etc.)  &  generalized tensor fields of rank $(m,n)$ constructed from the elements of the classes $D'^{m}_{n E}(M)$, $D'^{m}_{n S}(M)$  & 129  \\
 \hline
$\Gamma^{m}(M)$,~$\Gamma^{m}_{E}(M)$,~$\Gamma^{m}_{S}(M)$, ~$\Gamma^{m}_{(\cup_{n}At(\mathcal{\tilde S}_{n}))}(M)$, ~$\Gamma^{m}_{(\cup_{n}At(\mathcal{\tilde S}_{n})o)}(M)$, ~~$\Gamma^{m}_{(\cup_{n}\mathcal{\tilde S}_{n})}(M)$ & Gamma objects (generalized from linear generalized tensor fields of rank $(a,b)$, by imposing no condition on the transformation properties)  & 129-130 \\ \hline
$\Gamma^{m}_{A}(M)$  &  sets of algebras constructed from the elements of the class $\Gamma^{m}(M)$ (generalization of $D^{m}_{n A}(M)$) & 130\\
\hline
$\Gamma^{m}_{E A}(M)$, $\Gamma^{m}_{S A}(M)$ (etc.)  &  sets of algebras constructed from the elements of the classes $\Gamma^{m}_{E}(M)$, $\Gamma^{m}_{S}(M)$ (etc.) (generalization from  $D'^{m}_{n E A}(M)$,  $D'^{m}_{n S A}(M)$ ) & 130 \\
\hline
$T^{\mu...}_{\nu...}(Ch_{k})$  &  multi-index matrix obtained from the tensor components of $T^{\mu...}_{\nu...}$ in the chart $Ch_{k}$  & 131 \\
\hline
$Ch'(q,\Omega_{Ch})$  &  chart mapping from $\Omega_{Ch}$ to the whole $\mathbb{R}^{4}$, such that it maps the point $q$ to 0  & 133 \\
\hline
$A^{n}(\mathcal{\tilde S}, Ch'(q,\Omega_{Ch}))$  & specific subclass of the class $C^{P}_{S (\mathcal{\tilde S})}(\Omega_{Ch})$ & 133\\
\hline
\end{tabular}

\begin{tabular}{|p{3.5cm}|p{7.5cm}|p{1.5cm}|}
\hline
{\bf Symbol:}  &   {\bf Definition/Explanation:}  &  {\bf Page:}   \\
\hline
$\omega_{\epsilon}(y)$  & specific one parameter class of elements of the class $C^{P}_{S (\mathcal{\tilde S)}}(M)$  & 133 \\
\hline
$\approx$  &   the equivalence relation  & 133-134 \\
\hline
$\tilde\Gamma^{m}(M)$, $\tilde\Gamma^{m}_{E}(M)$, $\tilde D'^{m}_{n A}(M)$, $\tilde D'^{m}_{n E A}(M)$ (etc.) & sets of equivalence classes constructed from the elements of the classes $\Gamma^{m}(M)$, $\Gamma^{m}_{E}(M)$, $D'^{m}_{n A}(M)$, $D'^{m}_{n E A}(M)$ (etc.) & 134 \\
\hline
$\Lambda$  &   specific subclass of the class $\Gamma^{m}_{E (\cup_{n}At(\mathcal{\tilde S}_{n})o)}(M)$  & 135-136 \\
\hline
$A_{s}(T^{\mu...}_{\nu...})$   & tensor field associated to $T^{\mu...}_{\nu...}\in\Gamma^{m}_{A}(M)$   & 135 \\
\hline
$\tilde\Omega(Ch_{k})$  & set of Lebesgue measure 0, on which a particular multi-index matrix $T^{\mu...}_{\nu...}(Ch_{k})$ is discontinous & 136 \\
\hline
$\partial$-derivative  &  specific operator acting on the elements of the class $\Gamma^{m}_{A}(M)$ and generalizing the operator of the partial derivative (as a part of the covariant derivative definition) & 144-145 \\
\hline
$D_{C(U)}$  & covariant derivative along the vector field $U^{i}$ & 145-146 \\
\hline
$\mathcal{D}_{n}$  &  $n$-times differentiable subatlas of $\mathcal{A}$  &  146\\
\hline
$S'_{n}$ related to the atlas $\mathcal{D}_{n}$ &  ~~~specific subclass of  $\Gamma^{m}_{E (\cup_{n}At(\mathcal{\tilde S}_{n})o)A}(M)$  & 146 \\
\hline
\end{tabular}

\chapter*{Publications and papers}
\addcontentsline{toc}{chapter}{Publications and papers}

\section*{Journal}
\begin{itemize}

\item
J. Skakala ~and ~M. Visser,  ~~\emph{``Generic master equations for quasi-normal frequencies''},~~ Journal of High Energy Physics ~~(JHEP) ~{\bf 1007}:070, ~(2010),  ~~[arXiv: gr-qc/1009.0080]
\item
J. Skakala ~and ~M. Visser,   ~~\emph{``Semi-analytic results for quasi-normal frequencies''},~~ Journal of High Energy Physics ~~(JHEP)  ~{\bf 1008}:061,~ (2010),  ~~[arXiv: gr-qc/1004.2539]
\item
J. Skakala ~and ~M. Visser,   ~~\emph{``Highly-damped quasi-normal frequencies for piecewise Eckart potentials''}, ~~Physical Review D, ~~{\bf 81}:125023, ~(2010), ~~[arXiv: gr-qc/1007.4039]
\item
J. Skakala ~and ~M. Visser,   ~~\emph{``The causal structure of spacetime is a parametrized Randers geometry''}, ~~Classical and Quantum Gravity,~~ {\bf 28}:065007,~(2011), ~~[arXiv: gr-qc/1012.4467]
\item
J. Skakala ~and ~M. Visser,   ~~\emph{``Bi-metric pseudo-Finslerian spacetimes''}, ~~Journal of Geometry and Physics,~~ {\bf 61}:1396-1400, ~(2011), ~~[arXiv: gr-qc/1008.0689]
\item
J. Skakala ~and ~M. Visser,   ~~\emph{``Pseudo-Finslerian spacetimes and multi-refringence''}, ~~International Journal for Modern Physics D, ~~{\bf 19}:1119-1146, ~(2010),  ~~[arXiv: gr-qc/0806.0950]
\end{itemize}

\section*{Conference proceedings}
\begin{itemize}
\item
J. Skakala ~and ~M. Visser,  ~~\emph{``Birefringence in pseudo-Finsler spacetimes''}, ~~NEBXIII ~conference (~Recent Developments in Gravity), ~~Thessalonika (Greece), ~~June 2008,  ~~J.Phys.Conf.Ser.~{\bf 189}:012037, ~(2009),  ~~[arXiv: gr-qc/0806.0950]
\item
J. Skakala ~and ~M. Visser,  ~~\emph{Quasi-normal frequencies: Semi-analytic results for highly damped modes''},  ~~The Spanish relativity meeting - ERE 2010, ~Granada (Spain), September 2010, ~will be published in ~J.Phys.Conf.Ser.  ~~[arXiv: gr-qc/1011.4634]
\item
J. Skakala,  ~~\emph{``New ideas about multiplication of tensorial distributions''}, ~~12-th Marcel Grossman meeting (MG12), ~Paris (France), July 2009, ~soon to be published
\end{itemize}
\section*{E-print only}
\begin{itemize}
\item
J. Skakala,  ~~\emph{``New ideas about multiplication of tensorial distributions''},  ~~[arXiv: gr-qc/0908.0379] 
\item
J. Skakala ~and ~M. Visser,  ~~\emph{``Birkhoff-like theorem for rotating stars in 2+1 dimensions''},  ~~[arXiv: gr-qc/0903.2128]
\end{itemize}

%----------------------------------------------------------------------

%-------------
\chapter*{Some work that did not appear in the thesis}

\subsection*{Jozef Skakala and Matt Visser, ~~The causal structure of spacetime is a parametrized Randers geometry }

\subsubsection*{Abstract:}
There is a by now well-established isomorphism between stationary 4-dimensional spacetimes and 3-dimensional purely spatial Randers geometries - these Randers geometries being a particular case of the more general class of 3-dimensional Finsler geometries. We point out that in stably causal spacetimes, by using the (time-dependent) ADM decomposition, this result can be extended to general non-stationary spacetimes - the causal structure (conformal structure) of the full spacetime is completely encoded in a parameterized (time-dependent) class of Randers spaces, which can then be used to define a Fermat principle, and also to reconstruct the null cones and causal structure. \\
{\bf  published in ~Classical and Quantum Gravity ~28,~ (2011),~ 065007,}\\
{\bf [arXiv: gr-qc/1012.4467].}

\subsection*{Jozef Skakala and Matt Visser, Birkhoff-like theorem for rotating stars in (2+1) dimensions}

\subsubsection*{Abstract:}
Consider a rotating and possibly pulsating "star" in (2+1) dimensions. If the star is axially symmetric, then in the vacuum region surrounding the star, (a region that we assume at most contains a cosmological constant), the Einstein equations imply that under physically plausible conditions the geometry is in fact stationary. Furthermore, the geometry external to the star is then uniquely guaranteed to be the (2+1) dimensional analogue of the Kerr-de Sitter spacetime, the BTZ geometry. This Birkhoff-like theorem is very special to (2+1) dimensions, and fails in (3+1) dimensions. Effectively, this is a "no hair" theorem for (2+1) dimensional axially symmetric stars: the exterior geometry is completely specified by the mass, angular momentum, and cosmological constant. \\
{\bf arXiv: gr-qc/0903.2128.}

\pagestyle{plain}

%%%%%%%%%%%%%%%%%%%%%%%%%%%%%%%%%%%%%%%%%%%%%%%%%%%%%%%

% and of course book style knows about backmatter
% \backmatter caused problems with appendices :-(
% and of course report style doesn't
%%%%%%%%%%%%%%%%%%%%%%%%%%%%%%%%%%%%%%%%%%%%%%%%%%%%%%%

%\bibliographystyle{ieeetr}
\bibliographystyle{acm}
\bibliography{myrefs}

\end{document}